\documentclass[12pt,a4paper,titlepage,makeidx,reqno]{amsart}
\makeindex
\voffset=0.2in \topmargin=0in \headheight=8pt \headsep=0.5in \textheight=8.9in
\hoffset=-0.5in \leftmargin=0.0in \rightmargin=0.0in \textwidth=6.1in
\parindent=5ex

\usepackage{amssymb}
\usepackage{amsfonts}
\usepackage{latexsym}
\usepackage{amsthm}
\usepackage{amsmath}
\usepackage{graphicx}

\usepackage{bm}

\newcommand\comment[1]{}

\begin{document}


\renewcommand{\theequation}{\arabic{section}.\arabic{equation}}
\theoremstyle{plain}
\newtheorem{theorem}{\bf Theorem}[section]
\newtheorem{lemma}[theorem]{\bf Lemma}
\newtheorem{corollary}[theorem]{\bf Corollary}
\newtheorem{proposition}[theorem]{\bf Proposition}
\newtheorem{definition}[theorem]{\bf Definition}
\newtheorem{remark}[theorem]{\bf Remark}

\def\a{\alpha}  \def\cA{{\mathcal A}}     \def\bA{{\bf A}}  \def\mA{{\mathscr A}}
\def\b{\beta}   \def\cB{{\mathcal B}}     \def\bB{{\bf B}}  \def\mB{{\mathscr B}}
\def\g{\gamma}  \def\cC{{\mathcal C}}     \def\bC{{\bf C}}  \def\mC{{\mathscr C}}
\def\G{\Gamma}  \def\cD{{\mathcal D}}     \def\bD{{\bf D}}  \def\mD{{\mathscr D}}
\def\d{\delta}  \def\cE{{\mathcal E}}     \def\bE{{\bf E}}  \def\mE{{\mathscr E}}
\def\D{\Delta}  \def\cF{{\mathcal F}}     \def\bF{{\bf F}}  \def\mF{{\mathscr F}}
\def\c{\chi}    \def\cG{{\mathcal G}}     \def\bG{{\bf G}}  \def\mG{{\mathscr G}}
\def\z{\zeta}   \def\cH{{\mathcal H}}     \def\bH{{\bf H}}  \def\mH{{\mathscr H}}
\def\e{\eta}    \def\cI{{\mathcal I}}     \def\bI{{\bf I}}  \def\mI{{\mathscr I}}
\def\p{\psi}    \def\cJ{{\mathcal J}}     \def\bJ{{\bf J}}  \def\mJ{{\mathscr J}}
\def\vT{\Theta} \def\cK{{\mathcal K}}     \def\bK{{\bf K}}  \def\mK{{\mathscr K}}
\def\k{\kappa}  \def\cL{{\mathcal L}}     \def\bL{{\bf L}}  \def\mL{{\mathscr L}}
\def\l{\lambda} \def\cM{{\mathcal M}}     \def\bM{{\bf M}}  \def\mM{{\mathscr M}}
\def\L{\Lambda} \def\cN{{\mathcal N}}     \def\bN{{\bf N}}  \def\mN{{\mathscr N}}
\def\m{\mu}     \def\cO{{\mathcal O}}     \def\bO{{\bf O}}  \def\mO{{\mathscr O}}
\def\n{\nu}     \def\cP{{\mathcal P}}     \def\bP{{\bf P}}  \def\mP{{\mathscr P}}
\def\r{\rho}    \def\cQ{{\mathcal Q}}     \def\bQ{{\bf Q}}  \def\mQ{{\mathscr Q}}
\def\s{\sigma}  \def\cR{{\mathcal R}}     \def\bR{{\bf R}}  \def\mR{{\mathscr R}}
                \def\cS{{\mathcal S}}     \def\bS{{\bf S}}  \def\mS{{\mathscr S}}
\def\t{\tau}    \def\cT{{\mathcal T}}     \def\bT{{\bf T}}  \def\mT{{\mathscr T}}
\def\f{\phi}    \def\cU{{\mathcal U}}     \def\bU{{\bf U}}  \def\mU{{\mathscr U}}
\def\F{\Phi}    \def\cV{{\mathcal V}}     \def\bV{{\bf V}}  \def\mV{{\mathscr V}}
\def\P{\Psi}    \def\cW{{\mathcal W}}     \def\bW{{\bf W}}  \def\mW{{\mathscr W}}
\def\o{\omega}  \def\cX{{\mathcal X}}     \def\bX{{\bf X}}  \def\mX{{\mathscr X}}
\def\x{\xi}     \def\cY{{\mathcal Y}}     \def\bY{{\bf Y}}  \def\mY{{\mathscr Y}}
\def\X{\Xi}     \def\cZ{{\mathcal Z}}     \def\bZ{{\bf Z}}  \def\mZ{{\mathscr Z}}
\def\O{\Omega}

\def\ve{\varepsilon}   \def\vt{\vartheta}    \def\vp{\varphi}    \def\vk{\varkappa}

\def\Z{{\mathbb Z}}    \def\R{{\mathbb R}}   \def\C{{\mathbb C}}
\def\T{{\mathbb T}}    \def\N{{\mathbb N}}   \def\dD{{\mathbb D}}
\def\H{{\mathbb H}}    \def\P{{\mathbb P}}   \def\E{{\mathbb E}}

\def\rP{\mathrm{P}}    \def\rQ{\mathrm{Q}}
\def\rA{\mathrm{A}}    \def\rC{\mathrm{C}}


\def\la{\leftarrow}              \def\ra{\rightarrow}            \def\Ra{\Rightarrow}
\def\ua{\uparrow}                \def\da{\downarrow}
\def\lra{\leftrightarrow}        \def\Lra{\Leftrightarrow}
\def\rra{\rightrightarrows}


\def\lt{\biggl}                  \def\rt{\biggr}
\def\ol{\overline}               \def\wt{\widetilde}


\let\ge\geqslant                 \let\le\leqslant
\def\lan{\langle}                \def\ran{\rangle}
\def\/{\over}                    \def\iy{\infty}
\def\sm{\setminus}               \def\es{\emptyset}
\def\ss{\subset}                 \def\ts{\times}
\def\pa{\partial}                \def\os{\oplus}
\def\om{\ominus}                 \def\ev{\equiv}
\def\iint{\int\!\!\!\int}        \def\iintt{\mathop{\int\!\!\int\!\!\dots\!\!\int}\limits}
\def\el2{\ell^{\,2}}             \def\1{1\!\!1}
\def\sh{\sharp}


\def\Area{\mathop{\mathrm{Area}}\nolimits}
\def\arg{\mathop{\mathrm{arg}}\nolimits}
\def\const{\mathop{\mathrm{const}}\nolimits}
\def\det{\mathop{\mathrm{det}}\nolimits}
\def\diag{\mathop{\mathrm{diag}}\nolimits}
\def\diam{\mathop{\mathrm{diam}}\nolimits}
\def\dim{\mathop{\mathrm{dim}}\nolimits}
\def\dist{\mathop{\mathrm{dist}}\nolimits}
\def\Im{\mathop{\mathrm{Im}}\nolimits}
\def\Int{\mathop{\mathrm{Int}}\nolimits}
\def\Ker{\mathop{\mathrm{Ker}}\nolimits}
\def\Lip{\mathop{\mathrm{Lip}}\nolimits}
\def\rank{\mathop{\mathrm{rank}}\limits}
\def\Ran{\mathop{\mathrm{Ran}}\nolimits}
\def\Re{\mathop{\mathrm{Re}}\nolimits}
\def\Res{\mathop{\mathrm{Res}}\nolimits}
\def\res{\mathop{\mathrm{res}}\limits}
\def\sign{\mathop{\mathrm{sign}}\nolimits}
\def\supp{\mathop{\mathrm{supp}}\nolimits}
\def\Tr{\mathop{\mathrm{Tr}}\nolimits}

\def\wind{\mathop{\mathrm{winding}}\nolimits}

\mathchardef\gRe="023C \mathchardef\gIm="023D

\newcommand\textfrac[2]{{\textstyle\frac{#1}{#2}}}


\def\SLE{\mathrm{SLE}}
\def\DS{\diamondsuit}
\def\CaraTo{\mathop{\longrightarrow}\limits^{\mathrm{Cara}}}
\def\rsa{\rightsquigarrow}



\def\mesh{\delta}
\newcommand\weightG[1]{\mu^\mesh_\G(#1)}
\newcommand\weightE[2]{{\mu_{#1#2}}}
\newcommand\weightDS[1]{\mu^\mesh_\DS(#1)}
\newcommand\weightL[1]{\mu^\mesh_\L(#1)}
\newcommand\dhm[4]{\o^{#1}(#2;#3;#4)}
\renewcommand\hm[3]{\o(#1;#2;#3)}
\def\dpa{\partial^\mesh}
\def\dopa{\overline{\partial}\vphantom{\partial}^\mesh}
\def\sps{($\spadesuit$)}

\def\Pr{\mathrm{Proj}}

\title[Universality in the 2{D} {I}sing model]
{Universality in the 2{D} {I}sing model\\
 and conformal invariance of fermionic observables}

\date{\today}
\author[Dmitry Chelkak]{Dmitry Chelkak$^\mathrm{a,c}$}

\author[Stanislav Smirnov]{Stanislav Smirnov$^\mathrm{b,c}$}

\thanks{\textsc{${}^\mathrm{A}$ St.Petersburg Department of Steklov Mathematical Institute (PDMI RAS).
Fontanka~27, 191023 St.Petersburg, Russia.}}

\thanks{\textsc{${}^\mathrm{B}$ Section de Math\'ematiques, Universit\'e de Gen\`eve.
2-4 rue du Li\`evre, Case postale~64, 1211 Gen\`eve 4, Suisse.}}

\thanks{\textsc{${}^\mathrm{C}$ Chebyshev Laboratory, Department of Mathematics and
Mechanics, Saint-Petersburg State University, 14th Line, 29b, 199178 Saint-Petersburg,
Russia.}}

\thanks{{\it E-mail addresses:} \texttt{dchelkak@pdmi.ras.ru, Stanislav.Smirnov@unige.ch}}

\begin{abstract}
It is widely believed that the celebrated 2D Ising model at criticality has a  universal and
conformally invariant scaling limit, which is used in deriving many of its properties.
However, no mathematical proof has ever been given, and even physics arguments support (a
priori weaker) M\"obius invariance.
We introduce discrete holomorphic fermions
for the 2D Ising model at criticality on a large family of planar graphs.
We show that on bounded domains with appropriate boundary conditions,
those have universal and conformally invariant scaling limits,
thus proving the universality and conformal invariance conjectures.
\end{abstract}

\subjclass{82B20, 60K35, 30C35, 81T40}

\keywords{Ising model, universality, conformal invariance,
discrete analytic function, fermion, isoradial graph}

\maketitle

\tableofcontents
\newpage

\section{Introduction}

\subsection{Universality and conformal invariance in the Ising model}

\subsubsection{Historical background}
The celebrated Lenz-Ising model is one of the simplest systems exhibiting
an order--disorder transition.
It was introduced by Lenz in \cite{lenz},
and his student Ising  proved \cite{ising} in his PhD thesis the absence of
phase transition in dimension one, wrongly conjecturing  the same picture in higher dimensions.
This belief was widely shared, and motivated Heisenberg to introduce his model
\cite{heisenberg1928theorie}.
However, some years later Peierls \cite{peierls} used estimates on the length of
interfaces between spin clusters to disprove the conjecture, showing a phase transition
in the two dimensional case.
After Kramers and Wannier \cite{kramers-wannier-i}
derived the value of the critical temperature and
Onsager \cite{onsager-i} analyzed behavior of the partition function
for the Ising model on the two-dimensional square lattice,
it became an archetypical example of the phase transition in lattice models and
in statistical mechanics in general,
see \cite{niss-ising-history-i,niss-ising-history-ii}
for the history of its rise to prominence.

Over the last six decades, thousands of papers were written about the Ising model,
with most of the literature, including this paper, restricted to the two dimensional case
(similar behavior is expected in three dimensions, but for now the complete description
remains out of reach).
The partition function and other parameters were computed exactly
in several different ways,
usually on the square lattice or other regular graphs.
It is thus customary to say that \emph{the 2D Ising model is exactly solvable},
though one should remark that
most of the derivations are non-rigorous, and moreover
many quantities cannot be derived by traditional methods.

Arrival of the \emph{renormalization group} formalism
(see \cite{fisher-renorm-basis} for a historical exposition)
led to an even better physical understanding, albeit still non-rigorous.
It suggests that block-spin renormalization transformation
(coarse-graining, i.e., replacing a block of neighboring sites by one)
corresponds to appropriately changing the scale and the temperature.
The Kramers-Wannier critical point arises then as a fixed point of the renormalization
transformations,
with the usual picture of stable and unstable directions.

In particular, under simple rescaling  the Ising model at the critical temperature should converge to
a \emph{scaling limit} -- a ``continuous'' version of the originally discrete Ising model,
which corresponds to a quantum field theory.
This leads to the idea of \emph{universality}:
the Ising models on different regular lattices or even more general planar graphs
belong to the same renormalization space, with a unique critical point,
and so at criticality the scaling limit and the scaling dimensions
of the Ising model should be independent of the lattice
(while the critical temperature depends on it).
Being unique, the scaling limit at the critical point is \emph{translation} and \emph{scale invariant},
which allows to deduce some information about correlations \cite{patashinskii-pokrovskii-1966,kadanoff-1966}.
By additionally postulating  invariance under inversions, one obtains
\emph{M\"obius invariance}, i.e. invariance under
global conformal transformations of the plane,
which allows \cite{polyakov-1970} to deduce more.
In seminal papers  \cite{bpz-nuclear,bpz-jsp}
Belavin, Polyakov and Zamolodchikov suggested 
much stronger \emph{full conformal invariance}
(under all conformal transformations of subregions),
thus generating an explosion of activity in
conformal field theory,
which allowed to explain non-rigorously many phenomena,
see \cite{cft-collection} for a collection
of the founding papers of the subject.
Note that in the physics literature there is sometimes confusion between the two notions,
with M\"obius invariance often called conformal invariance, though
the latter is a much stronger property.

Over the last 25 years our physical understanding of the 2D critical lattice models
has greatly improved,
and the universality and conformal invariance
are widely accepted by the physics community.
However, supporting arguments are largely non-rigorous and some
even lack physical motivation.
This is especially awkward in the case of the Ising model,
which indeed admits many exact calculations.

\subsubsection{Our results}
The goal of this paper is to construct lattice holomorphic fermions
and to show that they have a universal conformally invariant scaling limit.
We give unambiguous (and mathematically rigorous)
arguments for the existence of the scaling limit,
its universality and conformal invariance
for some observables for the 2D Ising model at criticality,
and provide the framework to establish
the same for all observables.
By conformal invariance we mean not the M\"obius invariance,
but rather the \emph{full conformal invariance},
or invariance under conformal transformations of subregions
of $\C$.
This is a much stronger property, since
conformal transformations form an
infinite dimensional pseudogroup, unlike the M\"obius ones.
Working in subregions necessarily  leads us to consider the Ising model
in domains with appropriate boundary conditions.

At present we cannot make rigorous the renormalization
approach, but we hope that the knowledge gained
will help to do this in the future.
Rather, we use the integrable structure
to construct \emph{discrete holomorphic fermions}
in the Ising model.
For simplicity we work with discrete holomorphic functions,
defined e.g. on the graph edges,
which when multiplied by the $\sqrt{dz}$ field become
fermions or spinors.
Those functions turn out to be discrete holomorphic solutions of a discrete version
of the Riemann-Hilbert boundary value problem,
and we develop appropriate tools to show that
they converge to their continuous counterparts,
much as Courant, Friedrichs and Lewy have done
in \cite{courant-friedrichs-lewy} for
the Dirichlet problem.
The continuous versions of our boundary value problems are $\sqrt{dz}$-covariant,
and conformal invariance and universality then follow,
since different discrete conformal structures converge to the same
universal limit.

Starting from these observables, one can construct new ones, describe interfaces by the
Schramm's SLE~curves, and prove and improve many predictions originating in physics. Moreover,
our techniques work off criticality, and lead to \emph{massive} field theories and SLEs.
Several possible developments will be the subject of our future work
\cite{chelkak-smirnov-spin,smirnov-fk2,kemppainen-smirnov-fk3,hongler-smirnov-density}.

We will work with the family of \emph{isoradial graphs} or equivalently \emph{rhombic lattices}.
The latter were introduced by Duffin \cite{duffin-rhombic} in late sixties as (perhaps) the
largest family of graphs for which the Cauchy-Riemann operator admits a nice discretization.
They reappeared recently 
in the work of Mercat \cite{mercat-2001} and
Kenyon \cite{kenyon-operators}, as isoradial graphs
-- possibly the largest family of graphs were
the Ising and dimer models enjoy the same integrability properties
as on the square lattice:
in particular, the critical point is well defined,
with weights depending only on the local structure.
More recently, Boutilier and de~Tili\`ere
\cite{boutillier-tiliere-periodic,boutillier-tiliere-locality}
used the Fisher representation of the Ising model by dimers and Kenyon's techniques
to calculate, among other things,  free energy for the Ising model on isoradial graphs.
While their work is closely related to ours (we can too use the Fisher representation
instead of the vertex operators to construct holomorphic fermions),
they work in the full plane and so do not address
conformal invariance.
Note that earlier eight vertex and Ising models
were considered by Baxter \cite{baxter-ptrsl-1978}
on $Z$-invariant graphs, arising from planar line arrangements.
Those graphs are topologically the same as the isoradial graphs,
and the choice of weights coincides with ours,
so quantities like partition function would coincide.
Kenyon and Schlenker \cite{kenyon-schlenker} have shown that
such graphs admit isoradial embeddings, but those change the conformal structure,
and one does not expect conformal invariance for the Ising model on
general $Z$-invariant graphs.

So there are two reasons for our choice of this particular family: firstly it seems to be the
largest family where the Ising model we are about to study is nicely defined, and secondly
(and perhaps not coincidentally) it seems to be the largest family of graphs where our main
tools, the discrete complex analysis, works well. It is thus natural to consider this family
of graphs in the context of \emph{conformal invariance} and \emph{universality} of the 2D
Ising model scaling limits.

The fermion we construct for the random cluster representation of the Ising model on domains
with two marked boundary points is roughly speaking given by the probability that the
interface joining those points passes through a given edge, corrected by a complex weight. The
fermion was proposed in \cite{smirnov-icm,smirnov-fk1} for the square lattice (see also
independent \cite{riva-cardy-hol} for its physical connections, albeit without discussion of
the boundary problem and  covariance). The fermion for the spin representation is somewhat
more difficult to construct, it corresponds to the partition function of the Ising model with
a $\sqrt{z}$ monodromy at a given edge, again corrected by a complex weight. We describe it in
terms of interfaces, but alternatively one can use a product of order and disorder operators
at neighboring site and dual site, or work with the inverse Kasteleyn's matrix for the
Fisher's dimer representation. It was introduced in \cite{smirnov-icm}, although similar
objects appeared earlier in Kadanoff and Ceva \cite{kadanoff-ceva} (without complex weight and
boundary problem discussions) and in Mercat \cite{mercat-2001} (again without discussion of
boundary problem and covariance).

Complex analysis on isoradial graphs is more complicated then on the square grid,
and less is known a priori about the Ising model there.
As a result parts of our paper are quite technical, so we would recommend reading
the much easier square lattice proofs \cite{smirnov-fk1,chelkak-smirnov-spin},
as well as the general exposition \cite{smirnov-icm,smirnov-icm2010} first.

\subsubsection{Other lattice models}

Over the last decade, conformal invariance of the scaling limit
was established for a number of critical lattice models.
An up-to-date introduction can be found in \cite{smirnov-icm},
so we will only touch the question of universality here.

Spectacular results of Kenyon on conformal invariance
of the dimer model, see e.g. \cite{kenyon-conformal,kenyon-gff},
were originally obtained on the square lattice.
Some were extended to the isoradial case by de Tili\`ere
\cite{tiliere-isoradial-dimers}, but the questions of boundary conditions
and hence conformal invariance were not addressed yet.

Kenyon's dimer results had corollaries \cite{kenyon-long}
for the Uniform Spanning Tree (and the Loop Erased Random Walk).
Those used the Temperley
bijection between dimer and tree configurations on two coupled graphs,
so they would extend to the situations where boundary conditions can be addressed
and Temperley bijection exists.

Lawler, Schramm and Werner used in \cite{lsw-ust} simpler observables
to establish conformal invariance of the scaling limit of the UST
interfaces and the LERW curves,
and to identify them with Schramm's SLE curves.
In both cases one can obtain observables using
the Random Walk,
and for the UST one can use the Kirchhoff circuit laws
to obtain discrete holomorphic quantities.
The original paper deals with the square lattice only,
but it easily generalizes whenever boundary conditions
can be addressed.

In all those cases we have to deal with convergence of solutions of
the Dirichlet or Neumann boundary value problems to their continuous counterparts.
While this is a standard topic on regular lattices, there are technical difficulties
on general graphs; moreover functions are unbounded (e.g. observable for the LERW
is given by the Poisson kernel), so controlling their norm is far from trivial.
Tools developed by us in \cite{chelkak-smirnov-dca} for use in the current paper
however resolve most of such difficulties.

Situation is somewhat easier  with the observables for the Harmonic Explorer and Discrete
Gaussian Free Field, as discussed by Schramm and Sheffield
\cite{schramm-sheffield-harmonic,schramm-sheffield-dgff} -- both are harmonic and solving
Dirichlet problem with bounded boundary values, so the generalization from the original
triangular lattice is straight-forward. Note though that the key difficulty in the DGFF case
is to establish the martingale property of the observable.

Unlike the observables above, the one used for percolation in
\cite{smirnov-cras,smirnov-perc} is very specific to the triangular lattice,
so the question of universality is far from being resolved.

All the observables introduced so far
(except for the fermions from this paper)
are essentially bosonic,
either invariant under  conformal transformations $\varphi$ or
changing like ``pre-pre-Schwarzian'' forms,
i.e. by an addition of $\const\cdot\,\varphi'$.
They all satisfy Dirichlet or Neumann boundary conditions,
when establishing convergence is a classical subject,
dating back to Courant, Friedrichs and Lewy \cite{courant-friedrichs-lewy},
albeit in the non-bounded case one meets serious difficulties.

In the Ising case we work with fermions, hence
the Riemann-Hilbert boundary value problem
(or rather its homogeneous version due to Riemann).
Such problems turn out to be much more complicated
already on regular lattices:
near rough boundaries (which arise naturally since interfaces are fractal)
our observables blow up fast.
When working on general graphs, the main problem remains, but the tools
become quite limited.

We believe that further progress in other models requires the study
of holomorphic parafermions \cite{smirnov-icm}, so we expect
even more need to address the Riemann boundary value problems in the future.

\subsection{Setup and main results}
Throughout the paper, we work with isoradial graphs or, equivalently, rhombic lattices. A
planar graph $\G$ embedded in $\C$ is called $\mesh$-\emph{isoradial} if each face is
inscribed into a circle of a common radius $\mesh$. If all circle centers are inside the
corresponding faces, then one can naturally embed the dual graph $\G^*$ in $\C$ isoradially
with the same $\mesh$, taking the circle centers as vertices of $\G^*$. The name \emph{rhombic
lattice} is due to the fact that all quadrilateral faces of the corresponding bipartite graph
$\L$ (having $\G\cup\G^*$ as vertices and radii of the circles as edges) are rhombi with sides
of length $\mesh$. We denote the set of rhombi centers by $\DS$ (example of an isoradial graph is drawn in Fig.~\ref{Fig:IsoGraph}A). We also require the following mild assumption:
\centerline{\emph{the rhombi angles are uniformly bounded away from} $0$ \emph{and} $\pi$ }
(in other words, all these angles belong to $[\eta,\pi\!-\!\eta]$ for some fixed $\eta>0$).
Below we often use the notation \emph{const} for absolute positive constants that don't depend
on the mesh $\mesh$ or the graph structure but, in principle, may depend on $\eta$. We also
use the notation $f\asymp g$ which means that a double-sided estimate $\const_1\cdot f\le g\le
\const_2\cdot g$ holds true for some $\const_{1,2}>0$ which are independent of $\mesh$.

\begin{figure}
\centering{
\begin{minipage}[b]{0.52\textwidth}
\centering{\includegraphics[width=\textwidth]{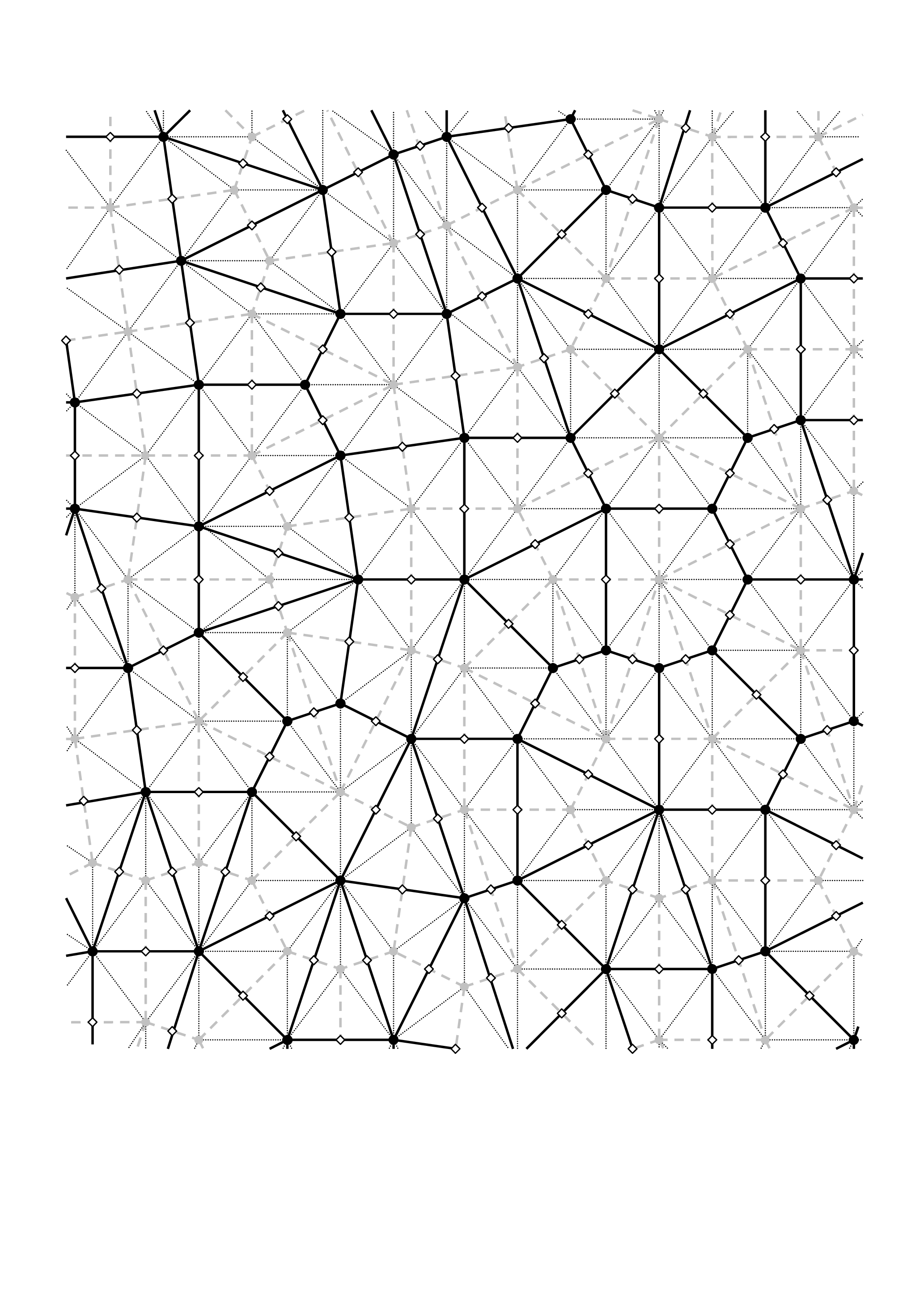}}

\bigskip
\bigskip

\centering{\textsc{(A)}}
\end{minipage}
\hskip 0.1\textwidth
\begin{minipage}[b]{0.32\textwidth}
\centering{\includegraphics[width=\textwidth]{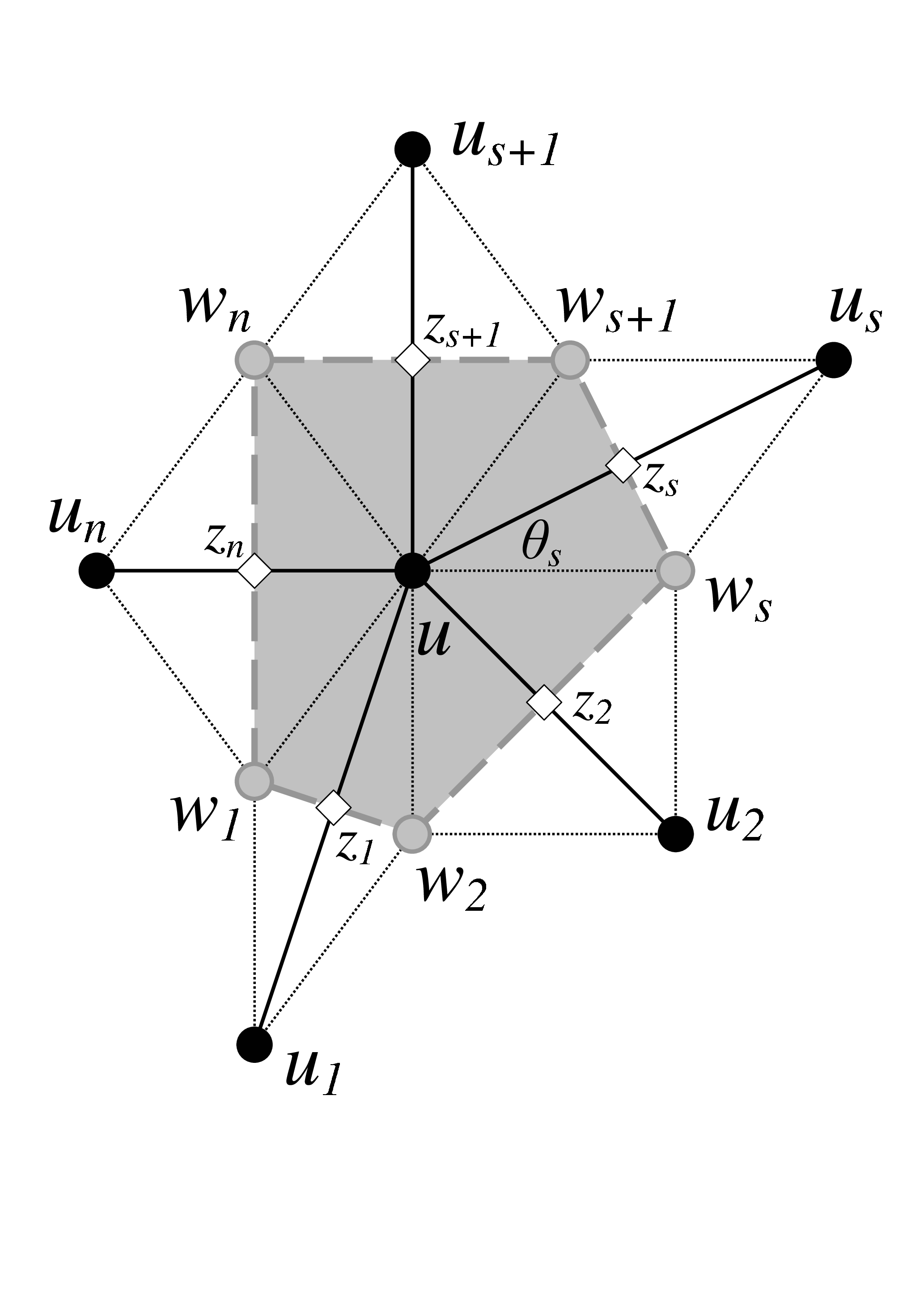}}

\centering{\textsc{(B)}}

\bigskip
\bigskip

\centering{\includegraphics[width=0.7\textwidth]{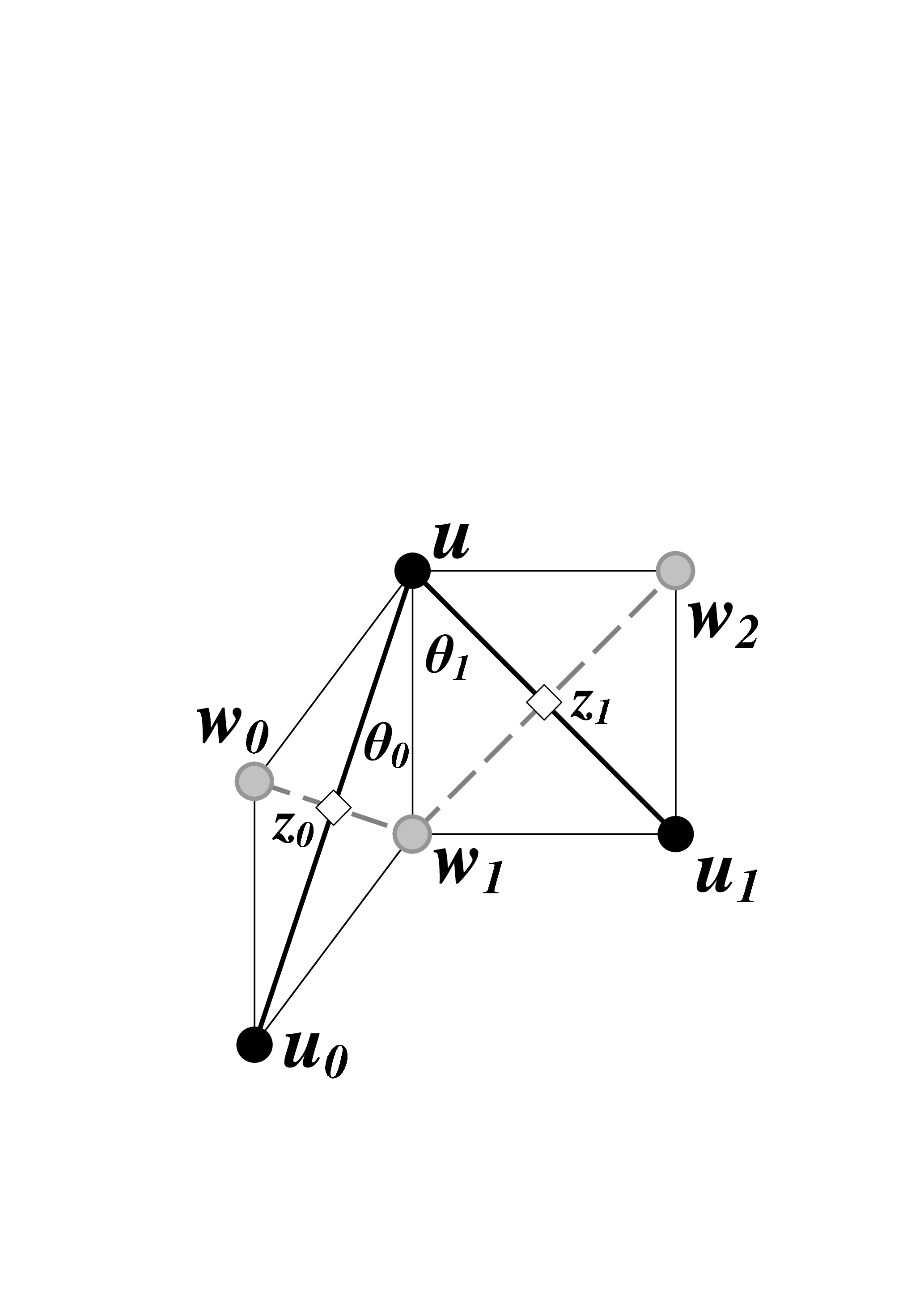}}

\centering{\textsc{(C)}

}

\end{minipage}}
\caption{\label{Fig:IsoGraph} \textsc{(A)} Example of an isoradial graph $\G$ (black vertices,
solid lines), its dual isoradial graph $\G^*$ (gray vertices, dashed lines), the corresponding
rhombic lattice or quad-graph (vertices $\L=\G\cup\G^*$, thin lines) and the set $\DS=\L^*$
(rhombi centers, white diamond-shaped vertices). \label{Fig:Notations} \textsc{(B)}
Local notation near $u\in\G$, with neighbors of $u$ enumerated counterclockwise
by $1,2\dots,s,s+1,\dots,n$. The weight $\weightG{u}$ is equal to the shaded polygon area.
\textsc{(C)} Definition of s-holomorphic functions: $F(z_0)$ and $F(z_1)$ have the same
projections on the direction $[i(w\!-\!u)]^{-\frac{1}{2}}$. Thus, we have one \emph{real}
identity for each pair of neighboring $z_0$, $z_1$.}
\end{figure}

\smallskip

It is known that one can define the (critical) Ising model on $\G^*$ so that
\renewcommand{\theenumi}{\alph{enumi}}
\begin{enumerate}
\item  the~interaction constants $J_{w_iw_j}$ are local (namely, depend on the lengths of edges
connecting $w_{i,j}\in\G^*$, $w_i\sim w_j$, only) and
\item the model is invariant under the
star-triangle transform.
\end{enumerate}%
Such invariance is widely recognized as the crucial sign of the integrability. Note that the
star-triangle transform preserves the isoradial graph/rhombic lattice structure. Moreover,
isoradial graphs form the largest family of planar graphs (embedded into $\C$) satisfying
these properties (see \cite{costa-santos-2006} and references therein). At the same time,
discrete holomorphic functions on isoradial graphs provide the simplest example of a discrete
integrable system in the so-called ``consistency approach'' to the (discrete) integrable
systems theory (see \cite{bobenko-suris-book}).

Recently, it was understood (see \cite{mercat-2001,smirnov-icm,riva-cardy-hol}) that some
objects coming from the theoretical physics approach to the Ising model (namely, products of
order and disorder operators with appropriate complex weights) can be considered and dealt
with as discrete holomorphic \emph{ functions} (see Sect.~\ref{SectSHol} for further
discussion). These functions (which we call \emph{basic observables} or \emph{holomorphic
fermions}) provide a powerful tool for rigorous proofs of several results concerning the
conformal invariance of the critical 2D Ising model. Implementing the program proposed  and
started in \cite{smirnov-icm,smirnov-fk1} for the square grid, in this paper we mainly focus
our attention on the topologically simplest case, when the model is defined in the
simply-connected (discrete) domain $\O^\mesh$ having two marked boundary points
$a^\mesh,b^\mesh$ (but see Section~\ref{SectFK4points} for more involved setup). We have to
mark some boundary points so that the conformal modulus is non-trivial, allowing us to
construct conformal invariants.

We will work with two representations of the Ising model:
the usual spin, as well as the random cluster (Fortuin-Kasteleyn).
The observables are similar, but do not directly follow from each other, and require slightly different approaches.
In either case there is an interface (between spin clusters or random clusters)
-- a discrete curve $\g^\mesh$ running from $a^\mesh$ to $b^\mesh$ inside $\O^\mesh$
(see Section~\ref{SectFKIsing},~\ref{SectSpinIsing} for precise definitions).
In both cases the basic observables are \emph{martingales} with respect to
(filtration induced by) the interface
grown progressively from $a^\mesh$, which opens the way
to identify its scaling limit as a Schramm's SLE curve, cf. \cite{smirnov-icm}.

The interface in the random cluster representation can in principle pass through some point
twice, but with our setup we move apart those passages, so that the curve becomes simple and
when arriving at the intersection it is always clear how to proceed. This setup is unique
where the martingale property holds, so there is only one conformally invariant way to address
this problem. Note that the resulting scaling limit, the $\SLE(16/3)$ curve, will have double
points. A similar ambiguity arises in the spin model (when, e.g. a vertex is surrounded by
four spins ``${-}{+}{-}{+}$''), but regardless of the way to address it (e.g. deterministic,
like always turning right, or probabilistic, like tossing a coin every time) the martingale
property always holds, and so the $\SLE(3)$ is the scaling limit. The latter is almost surely
simple, so we conclude that the double points in the discrete case produce only very small
loops, disappearing in the scaling limit.

The first two results of our paper 
say that, in both representations, the holomorphic fermions are uniformly close to their
continuous \emph{conformally invariant} counterparts, independently of the structure of
$\G^\mesh$ (or $\DS^\mesh$) and the shape of $\O^\mesh$ (in particular, we don't use any
smoothness assumptions concerning the boundary). Namely, we prove
the following two theorems, formulated in detail
as Theorems \ref{ThmFKConvergence} and  \ref{ThmSpinConvergence}:

{\renewcommand{\thetheorem}{\Alph{theorem}}
\begin{theorem}[FK-Ising fermion]
Let discrete domains $(\O^\mesh;a^\mesh,b^\mesh)$ with two marked boundary points
$a^\mesh,b^\mesh$ approximate some continuous domain $(\O;a,b)$ as $\mesh\to 0$. Then,
uniformly on compact subsets of $\O$ and independently of the structure of $\DS^\mesh$,
\[
F^\mesh(z)\rightrightarrows \sqrt{\Phi'(z)},
\]
where $F^\mesh(z)=F^\mesh(z;\O^\mesh,a^\mesh,b^\mesh)$ is the discrete holomorphic fermion and
$\Phi$ denotes the conformal mapping from $\O$ onto the strip $\R\ts(0,1)$ such that $a,b$ are
mapped to $\mp\infty$.
\end{theorem}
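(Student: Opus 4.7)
The strategy follows the square-lattice blueprint of \cite{smirnov-fk1}, transported to isoradial graphs via the discrete complex analysis developed in \cite{chelkak-smirnov-dca}. The idea is to pass from the fermion $F^\mesh$ — whose square should converge to $\Phi'$ — to a real-valued discrete primitive of $(F^\mesh)^2$, solve a Dirichlet-type boundary value problem for that primitive, and then recover $F^\mesh$ by differentiating back.

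First I would verify that $F^\mesh$ is s-holomorphic on $\DS^\mesh$ and satisfies a Riemann-type boundary condition: along each of the two boundary arcs $(a^\mesh b^\mesh)$ and $(b^\mesh a^\mesh)$, the values of $F^\mesh$ are constrained to a real line determined by the local direction of the boundary, which discretises ``$F(z)^2\,dz\in\R$ along $\pa\O$''. Both properties will follow from the combinatorial definition of the FK-observable in terms of the interface $\g^\mesh$. Using s-holomorphicity, I would then build a real-valued function $H^\mesh$ on $\G^\mesh\cup\G^{*,\mesh}$ playing the role of $\Im\int^z (F^\mesh)^2\,dz$: the closedness of the relevant discrete $1$-form follows from the s-holomorphic equations, while the boundary condition on $F^\mesh$ forces $H^\mesh$ to be locally constant on each arc, so after normalisation one arc becomes $\{H^\mesh=0\}$ and the other $\{H^\mesh=1\}$. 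Moreover $H^\mesh$ is discrete subharmonic on $\G^\mesh$ and discrete superharmonic on $\G^{*,\mesh}$.

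Next, I would invoke the potential-theoretic tools of \cite{chelkak-smirnov-dca}: the maximum principle gives $0\le H^\mesh\le 1$; equicontinuity of bounded discrete sub-/superharmonic functions up to the boundary, via a discrete Beurling-type estimate on isoradial graphs, provides a normal family; and any subsequential limit is harmonic in $\O$ with continuous boundary values $0$ and $1$ on the two arcs, hence equals $\Im\Phi$. This forces $(F^\mesh)^2\rra\Phi'$ on compact subsets of $\O$, and the branch of $\sqrt{\Phi'}$ is fixed by the known asymptotic behaviour of the fermion near $a$, giving $F^\mesh\rra\sqrt{\Phi'}$.

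The hard part is this final analytic step. Although $F^\mesh$ itself blows up like an inverse square root near $a^\mesh$ and $b^\mesh$, the potential $H^\mesh$ stays bounded in $[0,1]$ by the maximum principle, so the whole argument has to be driven through this bounded harmonic surrogate; one still needs boundary regularity strong enough to force the correct limit at every point of continuity of $\pa\O$ (which may be fractal), and to upgrade convergence of $H^\mesh$ into convergence of its discrete ``derivative'' $F^\mesh$ on compact subsets — both resting crucially on the isoradial discrete potential theory of \cite{chelkak-smirnov-dca} and on the compatibility between s-holomorphicity and the sub-/superharmonic decomposition of $H^\mesh$.
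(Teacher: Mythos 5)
Your proposal is correct in outline and reproduces the architecture of the paper's argument: s-holomorphicity plus the Riemann boundary condition $F^\mesh(\z)\parallel(\t(\z))^{-\frac12}$, the primitive $H^\mesh=\Im\int^\mesh(F^\mesh(z))^2d^\mesh z$ with Dirichlet data $0$ and $1$ on the two arcs (via the boundary modification trick), the a priori bound $0\le H^\mesh\le 1$, precompactness of $\{H^\mesh\}$ and $\{F^\mesh\}$, and recovery of the limit from $F^2=2i\pa H$. The one place where you genuinely diverge is the identification of the subsequential limit $H$: you propose to control $H^\mesh$ up to the boundary by a Beurling-type estimate and then argue that a bounded harmonic limit with boundary values $0,1$ must equal $\Im\Phi$. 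That route works — it is essentially how the paper must proceed for the spin observable in Section~\ref{SectSpinObservable}, where $H^\mesh$ is unbounded — but it drags you through exactly the boundary-regularity/prime-end discussion you flag as the hard part. The paper's FK proof sidesteps this: since $H^\mesh|_\G$ is subharmonic with boundary values $\chi_{(b^\mesh a^\mesh)}$, $H^\mesh|_{\G^*}$ is superharmonic with the same data, and $H^\mesh_{\G^*}(w)\le H^\mesh_\G(u)$ for neighbours directly from (\ref{HDef}), one has the two-sided sandwich
\[
\omega^\mesh(\cdot\,;(b^\mesh a^\mesh)_{\wt{\G}^*};\O^\mesh_{\G^*})\ \le\ H^\mesh_{\G^*}\ \le\ H^\mesh_\G\ \le\ \omega^\mesh(\cdot\,;(b^\mesh a^\mesh)_\G;\O^\mesh_\G),
\]
and the uniform bulk convergence of discrete harmonic measure (\cite{chelkak-smirnov-dca}, Theorem 3.12) then identifies $H=\hm{\,\cdot\,}{ba}{\O}$ with no boundary analysis at all; this is what buys uniformity over fractal domains so cheaply. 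Two smaller corrections: interior equicontinuity of $F^\mesh$ is not a consequence of $H^\mesh$ being a bounded sub-/superharmonic pair alone — it is Theorem~\ref{FboundH}, which uses the s-holomorphic structure essentially; and the sign of $\sqrt{\Phi'}$ is not pinned by asymptotics at $a$ (the normalisation is at $b^\mesh$, a boundary point not reached by compact-subset convergence) — the paper fixes it through the contradiction argument for both signs together with Remark~\ref{RemTheSameSign} on sign-consistency across a connected component of the $r$-interior.
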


and

\begin{theorem}[spin-Ising fermion]
Let discrete domains $(\O^\mesh;a^\mesh,b^\mesh)$ approximate some continuous domain
$(\O;a,b)$ as $\mesh\to 0$. Then, uniformly on compact subsets of $\O$ and independently of
the structure of $\G^\mesh$,
\[
F^\mesh(z)\rightrightarrows \sqrt{\Psi'(z)},
\]
where $F^\mesh=F^\mesh(z;\O^\mesh,a^\mesh,b^\mesh)$ is the discrete holomorphic fermion and
$\Psi:\O\to\C_+$ is the conformal mapping such that $a$ and $b$ are mapped to $\infty$ and
$0$, appropriately normalized at $b$.
\end{theorem}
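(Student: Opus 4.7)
The plan is to realize $F^\mesh$ as the unique solution of a discrete Riemann-Hilbert boundary value problem, reduce this to a Dirichlet-type problem for the real-valued primitive $H^\mesh = \Im \int^z (F^\mesh(w))^2\,dw$, establish equicontinuity and subsequential convergence of $H^\mesh$ on compact subsets of $\O$ using the isoradial discrete complex analysis toolkit of \cite{chelkak-smirnov-dca}, and identify the limit with $\Im\Psi$. Once $H^\mesh \to \Im\Psi$ uniformly on compact subsets, the relation between $H^\mesh$ and $(F^\mesh)^2$ via a discrete derivative upgrades this to $(F^\mesh)^2 \to \Psi'$, and the choice of square-root branch forced by the boundary condition yields $F^\mesh \rightrightarrows \sqrt{\Psi'}$.

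First I would write down precisely the discrete problem: $F^\mesh$ is s-holomorphic on $\O^\mesh$ away from the spin-disorder site (in the sense of Figure~\ref{Fig:IsoGraph}C), the boundary conditions on $\pa\O^\mesh$ state that $F^\mesh(z)$ lies along the direction $(\tau(z))^{-1/2}$ determined by the tangent to the corresponding boundary edge of $\O^\mesh$, and the local behavior near $a^\mesh,b^\mesh$ is fixed by the construction. The continuous counterpart --- find a holomorphic $f$ in $\O$ with $f(z)\in(\tau(z))^{-1/2}\R$ on $\pa\O$, correctly singular at $a$, correctly normalized at $b$ --- is solved by $\sqrt{\Psi'}$: along $\pa\O$ one has $\Psi(z)\in\R$, so $\Psi'(z)\tau(z)^2\in\R_+$, whence $\sqrt{\Psi'(z)}\cdot(\tau(z))^{1/2}\in\R$; the pole of $\Psi$ at $a$ produces the required singularity and the normalization at $b$ fixes the remaining constant. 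A standard Riemann-Hilbert argument shows this is the unique such solution.

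Next I would analyze $H^\mesh$. The defining property of s-holomorphicity is precisely that $(F^\mesh)^2$ admits a real-valued discrete primitive $H^\mesh$ on $\L^\mesh=\G^\mesh\cup(\G^\mesh)^*$; moreover, a short local computation shows that $H^\mesh|_{\G^\mesh}$ is discrete subharmonic and $H^\mesh|_{(\G^\mesh)^*}$ is discrete superharmonic for the isoradial Laplacian, with the two restrictions uniformly close. The Riemann-Hilbert conditions on $F^\mesh$ translate into Dirichlet-type boundary data for $H^\mesh$ (vanishing on $\pa\O^\mesh$ up to constants encoding the topology of the problem). Combined with maximum principles, Harnack-type comparisons, and the rough-boundary (Beurling-type) estimates for discrete harmonic measure on isoradial graphs from \cite{chelkak-smirnov-dca}, this yields uniform bounds and equicontinuity of $H^\mesh$ on compact subsets of $\ol\O\setminus\{a,b\}$. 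Any subsequential limit is harmonic in $\O$, realizes the prescribed boundary values, and carries the correct asymptotics at $a$ and $b$; by uniqueness for the continuous Dirichlet problem the limit equals $\Im\Psi$. Recovering $F^\mesh$ from $H^\mesh$ by a discrete $\pa$-operation then gives the claimed convergence.

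The main obstacle is the local analysis at the spin-disorder site, where $F^\mesh$ has an order/disorder square-root-type monodromy. One must prove sharp a priori estimates for the local behavior of the discrete fermion there and match them to the continuous normalization, so that the subsequential limit of $H^\mesh$ has the correct singular data --- this is noticeably more delicate than in the FK setting of Theorem~A, where the fermion is essentially a probability bounded by~$1$. In parallel, since $\pa\O^\mesh$ is a rough curve made of interleaved pieces from $\G^\mesh$ and $(\G^\mesh)^*$, uniform convergence of $H^\mesh$ up to the boundary requires quantitative discrete harmonic-measure estimates rather than the naive ones valid on the square lattice. Once these local and boundary ingredients are in place, the compactness-plus-uniqueness argument proceeds essentially as in the square-lattice treatment of \cite{chelkak-smirnov-spin}.
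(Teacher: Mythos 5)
Your overall architecture (discrete Riemann--Hilbert problem $\to$ primitive $H^\mesh=\Im\int^\mesh(F^\mesh)^2$, sub/superharmonic on $\G/\G^*$ with Dirichlet-type boundary data $\to$ compactness $\to$ identification of the harmonic limit) is the same as the paper's, and your identification of the continuous solution with $\sqrt{\Psi'}$ is right. But there is a genuine gap at the normalization, which is precisely where the paper spends most of its effort. A subsequential limit $H$ of the $H^\mesh$ is a nonnegative harmonic function vanishing on $\pa\O\setminus\{a\}$, hence only determined up to a multiplicative constant: $H=\mu^2 P_{(\O;a,b)}$. Your proposal says ``the normalization at $b$ fixes the remaining constant'' but never explains how the \emph{discrete} normalization at the single boundary vertex $b^\mesh$ survives the limit. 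This requires three ingredients you do not supply: (i) an a priori bound anchoring the size of $H^\mesh$ in the bulk to $|F^\mesh(b^\mesh)|$ --- the paper's boundary Harnack principle $|F^\mesh(b^\mesh)|^2\asymp H^\mesh(c^\mesh)/t$ (Proposition~\ref{PropBoundHar}), which is what makes the compactness argument start at all, and which forces the standing geometric assumption that $\pa\O^\mesh$ contains a macroscopic straight segment near $b^\mesh$; (ii) the definition of the normalizing constant $\cF^\mesh(b^\mesh)$ itself, which on a general isoradial graph is \emph{not} $1$ and must be constructed as the boundary value of the solution of the same Riemann--Hilbert problem in the discrete half-plane (Theorem~\ref{ThmCFexists}, itself a nontrivial rescaling argument); and (iii) the final step showing $\mu=1$, done by comparing $F^\mesh$ with $\mu\cF^\mesh$ near $b^\mesh$ and estimating the integral of $(F^\mesh-\mu\cF^\mesh)^2$ at the boundary. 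Without (i)--(iii) your argument only yields $F^\mesh\rra \mu\sqrt{\Psi'}$ for some unidentified $\mu>0$ possibly depending on the subsequence.

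Separately, you misplace the main local difficulty. In the interface-based definition used here, $F^\mesh(z)$ is a single-valued s-holomorphic function of $z$ on all of $\O^\mesh_\DS$; there is no interior ``spin-disorder site'' with square-root monodromy to analyze (the monodromy picture is an equivalent reformulation, not the object the proof manipulates). The two points needing special care are the boundary points: $a^\mesh$, where $H^\mesh$ is unbounded (handled by the path/Beurling argument bounding $H^\mesh$ away from $a^\mesh$ in terms of $H^\mesh(c^\mesh)$), and $b^\mesh$, where the normalization lives. Your plan to get ``equicontinuity on compact subsets of $\ol\O\setminus\{a,b\}$'' up to a rough boundary is also stronger than what is needed or proved: the paper only establishes convergence on compacts of $\O$ and identifies boundary values of the harmonic limit via the weak Beurling estimate.
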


Because of the aforementioned martingale property,
these results are sufficient to prove the \emph{convergence of interfaces to
conformally invariant Schramm's SLE curves}
(in our case, $\SLE(3)$ for the spin representation and $\SLE(16/3)$ for the FK representation)
in the weak topology given by the convergence of driving forces in the Loewner equation,
cf. \cite{lsw-ust,smirnov-icm}. 
A priori this is very far from establishing convergence of curve themselves, but we use
techniques of \cite{kemppainen-smirnov-curves} to prove this stronger convergence in
\cite{chelkak-smirnov-spin,smirnov-fk2}.

The third result shows how our techniques can be used to find the (conformally invariant)
limit of some macroscopic quantities, ``staying on the discrete level'', i.e. without
consideration of the limiting curves. Namely, we prove a \emph{crossing probability formula}
for the critical FK-Ising model on isoradial graphs, analogous to Cardy's formula
\cite{smirnov-cras,smirnov-perc} for critical percolation and formulated in detail as
Theorem~\ref{ThmFK4points}:
\begin{theorem}[FK-Ising crossing probability]
Let discrete domains
$(\O^\mesh;a^\mesh,b^\mesh,c^\mesh,d^\mesh)$
with 
alternating (wired/free/wired/free) boundary conditions on four sides
approximate some continuous topological quadrilateral $(\O;a,b,c,d)$ as $\mesh\to 0$.
Then the probability of an FK cluster crossing
between two wired sides has a scaling limit,
which depends only on the conformal modulus of the limiting quadrilateral,
and is given for the half-plane by
\begin{equation}
\label{CrossP} p(\H;0,1\!-\!u,1,\infty)= 
\frac{\sqrt{1-\sqrt{1\!-\!u}}}{\sqrt{1\!-\!\sqrt{u}}+\sqrt{1\!-\!\sqrt{1\!-\!u}}}\,,
~~u\in[0,1].
\end{equation}
\end{theorem}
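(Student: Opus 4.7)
The plan is to reduce the computation to a discrete Riemann--Hilbert problem in the quadrilateral that is analogous to the one of Theorem~A. On $(\O^\mesh;a^\mesh,b^\mesh,c^\mesh,d^\mesh)$ with alternating (wired/free/wired/free) boundary conditions I would first define an s-holomorphic fermion $F^\mesh$ built in the same spirit as in the two-point case but adapted to four marked points: it assigns to each FK configuration a complex weight whose phase is dictated by the winding of the FK interface started at $a^\mesh$, and by the domain Markov property of FK-Ising it is a martingale under the exploration process. Its boundary relations are of Riemann type (projection on $\sqrt{\text{tangent}}$) on each of the four arcs, with the sign/phase of the condition jumping at each of $a^\mesh,b^\mesh,c^\mesh,d^\mesh$. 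The crossing probability then emerges as the ratio of two specific real boundary values (or discrete primitives along free arcs) of $F^\mesh$: one counting ``crossing'' contributions and one counting all contributions.

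Next I would apply the convergence machinery developed for Theorem~A: compactness and sub-sequential limits of s-holomorphic functions with bounded Riemann--Hilbert data (based on the discrete complex analysis toolbox from \cite{chelkak-smirnov-dca}) give that $F^\mesh$ converges uniformly on compact subsets of $\O$ to a continuous holomorphic function $F$ solving the $\sqrt{dz}$-covariant limiting boundary value problem. Because the limiting problem is conformally covariant, the limit of the crossing probability is a conformal invariant of the topological quadrilateral, hence a function of its modulus only. It therefore suffices to carry out the computation in the half-plane model $(\H;0,1\!-\!u,1,\iy)$.

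In the half-plane, the limiting boundary value problem admits an explicit solution: one looks for $F$ holomorphic on $\H$ with $F(x)^2\,dx$ real on $\R$ and prescribed square-root-type behaviour at the four marked points, which forces $F$ to be an elementary function built from the branches of $\sqrt{\,z\,}$, $\sqrt{\,z-(1\!-\!u)\,}$ and $\sqrt{\,z-1\,}$. Its real integrals over the intervals $(0,1\!-\!u)$ and $(1\!-\!u,1)$ are elementary and, after the substitutions dictated by the fermionic $\sqrt{dz}$ (which effectively replaces $u$ by $\sqrt{u}$ and $1\!-\!u$ by $\sqrt{1\!-\!u}$), combine into the claimed nested-radical formula
\[
p(\H;0,1\!-\!u,1,\iy)=\frac{\sqrt{1\!-\!\sqrt{1\!-\!u}}}{\sqrt{1\!-\!\sqrt{u}}+\sqrt{1\!-\!\sqrt{1\!-\!u}}}.
\]
The main obstacle, I expect, is the very first step: engineering the correct four-point observable and verifying that its boundary evaluation really does recover the crossing probability. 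Unlike the two-point case, where the exploration interface is forced to terminate at $b^\mesh$, here it can end at $b^\mesh$, $c^\mesh$ or $d^\mesh$, and the corresponding complex phases must be carefully disentangled so that their real/imaginary parts split cleanly into ``crossing'' and ``no crossing'' contributions and so that the resulting combination is an honest probability. Once that combinatorial setup is in place, the passage to the scaling limit and the explicit half-plane computation proceed along the blueprint already established for Theorems~A and~B.
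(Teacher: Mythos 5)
There is a genuine gap, and it sits exactly where you flag ``the main obstacle'': the paper does not construct a new four-point fermion at all, and the device it uses instead is the substantive idea of the proof. The paper adds an \emph{external edge} joining $c^\mesh$ to $d^\mesh$ (and, separately, one joining $a^\mesh$ to $d^\mesh$), which restores the Dobrushin two-point setting and yields two s-holomorphic observables $F^\mesh_{[cd]}$ and $F^\mesh_{[ad]}$ of the type already analysed in Theorem~A. Their values at the marked points are explicit in terms of $\rP^\mesh=\P(a^\mesh\lra b^\mesh;\,c^\mesh\lra d^\mesh)$ and $\rQ^\mesh=1-\rP^\mesh$, with relative factors $\sqrt2$ coming from the extra loop created or destroyed by the external edge. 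One then takes the particular convex combination of $F^\mesh_{[cd]}$ and $F^\mesh_{[ad]}$ (with coefficients depending on $\rP^\mesh$ itself) that makes the value at $d^\mesh$ vanish; the primitive $H^\mesh=\Im\int^\mesh(F^\mesh)^2$ is then $0$ on $(a^\mesh b^\mesh)$, $1$ on $(b^\mesh c^\mesh)$ and equal to a single constant $\vk^\mesh$ on $(c^\mesh d^\mesh)\cup(d^\mesh a^\mesh)$, where $\vk^\mesh$ is an explicit algebraic bijection $\xi$ of $\rP^\mesh$ (formula (\ref{Vk=P})). This is what converts the crossing probability into a single real parameter of a scalar boundary value problem; your proposal of reading $p$ off as ``a ratio of two real boundary values'' of a single four-point observable is not backed by any construction, and the $\sqrt2$ loop-counting factors --- without which the final nested-radical formula cannot appear --- are absent from your sketch.

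The second gap is in the identification of the limit. The limiting $H$ is the imaginary part of the conformal map onto the slit strip $[\R\times(0,1)]\setminus(-\infty+i\vk,i\vk]$, and the whole point is to show that $d$ is mapped \emph{exactly to the tip} $i\vk$; only then is $\vk$ (hence $p=\xi^{-1}(\vk)$) determined by the conformal modulus of the quadrilateral. This is not automatic: the Riemann boundary data on $(cd)\cup(da)$ only forces $H\equiv\vk$ there, which is consistent with $d$ landing anywhere on the slit. The paper excludes the bad cases by a discrete Green's-formula argument showing that the outward normal derivative of $H^\mesh$ has a sign on the free arcs, combined with the weak Beurling estimate to control the contributions near the endpoints; this ``$\pa_n H\le 0$ survives the limit'' step has no counterpart in your outline. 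Relatedly, your half-plane computation via explicit radicals $\sqrt{z}$, $\sqrt{z-(1-u)}$, $\sqrt{z-1}$ and a substitution that ``replaces $u$ by $\sqrt u$'' is not a derivation: in the paper the half-plane step is the elementary harmonic-measure computation $\vk(\H;0,1-u,1,\infty)=u$, and all the radicals in (\ref{CrossP}) come from inverting the algebraic relation $\xi$ between $\vk$ and $\rP$, i.e.\ from the $\sqrt2$ loop weights, not from the $\sqrt{dz}$ covariance.
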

\noindent The version of this formula for multiple SLEs was derived by Bauer, Bernard and
Kyt\"ol\"a in \cite{bauer-bernard-kytola-multiple}, see page 1160, their notation for the
modulus related to ours by $x=1-u$. Besides being of an independent interest, this result
together with \cite{kemppainen-smirnov-curves} is needed to improve the
topology of convergence of FK-Ising interfaces. 
Curiously, the (macroscopic) answer for a unit disc
$(\dD;-e^{i\phi},e^{-i\phi},e^{i\phi},-e^{-i\phi})$ formally coincides with the relative
weights corresponding to two possible crossings inside (microscopic) rhombi (see
Fig.~\ref{Fig:FKDomain}A) in the critical model (see Remark~\ref{RemFK4pointsDisc}). }

\subsection{Organization of the paper}
We begin with the definition of Fortuin-Kasteleyn (random cluster) and spin representations of
the critical Ising model on isoradial graphs in Section~\ref{SectModel}.
From the outset we work with critical interactions, but in principle one can introduce
a temperature parameter, which would lead to \emph{massive holomorphic fermions}.
We also introduce the
basic discrete holomorphic observables (holomorphic fermions) satisfying the martingale
property with respect to the growing interface and, essentially, show
that they satisfy discrete version of the Cauchy-Riemann equation
(Proposition~\ref{PropFKfermionDef} and Proposition~\ref{PropSpinFermionHol})
using some simple combinatorial bijections between the sets of configurations. Actually, we
show that our observables satisfy the stronger 
``two-points'' equation which we call
 \emph{spin} or \emph{strong holomorphicity},
or simply  \emph{s-holomorphicity}.

We discuss the properties of s-holomorphic functions in Section~\ref{SectSHolFunct}. The main
results are:
\renewcommand{\theenumi}{\alph{enumi}}
\begin{enumerate}
\item The (rather miraculous) possibility to define naturally the discrete version of
$h(z)=\Im\int (f(z))^2dz$, see Proposition \ref{PropHDef}. Note that the square $(f(z))^2$ of
a discrete holomorphic function $f(z)$ is not discrete holomorphic anymore,
but unexpectedly it satisfies ``half'' of the Cauchy-Riemann equations,
making its imaginary part a closed form with a well-defined integral;
\item The sub- and super-harmonicity of $h$ on the original isoradial graph $\G$ and
its dual $\G^*$, respectively, and the a priori comparability of the components $h\big|_\G$
and $h\big|_{\G^*}$ which allows one to deal with $h$ as with a harmonic function: e.g.
nonnegative $h$'s satisfy a version of the Harnack Lemma (see Section~\ref{SectHarnackH});
\item The uniform (w.r.t. $\mesh$ and the structure of the isoradial graph/rhombic lattice $\G,\G^*/\DS$) boundedness
and, moreover, uniform Lipschitzness of s-holomorphic functions inside their domains of
definition $\O^\mesh$, with the constants depending on $M=\max_{v\in\O^\mesh}|h(v)|$ and the
distance $d=\dist(z;\pa\O^\mesh)$ only (Theorem~\ref{FboundH}, these results should be
considered as discrete analogous of the standard estimates from the classical complex
analysis);
\item The combinatorial trick (see Section~\ref{SectBModTrick}) that allows us to
transform the discrete version of the Riemann-type boundary condition $f(\z)\parallel
(\t(\z))^{-\frac{1}{2}}$ into the Dirichlet condition for $h\big|_{\pa\O}$ on both
$\G$~and~$\G^*$, thus completely avoiding the reference to Onsager's magnetization estimate
used in \cite{smirnov-icm,smirnov-fk1} to control the difference $h\big|_\G-h\big|_{\G^*}$ on the
boundary.
\end{enumerate}

We prove the (uniform) convergence of the basic observable in the FK-Ising model to its
continuous counterpart in Section~\ref{SectFKobservable}. The main result here is
Theorem~\ref{ThmFKConvergence}, which is the technically simplest of our main theorems, so the
reader should consider the proof as a basic example of the application of our techniques.
Besides the results from \cite{chelkak-smirnov-dca} and the previous Sections, the important
idea (exactly as in \cite{smirnov-fk1}) is to use some compactness arguments (in the set of
all simply-connected domains equipped with the Carath\'eodory topology) in order to derive the
uniform (w.r.t. to the shape of $\O^\mesh$ and the structure of $\DS^\mesh$) convergence from
the ``pointwise'' one.

In Section~\ref{SectSpinObservable} we prove analogous convergence result for the holomorphic
fermion defined for the spin representation of the critical Ising model
(Theorem~\ref{ThmSpinConvergence}). There are two differences from the preceding Section: the
unboundedness of the (discrete) integral $h=\Im\int (f(z))^2dz$ (this prevents us from the
immediate use of compactness arguments) and the need to consider the normalization of our
observable at the target point $b^\mesh$ (this is crucial for the martingale property). In
order to handle the normalization at $b^\mesh$, we assume that our domains $\O^\mesh$ contain
a (macroscopic) rectangle near $b^\mesh$ and their boundaries $\pa\O^\mesh$ approximate the
corresponding straight segment as $\mesh\to 0$.
Making this technical assumption,
we don't lose much generality,
since the growing interface, though fractal in the limit, doesn't
change the shape of the domain near $b^\mesh$. Then, we use a version of the boundary Harnack
principle (Proposition~\ref{PropBoundHar}) in order to control the values of $h$ in the bulk
through the fixed value $f(b^\mesh)$. Another important technical ingredient is the universal
(w.r.t. to the structure of $\DS^\mesh$) multiplicative normalization of our observable.
Loosely speaking, we define it using the value at $b^\mesh$ of the discrete holomorphic
fermion in the discrete half-plane (see Theorem~\ref{ThmCFexists} for further details).

Section~\ref{SectFK4points} is devoted to the crossing probability formula
for the FK-Ising model on discrete quadrilaterals $(\O^\mesh;a^\mesh,b^\mesh,c^\mesh,d^\mesh)$
(Theorem~\ref{ThmFK4points}, see also Remark~\ref{RemFK4pointsDisc}). The main idea here is to
construct some discrete holomorphic in $\O^\mesh$ function whose boundary values reflect the
conformal modulus of the quadrilateral. Namely, in our construction, discrete functions
$h^\mesh=\Im\int(f^\mesh(z))^2dz$ approximate the imaginary part of the conformal mapping from
$\O^\mesh$ onto the slit strip $[\R\ts(0,1)]\setminus (-\infty+i\vk;i\vk]$ such that $a^\mesh$
is mapped to the ``lower''~$-\infty$, $b^\mesh$ to~$+\infty$; $c^\mesh$ to the
``upper''~$-\infty$ and $d^\mesh$ to the tip~$i\vk$. The respective crossing probabilities are
in the $1$-to-$1$ correspondence with values $\vk^\mesh$ which approximate $\vk$ as $\mesh\to
0$. Since $\vk$ is uniquely determined by the limit of conformal moduli of
$(\O^\mesh;a^\mesh,b^\mesh,c^\mesh,d^\mesh)$, we obtain (\ref{CrossP}) (see further details in
Section~\ref{SectFK4points}). Finally, Appendix contains several auxiliary lemmas: estimates
of the discrete harmonic measure, discrete version of the Cauchy formula, and technical
estimates of the Green function in the disc. We refer the reader interested in a more detailed
presentation of the discrete complex analysis on isoradial graphs to our paper
\cite{chelkak-smirnov-dca}.

\medskip

\noindent {\bf Acknowledgments.} This research was supported by the Swiss N.S.F., by the
European Research Council AG CONFRA, and by the Chebyshev
Laboratory (Department of Mathematics and Mechanics, Saint-Petersburg State University) under
the grant 11.G34.31.0026 of the Government of the Russian Federation. The first author was
partially funded by P.Deligne's 2004 Balzan prize in Mathematics and by the grant MK-7656.2010.1.

\section{Critical spin- and FK-Ising models on isoradial graphs.\\
 Basic observables (holomorphic fermions)}
\label{SectModel}

\subsection{Critical FK-Ising model}
\label{SectFKIsing}

\subsubsection{Loop representation of the model, holomorphic fermion, martingale property}

\begin{figure}
\centering{
\begin{minipage}[b]{0.3\textwidth}

\centering{\includegraphics[width=0.6\textwidth]{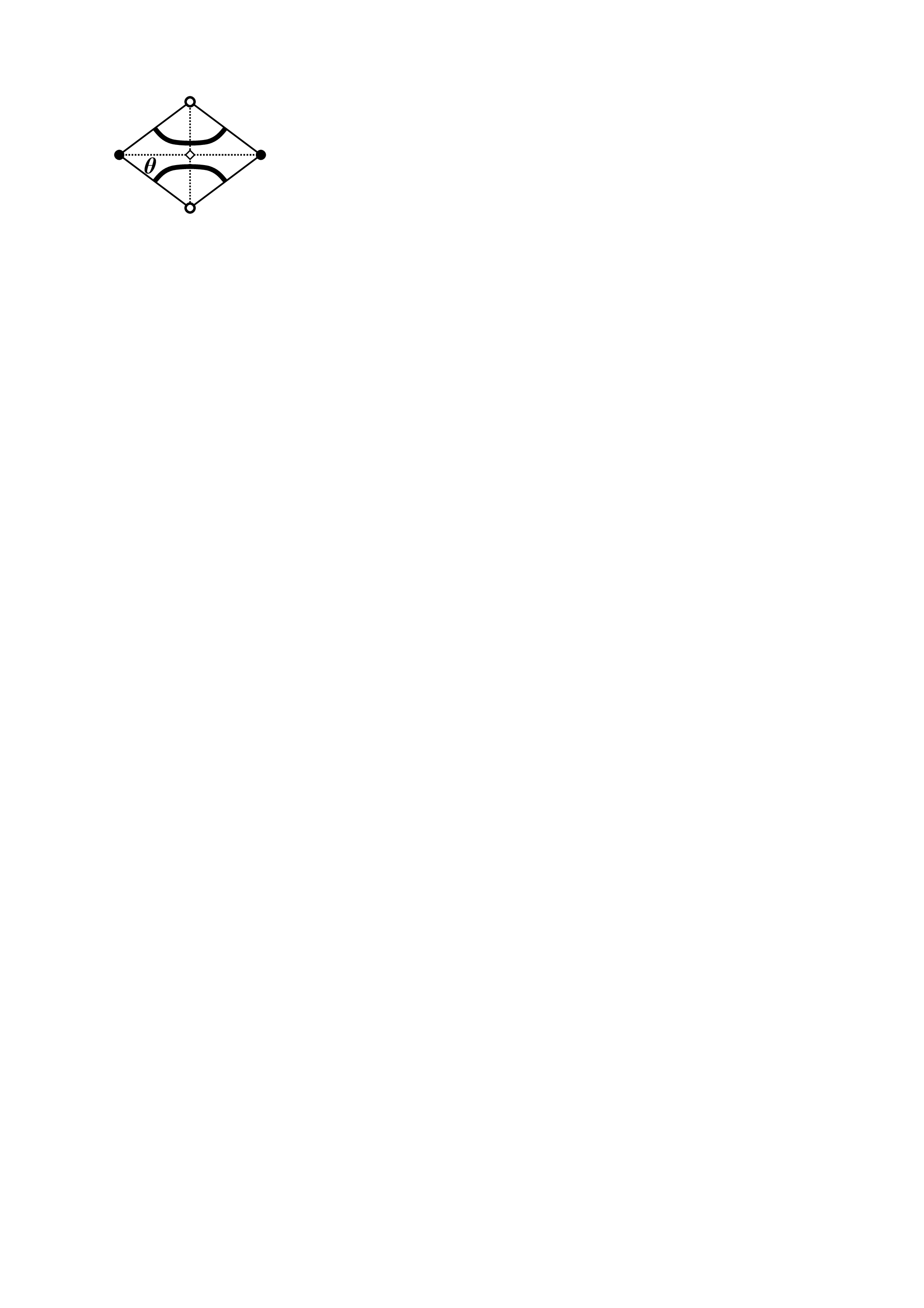}}
\smallskip
$\theta_\mathrm{config.}(z)=\theta$,
\smallskip

relative weight $=~\sin\tfrac{1}{2}\theta$

\bigskip \bigskip \medskip

\centering{\includegraphics[width=0.6\textwidth]{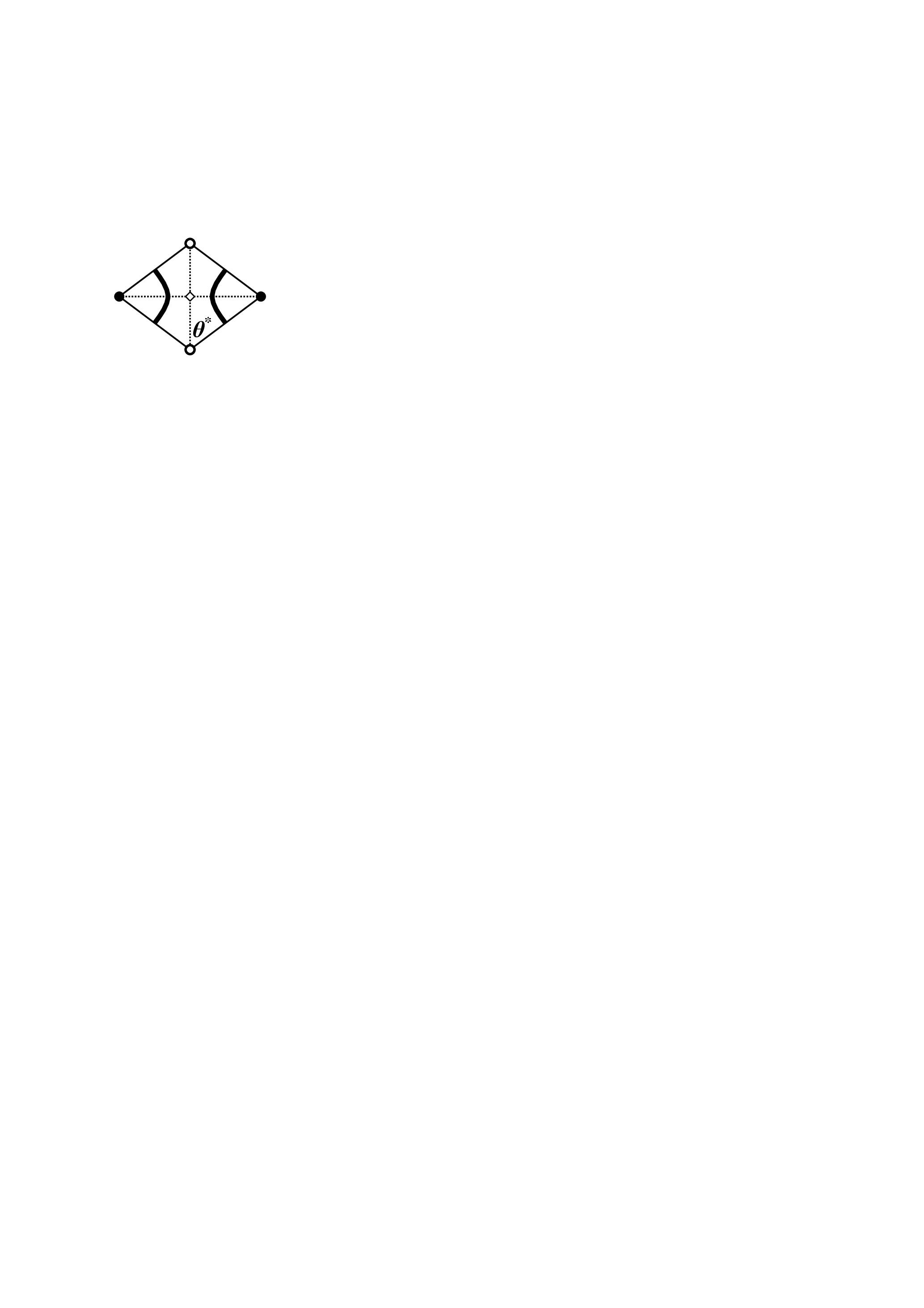}}

\smallskip

$\theta_\mathrm{config.}(z)=\theta^*=\tfrac{\pi}{2}-\theta$,

\smallskip

relative weight $=~\sin\tfrac{1}{2}\theta^*$

\bigskip \bigskip \medskip

\centering{\textsc{(A)}}
\end{minipage}
\begin{minipage}[b]{0.64\textwidth}
\centering{\includegraphics[width=\textwidth]{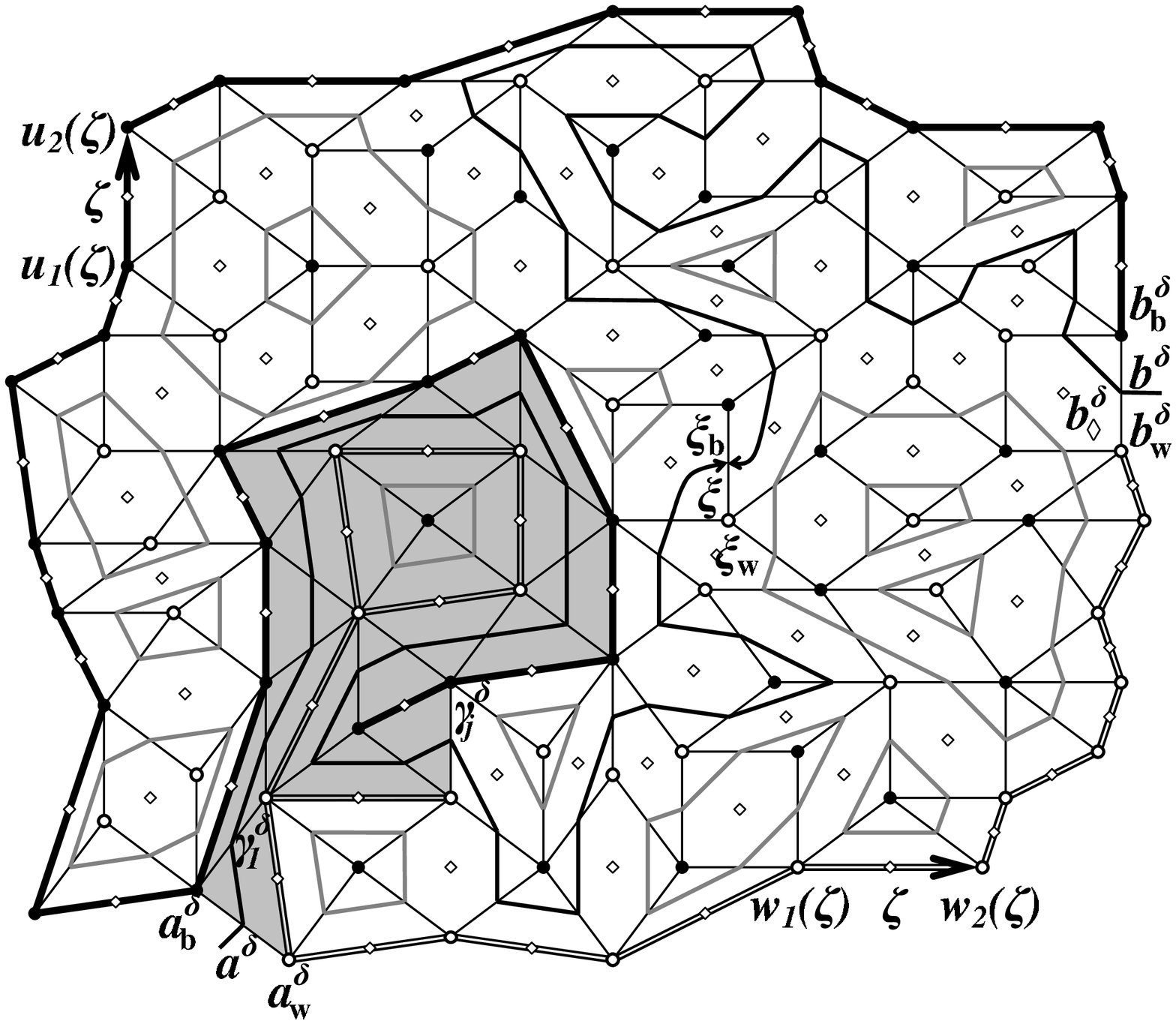}}

\centering{\textsc{(B)}}
\end{minipage}
} \caption{\label{Fig:FKDomain} \textsc{(A)}~Loop representation of the critical FK-Ising model on isoradial graphs: the
relative weights corresponding to two possible choices of connections inside the inner rhombus
$z$ (the partition function is given by (\ref{Z-FK})). \textsc{(B)} Discrete domain
$\O^\mesh_\DS$ with a sample configuration. Besides loops, there is an interface
$\g^\mesh$ connecting $a^\mesh$ to $b^\mesh$. Calculating the $\wind(\g^\mesh;\,b^\mesh\rsa
\xi)$, we draw $\g^\mesh$ so that it intersects the edge
$\xi=[\xi_\mathrm{b}\xi_\mathrm{w}]$ orthogonally. As $\g^\mesh$ grows, it separates some part
of $\O^\mesh_\DS$ (shaded) from $b^\mesh$. We denote by
$\O^\mesh_\DS\setminus[a^\mesh\g^\mesh_1..\g^\mesh_j]$ the connected component containing
$b^\mesh$ (unshaded).}
\end{figure}

We will work with a graph domain which can be thought of
as a discretization of a simply-connected planar domain with two marked boundary points.
Let \mbox{$\O^\mesh_\DS\!\ss\!\DS$} be a simply-connected discrete domain composed of inner
rhombi \mbox{$z\!\in\!\Int\O^\mesh_\DS$} and boundary half-rhombi $\z\in\pa\O^\mesh_\DS$, with
two marked boundary points $a^\mesh$, $b^\mesh$ and Dobrushin boundary conditions (see
Fig.~\ref{Fig:FKDomain}B): $\pa\O^\mesh_\DS$ consists of the ``white'' arc $a^\mesh_{\mathrm{w}}
b^\mesh_{\mathrm{w}}$, the ``black'' arc $b^\mesh_{\mathrm{b}}a^\mesh_{\mathrm{b}}$, and two edges
$[a^\mesh_{\mathrm{b}}a^\mesh_{\mathrm{w}}]$, $[b^\mesh_{\mathrm{b}}b^\mesh_{\mathrm{w}}]$ of
$\L$. Without loss of generality, we assume that
\begin{center}
\emph{$b^\mesh_{\mathrm{b}}-b^\mesh_{\mathrm{w}}=i\mesh$, i.e., the edge
$b^\mesh=[b^\mesh_\mathrm{b}b^\mesh_\mathrm{w}]$ is oriented vertically.}
\end{center}

For each inner rhombus $z\in\Int\O^\mesh_\DS$ we choose one of two possibilities to connect
its sides (see Fig.~\ref{Fig:FKDomain}A, there is only one choice for boundary half-rhombi),
thus obtaining the set of configurations (whose cardinality is
$2^{\mathrm{\#(\Int\O^\mesh_\DS)}}$). The {\bf partition function} of the critical FK-Ising
model is given by
\begin{equation}
\label{Z-FK} Z=\sum_{\mathrm{config.}}~{\textstyle
\sqrt{2}^{\,\mathrm{\#(loops)}}\prod_{z\in\Int\O^\mesh_\DS}\sin\frac{1}{2}\theta_{\mathrm{config.}}(z)},
\end{equation}
where $\theta_{\mathrm{config.}}(z)$ is equal to either $\theta$ or
$\theta^*=\frac{\pi}{2}-\theta$ depending on the choice of connections inside rhombus $z$ (see
Fig.~\ref{Fig:FKDomain}A).

We described the loop representation, since at criticality it is easier to work with, than the
usual random cluster one. The loops trace the perimeters of random clusters, and the curve
joining the two marked boundary points is the interface between a cluster and a dual cluster
wired on two opposite boundary arcs (the so-called Dobrushin boundary conditions).

Let $\xi=[\xi_\mathrm{b} \xi_\mathrm{w}]$ be some inner edge of $\O^\mesh_\DS$ (where
$\xi_\mathrm{b}\in\G$, $\xi_\mathrm{w}\in\G^*$).
Due to the boundary conditions chosen, each configuration consists of (a number of)
loops and one interface $\g^\mesh$ running from $a^\mesh$ to $b^\mesh$.
The {\bf holomorphic fermion} is defined as
\begin{equation}
\label{FKfermionXiDef} F^\mesh(\xi)=F^\mesh_{(\O^\mesh;\,a^\mesh,b^\mesh)}(\xi):=
(2\mesh)^{-\frac{1}{2}}\cdot \E\left[\,\c(\xi\!\in\!\g^\mesh)\cdot
e^{-\frac{i}{2}\wind(\g^\mesh;\,b^\mesh\rsa\xi)}\,\right],
\end{equation}
where $\c(\xi\!\in\!\g^\mesh)$ is the indicator function of the event that the interface intersects
$\xi$ and
\begin{equation}
\label{WindBZ=} \wind(\g^\mesh;\,b^\mesh\rsa\xi) =
\wind(\g^\mesh;\,a^\mesh\rsa\xi)-\wind(\g^\mesh;\,a^\mesh\rsa b^\mesh)
\end{equation}
denotes the total turn of $\g^\mesh$ measured (in radians) from $b^\mesh$ to~$\xi$. Note that,
for all configurations and edges $\xi$ one has (see Fig.~\ref{Fig:FKDomain}B),
\[
e^{-\frac{i}{2}\wind(\g^\mesh;\,b^\mesh\rsa\xi)}~\parallel\
[i(\xi_\mathrm{w}\!-\!\xi_\mathrm{b})]^{-\frac{1}{2}}\,.
\]

\begin{remark}[\bf Martingale property]
For each $\xi$,
$F^\mesh_{(\O^\mesh\setminus[a^\mesh\g^\mesh_1..\g^\mesh_j];\g^\mesh_j,b^\mesh)}(\xi)$ is a
martingale with respect to the growing interface $(a^\mesh=\g^\mesh_0, \g^\mesh_1,
 ..., \g^\mesh_j, ...)$ (till the stopping time when $\g^\mesh$ hits
$\xi$ or $\xi$~becomes separated from $b^\mesh$ by the interface, see
Fig.~\ref{Fig:FKDomain}B).
\end{remark}

\begin{proof}
Since the $\wind(\g^\mesh;\,b^\mesh\rsa\xi)$ doesn't depend on the beginning of the interface,
the claim immediately follows from the total probability formula.
\end{proof}

\subsubsection{Discrete boundary value problem for $F^\mesh$.}

We start with the extension of $F^\mesh$ to the centers of rhombi $z\in\O^\mesh_\DS$.
Actually, the (rather fortunate) opportunity to use the definition given below reflects the discrete
holomorphicity of $F^\mesh$ (see discussion in Section~\ref{SectSHol}).

\begin{proposition}
\label{PropFKfermionDef} Let $z\in\Int\O^\mesh_\DS$ be the center of some inner rhombus
$u_1w_1u_2w_2$. Then, there exists a complex number $F^\mesh(z)$ such that
\begin{equation}
\label{FKfermionDef}
F^\mesh([u_jw_k])=\Pr\left[\,F^\mesh(z)\,;\,[i(w_k\!-\!u_j)]^{-\frac{1}{2}}\right],\quad
j,k=1,2.
\end{equation}
\end{proposition}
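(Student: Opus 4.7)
First I observe that the four directions $[i(w_k-u_j)]^{-\frac{1}{2}}$ associated with the four edges of the rhombus come in two perpendicular pairs. Setting $v_1:=w_1-u_1$ and $v_2:=w_2-u_1$, the parallelogram identity $u_1+u_2=w_1+w_2$ gives $w_2-u_2=-v_1$ and $w_1-u_2=-v_2$, so the direction line associated with $[u_2w_2]$ differs from that associated with $[u_1w_1]$ by multiplication by $i$ (and similarly for the other pair of opposite edges). Since each $F^\mesh([u_jw_k])$ is, by the winding identity stated just before Proposition \ref{PropFKfermionDef}, a real multiple of the corresponding direction line, a complex number projecting correctly onto two perpendicular lines is uniquely determined and equals the sum of its two projections. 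The only candidate is therefore
\begin{equation*}
F^\mesh(z)\;:=\;F^\mesh([u_1w_1])+F^\mesh([u_2w_2]),
\end{equation*}
and the proposition reduces to the single complex identity
\begin{equation*}
F^\mesh([u_1w_1])+F^\mesh([u_2w_2])\;=\;F^\mesh([u_1w_2])+F^\mesh([u_2w_1]).
\end{equation*}

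\textbf{Local bijection at the rhombus.} To prove this identity I would expand each edge observable using (\ref{FKfermionXiDef}) and (\ref{Z-FK}) and group the resulting sums by the restriction $\omega_{-z}$ of the configuration to $\O^\mesh_\DS\setminus\{z\}$. For fixed $\omega_{-z}$, the four strand-endpoints on $\pa z$ and the distribution of loops/interface outside $z$ are prescribed; the two possible local connections inside $z$ (with weights $\sin\frac{\theta}{2}$ and $\sin\frac{\theta^*}{2}=\cos\frac{\theta}{2}$) realize the two planar matchings of those endpoints. Swapping the local connection is a bijection on configurations that changes the total number of loops by exactly $\pm 1$, contributing an extra factor $\sqrt{2}^{\pm 1}$ to the weight. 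Thus configurations come in canonical pairs $(\omega_\theta,\omega_{\theta^*})$ differing only inside $z$, with explicitly controlled weight ratio.

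\textbf{Checking the identity.} For each such pair, either the interface avoids $z$ (in which case the pair contributes $0$ to all four edge observables) or it traverses exactly two of the four rhombus edges. In the latter case the two local choices route the interface through complementary pairs of edges, and the winding $\wind(\gamma;b^\mesh\rsa\xi)$ between entry and exit differs by $\pm\theta$ or $\pm\theta^*$ according to the local geometry. Substituting into (\ref{FKfermionXiDef}), each pair's contribution to the two sides of the required identity reduces to an elementary trigonometric identity between combinations of $\sin\frac{\theta}{2}$, $\cos\frac{\theta}{2}$, the $\sqrt{2}$ loop factor, and half-angle phases $e^{\pm i\theta/2}$, $e^{\pm i\theta^*/2}$. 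The main obstacle is the bookkeeping: correctly identifying which of the two local patterns is the "loop-merging" one for each $\omega_{-z}$, enumerating the (few) possibilities for the interface entry/exit pair, and collecting the resulting phases in a uniform way. Once this is done, the critical weights $\sin\frac{1}{2}\theta$, $\sin\frac{1}{2}\theta^*$ turn out to be precisely those for which the identity holds pair-by-pair; this is the combinatorial incarnation of the integrability of the critical FK-Ising model, and it is what allows the edge observable to be promoted to a well-defined value $F^\mesh(z)$ at each rhombus centre.
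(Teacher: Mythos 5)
Your overall strategy is the same as the paper's: pair up configurations by a local rearrangement at $z$ and reduce the statement to a per-pair trigonometric identity at the critical weights. Your opening reduction is a nice explicit formulation of something the paper only uses implicitly: since opposite sides of the rhombus satisfy $w_2-u_2=-(w_1-u_1)$ and $w_1-u_2=-(w_2-u_1)$, the four direction lines split into two perpendicular pairs, each edge value is parallel to its line, and hence the four projection conditions are equivalent to the single identity $F^\mesh([u_1w_1])+F^\mesh([u_2w_2])=F^\mesh([u_1w_2])+F^\mesh([u_2w_1])$. The paper's displayed identity is exactly this equation (for an ``L'' pair, after factoring out the common phase $e^{i\vp/2}$), so you have correctly isolated what must be checked.

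The gap is that you never check it. The entire content of the proposition lives in the final step you defer as ``bookkeeping'': one must actually tabulate, for each of the two pair types, the contributions of both configurations to all four edges --- including the fact that the weight ratio is $\sqrt{2}\sin\tfrac{\theta}{2}:\sin\tfrac{\theta^*}{2}$ for one pair type and $\sin\tfrac{\theta}{2}:\sqrt{2}\sin\tfrac{\theta^*}{2}$ for the other, and the precise phases $1,\,e^{i\theta},\,-i,\,e^{-i\theta^*}$ picked up by the winding --- and then verify, e.g.,
\[
\textstyle \bigl[\sqrt{2}\sin\frac{\theta}{2}+\sin\frac{\theta^*}{2}\bigr]-i\sin\frac{\theta^*}{2}
=e^{i\theta}\sin\frac{\theta^*}{2}+e^{-i\theta^*}\bigl[\sqrt{2}\sin\frac{\theta}{2}+\sin\frac{\theta^*}{2}\bigr],
\]
which uses $\theta+\theta^*=\pi/2$ in an essential way and fails for non-critical weights. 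Asserting that ``the critical weights turn out to be precisely those for which the identity holds'' is a statement of the theorem, not a proof of it. There is also a factual slip in your description of the pairs: the two local choices do not route the interface through ``complementary pairs of edges.'' Rather, in each pair one configuration sends the interface directly from the entry edge to the exit edge (crossing only those two sides, and closing up an extra loop through the other two ports, whence the $\sqrt{2}$), while its partner routes the interface through all four sides. Getting this right is exactly the bookkeeping needed to populate the table correctly, so the slip is not cosmetic. Your claim that swapping always changes the loop count by $\pm1$ is true in the only case that matters (interface passing through $z$; when it avoids $z$ the pair contributes $0$ to every edge), so that part is fine.
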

\noindent The proposition essentially states that $F^\mesh$ is \emph{spin holomorphic} as
specified in Definition~\ref{DefSHol} below. By $\Pr[X;u]$ we denote the orthogonal projection
of the vector $X$ on the vector $u$, which is parallel to $u$ and equal to
\[
\Pr[X;u]=\Re\lt(X\frac{\ol{u}}{|u|}\rt)\frac{u}{|u|}=\frac{X}{2}+\frac{\ol{X}u^2}{2|u|^2}
\]
(here we consider complex numbers as vectors). Because of the latter rewriting, the choice of
the sign in the square root in \eqref{FKfermionDef} does not matter.

\begin{figure}
\centering{\includegraphics[width=0.7\textwidth]{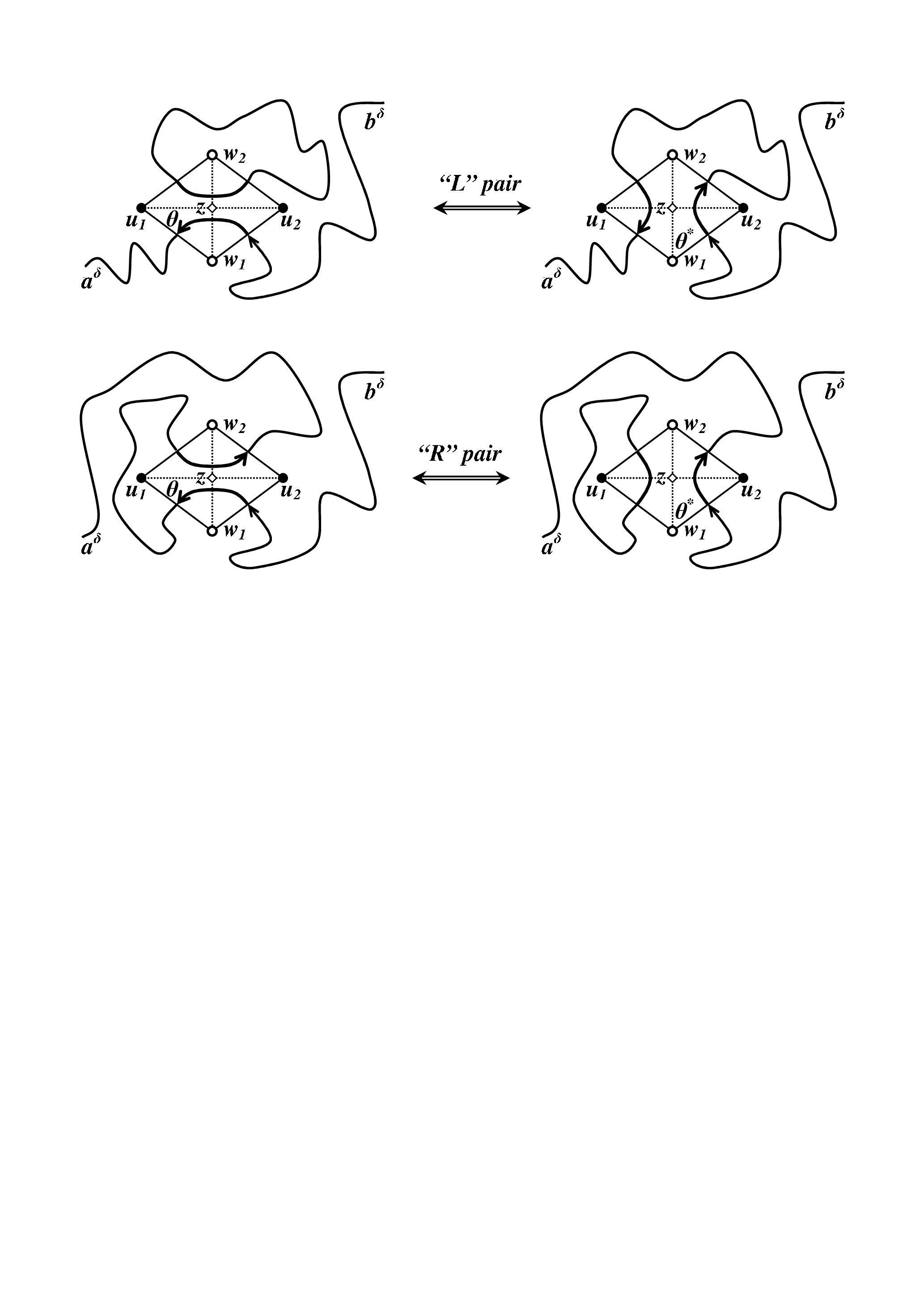}}
\caption{\label{Fig:FKBijection} Local rearrangement at $z$: the bijection between
configurations. Without loss of generality, we may assume that the (reversed, i.e., going
from $b^\mesh$) interface enters the rhombus $z$ through the edge $[u_2w_1]$ (the case
$[u_1w_2]$ is completely similar). There are two possibilities: either $\g^\mesh$ (finally)
leaves $z$ through $[u_1w_1]$ (``L'',~left turn) or through $[u_2w_2]$ (``R'',~right turn). In
view of (\ref{Z-FK}), the relative weights of configurations are
$\sqrt{2}\sin\tfrac{1}{2}\theta$, $\sin\tfrac{1}{2}\theta^*$ (``L'' pairs), and
$\sin\tfrac{1}{2}\theta$, $\sqrt{2}\sin\tfrac{1}{2}\theta^*$ (``R'' pairs).}
\end{figure}

\begin{proof}
As on the square grid (see \cite{smirnov-fk1}), the proof is based on the bijection between
configurations which is produced by their local rearrangement at $z$. It is sufficient to check
that the contributions to
$F^\mesh([u_jw_k])$'s
of each pair of configurations drawn in Fig.~\ref{Fig:FKBijection} are
the specified projections
of the same complex number. The relative contributions of configurations (up to the
same real factor coming from the structure of the configuration away from $z$) to the values of
$F$ on four edges around $z$ for the pairs ``L'' and ``R'' are given by

\begin{center}\begin{tabular}{||c||c|c|c|c||} \hline\hline
$\vphantom{\big|^|_|}$ & $e^{i\frac{\vp}{2}}\!\cdot F^\mesh([u_2w_1])$ &
$e^{i\frac{\vp}{2}}\!\cdot F^\mesh([u_2w_2])$ & $e^{i\frac{\vp}{2}}\!\cdot F^\mesh([u_1w_2])$
& $e^{i\frac{\vp}{2}}\!\cdot F^\mesh([u_1w_1])$ \cr \hline\hline
$\vphantom{\big|^|_|}\mathrm{L}$ & $\sqrt{2}\sin\frac{\theta}{2}+\sin\frac{\theta^*}{2}$ &
$e^{i\theta}\sin\frac{\theta^*}{2}$ & $-i\sin\frac{\theta^*}{2}$ &
$e^{-i\theta^*}[\sqrt{2}\sin\frac{\theta}{2}+\sin\frac{\theta^*}{2}]$ \cr \hline
$\vphantom{\big|^|_|}\mathrm{R}$ & $\sin\frac{\theta}{2}+\sqrt{2}\sin\frac{\theta^*}{2}$ &
$e^{i\theta}[\sin\frac{\theta}{2}+\sqrt{2}\sin\frac{\theta^*}{2}]$ & $i\sin\frac{\theta}{2}$ &
$e^{-i\theta^*}\sin\frac{\theta}{2}$ \cr \hline\hline
\end{tabular}\end{center}

\medskip

\noindent where $\vp$ denotes the total turn of the interface traced
from $b^\mesh$ to $[u_2w_1]$.
An easy trigonometric calculation using that $\theta+\theta^*=\pi/2$ shows
that
\[
\textstyle \left[\sqrt{2}\sin\frac{\theta}{2}+\sin\frac{\theta^*}{2}\right]
~-~i\sin\frac{\theta^*}{2} ~=~e^{i\theta}\sin\frac{\theta^*}{2}~+~
e^{-i\theta^*}\left[\sqrt{2}\sin\frac{\theta}{2}+\sin\frac{\theta^*}{2}\right]~.
\]
Denoting the common value of the two sides by $e^{i\frac{\vp}{2}}\cdot F^\mesh(z)$ and
observing that $1\perp i$ and $e^{i\theta}\perp e^{-i\theta^*}$, we conclude that the first
row (``L'') describes the four projections of $e^{i\frac{\vp}{2}}\cdot F^\mesh(z)$ onto the
lines $\R$, $e^{i\theta}\R$, $i\R$ and $e^{-i\theta^*}\R$, respectively. Multiplying by the
common factor $e^{-i\frac{\vp}{2}}$ (which is always parallel to
$[i(w_1\!-\!u_2)]^{-\frac{1}{2}}$), we obtain the result for ``L'' pairs of configurations.
The interchanging of $\theta$ and $\theta^*$ yields the result for ``R'' pairs.
\end{proof}

\begin{remark} For $\z\in\pa\O^\mesh_\DS$ we define $F^\mesh(\z)$
so that (\ref{FKfermionDef}) holds true (in this case, only two projections are meaningful,
so $F^\mesh(\z)$ is easily and uniquely defined). Note that all interfaces passing through the
half-rhombus $\z$ intersect both its sides. Moreover, since the winding of the interface at
$\z$ is independent of the configuration chosen (and coincides with the winding of the
corresponding boundary arc) for topological reasons, we have
\begin{equation}
\label{FK-BC} F^\mesh(\z)\parallel (\t(\z))^{-\frac{1}{2}},\quad \z\in\pa \O^\mesh_\DS,
\end{equation}
where (see Fig.~\ref{Fig:FKDomain}B)
\[
\begin{array}{lll}
\t(\z)=w_2(\z)\!-\!w_1(\z), & \z\in (a^\mesh b^\mesh), & w_{1,2}(\z)\in\G^*, \cr
\t(\z)=u_2(\z)\!-\!u_1(\z), & \z\in (b^\mesh a^\mesh), & u_{1,2}(\z)\in\G,
\end{array}
\]
is the `` discrete tangent vector'' to $\pa\O^\mesh_\DS$ directed from $a^\mesh$ to $b^\mesh$
on both boundary arcs.
\end{remark}

Thus, we arrive at

\medskip

\noindent {\bf Discrete Riemann boundary value problem for $\bm{F^\mesh}$ (FK-case):}
\emph{The function $F^\mesh$ is defined in $\O^\mesh_\DS$ and for each pair of neighbors
$z_0,z_1\in\O^\mesh_\DS$, $z_0\sim z_1$, the
discrete holomorphicity condition holds:
\begin{equation}
\label{FKSHol} \Pr\left[F^\mesh(z_0)\,;[i(w\!-\!u)]^{-\frac{1}{2}}\right]=
\Pr\left[F^\mesh(z_1)\,;[i(w\!-\!u)]^{-\frac{1}{2}}\right].
\end{equation}
Moreover, $F^\mesh$ satisfies the boundary conditions (\ref{FK-BC}) and, since all interfaces
pass through $b^\mesh$, satisfies the normalization $F^\mesh(b^\mesh) = \Re
F^\mesh(b^\mesh_\DS) = (2\mesh)^{-\frac{1}{2}}$.}

\subsection{Critical spin-Ising model}
\label{SectSpinIsing}

\subsubsection{Definition of the model, holomorphic fermion, martingale property}

\begin{figure}
\centering{
\begin{minipage}[b]{0.7\textwidth}
\centering{\includegraphics[width=\textwidth]{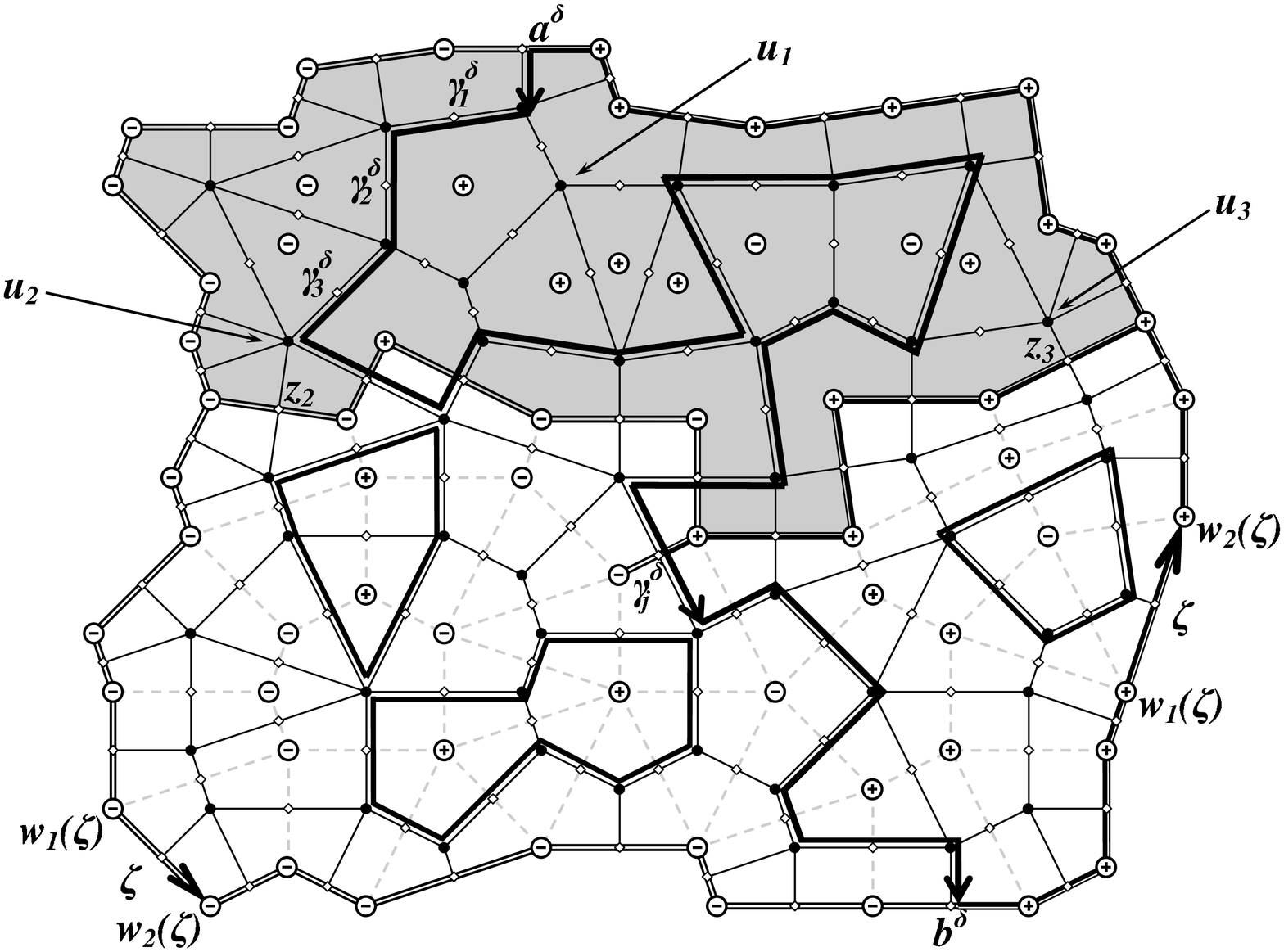}}

\centering{\textsc{(A)}}
\end{minipage}
\hskip 0.05\textwidth
\begin{minipage}[b]{0.2\textwidth}

\centering{\includegraphics[width=0.91\textwidth]{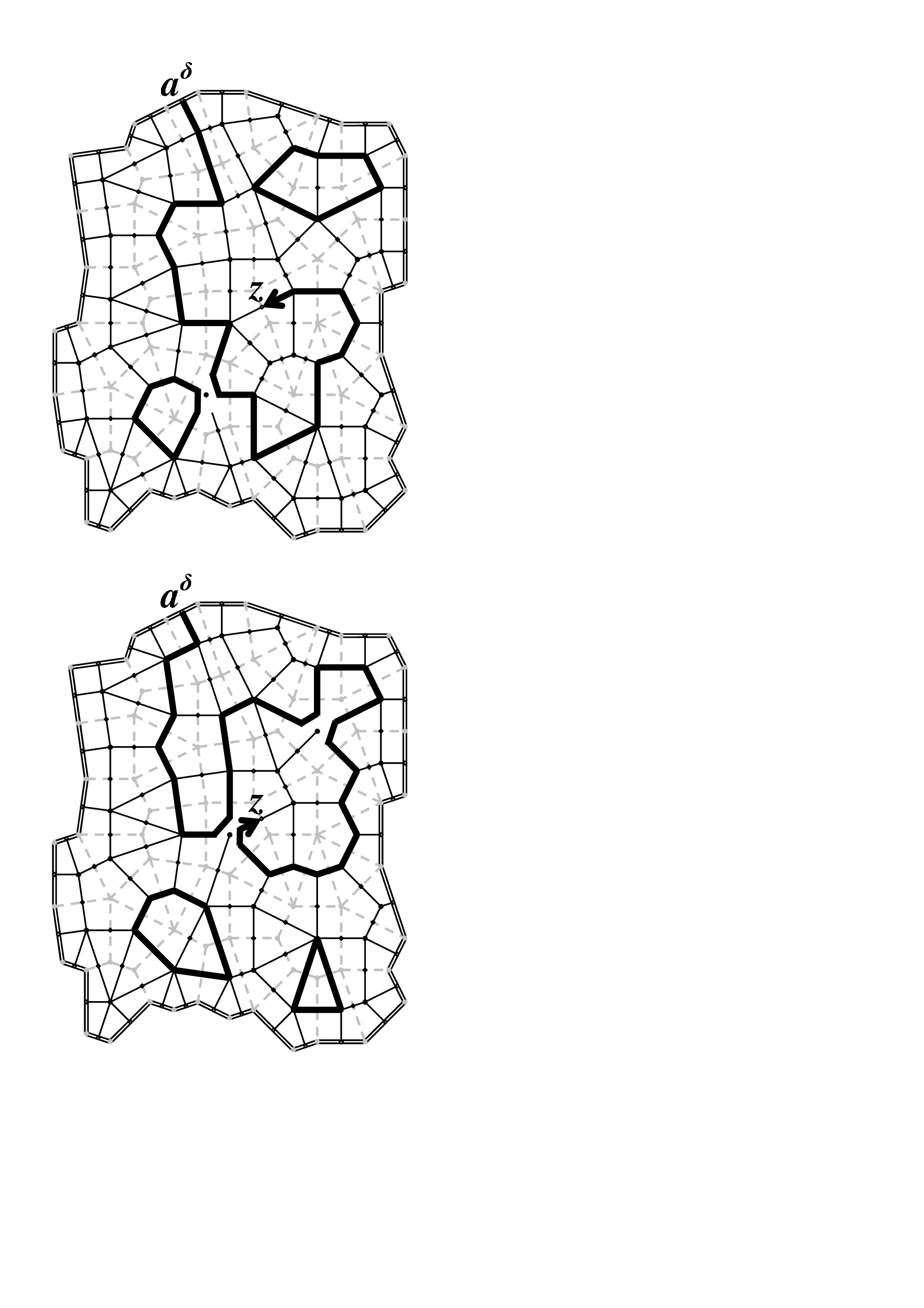}}

\bigskip

\centering{\textsc{(B)}}
\end{minipage}
} \caption{\label{Fig:SpinDomain} \textsc{(A)}~Ising model on isoradial graphs: discrete
domain $\O^\mesh_\DS$ and a sample configuration (the partition function is given by
(\ref{SpinZ=})). By our choice of the ``turning rule,'' loops and the interface $\g^\mesh$
separate clusters of ``$+$'' spins connected through edges and clusters of ``$-$'' spins
connected through vertices. To illustrate this, $\g^\mesh$ and loops are drawn slightly closer
to ``$+$'' spins. The component of $\O^\mesh$ not ``swallowed'' by the path
$[a^\mesh\g^\mesh_1..\g^\mesh_j]$ is unshaded. The vertex $u_1$ is shaded since it is not
connected to $b^\mesh$ anymore. The vertices $u_2$ and $u_3$ are shaded too since each of them
is connected to the bulk by a single edge which contradicts our definition of connected
discrete domains. Moreover, since the ``interface'' $a^\mesh\rsa\g_j^\mesh\rsa z$ could arrive
at each of these points only in a single way, our observable certainly satisfies boundary
condition (\ref{spin-BC}) at $z_2$ and $z_3$, thus not distinguishing them from the other
boundary~points. \textsc{(B)}~Two~samples of ``interface pictures'' composed from a number of
loops and a single interface $\g^\mesh$ running from $a^\mesh$ to $z$. To define the
$\wind(\g^\mesh;\,a^\mesh\rsa z)$ unambiguously, we draw $\g^\mesh$ so that, if there is a
choice, it turns to the left (for $z\in\pa\O^\mesh_\DS$, this corresponds to the
edge-connectivity of ``$+$'' clusters).}
\end{figure}

Let \mbox{$\O^\mesh_\DS\!\ss\!\DS$} be a simply-connected discrete domain composed of inner
rhombi \mbox{$z\!\in\!\Int\O^\mesh_\DS$} and boundary half-rhombi $\z\in\pa\O^\mesh_\DS$, with
two marked boundary points $a^\mesh,b^\mesh\in\pa\O^\mesh_\DS$, such that $\pa\O^\mesh_\DS$
contains only ``white'' vertices (see Fig.~\ref{Fig:SpinDomain}A) and there is no edge of
$\G^*$ breaking $\O^\mesh_\DS$ into two non-connected pieces.

To each inner ``white'' vertex $w\in\Int\O^\mesh_{\G^*}$, we assign the spin $\s(w)$ ($+$ or
$-$), thus obtaining the set of configurations (whose cardinality is
$2^{\mathrm{\#(\Int\O^\mesh_{\G^*})}}$). We also impose Dobrushin boundary conditions
assigning the $-$ spins to vertices on the boundary arc $(a^\mesh b^\mesh)$ and the $+$ spins
on the boundary arc $(b^\mesh a^\mesh)$ (see Fig.~\ref{Fig:SpinDomain}A). The {\bf partition
function} of the critical spin-Ising model is given by
\begin{equation}
\label{SpinZ=} \wt{Z}_{(\O^\mesh;\,a^\mesh\rsa b^\mesh)}=
[\sin\tfrac{1}{2}\theta(b^\mesh)]^{-1}\!\!\!\! \sum_{\mathrm{spin~config.}}~{\textstyle
\prod_{w_1\sim w_2: \s(w_1)\ne \s(w_2)} x_{w_1w_2},\quad x_{w_1w_2}=\tan
\frac{1}{2}\theta(z)},
\end{equation}
where $\theta(z)$ is the half-angle of the rhombus $u_1w_1u_2w_2$ having center at $z$
(i.e.,~$\tan\theta_{w_1w_2}\!=|w_2-w_1|/|u_2-u_1|$). The first factor $\sin^{-1}$ doesn't
depend on the configuration, and is introduced for technical reasons.

Due to Dobrushin boundary conditions, for each configuration, there is an interface $\g^\mesh$
running from $a^\mesh$ to $b^\mesh$ and separating $+$ spins from $-$ spins. If $\G$ is not a
trivalent graph, one needs to specify the algorithm of ``extracting $\g^\mesh$ from the
picture'', if it can be done in different ways. Below we assume that,
\begin{center}
\emph{if there is a choice, the ``interface'' takes the left-most possible route (see
Fig.~\ref{Fig:SpinDomain})}.
\end{center}
With this choice the interface separates clusters of ``$+$'' spins connected through edges and
clusters of ``$-$'' spins connected through vertices. Any other choice would do for the
martingale property and eventual conformal invariance, as discussed in the Introduction. For
example, one can toss a coin at each vertex to decide whether ``$+$'' or ``$-$'' spin clusters
connect through it. Note that, drawing all the edges separating $+$ spins from $-$ spins, one
can rewrite the partition function as a sum over all configurations $\varpi$ of edges which
consist of a single interface running from $a^\mesh$ to $b^\mesh$ and a number of loops.
Namely,
\begin{equation}
\label{SpinZ=1} \wt{Z}_{(\O^\mesh;\,a^\mesh\rsa b^\mesh)}=
[\sin\tfrac{1}{2}\theta(b^\mesh)]^{-1}\!\!\!\!\!\!
\sum_{\varpi=\{\mathrm{interface}+\mathrm{loops}\}}~{\textstyle \prod_{w_1\sim
w_2:[w_1;w_2]~\mathrm{intersects}~\o} x_{w_1w_2}}\,.
\end{equation}

For $z\in\O^\mesh_\DS$, the {\bf holomorphic fermion} is defined as
(cf.~(\ref{FKfermionXiDef}),~(\ref{WindBZ=}))
\begin{equation}
\label{SpinfermionDef} F^\mesh(z)=F^\mesh_{(\O^\mesh;\,a^\mesh,b^\mesh)}(z):=
\cF^\mesh(b^\mesh)\cdot \frac{\wt{Z}_{(\O^\mesh;\,a^\mesh\rsa z)}\cdot
\E_{(\O^\mesh;\,a^\mesh\rsa z)} e^{-\frac{i}{2}\wind(\g^\mesh;\,a^\mesh\rsa z)}}
{\wt{Z}_{(\O^\mesh;\,a^\mesh\rsa b^\mesh)}\cdot e^{-\frac{i}{2}\wind(a^\mesh\rsa b^\mesh)}}\,.
\end{equation}
As in (\ref{SpinZ=1}), $\wt{Z}_{(\O^\mesh;\,a^\mesh\rsa z^\mesh)}$ denotes the partition
function for the set of ``interfaces pictures'' containing (besides loops) one interface
$\g^\mesh$ running from $a^\mesh$ to $z$ (see Fig.~\ref{Fig:SpinDomain}B),
\begin{quotation}
\noindent \emph{for each configuration we count all weights corresponding to the drawn edges,
including $\tan\tfrac{1}{2}\theta(a^\mesh)$ and
$[\cos\frac{1}{2}\theta(z)]^{-1}=[\sin\frac{1}{2}\theta(z)]^{-1}\cdot\tan\frac{1}{2}\theta(z)$ for the first and the last.}
\end{quotation}
Then $\E_{(\O^\mesh;\,a^\mesh\rsa z^\mesh)}$ will stand for the expectation with respect to
the corresponding probability measure. Equivalently, one can take the partition function,
multiplying the weight of each configuration by the complex factor
$e^{-\frac{i}{2}\wind(\g^\mesh;\,a^\mesh\rsa z^\mesh)}$ (which, for $z\in\Int\O^\mesh_\DS$,
may be equal to one of the four different complex values $\a$, $i\a$, $-\a$, $-i\a$ depending on
the particular configuration). Finally,
\begin{quotation}
\emph{$\cF^\mesh(b^\mesh)\parallel (\t(b^\mesh))^{-1/2}$, where
$\t(b^\mesh)=w_2(b^\mesh)\!-\!w_1(b^\mesh)$, is a normalizing factor that depends only on
the structure of $\DS^\mesh$ near $b^\mesh$ (see Section~\ref{SectBoundHarnack}).}
\end{quotation}

\noindent Here and below, for $\z\in\pa\O^\mesh_\DS$, $\t(\z)=w_2(\z)\!-\!w_1(\z)$ denotes
``discrete tangent vector'' to $\pa\O^\mesh_\DS$ oriented counterclockwise (see
Fig.~\ref{Fig:SpinDomain}A for notation). For any $\z\in\pa\O^\mesh_\DS$, the
\mbox{$\wind(\g^\mesh;\,a^\mesh\rsa\z)$} is fixed due to topological reasons, and so
\begin{equation}
\label{spin-BC} F^\mesh(\z)\parallel (\t(\z))^{-\frac{1}{2}},\quad \z\in\pa
\O^\mesh_\DS\setminus\{a^\mesh\},\quad \mathit{while}\quad F^\mesh(a^\mesh)\parallel
i(\t(a^\mesh))^{-\frac{1}{2}}.
\end{equation}

\begin{remark}[\bf Martingale property]
For each $z\in\Int\O^\mesh_\DS$,
$F^\mesh_{(\O^\mesh\setminus[a^\mesh\g^\mesh_j];\g^\mesh_j,b^\mesh)}(z)$ is a martingale with
respect to the growing interface $(a^\mesh=\g^\mesh_0, \g^\mesh_1,
 ..., \g^\mesh_j, ...)$ (till the stopping time when $\g^\mesh$ hits
$\xi$ or $z$~becomes separated from $b^\mesh$ by the interface, see
Fig.~\ref{Fig:SpinDomain}A).
\end{remark}

\begin{proof}
It is sufficient to check that $F^\mesh$ has the martingale property when $\g^\mesh$ makes one
step. Let $a^\mesh_\mathrm{L},.., a^\mesh_\mathrm{R}$ denote all possibilities for the first
step. Then, 
\begin{equation}
\label{Z=ZL+ZR} \wt{Z}_{(\O^\mesh;\,a^\mesh\rsa b^\mesh)} =
\tan\tfrac{1}{2}\theta(a^\mesh)\cdot\left[ \wt{Z}_{(\O^\mesh\setminus[a^\mesh
a^\mesh_\mathrm{L}];\,a^\mesh_\mathrm{L}\rsa b^\mesh)} + ... +
\wt{Z}_{(\O^\mesh\setminus[a^\mesh a^\mesh_\mathrm{R}];\,a^\mesh_\mathrm{R}\rsa
b^\mesh)}\right],
\end{equation}
and so
\[
{\P(\g^\mesh_1\!=\!a^\mesh_\mathrm{L})} = \frac{\wt{Z}_{(\O^\mesh\setminus[a^\mesh
a^\mesh_\mathrm{L}];\,a^\mesh_\mathrm{L}\rsa b^\mesh)}}
{[{\tan\tfrac{1}{2}\theta(a^\mesh)}]^{-1}\wt{Z}_{(\O^\mesh;\,a^\mesh\rsa b^\mesh)}}~,~\dots\
,~{\P(\g^\mesh_1\!=\!a^\mesh_\mathrm{R})}= \frac{\wt{Z}_{(\O^\mesh\setminus[a^\mesh
a^\mesh_\mathrm{R}];\,a^\mesh_\mathrm{R}\rsa b^\mesh)}}
{[{\tan\tfrac{1}{2}\theta(a^\mesh)}]^{-1}\wt{Z}_{(\O^\mesh;\,a^\mesh\rsa b^\mesh)}}\,.
\]
Taking into account that the difference
\[
\wind(a^\mesh\rsa b^\mesh) - \wind(a^\mesh_\mathrm{L}\rsa b^\mesh)= \wind(a^\mesh\to
a^\mesh_\mathrm{L}),
\] we obtain
\[
\P(\g^\mesh_1\!=\!a^\mesh_\mathrm{L})\cdot \frac{F^\mesh_{(\O^\mesh\setminus [a^\mesh
a^\mesh_\mathrm{L}];\,a^\mesh_\mathrm{L},b)}(z)}{F^\mesh_{(\O^\mesh;\,a^\mesh,b)}(z)}
 =
\frac{e^{-\frac{i}{2}\wind(a^\mesh\to a^\mesh_\mathrm{L})}}
{[\tan\tfrac{1}{2}\theta(a^\mesh)]^{-1}}\cdot \frac{\wt{Z}_{(\O^\mesh\setminus[a^\mesh
a^\mesh_\mathrm{L}];\,a^\mesh_\mathrm{L}\rsa z)}\cdot
\E 
e^{-\frac{i}{2}\wind(\g^\mesh;\,a^\mesh_\mathrm{L}\rsa z)}}
{\wt{Z}_{(\O^\mesh;\,a^\mesh\rsa z)}\cdot \E
e^{-\frac{i}{2}\wind(\g^\mesh;\,a^\mesh\rsa z)}}
\]
and so on. On the other hand, counting ``interface pictures'' depending on the first step as
in (\ref{Z=ZL+ZR}), we easily obtain
\[
\begin{array}{l}
[{\tan\tfrac{1}{2}\theta(a^\mesh)}]^{-1}\wt{Z}_{(\O^\mesh;\,a^\mesh\rsa z)}\cdot
\E
e^{-\frac{i}{2}\wind(\g^\mesh;\,a^\mesh\rsa z)} \cr \qquad\quad =
e^{-\frac{i}{2}\wind(a^\mesh\to a^\mesh_\mathrm{L})}\cdot \wt{Z}_{(\O^\mesh\setminus[a^\mesh
a^\mesh_\mathrm{L}];\,a^\mesh_\mathrm{L}\rsa z)}\cdot
\E
e^{-\frac{i}{2}\wind(\g^\mesh;\,a^\mesh_\mathrm{L}\rsa z)}+\dots \vphantom{\big|^|_|}\cr
\qquad\quad +\,e^{-\frac{i}{2}\wind(a^\mesh\to a^\mesh_\mathrm{R})}\cdot
\wt{Z}_{(\O^\mesh\setminus[a^\mesh a^\mesh_\mathrm{R}];\,a^\mesh_\mathrm{R}\rsa z)}\cdot
\E
e^{-\frac{i}{2}\wind(\g^\mesh;\,a^\mesh_\mathrm{R}\rsa z)},
\end{array}
\]
which gives the result.
\end{proof}

\subsubsection{Discrete boundary value problem for $F^\mesh$.}

\begin{proposition}
\label{PropSpinFermionHol} For each pair of neighbors \mbox{$z_0, z_1\in\O^\mesh_\DS$}
separated by an edge $(w_1u)$,
we have
\begin{equation}
\label{SpinSHol} \Pr\left[F^\mesh(z_0)\,;[i(w_1\!-\!u)]^{-\frac{1}{2}}\right]=
\Pr\left[F^\mesh(z_1)\,;[i(w_1\!-\!u)]^{-\frac{1}{2}}\right].
\end{equation}
\end{proposition}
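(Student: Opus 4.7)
The plan is to mirror the combinatorial bijection argument used for Proposition \ref{PropFKfermionDef}. Here $z_0$ and $z_1$ are the centers of two adjacent rhombi of $\L$ sharing the common edge $[w_1 u]$. By the definition \eqref{SpinfermionDef}, each value $F^\mesh(z_j)$ is a sum over interface pictures $\varpi$, each consisting of loops together with a single interface $\g^\mesh$ running from $a^\mesh$ to $z_j$, weighted by the real product $\prod x_{ww'}=\prod\tan\tfrac{1}{2}\theta$ and the complex phase $e^{-\frac{i}{2}\wind(\g^\mesh;\,a^\mesh\rsa z_j)}$, then divided by the common normalizing quantity $\wt Z_{(\O^\mesh;\,a^\mesh\rsa b^\mesh)}\cdot e^{-\frac{i}{2}\wind(a^\mesh\rsa b^\mesh)}$, which does not depend on~$z_j$.

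The first step is to set up a bijection between interface pictures contributing to $F^\mesh(z_0)$ and those contributing to $F^\mesh(z_1)$, obtained by a local modification in the two-rhombus neighborhood of $[w_1 u]$. Since $\g^\mesh$ terminates at a rhombus center, its last edge is one of the four sides of the rhombus around~$z_j$. Analogously to the ``L''/``R'' rearrangements in the proof of Proposition \ref{PropFKfermionDef}, one pairs the configuration whose terminal segment approaches $z_0$ through a given side with the configuration obtained by locally rerouting the last edges across $[w_1 u]$ to terminate at~$z_1$. The rest of~$\varpi$ (loops and the non-terminal part of $\g^\mesh$) is untouched, so the bulk weights and windings cancel and \eqref{SpinSHol} reduces to a computation involving only the two rhombi.

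The second step is a short trigonometric identity for each matched pair. The contribution to $F^\mesh(z_j)$ factors as a product of $x$-weights $\tan\tfrac{1}{2}\theta_j$ from the terminal edges inside the rhombus at $z_j$, the last-edge factor $[\cos\tfrac{1}{2}\theta(z_j)]^{-1}$, and a phase $e^{-\frac{i}{2}\wind}$ determined by the side through which the terminal edge emerges. Writing the four projections onto the directions dual to the four rhombus sides and applying the complementarity $\theta+\theta^*=\pi/2$ shows that the contributions at $z_0$ and $z_1$ are consistent with a single complex value whose projection on $[i(w_1-u)]^{-1/2}$ is the common value of both sides of \eqref{SpinSHol}, exactly as in the FK case.

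The main obstacle is the careful case analysis dictated by the ``turn left'' tie-breaking rule of Section~\ref{SectSpinIsing}. When the interface has a choice of direction at or near $[w_1 u]$, one must verify that the induced windings and the $x$-weights in the two configurations of each bijection pair still match up with a common phase factor, so that the local complex value used in the projection is well defined. Once this bookkeeping is organized so that the small number of local cases closes off uniformly, the proof reduces to the same trigonometric identity as in Proposition \ref{PropFKfermionDef}.
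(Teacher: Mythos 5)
Your proposal follows essentially the same route as the paper: a local bijection between the sets of interface pictures $(\O^\mesh;\,a^\mesh\rsa z_0)$ and $(\O^\mesh;\,a^\mesh\rsa z_1)$ obtained by rearranging edges in the two-rhombus neighborhood of $(w_1u)$, followed by a trigonometric check that each matched pair contributes equally to both projections on $[i(w_1\!-\!u)]^{-\frac{1}{2}}$; the paper organizes this into four cases (with the winding ambiguity $\vp\pm 2\pi$ in the last case being harmless since $e^{\mp i\pi}=-1$ either way), which is exactly the bookkeeping you flag. The approach and the key identity are the same, so the plan is sound.
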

\noindent The proposition amounts to saying that $F^\mesh$ is \emph{spin holomorphic} as in
Definition~\ref{DefSHol} below (see Fig.~\ref{Fig:Notations}C for notation).

\begin{figure}
\noindent\begin{tabular}{ccccccc}
\includegraphics[width=0.16\textwidth]{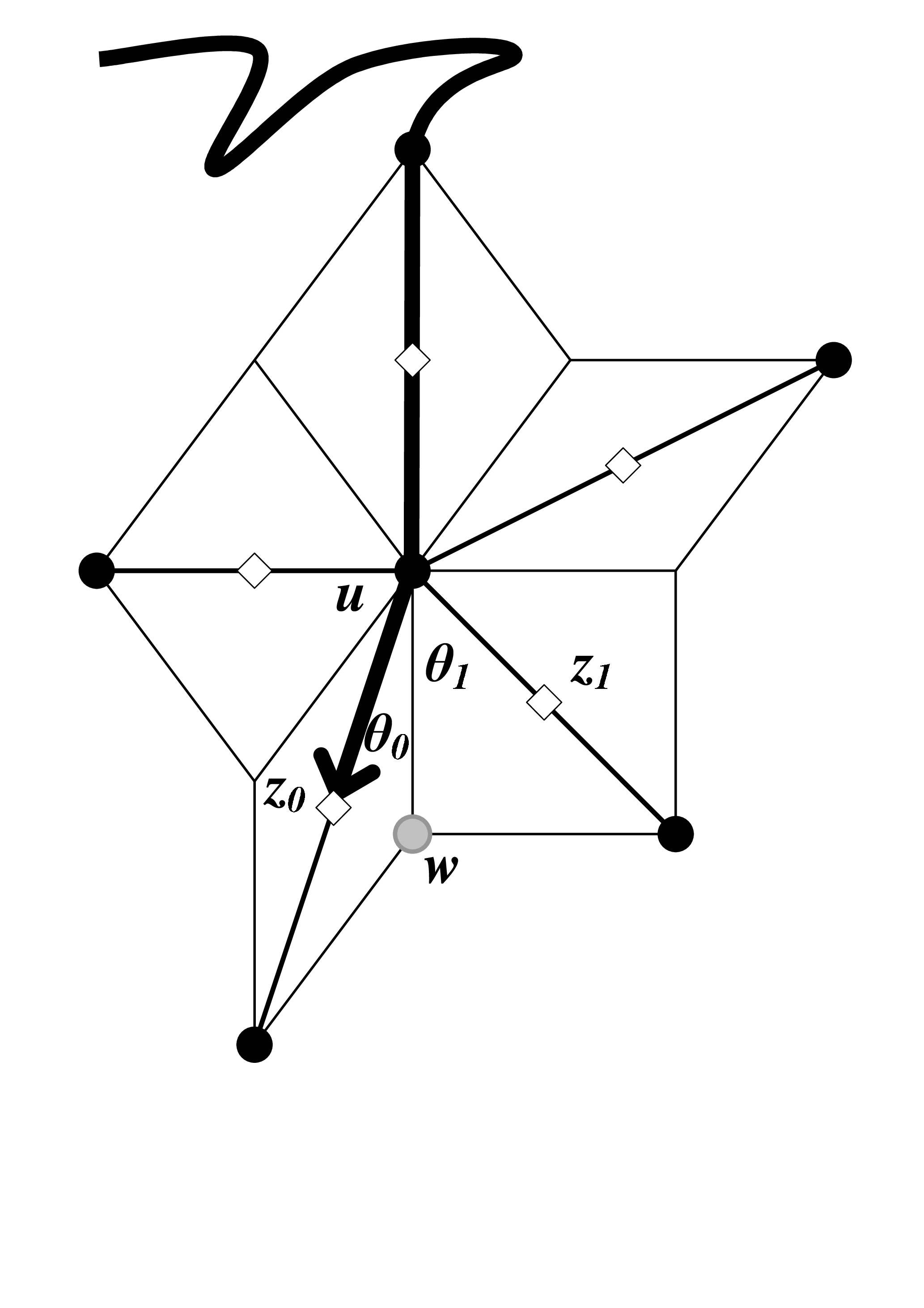}
&\begin{minipage}[b]{32pt} \noindent $\begin{array}{c}\textsc{I} \cr\displaystyle
\Longleftrightarrow \end{array}$
\[
\]
\end{minipage}&
\includegraphics[width=0.16\textwidth]{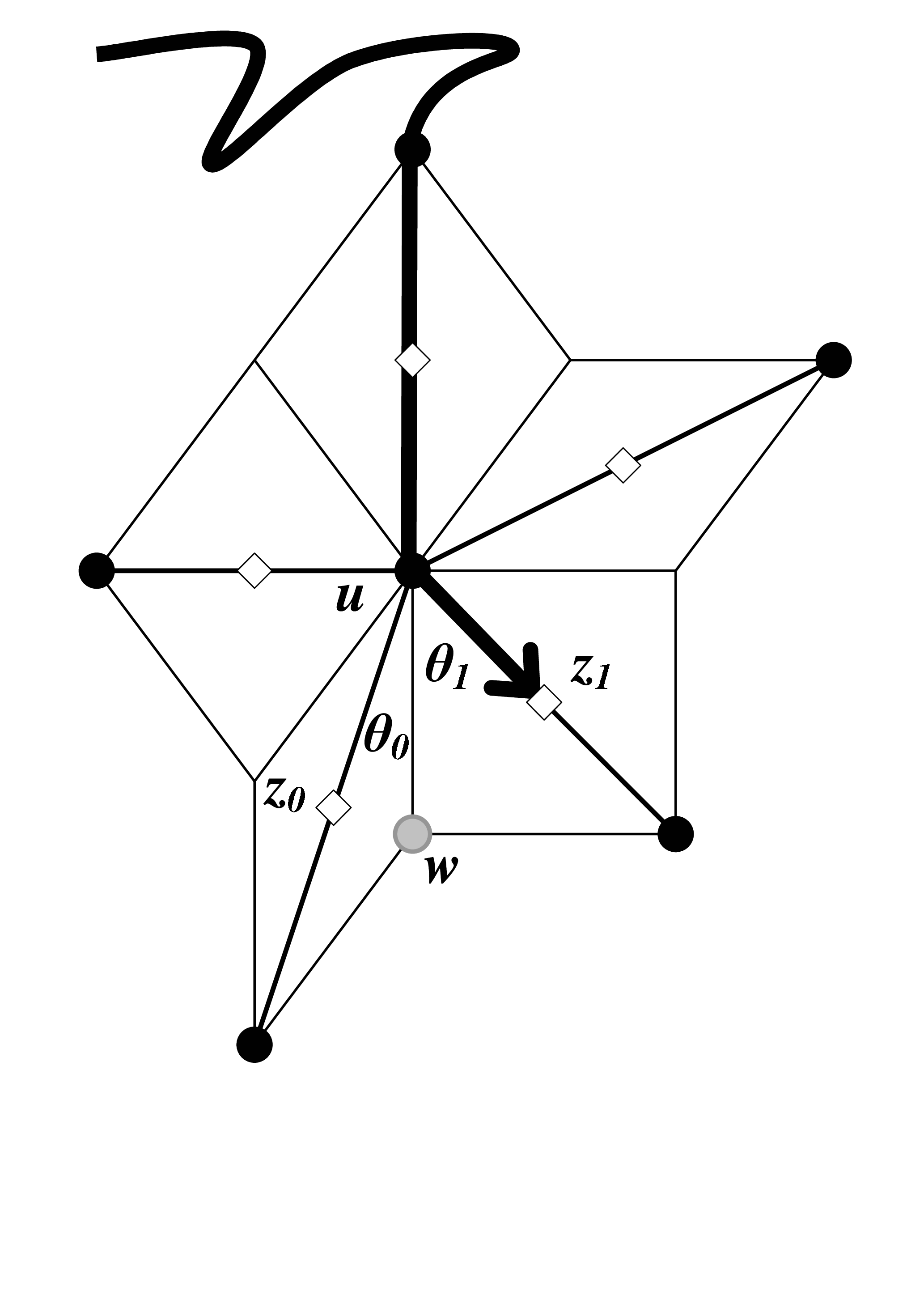} &\qquad\qquad&
\includegraphics[width=0.16\textwidth]{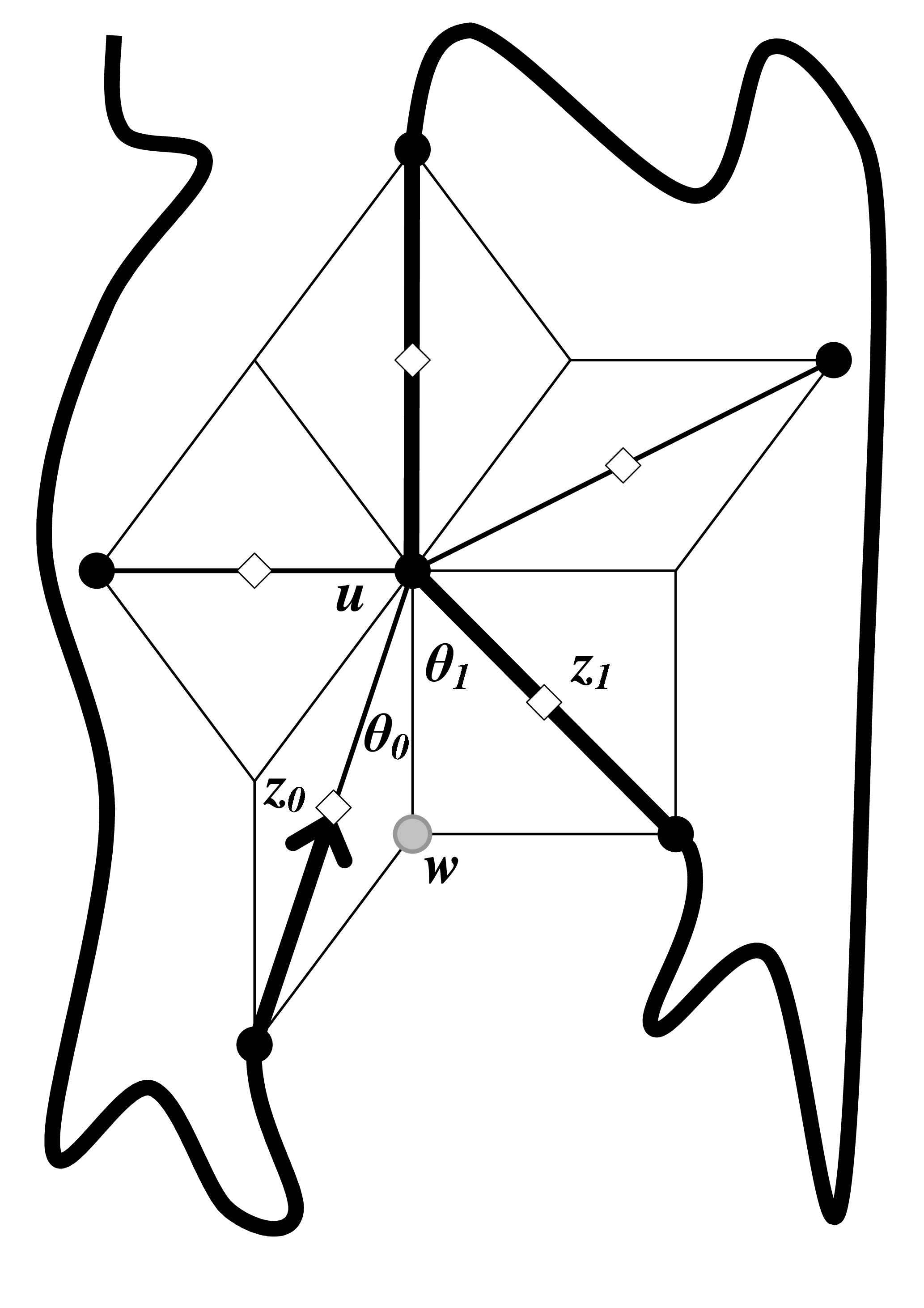}
&\begin{minipage}[b]{32pt} \noindent $\begin{array}{c}\textsc{IVa} \cr \displaystyle
\Longleftrightarrow \end{array}$
\[
\]
\end{minipage}&
\includegraphics[width=0.16\textwidth]{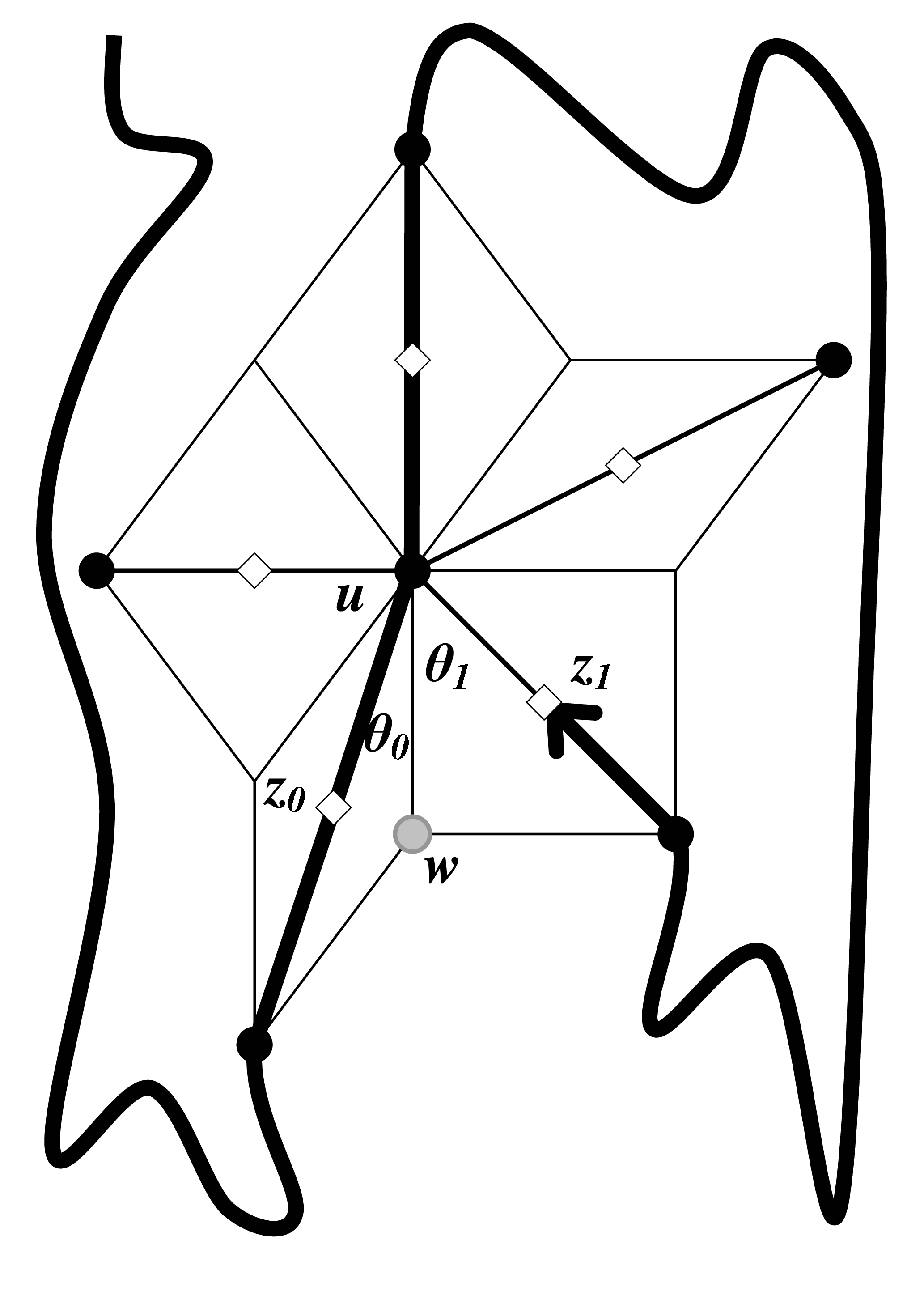} \cr\cr
\includegraphics[width=0.16\textwidth]{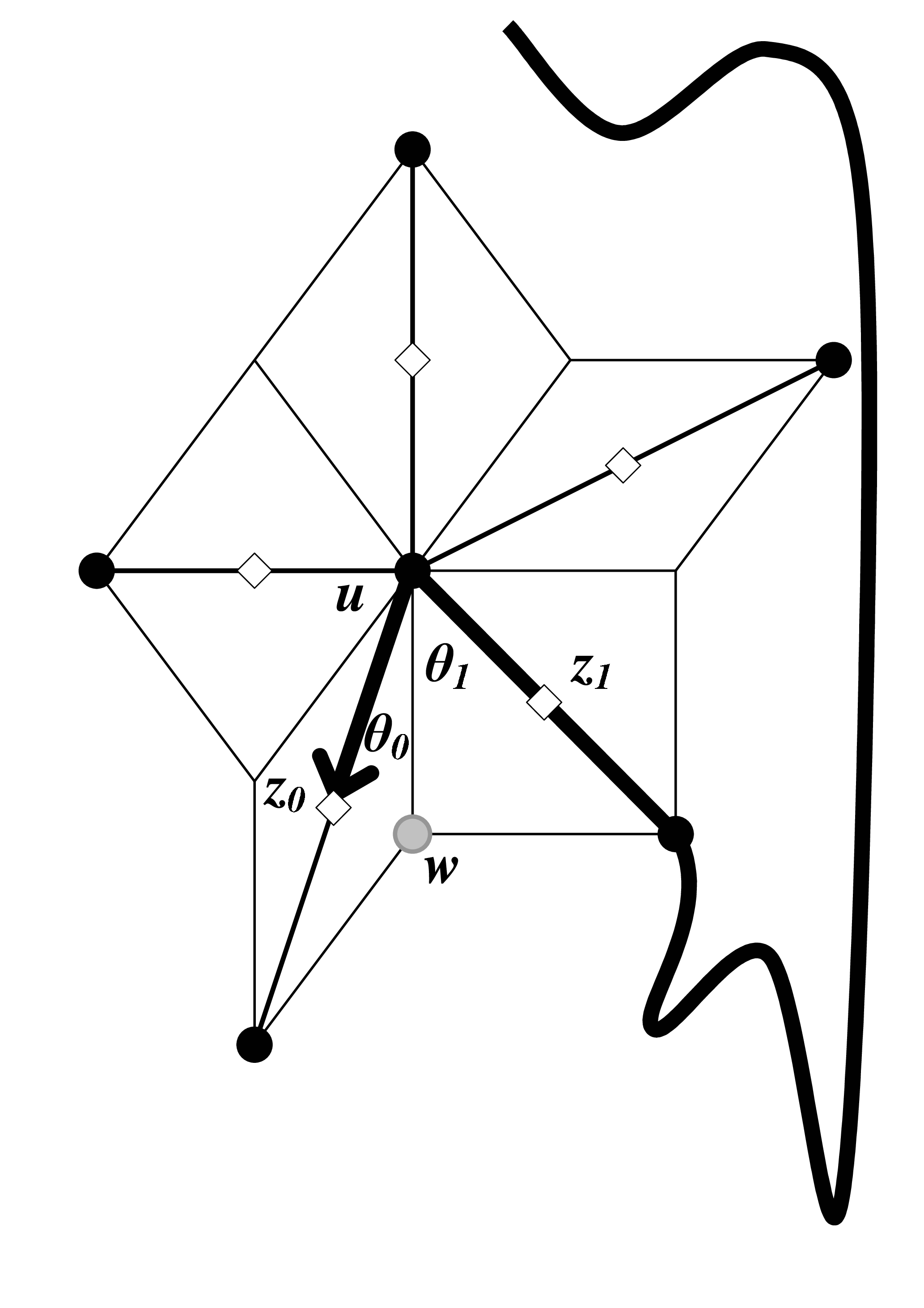}
&\begin{minipage}[b]{32pt} \noindent $\begin{array}{c}\textsc{II} \cr\displaystyle
\Longleftrightarrow \end{array}$
\[
\]
\end{minipage}&\includegraphics[width=0.16\textwidth]{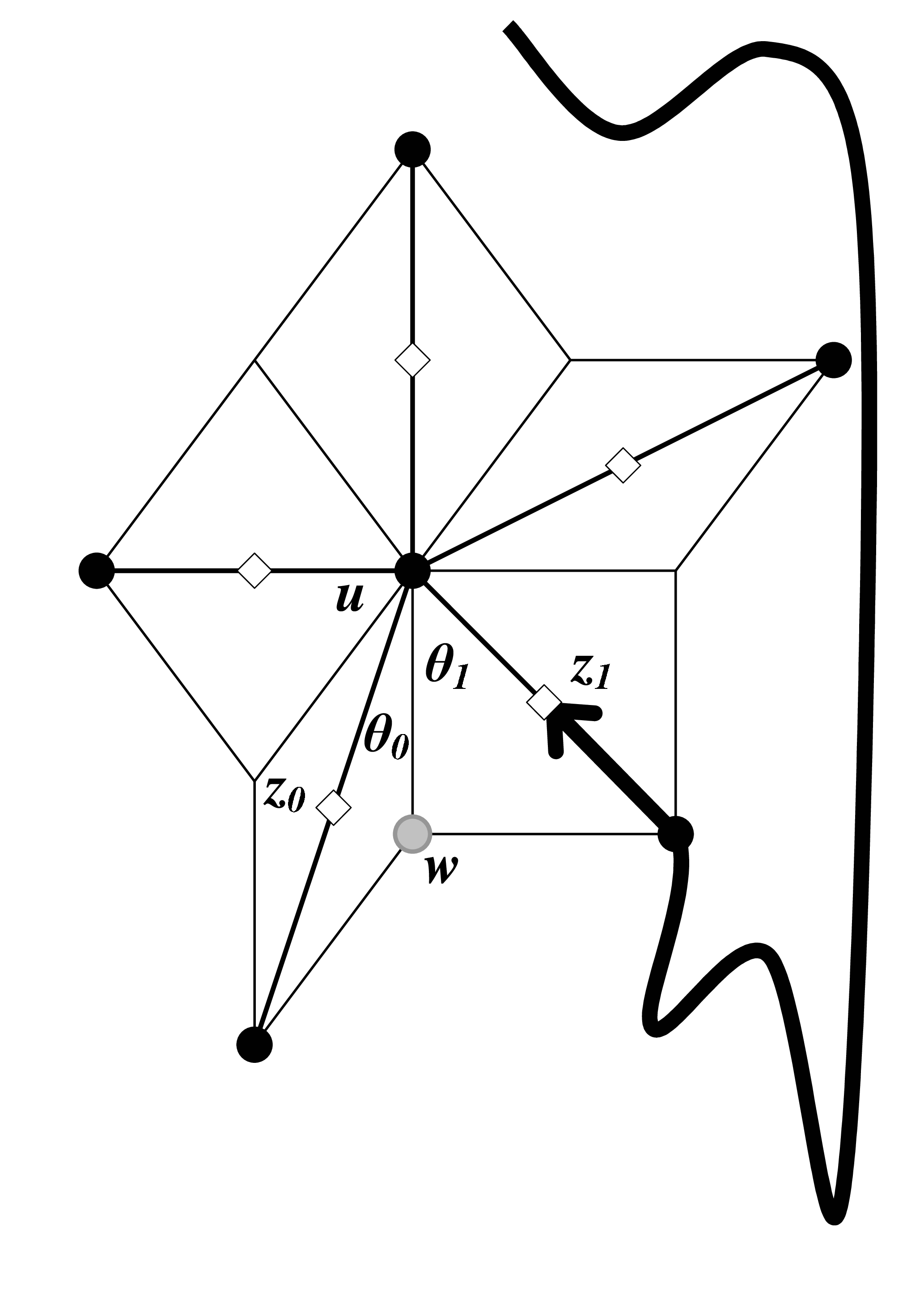} &&
\includegraphics[width=0.16\textwidth]{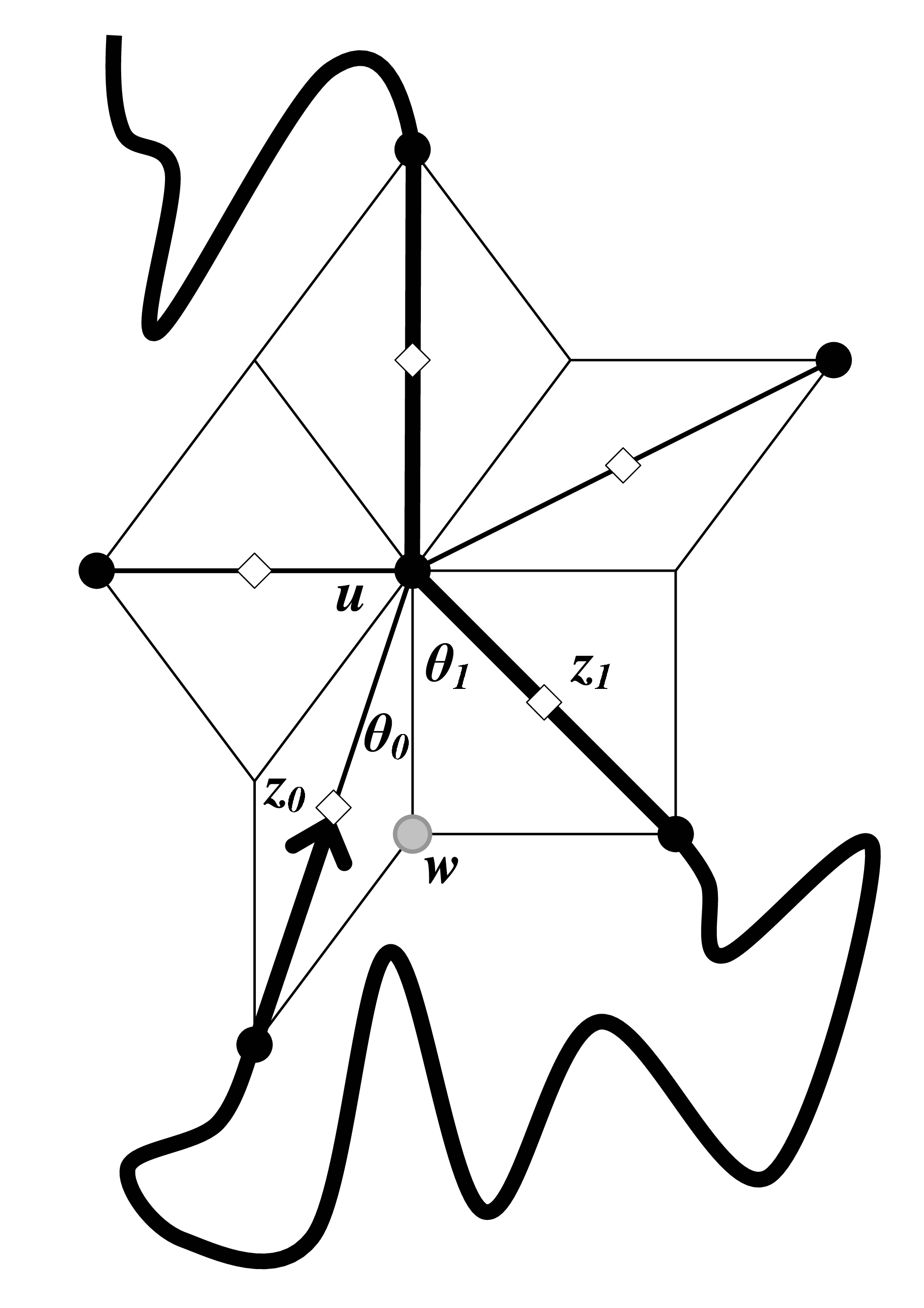}
&\begin{minipage}[b]{32pt} \noindent $\begin{array}{c}\textsc{IVb} \cr\displaystyle
\Longleftrightarrow \end{array}$
\[
\]
\end{minipage}&\includegraphics[width=0.16\textwidth]{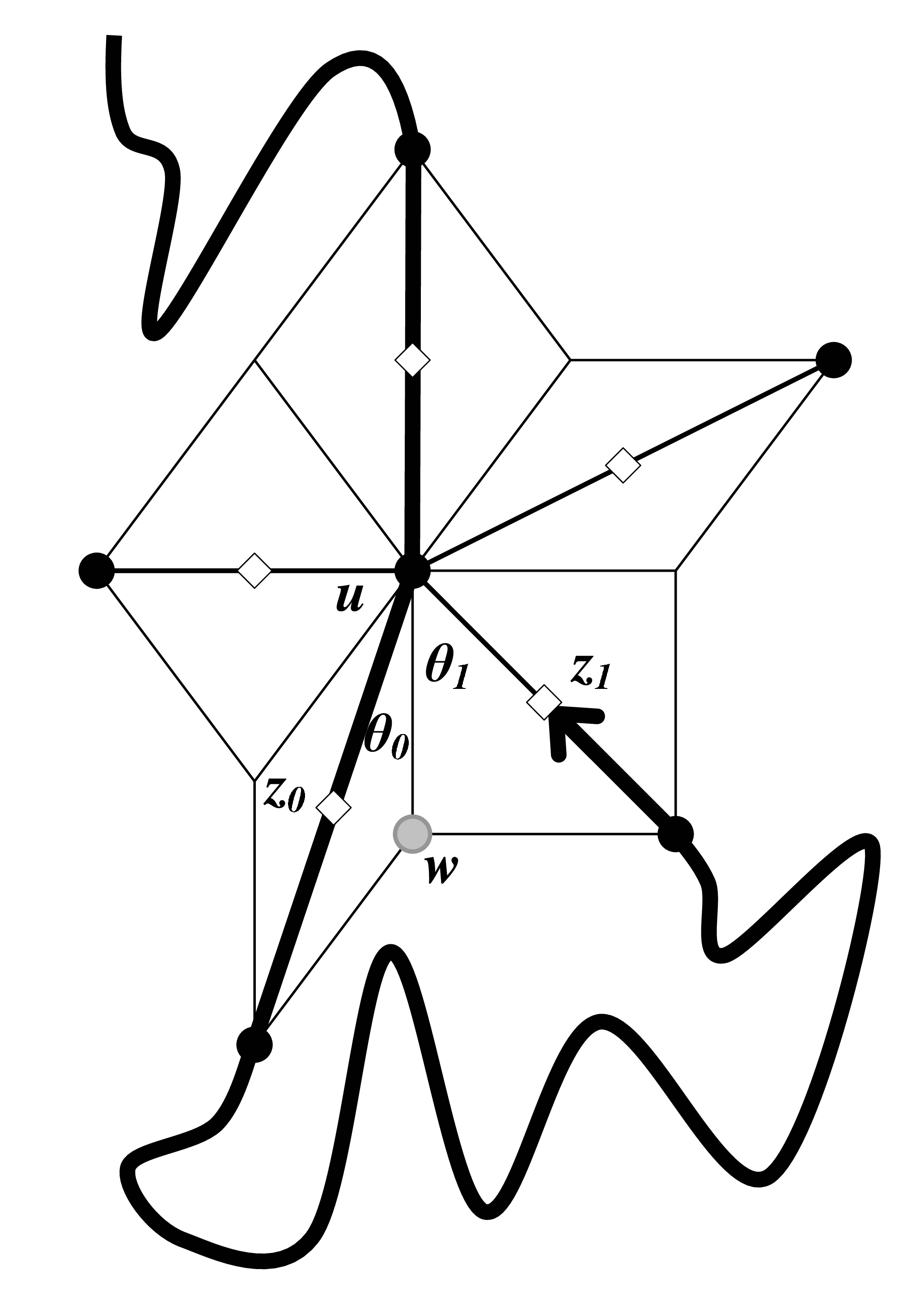} \cr\cr
\includegraphics[width=0.16\textwidth]{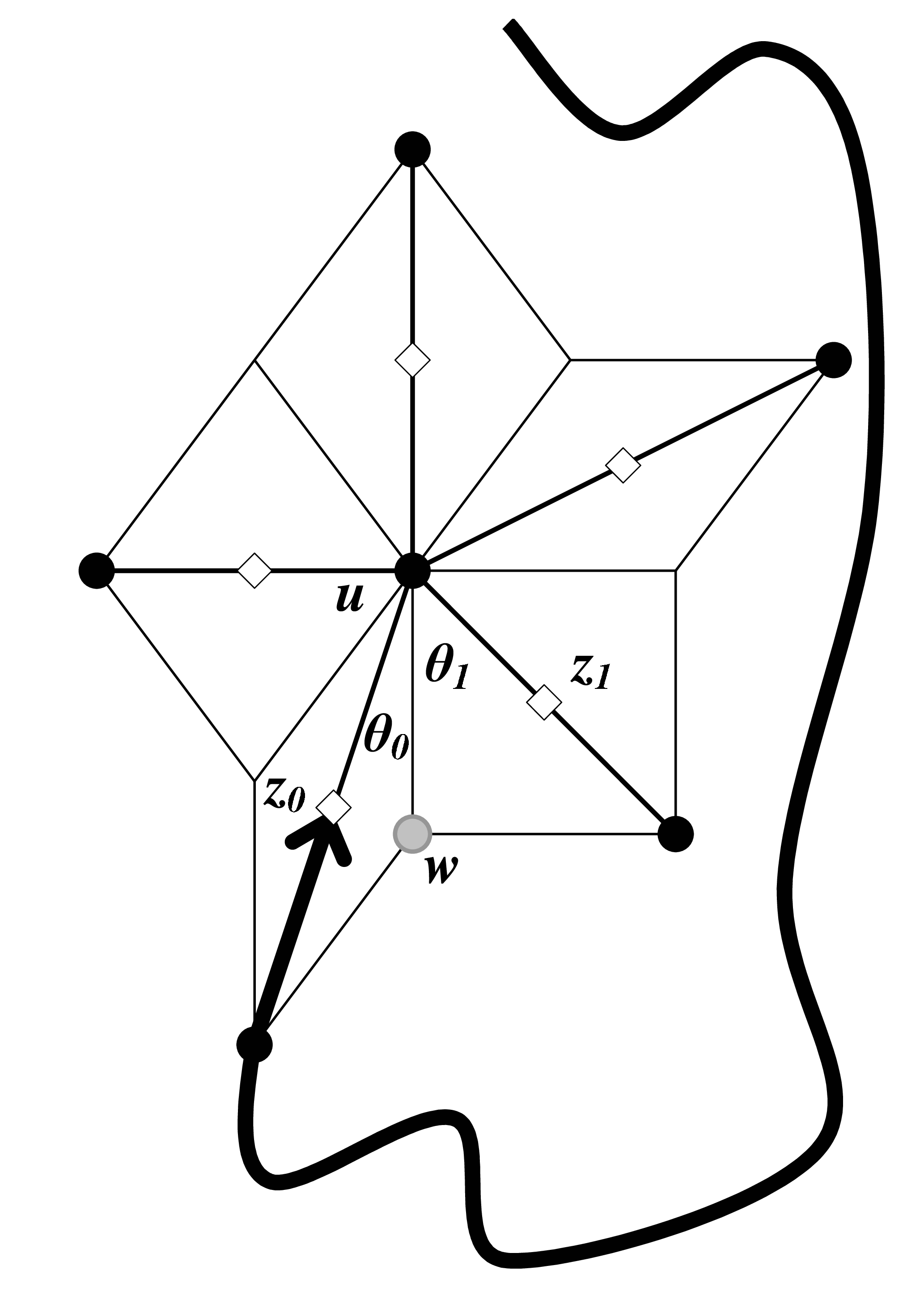}
&\begin{minipage}[b]{32pt} \noindent $\begin{array}{c}\textsc{III} \cr\displaystyle
\Longleftrightarrow \end{array}$
\[
\]
\end{minipage}&\includegraphics[width=0.16\textwidth]{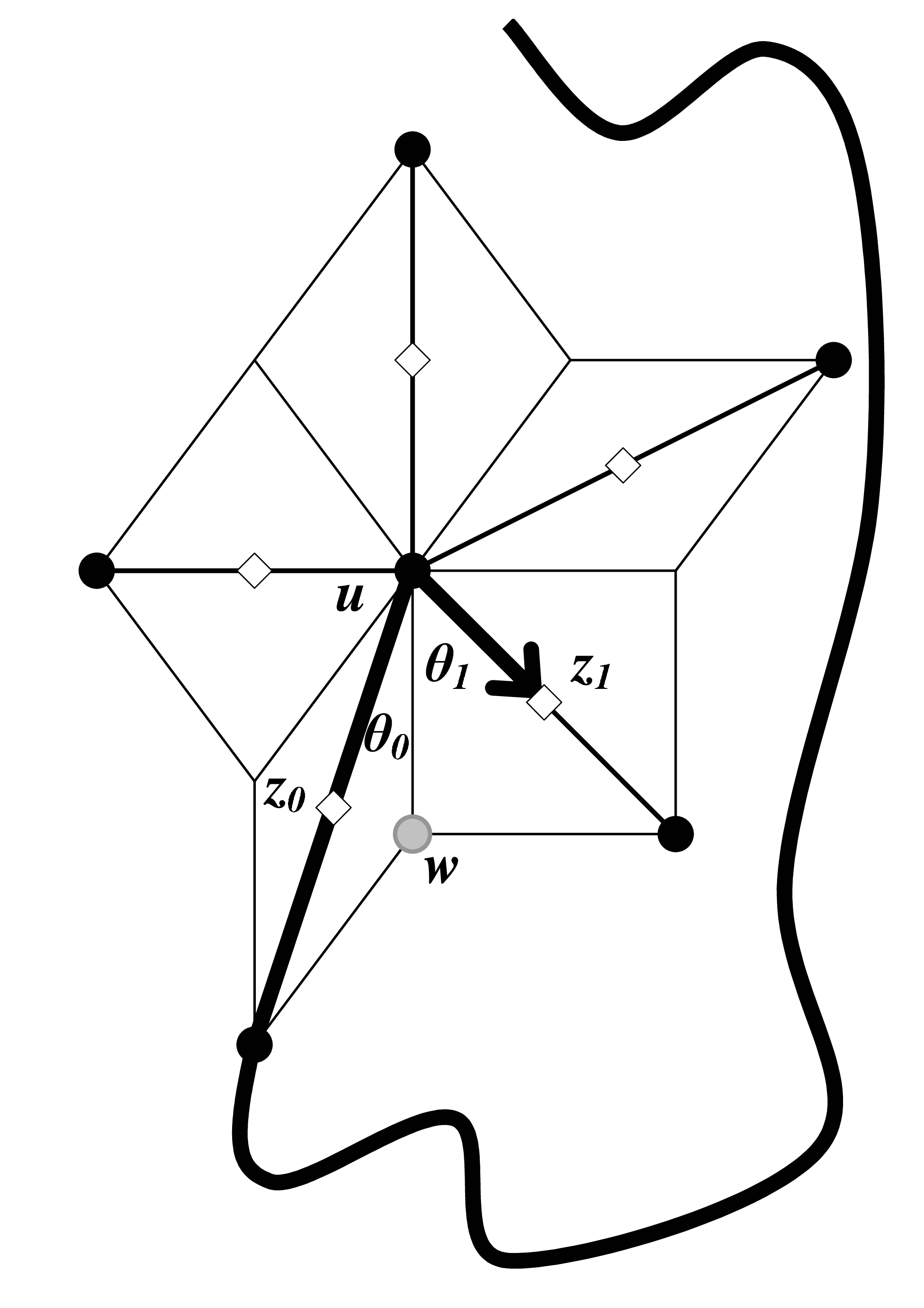} &&
\includegraphics[width=0.16\textwidth]{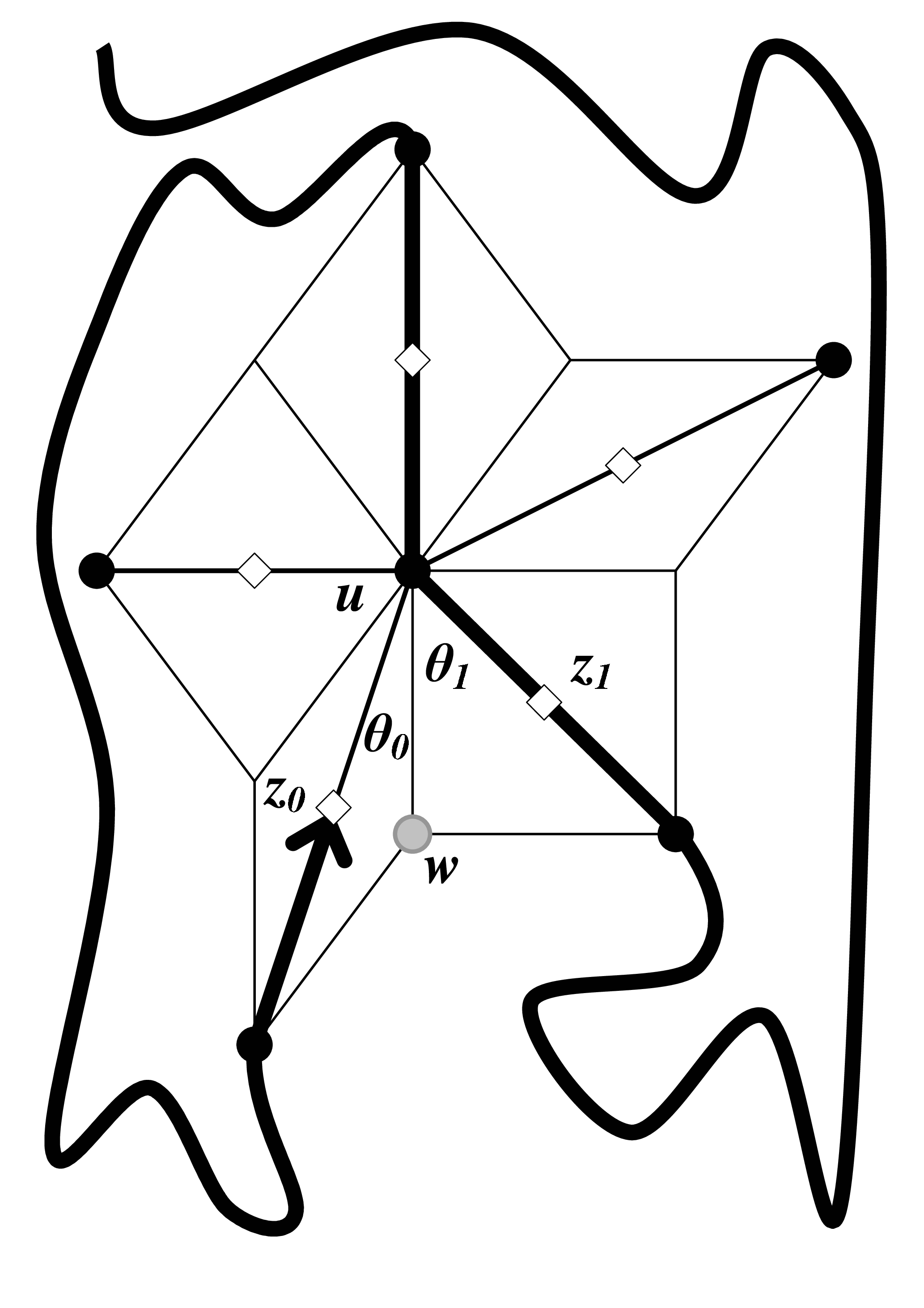}
&\begin{minipage}[b]{32pt} \noindent $\begin{array}{c}\textsc{IVc} \cr\displaystyle
\Longleftrightarrow \end{array}$
\[
\]
\end{minipage}&\includegraphics[width=0.16\textwidth]{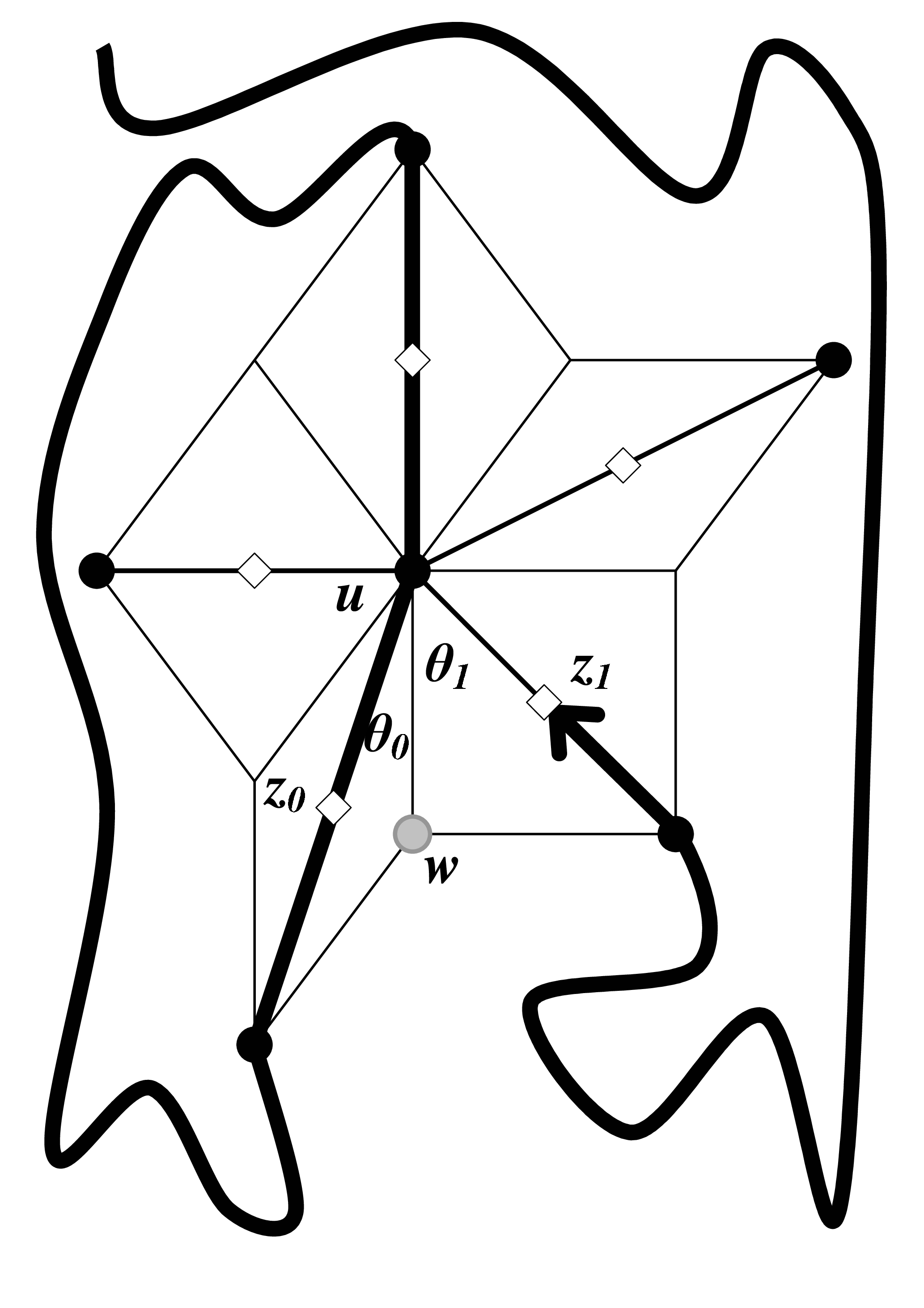} \cr
\end{tabular}

\caption{\label{Fig:SpinBijection} Bijection between the sets \mbox{$(\O^\mesh;\,a^\mesh\rsa
z_0)$} and \mbox{$(\O^\mesh;\,a^\mesh\rsa z_1)$} (local notations $z_{0,1}$, $\theta_{0,1}$,
$u$ and $w_1$ are given in Fig.~\ref{Fig:Notations}C). In cases \mbox{I-III}, the winding
$\vp_1=\wind(\g^\mesh_1;\,a^\mesh\rsa z_1)$ is unambiguously defined by
$\vp_0=\wind(\g^\mesh_0;\,a^\mesh\rsa z_0)$. In case IV, there are two possibilities: $\vp_1$
is equal to either $\vp_0-2\pi+\theta_0+\theta_1$ (IVa,~IVc) or $\vp_0+2\pi+\theta_0+\theta_1$
(IVb). }
\end{figure}

\begin{proof}
The proof is based on the bijection between the set of ``interface pictures''
\mbox{$(\O^\mesh;\,a^\mesh\rsa z_0)$} and the similar set $(\O^\mesh;\,a^\mesh\rsa z_1)$,
which is schematically drawn in Fig.~\ref{Fig:SpinBijection}. The relative contributions of
the corresponding pairs to $F^\mesh(z_0)$ and $F^\mesh(z_1)$ (up to the same real factor) are given in the following table:
\begin{center}\begin{tabular}{||c||c|c||} \hline\hline
$\vphantom{\big|^|_|}$ & $F^\mesh(z_0)$ & $F^\mesh(z_1)$ \cr \hline\hline
$\vphantom{\big|^|_|}\mathrm{I}$ & $[\cos\tfrac{1}{2}\theta_0]^{-1}\cdot e^{-\tfrac{i}{2}\vp}$
& $[\cos\tfrac{1}{2}\theta_1]^{-1}\cdot e^{-\frac{i}{2}(\vp+\theta_0+\theta_1)}$ \cr \hline
$\vphantom{\big|^|_|}\mathrm{II}$ & $[\cos\tfrac{1}{2}\theta_0]^{-1}\tan\tfrac{1}{2}\theta_1
\cdot e^{-\tfrac{i}{2}\vp}$ & $[\cos\tfrac{1}{2}\theta_1]^{-1}\cdot
e^{-\frac{i}{2}(\vp-\pi+\theta_0+\theta_1)}$ \cr \hline $\vphantom{\big|^|_|}\mathrm{III}$ &
$[\cos\tfrac{1}{2}\theta_0]^{-1}\cdot e^{-\tfrac{i}{2}\vp}$ &
$[\cos\tfrac{1}{2}\theta_1]^{-1}\tan\tfrac{1}{2}\theta_0\cdot
e^{-\frac{i}{2}(\vp-\pi+\theta_0+\theta_1)}$ \cr \hline $\vphantom{\big|^|_|}\mathrm{IV}$ &
$[\cos\tfrac{1}{2}\theta_0]^{-1}\tan\tfrac{1}{2}\theta_1 \cdot e^{-\tfrac{i}{2}\vp}$ &
$[\cos\tfrac{1}{2}\theta_1]^{-1}\tan\tfrac{1}{2}\theta_0\cdot e^{-\frac{i}{2}(\vp\pm 2\pi +
\theta_0+\theta_1)}$ \cr \hline\hline
\end{tabular}
\end{center}
where $\vp=\wind(\g^\mesh;\,a^\mesh\rsa z_0)-\wind(a^\mesh\rsa b^\mesh)+\arg\t(b^\mesh)$. Note
that
\[
\begin{array}{lll}
e^{-\frac{i}{2}\vp}\parallel [w_1\!-\!w_0]^{-\frac{1}{2}}, &
e^{-\frac{i}{2}(\vp+\theta_0+\theta_1)}\parallel [w_2\!-\!w_1]^{-\frac{1}{2}}, &\mathrm{in~cases\
I~\&~II,} \vphantom{\big|_|}\cr e^{-\frac{i}{2}\vp}\parallel [w_0\!-\!w_1]^{-\frac{1}{2}}, &
e^{-\frac{i}{2}(\vp+\theta_0+\theta_1)}\parallel [w_1\!-\!w_2]^{-\frac{1}{2}}, & \mathrm{in~cases\
III~\&~IV.} \vphantom{\big|^|}
\end{array}
\]
A simple trigonometric calculation then shows that
the (relative) contributions to \emph{both} projections
\mbox{$\Pr[F^\mesh(z_{j});[i(w\!-\!u)]^{-\frac{1}{2}}]$} for $j=0,\,1$
are equal to $1$,
$\tan\tfrac{1}{2}\theta_1$, $\tan\tfrac{1}{2}\theta_0$ and
$\tan\tfrac{1}{2}\theta_0\tan\tfrac{1}{2}\theta_1$ in cases I--IV, respectively.
\end{proof}

Summing it up, we arrive at

\medskip

\noindent {\bf Discrete Riemann boundary value problem for $\bm{F^\mesh}$ (spin-case):} \emph{
The function $F^\mesh$ is defined in $\O^\mesh_\DS$ so that
discrete holomorphicity (\ref{SpinSHol}) holds for every pair of neighbors \mbox{$z_0,z_1\in\O^\mesh_\DS$}. Furthermore, $F^\mesh$ satisfies the
boundary conditions (\ref{spin-BC}) and is normalized at $b^\mesh$.}

\section{S-holomorphic functions on isoradial graphs}
\label{SectSHolFunct} \setcounter{equation}{0}

\subsection{Preliminaries. Discrete harmonic and discrete holomorphic functions on isoradial
graphs} We start with basic definitions of the discrete complex analysis on isoradial graphs,
more details can be found in Appendix and our paper \cite{chelkak-smirnov-dca},
where a ``toolbox'' of
discrete versions of continuous results is provided.

Let $\G$ be an isoradial graph, and $H$ be defined on some vertices of $\G$. We define its
discrete Laplacian whenever possible by
\begin{equation}
\label{DMeshDef} [\D^\mesh H](u)~:=~\frac{1}{\weightG{u}}\sum_{u_s\sim u}
\tan\theta_s\cdot[H(u_s)\!-\!H(u)],
\end{equation}
where $\weightG{u}=\frac{1}{2}\mesh^2\sum_{u_s\sim u}\sin 2\theta_s$ (see
Fig.~\ref{Fig:Notations}B for notation). Function $H$ is called \emph{(discrete) harmonic} in
some discrete domain $\O^\mesh_\G$ if $\D^\mesh H =0$ at all interior vertices of
$\O^\mesh_\G$. It is worthwhile to point out that, on isoradial graphs, as in the continuous
setup, harmonic functions satisfy some (uniform w.r.t. $\mesh$ and the structure of $\DS$)
variant of the Harnack's Lemma (see Proposition~\ref{PropHarnack}).

Let $\dhm{\mesh}{u}{E}{\O^\mesh_\G}$ denote the \emph{harmonic measure} of $E\ss
\pa\O^\mesh_\G$ viewed from $u\in \Int \O^\mesh_\G$, i.e., the probability that
the random walk
generated by (\ref{DMeshDef}) (i.e. such that transition probabilities
at $u$ are proportional to $\tan\theta_s$'s) on $\G$ started from $u$ exits $\O^\mesh_\G$ through $E$.
As usual, $\o^\mesh$~is a probability measure on $\pa{\O^\mesh_\G}$ and a harmonic function of $u$. If
$\O^\mesh_\G$ is bounded, then we have
\[
H(u)=\sum_{a\in\pa\O^\mesh_\G} \dhm \mesh u {\{a\}} {\O^\mesh_\G} \cdot H(a),\quad
u\in\Int\O^\mesh_\G,
\]
for any discrete harmonic in $\O^\mesh_\G$ function $H$. Below we use some uniform (w.r.t.
$\mesh$ and the structure of $\G$) estimates of $\o^\mesh$
from \cite{chelkak-smirnov-dca}, which are quoted
in Appendix.

Let $H$ be defined on some part of $\G$ or $\G^*$ or $\L=\G\cup\G^*$ and $z$ be the center of
the rhombus $v_1v_2v_3v_4$. We set
\[
[\dpa H](z)~:=~\frac{1}{2}\lt[\frac{H(v_1)\!-\!H(v_3)}{v_1\!-\!v_3} +
\frac{H(v_2)\!-\!H(v_4)}{v_2\!-\!v_4}\rt],\qquad z\in\DS.
\]
Furthermore, let $F$ be defined on some subset of $\DS$.
We define its discrete $\dopa$-derivative by setting
\begin{equation}
\label{DopaDef} [\dopa F](u)= -\frac{i}{2\weightG{u}}\sum_{z_s\sim u}
(w_{s+1}\!-\!w_s)F(z_s),\qquad u\in\L=\G\cup\G^*
\end{equation}
(see Fig.~\ref{Fig:Notations}B for notation when $u\in\G$). Function $F$ is called
\emph{(discrete) holomorphic} in some discrete domain $\O^\mesh_\DS\ss\DS$ if $\dopa F =0$ at
all interior vertices. It is easy to check that $\D^\mesh=4\dopa\dpa$, and so $\dpa H$ is
holomorphic for any harmonic function $H$. Conversely, in simply connected domains, if $F$ is
holomorphic on $\DS$, then there exists a harmonic function $H=\int^\mesh F(z)d^\mesh z$ such
that $\dpa H =F$. Its components $H|_\G$ and $H|_{\G^*}$ are defined uniquely up to additive
constants by
\[
H(v_2)-H(v_1)=F\left({\textstyle\frac{1}{2}}(v_2\!+\!v_1)\right)\cdot (v_2-v_1),\quad v_2\sim
v_1,
\]
where both $v_1,v_2\in\G$ or both $v_1,v_2\in\G^*$, respectively.

It is most important that discrete holomorphic (on $\DS$) functions are
Lipschitz continuous in an appropriate sense, see Corollary \ref{LipHol}.
For the sake of the reader, we quote all
other necessary results in Appendix.

\subsection{S-holomorphic functions and the propagation equation for spinors}
\label{SectSHol}

In this section we investigate the notion of s-holomorphicity
which appears naturally for holomorphic fermions in the Ising model (see (\ref{FKSHol}),
(\ref{SpinSHol})).
We discuss its connections to  spinors
defined on the double-covering of $\DS$ edges (see \cite{mercat-2001}) and essentially equivalent to the introduction of disorder operators (see \cite{kadanoff-ceva,rajabpour-cardy} and the references therein).
We don't refer to this discussion (except Definition \ref{DefSHol} and elementary Lemma
\ref{SHolIsHol}) in the rest of our paper.

\begin{definition}
\label{DefSHol} Let $\O^\mesh_\DS\ss\DS$ be some discrete domain and
$F:\O^\mesh_\DS\to\C$. We
call function $F$ \mbox{{\bf strongly}} or \mbox{{\bf spin holomorphic}}, or \mbox{{\bf s-holomorphic}} for short,
if for each pair of neighbors
$z_0,z_1\in\O^\mesh_\DS$, $z_0\sim z_1$, the following  projections of
two values of $F$ are equal:
\begin{equation}
\label{SHolDef} \Pr\left[F(z_0)\,;[i(w_1\!-\!u)]^{-\frac{1}{2}}\right]=
\Pr\left[F(z_1)\,;[i(w_1\!-\!u)]^{-\frac{1}{2}}\right]
\end{equation}
or, equivalently,
\begin{equation}
\label{SHolDefb}
\ol{F(z_1)}\!-\!\ol{F(z_0)}= -i(w_1\!-\!u)\mesh^{-1}\cdot (F(z_1)\!-\!F(z_0)),
\end{equation}
where $(w_1u)$, $u\in\G$, $w_1\in\G^*$, is the common edge of rhombi $z_0$, $z_1$ (see
Fig.~\ref{Fig:Notations}C).
\end{definition}
Recall that orthogonal projection of $X$ on $u$ satisfies
\[
\Pr[X;u]=\Re\lt(X\frac{\ol{u}}{|u|}\rt)\frac{u}{|u|}=\frac{X}{2}+\frac{\ol{X}u^2}{2|u|^2},
\]
which we used above.

It's easy to check that
 the property to be s-holomorphic is stronger than the usual discrete
holomorphicity:
\begin{lemma}
\label{SHolIsHol} If $F:\O^\d_\DS\to\C$ is s-holomorphic, then $F$ is holomorphic in
$\O^\mesh_\DS$, i.e., $[\dopa F](v)=0$ for all $v\in\Int\O^\mesh_\L$.
\end{lemma}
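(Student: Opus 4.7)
The plan is to verify $[\dopa F](u)=0$ at any interior $u\in\Int\O^\mesh_\L$ by a short summation-by-parts argument that reduces the identity to telescoping, using the equivalent form (\ref{SHolDefb}) of the s-holomorphicity condition as the one piece of algebra. I treat the case $u\in\G$ explicitly; the case $u\in\G^*$ is symmetric and will follow by an identical calculation after swapping the roles of primal/dual vertices.

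First I unfold the definition
\[
[\dopa F](u)=-\frac{i}{2\weightG{u}}\sum_{s=1}^n (w_{s+1}-w_s)F(z_s),
\]
with indices taken cyclically. Summation by parts (reindexing $w_{s+1}F(z_s)\mapsto w_s F(z_{s-1})$) gives
\[
\sum_s (w_{s+1}-w_s)F(z_s) = -\sum_s w_s\bigl(F(z_s)-F(z_{s-1})\bigr).
\]
Because $\sum_s\bigl(F(z_s)-F(z_{s-1})\bigr)=0$ by telescoping, I may replace $w_s$ by $w_s-u$ on the right, obtaining
\[
\sum_s (w_{s+1}-w_s)F(z_s) = -\sum_s (w_s-u)\bigl(F(z_s)-F(z_{s-1})\bigr).
\]

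Now the edge $(u,w_s)$ is the common side of the rhombi $z_{s-1}$ and $z_s$, so the s-holomorphicity identity (\ref{SHolDefb}) applied to this edge reads
\[
(w_s-u)\bigl(F(z_s)-F(z_{s-1})\bigr) = i\mesh\cdot\overline{F(z_s)-F(z_{s-1})}.
\]
Plugging in and applying telescoping a second time, this time to the conjugated increments, yields
\[
\sum_s (w_{s+1}-w_s)F(z_s) = -i\mesh\cdot\overline{\sum_s\bigl(F(z_s)-F(z_{s-1})\bigr)} = 0,
\]
which is exactly $[\dopa F](u)=0$.

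The argument is essentially a one-line computation once the correct reorganization is spotted, so there is no real obstacle; the content of the lemma is simply that (\ref{SHolDefb}) converts each weighted increment $(w_s-u)\bigl(F(z_s)-F(z_{s-1})\bigr)$ into $i\mesh$ times the complex conjugate of the same increment, and hence two telescoping identities (one before and one after conjugation) conspire to cancel the whole sum. This clarifies in what sense s-holomorphicity is a strengthening of discrete holomorphicity: it packages a \emph{real} constraint per edge which, summed around a vertex, already forces the \emph{complex} Cauchy--Riemann relation $\dopa F=0$.
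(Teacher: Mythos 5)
Your proof is correct and follows essentially the same route as the paper's: an Abel summation around the vertex $u$, the substitution of $w_s-u$ for $w_s$ justified by telescoping, and then the pointwise identity \eqref{SHolDefb} on each edge $(uw_s)$ converting the sum into $-i\mesh$ times the conjugate of a telescoping sum. The only difference from the paper's computation is the direction of the reindexing, which is purely cosmetic.
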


\begin{proof}
Let $v\!=\!u\in\G$ (the case $v=w\in\G^*$ is essentially the same) and $F_s=F(z_s)$ (see
Fig.~\ref{Fig:Notations}B for notation). Then
\[
-i\sum_{s=1}^n (w_{s+1}\!-\!w_s)F_s = -i\sum_{s=1}^n (w_{s+1}\!-\!u)(F_s\!-\!F_{s+1}) =
-\mesh\cdot\sum_{s=1}^n (\ol{F}_{s}\!-\!\ol{F}_{s+1})=0.
\]
Thus, $[\dopa F](u)=0$.
\end{proof}

Conversely, in a simply-connected domain
every discrete holomorphic function can be decomposed
into the sum of two s-holomorphic functions (one multiplied by $i$):

\begin{lemma}
Let $\O^\mesh_\DS$ be a simply connected discrete domain and $F:\O^\mesh_\DS\to\C$ be a
discrete holomorphic function. Then there are (unique up to an additive constant)
s-holomorphic functions $F_1,F_2:\O^\mesh_\DS\to\C$ such that $F=F_1+iF_2$.
\end{lemma}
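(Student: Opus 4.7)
The plan is to prescribe the increments of $F_1,F_2$ along each edge of $\DS$ so that (i)~their combination reproduces the corresponding increment of $F$ and (ii)~each of $F_1,F_2$ is individually s-holomorphic on that edge; then, using $\dopa F=0$, to verify that these prescribed increments close up around every face of $\DS$. Since $\O^\mesh_\DS$ is simply connected, this yields well-defined functions $F_1,F_2$.

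For adjacent rhombi $z_0,z_1\in\O^\mesh_\DS$ sharing the lattice edge $(uw)$, set $\eta=[i(w\!-\!u)]^{-1/2}$ and decompose
\[
F(z_1)-F(z_0)=A\eta+B\,i\eta,\qquad A,B\in\R,
\]
in the real basis $\{\eta,i\eta\}$ of $\C$. By \eqref{SHolDef}, s-holomorphicity of $F_j$ along this edge is equivalent to $F_j(z_1)-F_j(z_0)\parallel i\eta$. Writing each such increment as $\lambda_j\,i\eta$ with $\lambda_j\in\R$ and substituting into $F_1+iF_2=F$ forces $\lambda_1=B$, $\lambda_2=-A$, so the edge-increments of $F_1$ and $F_2$ are uniquely prescribed by $F$.

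Now let $u\in\G$ be an interior vertex with surrounding rhombi $z_1,\dots,z_n$ (cyclically) and $\G^*$-neighbors $w_s$ such that $z_{s-1}\cap z_s=(uw_s)$, $\eta_s=[i(w_s\!-\!u)]^{-1/2}$. Writing $F(z_s)-F(z_{s-1})=A_s\eta_s+B_s\,i\eta_s$, closure of the prescribed increments around this $\DS$-face amounts to $\sum_s A_s\eta_s=\sum_s B_s\eta_s=0$. Since $F$ is single-valued, $\sum_s(A_s\eta_s+B_s\,i\eta_s)=0$ already covers half of this system, so it suffices to prove $\sum_s A_s\eta_s=0$. Using $\eta_s^2=[i(w_s\!-\!u)]^{-1}$ and $|\eta_s|^2=\mesh^{-1}$ one has
\[
A_s\eta_s=\tfrac{1}{2}\bigl(F(z_s)-F(z_{s-1})\bigr)-\tfrac{i\mesh}{2(w_s\!-\!u)}\,\ol{F(z_s)-F(z_{s-1})}.
\]
Discrete summation by parts together with the rhombic identity $(w_s\!-\!u)^{-1}=(\ol{w_s}-\ol u)/\mesh^2$ then expresses $\sum_s A_s\eta_s$ as a real multiple of $\ol{\sum_s(w_{s+1}\!-\!w_s)F(z_s)}=\ol{2i\mu^\mesh_\G(u)\,\dopa F(u)}$, which vanishes by assumption. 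The face centered at an interior $w\in\G^*$ is handled symmetrically.

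For uniqueness one observes that if both $G$ and $iG$ are s-holomorphic, then \eqref{SHolDef} forces every edge-increment of $G$ to be parallel simultaneously to $i\eta$ and to $\eta$, hence to vanish, so $G$ is constant on each connected component. Applying this to $G=F_1-F'_1=i(F'_2-F_2)$ yields the stated uniqueness up to one additive complex constant. The principal technical point is the closure step above: the single complex equation $\dopa F(u)=0$ has to supply the two missing real identities beyond those coming from well-definedness of $F$, and extracting them relies crucially on the rhombic/isoradial normalization $|w_s-u|=\mesh$, which is what converts reciprocals $(w_s\!-\!u)^{-1}$ into conjugates $(\ol{w_s}-\ol u)/\mesh^2$ and thereby matches $\sum_s A_s\eta_s$ against $\ol{\dopa F(u)}$.
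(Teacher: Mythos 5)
Your construction is correct and is essentially the paper's own proof in unpacked form: prescribing the edge increments $d F_1=\Pr[\,dF;i\eta\,]$ and $dF_2=-i\Pr[\,dF;\eta\,]$ coincides with the paper's symmetrization $dF_1=\frac12(dF+(dF)^\sharp)$, $dF_2=\frac{1}{2i}(dF-(dF)^\sharp)$ under the involution $G\mapsto G^\sharp$, and your closure computation around a vertex via summation by parts and $\dopa F=0$ is exactly the paper's check that the two identities \eqref{Hol1form} are preserved by $\sharp$ (using the same rhombic identity $|w_s-u|=\mesh$). Your uniqueness argument --- that an increment parallel to both $\eta$ and $i\eta$ must vanish --- is also the intended one, and if anything is stated more cleanly than in the paper.
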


\begin{proof}
Let $\O^\mesh_\Upsilon$ denote the set of all oriented edges
$\xi=[\xi_\mathrm{b} \xi_\mathrm{w}]$ of the
rhombic lattice $\DS$ connecting neighboring vertices \mbox{$\xi_\mathrm{b}\in\O^\mesh_\G$,}
\mbox{$\xi_\mathrm{w}\in\O^\mesh_{\G^*}$.} For a function $F:\O^\mesh_\DS\to\C$, we
define its differential on edges
(more precisely, $d\,F$ a $1$-form on the edges of the dual graph, but there is no difference) by
\[
d\, F([uw_1]):=F(z_1)-F(z_0)~,~~d\,F:\O^\mesh_\Upsilon\to\C~,
\]
(see Fig.~\ref{Fig:Notations}C for notation). Then, a given antisymmetric function
$G$ defined anti-symmetrically on $\O^\mesh_\Upsilon$
is a differential 
of some discrete holomorphic function $F$
(uniquely defined on $\O^\mesh_\DS$ up to an additive constant) if for each
(black or white) vertex $u\in\O^\mesh_\L$ (see
Fig.~\ref{Fig:Notations}B for notation when $u\in\G$) two identities hold:
\begin{equation}
\label{Hol1form} \sum_{s=1}^n G([uw_s]) = 0 \quad \mathrm{and} \quad \sum_{s=1}^n
G([uw_s])(w_s\!-\!u)=0.
\end{equation}
Indeed, the first identity means that $G$ is an exact form and so a differential of some function $F$,
and the second ensures that $F$ is holomorphic by  \eqref{SHolIsHol}:
\[
\sum_{s=1}^n G([uw_s])(w_s\!-\!u) =  \sum_{s=1}^n (F(z_s)-F(z_{s-1}))(w_s\!-\!u)=
-\sum_{s=1}^n F(z_s)(w_{s+1}\!-\!w_s).
\]
Note that identities (\ref{Hol1form}) are invariant under the antilinear involution $G\mapsto G^\sharp$,
where
\[
G^\sharp([uw]):= \ol{G([uw])}\cdot i(\ol{w}\!-\!\ol{u})\mesh^{-1}
\]
On the other hand, from \eqref{SHolDefb} we see that $F$ is s-holomorphic iff $[d\,F]^\sharp = d\,F$. Thus,
the  functions $F_j$ defined by
$d\,F_1=\frac{1}{2}(d\,F+(d\,F)^\sharp)$ and $d\,F_2=\frac{1}{2i}(d\,F-(d\,F)^\sharp)$
are s-holomorphic and do the job.
Uniqueness easily follows from \eqref{SHolDef}:
If $0=F_1+iF_2$, and both functions are s-holomorphic,
then \eqref{SHolDef} implies that $F_1(z_0)=F_2(z_0)$,
and uniqueness follows.
\end{proof}

\begin{figure}
\centering{
\begin{minipage}[b]{0.4\textwidth}
\centering{\includegraphics[width=\textwidth]{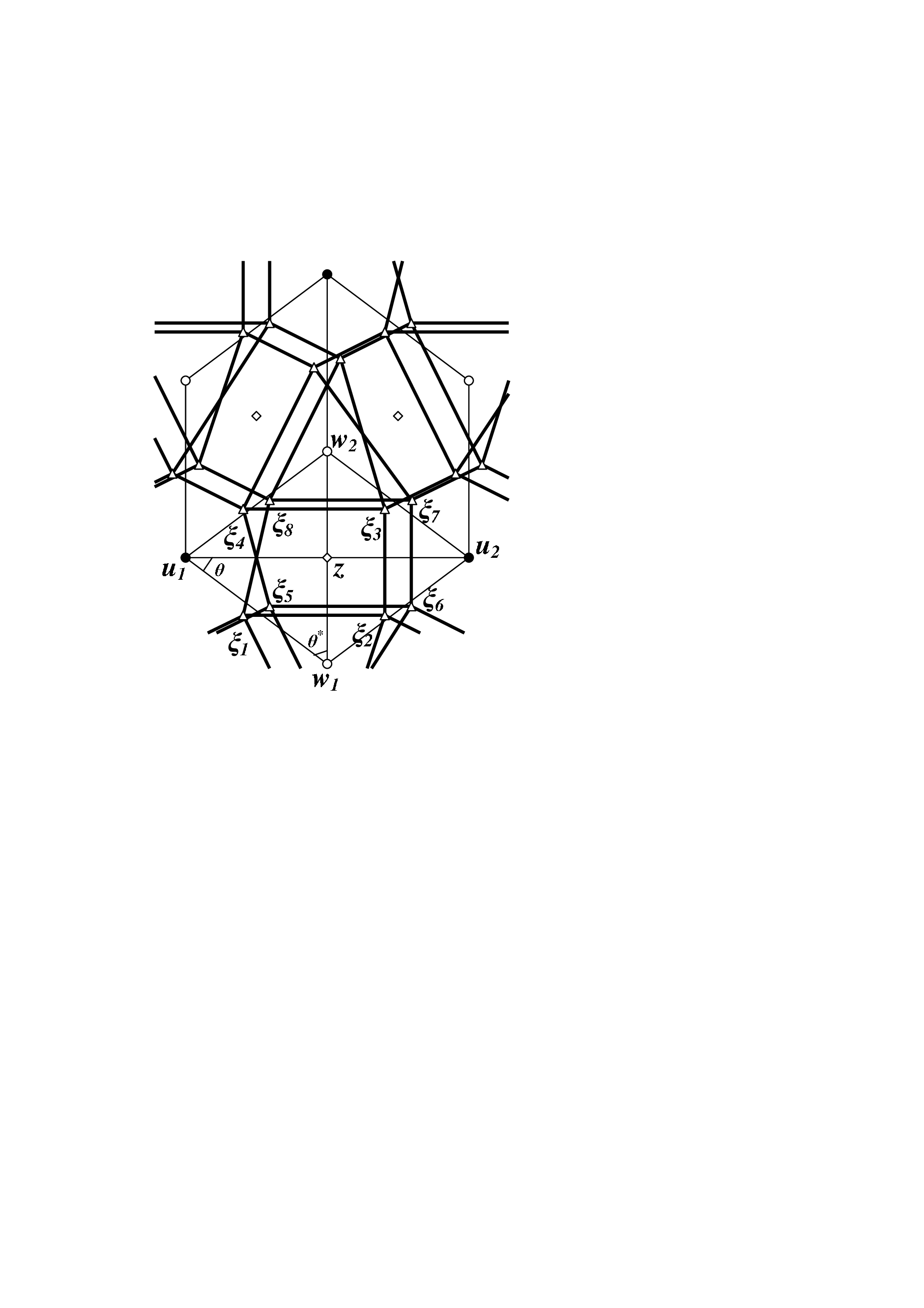}}
\end{minipage}
\hskip 0.15\textwidth
\begin{minipage}[b]{0.4\textwidth}
The spinor \mbox{$\cS(\xi):=[i(w_\mathrm{\xi}\!-\!u_\mathrm{\xi})]^{-\frac{1}{2}}$} is
naturally defined on $\widehat\Upsilon$ (``continuously'' around rhombi and vertices).
In~particular,

\bigskip

$\cS(\xi_1)=-\cS(\xi_5)=:\a$,

\medskip

$\cS(\xi_2)=-\cS(\xi_6)=e^{i\theta^*}\a$,

\medskip

$\cS(\xi_3)=-\cS(\xi_7)=i\a$,

\medskip

$\cS(\xi_4)=-\cS(\xi_8)=ie^{i\theta^*}\a$.

\bigskip \bigskip \bigskip

$\phantom{x}$
\end{minipage}}
\caption{\label{Fig:DoubleCovering} Double covering $\widehat\Upsilon$ of $\Upsilon$ ($=$
edges of $\DS$).}
\end{figure}

Following Ch.Mercat \cite{mercat-2001}, we denote  by $\widehat\Upsilon$ the \emph{
double-covering} of the set $\Upsilon$ of edges $\DS$ which is connected around each $z\in\DS$
and each $v\in\L$ (see \cite{mercat-2001}~p.209). A function $S$ defined on $\widehat\Upsilon$ is
called a \emph{spinor} if it changes the sign between the sheets. The simplest example is the
square root $[i(w\!-\!u)]^{-\frac{1}{2}}$ naturally defined on $\widehat\Upsilon$ (see
Fig.~\ref{Fig:DoubleCovering}). We say that a spinor $S$ satisfies the \emph{propagation
equation} (see \cite{mercat-2001} p.210 for historical remarks
and Fig.~\ref{Fig:DoubleCovering} for notation) if,
when walking around the edges $\xi_1,\dots,\xi_8$
of a doubly-covered rhombus,
for any three consecutive edges spinor values satisfy
\begin{equation}
\label{Propagation} S(\xi_{j+2})= (\cos\theta_j)^{-1}\cdot S(\xi_{j+1}) -\, \tan\theta_j\cdot
S(\xi_{j}),
\end{equation}
where $\theta_j$ denotes the half-angle ``between'' $\xi_{j}$ and $\xi_{j+1}$, i.e.,
$\theta_j=\theta$, if $j$ is odd, and $\theta_j=\frac{\pi}{2}-\theta$, if $j$ is even.

\begin{lemma}
Let $\O^\mesh_\DS$ be a simply connected discrete domain.
If a function $F:\O^\mesh_\DS\to\C$ is s-holomorphic then the real spinor
\begin{equation}
\label{SHol-Spinor} F_{\widehat\Upsilon}([uw])\,:=\,
\Re\left[F(z_0)\cdot[i(w\!-\!u)]^{\frac{1}{2}}\right] =
\Re\left[F(z_1)\cdot[i(w\!-\!u)]^{\frac{1}{2}}\right]
\end{equation}
satisfies the propagation equation (\ref{Propagation}). Conversely, if $F_{\widehat\Upsilon}$
is a real spinor satisfying (\ref{Propagation}), then there exists a unique  s-holomorphic
function $F$ such that (\ref{SHol-Spinor}) holds.
\end{lemma}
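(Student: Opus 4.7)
The plan has two parts — the forward and reverse implications — and both reduce to local calculations on a single rhombus, so no use will be made of the simple connectedness hypothesis.

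For the forward direction, I would first observe that the definition (\ref{SHol-Spinor}) of $F_{\widehat\Upsilon}$ is consistent: the two expressions $\Re[F(z_j)\cdot[i(w-u)]^{1/2}]$ agree, because this is just the s-holomorphicity identity (\ref{SHolDef}) rewritten by multiplying through the unit vector $[i(w-u)]^{1/2}$ and taking real parts. The spinor property (sign change between sheets of $\widehat\Upsilon$) is automatic from the fact that $[i(w-u)]^{1/2}$ is itself the canonical spinor of Fig.~\ref{Fig:DoubleCovering}. The real work is the propagation equation. Fix a rhombus $z=u_1w_1u_2w_2$ and set $X:=F(z)$. The lifts of its four sides satisfy $F_{\widehat\Upsilon}(\xi_j)=\Re[X\cdot\cS(\xi_j)^{-1}]$, and the four unit directions $\cS(\xi_j)^{-1}$ differ from one another by rotations through $\pm\theta/2$ or $\pm\theta^*/2$ with $\theta+\theta^*=\pi/2$. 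Eliminating the two real unknowns $\Re X,\Im X$ from the three real equations corresponding to three consecutive edges will produce a single linear relation among $F_{\widehat\Upsilon}(\xi_j),F_{\widehat\Upsilon}(\xi_{j+1}),F_{\widehat\Upsilon}(\xi_{j+2})$; a short trigonometric calculation then identifies its coefficients with $-\tan\theta_j$ and $(\cos\theta_j)^{-1}$, matching (\ref{Propagation}).

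For the converse, I would construct $F(z)$ locally at each rhombus center. Given the spinor $F_{\widehat\Upsilon}$, pick any two adjacent edges $\xi,\xi'$ of the rhombus at $z$; the directions $\cS(\xi)^{-1},\cS(\xi')^{-1}$ form an $\R$-basis of $\C$ (they make a nonzero angle $\theta$ or $\theta^*$), so there is a unique $F(z)\in\C$ whose two real projections along these directions match the prescribed values $F_{\widehat\Upsilon}(\xi),F_{\widehat\Upsilon}(\xi')$. Applied to three consecutive edges of the rhombus, the propagation relation — read backwards — guarantees that the two remaining projections of the same $F(z)$ reproduce $F_{\widehat\Upsilon}$ on the other two sides, so (\ref{SHol-Spinor}) holds on all four edges. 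S-holomorphicity across a shared edge $(uw_1)$ between rhombi $z_0,z_1$ is then immediate, since by construction both $F(z_0)$ and $F(z_1)$ have the same projection on $[i(w_1-u)]^{-1/2}$, namely the common spinor value. Uniqueness of $F$ follows because a complex number is determined by two real projections along non-parallel directions.

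The hard part will be the trigonometric reduction in the first half: one must keep careful track of whether two consecutive edges meet at a $\G$-vertex (with half-angle $\theta$) or at a $\G^*$-vertex (with half-angle $\theta^*=\pi/2-\theta$), since this alternation is precisely the source of the convention $\theta_j=\theta$ (odd $j$) versus $\theta_j=\theta^*$ (even $j$) in (\ref{Propagation}). Once that bookkeeping is organized and the identity $\theta+\theta^*=\pi/2$ is invoked, the remainder is plain linear algebra in $\R^2$ and the translation between spinors on $\widehat\Upsilon$ and complex values at rhombi centers.
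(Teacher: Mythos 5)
Your proposal is correct and follows essentially the same route as the paper's proof: both directions reduce to the observation that the propagation equation \eqref{Propagation} is exactly the linear relation among the projections of a single complex number $F(z)$ onto three consecutive square-root directions, with the converse obtained by reconstructing $F(z)$ from two projections and checking the remaining ones via \eqref{Propagation}. One small bookkeeping slip to fix when you carry out the computation: consecutive directions $\cS(\xi_j)^{-1}$ differ by rotations through $\theta$ or $\theta^*$ (half of the rhombus angles $2\theta$, $2\theta^*$), not through $\theta/2$ or $\theta^*/2$.
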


\begin{proof}
Note that (\ref{Propagation}) is nothing but the relation between the projections of the same
complex number onto three directions $\a=[i(w_1\!-\!u_1)]^{-\frac{1}{2}}$,
$e^{i(\frac{\pi}{2}-\theta)}\a=[i(w_2\!-\!u_1)]^{-\frac{1}{2}}$ and
$i\a=[i(w_2\!-\!u_2)]^{-\frac{1}{2}}$. Thus, s-holomorphicity of $F$ implies
(\ref{Propagation}). Conversely, starting with some real spinor $F_{\widehat\Upsilon}$
satisfying (\ref{Propagation}), one can construct a function $F$ such that
\[
\Pr[F(z)\,;\,[i(w\!-\!u)]^{-\frac{1}{2}}]=F_{\widehat\Upsilon}([uw])\cdot[i(w\!-\!u)]^{-\frac{1}{2}}
\]
(which is equivalent to (\ref{SHol-Spinor})) for any $z$, and this function is s-holomorphic
by the construction.
\end{proof}

\subsection{Integration of $F^2$ for s-holomorphic functions}
\label{SectIntF^2}

\begin{lemma}
If $F,\wt{F}:\O^\mesh_\DS\to\C$ are s-holomorphic, then $[\dopa (F\wt{F})](v)\in i\R$,
$v\in\O^\mesh_\L$.
\end{lemma}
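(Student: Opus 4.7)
The statement at a vertex $v=u\in\G$ amounts to $\Im S=0$ for $S:=\sum_{s}(w_{s+1}\!-\!w_s)\,F(z_s)\wt F(z_s)$, since $[\dopa(F\wt F)](u) = -\frac{i}{2\weightG{u}}\,S$; the case $v=w\in\G^*$ is entirely analogous. Set $e_s := w_s\!-\!u$ and $\eta_s := [ie_s]^{-1/2}$. Two elementary identities drive everything: $\eta_s^{-2} = ie_s$ and $|\eta_s|^2 = \mesh^{-1}$ (the latter is the isoradial rigidity, which holds for every $s$).

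My plan is to evaluate $\Im(e_s F(z_s)\wt F(z_s))$ in two ways at the rhombus $z_s$, using each of the two edges at $u$ that bound it. By s-holomorphicity (\ref{SHolDefb}) applied on the edge $[uw_s]$, the quantity $\alpha_s^F := \Re(F(z_s)/\eta_s) = \Re(F(z_{s-1})/\eta_s)$ is well-defined and depends only on the edge (not on the adjacent rhombus), and likewise for $\alpha_s^{\wt F}$. Writing $F(z_s)/\eta_s = \alpha_s^F + ib_s^F$ and $F(z_s)/\eta_{s+1} = \alpha_{s+1}^F + ia_{s+1}^F$, and analogously for $\wt F$, the identity $(F/\eta_s)(\wt F/\eta_s) = ie_s F(z_s)\wt F(z_s)$ together with its analogue at $\eta_{s+1}$ gives, after extracting real parts of the right-hand side,
\[
\Im(e_s F(z_s)\wt F(z_s)) = b_s^F b_s^{\wt F} - \alpha_s^F\alpha_s^{\wt F},\quad \Im(e_{s+1} F(z_s)\wt F(z_s)) = a_{s+1}^F a_{s+1}^{\wt F} - \alpha_{s+1}^F\alpha_{s+1}^{\wt F}.
\]

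The key ingredient is the Hermitian identity $(F(z_s)/\eta)\,\ol{(\wt F(z_s)/\eta)} = \mesh\,F(z_s)\ol{\wt F(z_s)}$, whose right-hand side is manifestly independent of which $\eta\in\{\eta_s,\eta_{s+1}\}$ one picks (this is where $|\eta|^2 = \mesh^{-1}$ is used). Taking real parts of both sides yields
\[
\alpha_s^F\alpha_s^{\wt F} + b_s^F b_s^{\wt F} \,=\, \alpha_{s+1}^F\alpha_{s+1}^{\wt F} + a_{s+1}^F a_{s+1}^{\wt F}.
\]
Substituting this into the difference of the two previous formulas produces
\[
\Im\bigl((e_{s+1}\!-\!e_s)F(z_s)\wt F(z_s)\bigr) = 2(\alpha_s^F\alpha_s^{\wt F} - \alpha_{s+1}^F\alpha_{s+1}^{\wt F}),
\]
whose cyclic sum over $s$ telescopes to $0$. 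Hence $\Im S = 0$, which is the claim.

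The only conceptual point is recognising that s-holomorphicity is precisely what lets one attach a common real scalar $\alpha_s^F$ to an edge at $u$, after which the Hermitian identity plays the role of a discrete Pythagorean rigidity making the two edge-projections at $z_s$ an exact telescoping difference. Beyond that the computation is purely mechanical, and the version at $v\in\G^*$ is obtained by swapping the roles of $\G$ and $\G^*$.
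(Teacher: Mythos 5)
Your proof is correct, but it follows a genuinely different route from the paper's. The paper first handles the diagonal case $\wt F=F$: an Abel summation rewrites $\sum_{s}(w_{s+1}\!-\!w_s)F_s^2$ as $\sum_s(w_{s+1}\!-\!u)(F_s^2-F_{s+1}^2)$, and the complex form \eqref{SHolDefb} of s-holomorphicity converts each factor $(w_{s+1}\!-\!u)(F_s-F_{s+1})$ into $\pm i\mesh(\ol F_s-\ol F_{s+1})$, leaving $\pm 2i\mesh\,\Im\sum_s\ol F_sF_{s+1}\in i\R$; the bilinear statement then follows at once from the polarization $4F\wt F=(F+\wt F)^2-(F-\wt F)^2$, both summands being s-holomorphic by linearity of \eqref{SHolDef}. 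You instead attack the bilinear quantity directly: you attach to each edge $[uw_s]$ the real scalars $\alpha_s^F,\alpha_s^{\wt F}$ (well-defined precisely by s-holomorphicity), use the Hermitian identity $|\eta_s|^{-2}=\mesh$ as a parallelogram/Pythagoras-type rigidity at each rhombus, and exhibit $\Im\bigl((w_{s+1}\!-\!w_s)F(z_s)\wt F(z_s)\bigr)$ as the exact difference $2(\alpha_s^F\alpha_s^{\wt F}-\alpha_{s+1}^F\alpha_{s+1}^{\wt F})$, which telescopes cyclically. The paper's argument is shorter; yours is more structural, and it buys something: since $2\mesh\,|\Pr[F(z_s);[i(w_s\!-\!u)]^{-\frac{1}{2}}]|^2=2(\alpha_s^F)^2$, your telescoping identity specialized to $\wt F=F$ is exactly the statement that the increments \eqref{HDef} of $H=\Im\int^\mesh(F(z))^2d^\mesh z$ are consistent around $u$, i.e.\ parts (i)--(ii) of Proposition~\ref{PropHDef}, so your computation is the natural bilinear extension of the construction of $H$. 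Both arguments ultimately rest on the same two inputs, the edge-projection identity \eqref{SHolDef} and the isoradial normalization $|w_s-u|=\mesh$, so there is no gap; the choice between them is one of presentation.
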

\begin{proof}
Let $v\!=\!u\in\G$ (the case $v=w\in\G^*$ is essentially the same) and denote $F_s:=F(z_s)$ (see
Fig.~\ref{Fig:Notations}B for notation). Using \eqref{SHolDefb}, we infer that $\dopa F^2$ satisfies:
\[
-i\sum_{s=1}^n (w_{s+1}\!-\!w_s)F_s^2 
= -\mesh\cdot\sum_{s=1}^n (F_s\!+\!F_{s+1})(\ol{F}_{s}\!-\!\ol{F}_{s+1}) = 2i\mesh\cdot
\Im\sum_{s=1}^n\ol{F}_sF_{s+1}\in i\R.
\]
Therefore, $4[\dopa (F\wt{F})](u)=[\dopa (F\!+\!\wt{F})^2](u)-[\dopa (F\!-\!\wt{F})^2](u)\in
i\R$.
\end{proof}

In the continuous setup, the condition $\Re[\ol{\pa} G](z) \equiv 0$ allows one to define the
function $\Im\int G(z)\,dz$ (i.e., in simply connected domains, the integral doesn't depend on the path). It is easy to check that the same holds in the discrete setup. Namely, if
$\O^\mesh_\DS$ is simply connected and $G:\O^\mesh_\DS\to\C$ is such that $\Re [\dopa G]
\equiv 0$ in $\O^\mesh_\DS$, then the discrete integral $H=\Im\int^\d G(z)d^\mesh z$ is
well-defined (i.e., doesn't depend on the path of integration) on both $\O^\mesh_\G\ss\G$ and
$\O^\mesh_{\G^*}\ss\G^*$ up to two (different for $\G$ and $\G^*$) additive constants.

It turns out that if $G=F^2$ for some s-holomorphic $F$, then $H=\Im\int^\d (F(z))^2 d^\mesh
z$ can be defined \emph{simultaneously} on $\G$ and $\G^*$ (up to \emph{one} additive constant)
in the following way:
\begin{equation}
\label{HDef} H(u)-H(w_1)~:=~
2\d\cdot\left|\Pr\left[F(z_j)\,;[i(w_1\!-\!u)]^{-\frac{1}{2}}\right]\right|^2,\quad u\sim w_1,
\end{equation}
where $(uw_1)$, $u\in\G$, $w_1\in\G^*$, is the common edge of two neighboring rhombi $z_0,z_1\in\DS$ (see
Fig.~\ref{Fig:Notations}C), and taking $j=0,1$ gives the same value.

\begin{proposition}
\label{PropHDef} Let $\O^\mesh_\DS$ be a simply connected discrete domain. If
$F:\O^\mesh_\DS\to\C$ is s-holomorphic, then

\smallskip

\noindent (i)\phantom{ii}
function $H:\O^\mesh_\L\to\C$
is well-defined (up to an additive constant) by (\ref{HDef});

\smallskip

\noindent (ii)\phantom{i} for any neighboring $v_1,v_2\in\O^\mesh_\G\ss\G$ or
$v_1,v_2\in\O^\mesh_{\G^*}\ss\G^*$ the identity
\[
H(v_2)-H(v_1)=\Im[(v_2\!-\!v_1)(F(\textfrac{1}{2}(v_1\!+\!v_2)))^2]
\]
holds (and so $H=\Im\int^\mesh (F(z))^2d^\mesh z$ on both $\G$ and $\G^*$);

\smallskip

\noindent (iii) $H$ is (discrete) subharmonic on $\G$ and superharmonic on $\G^*$, i.e.,
\[
[\D^\mesh H](u)\ge 0\quad \mathit{and}\quad [\D^\mesh H](w)\le 0
\]
for all $u\in\Int\O^\mesh_\G\ss\G$ and $w\in\Int\O^\mesh_{\G^*}\ss\G^*$;
\end{proposition}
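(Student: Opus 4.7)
The plan is to reduce all three items to two elementary projection identities together with one Abel-type summation. For any $X\in\C$ and non-zero $d\in\C$ one has $|\Pr[X;d]|^2+|\Pr[X;id]|^2=|X|^2$, and specialising to $d=[i(w-u)]^{-1/2}$ (so $|d|^2=\mesh^{-1}$ and $\bar d^{\,2}=i/\ol{(w-u)}$) a short calculation gives
\[
2\mesh\bigl|\Pr[X;[i(w\!-\!u)]^{-1/2}]\bigr|^2 \;=\; \Im\bigl((u\!-\!w)X^2\bigr)\,+\,\mesh|X|^2.
\]
These two identities, combined with s-holomorphicity in the form \eqref{SHolDefb}, will drive the whole proof.

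For (i), it suffices to check closure of \eqref{HDef} around each rhombus, since $\O^\mesh_\DS$ is simply connected. Fix a rhombus $u_1w_1u_2w_2$ of centre $z$. The four projection directions $d_{u_jw_k}=[i(w_k\!-\!u_j)]^{-1/2}$ split into two orthogonal pairs, because opposite sides of the rhombus are parallel and hence the corresponding $i(w\!-\!u)$'s differ by a factor $-1$, producing the factor $\pm i$ upon taking a square root. Applying the first identity to each orthogonal pair,
\[
[H(u_1){-}H(w_1)]+[H(u_2){-}H(w_2)] \;=\; 2\mesh|F(z)|^2 \;=\; [H(u_1){-}H(w_2)]+[H(u_2){-}H(w_1)],
\]
which is the required rhombus-closure relation.

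For (ii), take adjacent $v_1,v_2$ in $\G$ (the $\G^*$-case is identical) sharing a common $\G^*$-neighbour $w$; the rhombus centre is $z=\tfrac12(v_1\!+\!v_2)$. The second identity above gives $H(v_j)-H(w)=\Im((v_j\!-\!w)F(z)^2)+\mesh|F(z)|^2$ for $j=1,2$, and subtracting kills the $\mesh|F|^2$ terms, leaving $H(v_2)-H(v_1)=\Im((v_2\!-\!v_1)F(z)^2)$ as claimed.

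For (iii), the Laplacian at $u\in\G$ is, using (ii) and the rhombus-diagonal identity $\tan\theta_s\,(u_s-u)=-i(w_{s+1}-w_s)$,
\[
\weightG{u}\cdot[\D^\mesh H](u)=-\Re\sum_s(w_{s+1}-w_s)F(z_s)^2.
\]
Abel summation, substituting $(w_s-u)(F(z_s)-F(z_{s-1}))=i\mesh(\ol{F(z_s)}-\ol{F(z_{s-1})})$ from \eqref{SHolDefb}, telescopes the right-hand side into the real quantity $2\mesh\sum_s\Im(\ol{F(z_{s-1})}F(z_s))$. Introducing the spinor values $p_s:=\Re(F(z_s)\ol{e_s})$ on the edges $(uw_s)$ (where $e_s$ is the unit vector parallel to $[i(w_s-u)]^{-1/2}$; this is well-defined and shared by the two adjacent rhombi by s-holomorphicity, with antiperiodic monodromy $p_{n+1}=-p_1$ inherited from the square root), a direct expansion in the orthonormal bases $\{e_s,ie_s\}$ yields
\[
\sum_s\Im(\ol{F(z_{s-1})}F(z_s))\;=\;\sum_s\frac{(p_s^2+p_{s+1}^2)\cos\theta_s-2p_sp_{s+1}}{\sin\theta_s}.
\]
The core analytic step — and the main obstacle — is to show that this antiperiodic quadratic form in $(p_s)$ is non-negative. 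I expect a sum-of-squares decomposition exploiting $\sum_s\theta_s=\pi$: for $n=3$ one gets the perfect square $(p_1\sin\theta_2-p_2\sin\theta_3+p_3\sin\theta_1)^2/(\sin\theta_1\sin\theta_2\sin\theta_3)$, and the general $n$ admits an analogous decomposition. Superharmonicity at $w\in\G^*$ follows from precisely the same computation, with the sign in the diagonal identity reversed because the $\G$- and $\G^*$-diagonals swap roles.
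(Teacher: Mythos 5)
Your parts (i) and (ii) are correct and essentially coincide with the paper's argument: the paper proves both at once by expanding $[H(u_2)-H(w_j)]+[H(w_j)-H(u_1)]$ directly, whereas your identity $2\mesh\,|\Pr[X;[i(w\!-\!u)]^{-1/2}]|^2=\Im((u\!-\!w)X^2)+\mesh|X|^2$ (which checks out) packages the same computation so that closure around a rhombus reduces to the Pythagoras identity for the two orthogonal pairs of projection directions; this is a slightly tidier presentation of the same idea. Part (iii) also lands on the paper's object: the paper likewise reduces $\weightG{u}\cdot[\D^\mesh H](u)$ to the antiperiodic quadratic form $Q^{(n)}_{\theta_1;\dots;\theta_n}(x_1,\dots,x_n)=\sum_s\frac{\cos\theta_s(x_s^2+x_{s+1}^2)\mp 2x_sx_{s+1}}{\sin\theta_s}$ in the edge projections (it gets there by reconstructing each $F(z_s)$ from its two projections rather than via your Abel summation, but the two computations are interchangeable and your $n=3$ perfect square agrees with the paper's).

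The genuine gap is exactly where you flag it: the non-negativity of this form for general $n$ is asserted (``I expect \dots the general $n$ admits an analogous decomposition'') but not proved, and for $n\ge 4$ the form is \emph{not} a single perfect square, so ``analogous'' needs an actual construction. The paper supplies it by induction, merging the last two angles: one verifies the algebraic identity
\[
Q^{(n)}_{\theta_1;\dots;\theta_n}(x_1,\dots,x_n)-Q^{(n-1)}_{\theta_1;\dots;\theta_{n-2};\theta_{n-1}+\theta_n}(x_1,\dots,x_{n-1})
=Q^{(3)}_{\pi-\theta_{n-1}-\theta_n;\,\theta_{n-1};\,\theta_n}(x_1,x_{n-1},x_n),
\]
where the angle sums remain $\pi$ on both sides, so that $Q^{(n)}$ becomes a sum of $n-2$ perfect squares of type $Q^{(3)}$; this is the step you must add. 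A smaller inaccuracy: the diagonal identity $\tan\theta_s\cdot(v_s-v)=-i\cdot(\text{other diagonal, oriented counterclockwise})$ has the \emph{same} sign at white vertices, so it is not the source of superharmonicity on $\G^*$; the sign reversal comes from the asymmetry of \eqref{SHolDefb} in $u$ versus $w_1$ (equivalently, from the paper's remark that multiplying $F$ by $i$ and transposing $\G$ and $\G^*$ preserves s-holomorphicity while negating $F^2$ and hence $H$).
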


\begin{proof}
(i),(ii) Let $z$ be the center of the rhombus $u_1w_1u_2w_2$. For $j=1,2$ we have
\begin{align*}
[H(u_2)&-H(w_j)]+[H(w_j)-H(u_1)]
\\
& = 2\left|\Re \left[[i(w_j\!-\!u_2)]^{1/2}F(z)\right]\right|^2- 2\left|\Re
\left[[i(w_j\!-\!u_1)]^{1/2}F(z)\right]\right|^2
\\
&={\textstyle\frac{1}{2}}\left[[i(w_j\!-\!u_2)]^{1/2}F(z)+
[-i(\ol{w_j}\!-\!\ol{u_2})]^{1/2}\ol{F(z)} \right]^2
\\
&\qquad\qquad - {\textstyle\frac{1}{2}}\left[ [i(w_j\!-\!u_1)]^{1/2}F(z)+
[-i(\ol{w_j}\!-\!\ol{u_1})]^{1/2}\ol{F(z)}\right]^2
\\
&={\textstyle\frac{1}{2}}\left[i(u_1\!-\!u_2)(F(z))^2 - i
(\ol{u}_1\!-\!\ol{u}_2)(\ol{F(z)})^{2}\right]=\Im\left[(u_2\!-\!u_1)(F(z))^2\right]\,.
\end{align*}
The computations for $H(w_2)\!-\!H(w_1)$ are similar.

\smallskip

\noindent (iii) Let $u\in\G$ and set $F_s:=F(z_s)$ (see Fig.~\ref{Fig:Notations}B for notation).
Denote
\[
t_s:=\Pr\left[F_s\,;[i(w_s\!-\!u)]^{-\frac{1}{2}}\right]=
\Pr\left[F_{s-1}\,;[i(w_s\!-\!u)]^{-\frac{1}{2}}\right].
\]
Knowing two projections of $F_s$ uniquely determines it's value:
\[
F_s=\frac{i(t_se^{-i\theta_s}-t_{s+1}e^{i\theta_s})}{\sin\theta_s}
\]
and so
\[
\weightG{u}\cdot[\D^\mesh H](u)= -\Im \lt[\sum_{s=1}^n\tan\theta_s\cdot\frac{
(t_se^{-i\theta_s}-t_{s+1}e^{i\theta_s})^2\cdot (u_s\!-\!u)}{\sin^2\theta_s}\rt]
\]
\[
=-2\mesh\cdot\Im \lt[ \sum_{s=1}^n \frac{
(t_s^2e^{-2i\theta_s}-2t_st_{s+1}+t_{s+1}^2e^{2i\theta_s})\cdot
e^{i\arg(u_s-u)}}{\sin\theta_s} \rt].
\]
Let $t_s=x_s\cdot \exp\left[i\arg\left([i(w_s\!-\!u)]^{-\frac{1}{2}}\right)\right]$, where
$x_s\in\R$ and the argument of the square root changes ``continuously''
when we move around $u$, taking $s=1,..,n$.
Then
\[
(2\mesh)^{-1}\weightG{u}\cdot[\D^\mesh H](u)= -\Im \lt[ -i\sum_{s=1}^n \frac{
x_s^2e^{-i\theta_s}\mp 2x_sx_{s+1}+x_{s+1}^2e^{i\theta_s}}{\sin\theta_s} \rt]
\]
\[
= \sum_{s=1}^n \frac{ \cos\theta_s\cdot(x_s^2+x_{s+1}^2)\mp
2x_sx_{s+1}}{\sin\theta_s}=:Q_{\theta_1\,;\,..\,;\,\theta_n}^{(n)}(x_1,..,x_n)\ge 0,
\]
where ``$\mp$'' is ``$-$'' for all terms except $x_nx_1$ (these signs come from our choice of
arguments). The non-negativity of the quadratic form $Q^{(n)}$ (for arbitrary
$\theta_1,..,\theta_n>0$ with $\theta_1\!+\!..\!+\!\theta_n=\pi$) can be easily shown by
induction. Indeed, the identity
\[
Q^{(n)}_{\theta_1\,;\,..\,;\,\theta_n}(x_1,..,x_n)-
Q^{(n-1)}_{\theta_1\,;\,..\,;\,\theta_{n-2}\,;\,\theta_{n-1}+\theta_n}(x_1,..,x_{n-1})
=Q^{(3)}_{\pi-\theta_{n-1}-\theta_n\,;\,\theta_{n-1}\,;\,\theta_n}(x_1,x_{n-1},x_n),
\]
reduces the problem to the non-negativity of $Q^{(3)}$'s.
But, if $\a,\b,\g>0$ and $\a\!+\!\b\!+\!\g=\pi$, then
\[
Q^{(3)}_{\a\,;\,\b\,;\,\g}(x,y,z)=\lt[
\frac{\sin^{\frac{1}{2}}\b}{\sin^{\frac{1}{2}}\a\cdot\sin^{\frac{1}{2}}\g}\cdot x -
\frac{\sin^{\frac{1}{2}}\g}{\sin^{\frac{1}{2}}\a\cdot\sin^{\frac{1}{2}}\b}\cdot y +
\frac{\sin^{\frac{1}{2}}\a}{\sin^{\frac{1}{2}}\b\cdot\sin^{\frac{1}{2}}\g}\cdot z \rt]^2.
\]
This finishes the proof for $v=u\in\G$,
the case $v=w\in\G^*$ is similar. The opposite sign of $[\D^\mesh H](w)$ comes from the
invariance of definition (\ref{SHolDef}) under the (simultaneous) multiplication of $F$ by $i$
and the transposition of $\G$ and $\G^*$.
\end{proof}

\begin{remark}
As it was shown above, the quadratic form $Q^{(n)}_{\theta_1\,;\,..\,;\,\theta_n}(x_1,..,x_n)$
can be represented as a sum of $n-2$ perfect squares $Q^{(3)}$. Thus, its kernel is (real)
two-dimensional. Clearly, this corresponds to the case when all the (complex) values $F(z_s)$
are equal to each other, since in this case $[\D^\mesh H]=0$ by definition. Thus,
\begin{equation}
\label{DHasympGradF} \mesh\cdot|[\D^\mesh H](u)|\asymp
Q_{\theta_1\,;\,..\,;\,\theta_n}^{(n)}(x_1,..,x_n)
\asymp\sum_{s=1}^n|F(z_{s+1})\!-\!F(z_s)|^2,
\end{equation}
since both sides considered as real quadratic forms in $x_1,..,x_n$ (with coefficients of
order~$1$) are nonnegative and have the same two-dimensional kernel.
\end{remark}

\subsection{Harnack Lemma for the integral of $F^2$}
\label{SectHarnackH}

As it was shown above, using (\ref{HDef}), for any s-holomorphic function $F$, one can define
a function $H=\Im\int^\mesh(F^\mesh(z))^2d^\mesh z$ on both $\G$ and $\G^*$. Moreover,
$H\big|_\G$ is subharmonic while $H\big|_{\G^*}$ is superharmonic.
It turns out, that $H$, despite  not being a harmonic function,
 \emph{a priori} satisfies a version of the Harnack Lemma (cf.~Proposition~\ref{PropHarnack}(i)).
In Section~\ref{SectBoundHarnack} we also prove a
version of the boundary Harnack principle which compares the values of $H$ in the bulk with
its normal derivative on a straight part of the boundary.

We start by showing that $H\big|_\G$ and $H\big|_{\G^*}$ cannot
differ too much.

\begin{proposition}
\label{PropH-H} \noindent Let $w\in\Int\O^\mesh_{\G^*}$ be an inner face surrounded by inner
vertices \mbox{$u_s\in\Int\O^\mesh_\G$} and faces $w_s\in\O^\mesh_{\G^*}$, $s=1,..,n$.
If $H$ is defined by (\ref{HDef}) for some \mbox{s-holo}morphic function
$F:\O^\mesh_\DS\to\C$, then
\[
\max_{s=1,..n} H(u_s)-H(w)\le \const\cdot\left(H(w)-\min_{s=1,..,n}H(w_s)\right).
\]
\end{proposition}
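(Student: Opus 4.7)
The plan is to reduce the claim to a quadratic inequality in the signed projections of $F$ onto the rhombus edges through $w$, and then establish it by a compactness argument, using an explicit cycle obstruction to rule out the degenerate case. Enumerate the rhombi around $w$ counterclockwise as $z_1,\ldots,z_n$, with $z_s$ having corners $w,u_s,w_s,u_{s+1}$ and half-angle $\theta_s\in[\eta/2,\pi/2-\eta/2]$ at $w$; since $\sum_s\theta_s=\pi$ one has $n\le 2\pi/\eta$. By \eqref{SHolDef}, the projection $x_s:=\Pr[F(z_s);[i(w-u_s)]^{-1/2}]$ coincides with the projection of $F(z_{s-1})$ onto the same direction; by choosing the sign of the square root continuously as $s$ runs around, $x_{n+1}=-x_1$. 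Definition~\eqref{HDef} reads $H(u_s)-H(w)=2\mesh\,x_s^2$, so the left-hand side of the claim equals $2\mesh\max_s x_s^2$.

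For the right-hand side, the key observation is that opposite edges of a rhombus give perpendicular projection directions, because $[-iX]^{-1/2}=\pm i\,[iX]^{-1/2}$. Applied to the opposite pair $(u_s,w_s)$ and $(u_{s+1},w)$ of $z_s$ (which satisfy $w_s-u_s=-(w-u_{s+1})$), this yields $|F(z_s)|^2=x_{s+1}^2+|\Pr[F(z_s);[i(w_s-u_s)]^{-1/2}]|^2$; hence $H(u_s)-H(w_s)=2\mesh(|F(z_s)|^2-x_{s+1}^2)$, and subtracting $H(u_s)-H(w)=2\mesh x_s^2$ yields
\[
H(w)-H(w_s)=2\mesh\,R_s,\qquad R_s:=|F(z_s)|^2-x_s^2-x_{s+1}^2.
\]
Since $x_s,x_{s+1}$ are two real projections of $F(z_s)$ onto directions at angle $\theta_s$, one recovers $|F(z_s)|^2=(x_s^2+x_{s+1}^2-2x_sx_{s+1}\cos\theta_s)/\sin^2\theta_s$. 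The statement of the proposition is thus equivalent to the uniform inequality $\max_s x_s^2\le C(\eta)\cdot\max_s R_s$.

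I would prove this by a normalized compactness argument. The admissible configuration set (all $(n,(\theta_s),(x_s))$ with the constraints above and $\max_s x_s^2=1$) is compact since $n$ ranges over a finite set and each $\theta_s$ lies in a closed interval. Suppose for contradiction that $\max_s R_s=0$: then $R_s\le 0$ for every $s$, while Proposition~\ref{PropHDef}(iii) together with the formula above gives $\sum_s\tan\theta_s\,R_s=-\weightG{w}[\D^\mesh H](w)/(2\mesh)\ge 0$, so positivity of $\tan\theta_s$ forces $R_s=0$ for all $s$. The equation $R_s=0$ rewrites as $(x_s^2+x_{s+1}^2)\cos\theta_s=2x_sx_{s+1}$, whose nontrivial solutions satisfy $x_s/x_{s+1}=\tan(\pi/4\pm\theta_s/2)>0$; multiplying around the loop, $x_1/x_{n+1}=\prod_s(x_s/x_{s+1})>0$ contradicts $x_{n+1}=-x_1$. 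Hence $x\equiv 0$, contradicting $\max_s x_s^2=1$; compactness then yields the required uniform bound $c(\eta)>0$, and multiplying through by $2\mesh$ and using $\max_s(H(w)-H(w_s))=H(w)-\min_s H(w_s)$ gives the proposition. The main obstacle is precisely the uniformity of $C(\eta)$: without the closed interval constraint and the bound $n\le 2\pi/\eta$, the degeneracy $\max R_s=0$ could be approached in a limit, breaking compactness. An explicit constant could alternatively be extracted by decomposing $x$ along the two-dimensional ``constant-$F$'' kernel of the quadratic form $Q^{(n)}$ from Proposition~\ref{PropHDef}(iii) (where a direct trigonometric calculation gives $\max_s R_s\asymp|F_0|^2$) and its orthogonal complement (where coercivity $\sum_s\tan\theta_s R_s=Q^{(n)}(x)\ge c\|x_\perp\|^2$ takes over), but the compactness route is shorter.
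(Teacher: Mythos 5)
Your proof is correct, and it takes a genuinely different route from the paper's. The paper normalizes $\min_s H(w_s)=0$, uses superharmonicity at $w$ and subharmonicity on $\G$ to get $H(w)\ge\const\cdot H(w_s)$ and $H(u_s)\asymp M$, and then derives a contradiction from $K=M/H(w)\gg1$: near-equality of the two projections of $F(z_s)$ pins $\arg F(z_s)$ (mod $\pi$) to within $O(K^{-1})$ of a diagonal direction, consecutive values of $F$ then differ by $\asymp(M/\mesh)^{1/2}$, and the two-sided bound \eqref{DHasympGradF} converts this into $M\asymp\mesh^2\,|[\D^\mesh H](w)|\le\const\cdot H(w)$. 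You instead turn everything into exact identities in the signed projections $x_s$ --- in particular the formula $H(w)-H(w_s)=2\mesh\,(|F(z_s)|^2-x_s^2-x_{s+1}^2)$, obtained from the perpendicularity of the projection directions attached to opposite edges of a rhombus, which does not appear in the paper --- thereby reducing the proposition to the uniform finite-dimensional inequality $\max_s x_s^2\le C(\eta)\max_s R_s$. You then prove this by compactness (legitimate, since $n\le 2\pi/\eta$ and the half-angles stay in a closed subinterval of $(0,\pi/2)$, and since the algebraic identity $\sum_s\tan\theta_s R_s=Q^{(n)}(x)\ge0$ holds for arbitrary real $(x_s)$, not only realizable ones) together with the observation that $R_s\equiv0$ around the flower forces $x\equiv0$ because of the spinor sign flip $x_{n+1}=-x_1$. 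What your route buys: it is self-contained, it makes the role of the monodromy explicit, and it bypasses the two tersely stated steps of the paper's argument (the comparability $H(u_s)\le\const\cdot H(u_{s+1})$ deduced from subharmonicity, and the claim that consecutive $F(z_s)$ have ``very different arguments''). What the paper's route buys: it reuses the already-established estimate \eqref{DHasympGradF} and stays in the Harnack-type idiom of the surrounding sections. Both arguments are purely local on the flower around $w$ and both yield non-explicit constants depending only on $\eta$; your closing remark about extracting an explicit constant from the kernel decomposition of $Q^{(n)}$ is essentially the content of the paper's remark following Proposition~\ref{PropHDef}.
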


\begin{remark} Since definition (\ref{SHolDef}) is invariant under the (simultaneous) multiplication of $F$ by $i$
and the transposition of $\G$ and $\G^*$, one also has
\[
H(u)-\min_{s=1,..n} H(w_s)\le \const\cdot(\max_{s=1,..,n}H(u_s)-H(u))
\]
for any inner vertex \mbox{$u\in\Int\O^\mesh_{\G}$} surrounded by
\mbox{$w_s\in\Int\O^\mesh_{\G^*}$} and \mbox{$u_s\in\O^\mesh_{\G}$}, $s=1,..,n$.
\end{remark}

\begin{proof}
By subtracting a constant, we may assume that $\min_{s=1,..,n}H(w_s)=0$. Since
$H\big|_{\G^*}$ is superharmonic at $w$, it is non-negative there and moreover
\[
H(w)\ge \const\cdot H(w_s) \quad
\mathrm{for\ all}\ s=1,..,n.
\]
Since $H\big|_\G(w_s)\ge0$, by \eqref{HDef}, it is non-negative at all
points of $\G$ which are neighbors of $w_s$'s.
From subharmonicity of $H\big|_\G$
we similarly deduce that $H(u_s)\le\const\cdot H(u_{s+1})$. Therefore,
\[
H(u_s)\asymp M:=\max_{s=1,..n} H(u_s)\quad \mathrm{for\ all}\ s=1,..,n.
\]
We need to prove that $K:=M/H(w)\le\const$. Assume the opposite, i.e. $K\gg 1$. Then, for any
$s=1,..,n$, one has
\[
\frac{H(u_s)\!-\!H(w_s)}{H(u_s)\!-\!H(w)}=1\!+\!O(K^{-1}).
\]
By (\ref{HDef}), these increments of $H$ are derived from projections of $F(z_s)$
on two directions, and so
\[
\frac{|\Pr[F(z_s)\,;\,[i(w_s-u_s)]^{-\frac{1}{2}}]|}{
|\Pr[F(z_s)\,;\,[i(w-u_s)]^{-\frac{1}{2}}]|} = 1\!+\!O(K^{-1}).
\]
Here $z_s$ is a center of the rhombus $wu_sw_su_{s+1}$, and we infer
\[
\arg F(z_s)~\mathrm{mod}~\pi~ =~\left[\mathrm{either}~
\arg[(w_s\!-\!w)^{-\frac{1}{2}}]~~\mathrm{or}~\arg[(w\!-\!w_s)^{-\frac{1}{2}}]\right]+
O(K^{-1}).
\]
Moreover, $\mesh\cdot |F(z_s)|^2\asymp H(u_s)\!-\!H(w)\asymp M$ for all $s=1,..,n$, if $K$ is
big enough. Since $F(z_s)$ and $F(z_{s+1})$ have very different arguments, using (\ref{DHasympGradF})
we prove the proposition:
\[
M\asymp \mesh\cdot \sum_{s=1}^n|F(z_{s+1})\!-\!F(z_s)|^2\asymp \mesh^2\cdot |[\D^\mesh
H](w)|\le \const\cdot H(w).\qedhere
\]
\end{proof}

\begin{remark}[\bf uniform comparability of $\bm{H^\mesh_\G}$ and $\bm{H^\mesh_{\G^*}}$]
\label{RemHGasympG*} Suppose $H\big|_{\G^*}\ge 0$ on $\pa\O^\mesh_{\G^*}$ and hence, due to its
superharmonicity, everywhere inside $\O^\mesh_{\G^*}$. Then, definition (\ref{HDef}) and
Proposition~\ref{PropH-H} give
\[
H^\mesh_{\G^*}(w)\le H^\mesh_\G(u)\le \const\cdot H^\mesh_{\G^*}(w)
\]
for all neighboring $u\sim w$ in $\O^\mesh$, where the constant is independent of $\O^\mesh$
and $\mesh$.
\end{remark}

\begin{proposition}[\bf Harnack Lemma for $\bm{\Im\int^\mesh (F(z))^2d^\mesh z}$]
\label{HarnackIntF^2} Take $v_0\in\L=\G\cup\G^*$ and let $F:B^\mesh_\DS(u_0,R)\to\C$ be an
s-holomorphic function. Define
\[
H:=\Im\int^\mesh (F(z))^2d^\mesh z:\O^\mesh_\L\to\R
\]
by (\ref{HDef}) so that $H\ge 0$ in $B^\mesh_{\L}(u_0,R)$. Then,
\[
H(v_1)\le\const\cdot H(v_0)\ \ \mathit{for\ any}\ \ v_1\in
B^\mesh_\L(u_0,{\textstyle\frac{1}{2}}r).
\]
\end{proposition}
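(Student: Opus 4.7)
The plan is to combine the superharmonicity of $H|_{\G^*}$ and subharmonicity of $H|_\G$ with the pointwise comparability $H|_\G(u)\asymp H|_{\G^*}(w)$ from Remark~\ref{RemHGasympG*}, thereby reducing the desired estimate to the classical discrete Harnack Lemma (Proposition~\ref{PropHarnack}) for genuinely discrete harmonic functions.

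First I would reduce to the case $v_0,v_1\in\G^*$. Since $H\ge 0$ throughout $B^\mesh_\L(u_0,R)$, Remark~\ref{RemHGasympG*} gives $H|_\G(u)\asymp H|_{\G^*}(w)$ for every pair of neighbours $u\sim w$ in the interior of the ball, so replacing, if necessary, each of $v_0,v_1$ by a $\G^*$-neighbour costs only a universal multiplicative constant. Next, fix an intermediate radius $r$ with $R/2<r<R$, set $B:=B^\mesh_{\G^*}(u_0,r)$, and let $\widetilde H$ be the discrete-harmonic extension into $\Int B$ of the boundary values $H|_{\G^*}\big|_{\pa B}$. Superharmonicity of $H|_{\G^*}$ (Proposition~\ref{PropHDef}(iii)) yields $\widetilde H\le H|_{\G^*}$ in $B$, while the maximum principle gives $\widetilde H\ge 0$. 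Proposition~\ref{PropHarnack} applied to the non-negative discrete harmonic function $\widetilde H$ then gives, for any $v_1\in B^\mesh_{\G^*}(u_0,R/2)$,
\[
\widetilde H(v_1)\;\le\; C_1\,\widetilde H(v_0)\;\le\; C_1\,H|_{\G^*}(v_0)\;=\;C_1\,H(v_0).
\]

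The remaining step is the reverse bound $H|_{\G^*}(v_1)\le C_2\,\widetilde H(v_1)$, for which I would invoke the companion subharmonic structure. Set $B':=B^\mesh_\G(u_0,r)$ and let $\bar H$ be the discrete-harmonic extension into $\Int B'$ of $H|_\G\big|_{\pa B'}$; subharmonicity of $H|_\G$ yields $H|_\G\le \bar H$ in $B'$. Remark~\ref{RemHGasympG*} applied pairwise at neighbouring boundary vertices of $\pa B$ and $\pa B'$ makes the boundary data of $\widetilde H$ and $\bar H$ comparable, and combined with the uniform comparability of the discrete harmonic measures on $\G$ and on $\G^*$ quoted in the Appendix from~\cite{chelkak-smirnov-dca}, this gives $\bar H(u_1)\le C_3\,\widetilde H(v_1)$ for any $\G$-neighbour $u_1$ of $v_1$. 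Remark~\ref{RemHGasympG*} applied once more at $(u_1,v_1)$ then closes the chain
\[
H|_{\G^*}(v_1)\;\le\; H|_\G(u_1)\;\le\; \bar H(u_1)\;\le\; C_3\,\widetilde H(v_1),
\]
and combining with the Harnack bound on $\widetilde H$ completes the proof with an absolute constant.

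The main obstacle is the comparison $\bar H(u_1)\le C_3\,\widetilde H(v_1)$, i.e.\ a uniform (in $\mesh$ and in the isoradial structure) comparison of Dirichlet-type harmonic extensions living on the two different but interlaced graphs $\G$ and $\G^*$, with boundary data that are merely comparable and on boundaries that are merely close. This relies on the uniform estimates for the discrete harmonic measure on isoradial graphs established in~\cite{chelkak-smirnov-dca} and recorded in the Appendix, and is the step that crucially exploits the bounded-rhombus-angle hypothesis $\theta\in[\eta,\pi\!-\!\eta]$.
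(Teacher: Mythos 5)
The first half of your argument is sound: the harmonic extension $\widetilde H$ of $H|_{\G^*}\big|_{\pa B}$ is squeezed between $0$ and $H|_{\G^*}$ by the maximum principle and superharmonicity, and Proposition~\ref{PropHarnack} then gives $\widetilde H(v_1)\le C_1\widetilde H(v_0)\le C_1 H(v_0)$. The gap is in the remaining step, $\bar H(u_1)\le C_3\widetilde H(v_1)$, which is where all the real content of the proposition sits. You justify it by ``comparable boundary data'' plus a ``uniform comparability of the discrete harmonic measures on $\G$ and on $\G^*$'' which you attribute to the Appendix; no such two-sided comparison between the two graphs is recorded there (Lemma~\ref{HmInDisc} gives exit probabilities on a \emph{single} graph and only from the center of the disc). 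More seriously, the reduction to boundary data is itself flawed as set up: Remark~\ref{RemHGasympG*} compares $H|_\G(a)$ with $H|_{\G^*}(w_a)$ at a \emph{neighboring} white vertex $w_a$, and when $B'=B^\mesh_\G(u_0,r)$ and $B=B^\mesh_{\G^*}(u_0,r)$ are discretized independently, the points $w_a$ attached to $\pa B'_\G$ generally lie in the \emph{interior} of $B_{\G^*}$ (or outside it), not on $\pa B_{\G^*}$. A nonnegative superharmonic function is not controlled from above at near-boundary interior points by its boundary values (the Green's function is the standard counterexample), so you cannot pass from $\sum_a H(w_a)$ to $\sum_{b\in\pa B_{\G^*}}H(b)$. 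The step can be repaired — take a single rhombic domain $B^\mesh_\DS(u_0,r)$ so that the black and white boundary vertices genuinely interlace, and extend Lemma~\ref{HmInDisc} from the center to all points of the half-ball via Harnack applied to $v\mapsto\dhm{\mesh}{v}{\{a\}}{B}$ — but none of this is in your write-up, and it is precisely the hard part.

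The paper avoids the issue entirely by a different and more robust route. It takes $M=H(v_1)$ to be the maximum of $H|_\G$ over the half-ball and uses subharmonicity only qualitatively: from the maximizing vertex there is a path of consecutive $\G$-neighbors along which $H|_\G$ is nondecreasing, and this path must run all the way to $\pa B^\mesh_\G(v_0,R)$. Remark~\ref{RemHGasympG*} transfers the lower bound $\const\cdot M$ to a nearby $\G^*$-path crossing the annulus between radii $\tfrac12R$ and $R$, and then superharmonicity of $H|_{\G^*}$ together with a one-sided harmonic-measure lower bound (a crossing path of an annulus has harmonic measure bounded below from the center, by the same random-walk argument behind Lemma~\ref{WeakBeurling}) gives $H(v_0)\ge\const\cdot M$. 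That argument needs no comparison of harmonic extensions or of harmonic measures on the two interlaced graphs, only the pointwise comparability of $H|_\G$ and $H|_{\G^*}$ and a crude lower bound for the harmonic measure of a connected crossing. I would either switch to that argument or supply, in full, the single-domain setup and the two-sided harmonic-measure estimate your version requires.
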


\begin{proof} Due to Remark~\ref{RemHGasympG*}, we may assume that $v_0\in\G^*$ while $v_1\in\G$.
Set $M:=\max_{u\in B^\mesh_\G(v_0,{\frac{1}{2}}r)} H(u)$. Since the function $H\big|_\G$ is
subharmonic, one has
\[
M=H(v_1)\le H(v_2)\le H(v_3)\le \dots
\]
for some path of consecutive neighbors $K^\mesh_\G=\{v_1\sim v_2\sim v_3\sim\dots\}\ss\G$
running from $v_1$ to the boundary $\pa B^\mesh_\G(v_0,R)$.
Take a nearby path of consecutive
neighbors $K^\mesh_{\G^*}=\{w_1^\mesh\sim w_2^\mesh\sim w_3^\mesh\sim\dots\}\ss\G^*$ starting
in $\pa B^\mesh_{\G^*}(v_0,\frac{1}{2}R)$ and running to $\pa B^\mesh_{\G^*}(v_0,R)$.
By Remark~\ref{RemHGasympG*}, $H\big|_{\G^*}\ge \const\cdot M$ on $K^\mesh_{\G^*}$, and so
\[
H(v_0)\ge \const\cdot M\cdot \dhm\mesh{v_0}{K^\mesh_{\G^*}}{B^\mesh_{\G^*}(v_0,R)\setminus
K^\mesh_{\G^*}}\ge \const\cdot M,
\]
since $H\big|_{\G^*}$ is superharmonic and the discrete harmonic measure of the path
$K^\mesh_{\G^*}$ viewed from $v_0$ is uniformly bounded from below due to standard random walk
arguments (cf.~\cite{chelkak-smirnov-dca} Proposition 2.11).
\end{proof}

\subsection{Regularity of s-holomorphic functions}
\label{SectRegSHol}

\begin{theorem}
\label{FboundH} For a simply connected $\O^\mesh_\DS$ and an s-holomorphic $F:\O^\mesh_\DS\to\C$
define \mbox{$H=\int^\mesh (F(z))^2d^\mesh z:\O^\mesh_\L\to\C$} by
\eqref{HDef} in accordance with
 Proposition~\ref{PropHDef}. Let point $z_0\in\Int\O^\mesh_\DS$
be a definite distance from the boundary:
$d=\dist(z_0;\pa\O^\mesh_\DS)\ge \const\cdot\mesh$ and set $M=\max_{v\in\O^\mesh_\L}|H(v)|$. Then
\begin{equation}
\label{SHolBdd} |F(z_0)|\le \const\cdot\frac{M^{1/2}}{d^{1/2}}
\end{equation}
and, for any neighboring $z_1\sim z_0$,
\begin{equation}
\label{SHolLip} |F(z_1)-F(z_0)|\le \const\cdot \frac{M^{1/2}}{d^{3/2}}\cdot\mesh.
\end{equation}
\end{theorem}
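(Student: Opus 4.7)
The plan is to derive \eqref{SHolBdd} first and deduce \eqref{SHolLip} from it by applying the Lipschitz estimate for discrete holomorphic functions. The first step is a local reduction: for any rhombus $z$ with adjacent vertices $u\in\G$ and $w_1,w_2\in\G^*$, definition \eqref{HDef} gives $|H(u)-H(w_j)|=2\mesh\,|\Pr[F(z);[i(w_j\!-\!u)]^{-1/2}]|^2$ for $j=1,2$. Since the two projection directions are uniformly non-parallel (the rhombus half-angles are bounded away from $0$ and $\pi/2$), the formula $F(z)=i(t_1 e^{-i\theta}-t_2 e^{i\theta})/\sin\theta$ that recovers $F(z)$ from its two projections yields
$$
|F(z)|^2 \,\le\, \const\cdot\mesh^{-1}\max_{u\sim w\text{ at }z}|H(u)-H(w)|.
$$
Thus \eqref{SHolBdd} will follow once we establish that the oscillation of $H$ on adjacent vertices of $\L$ inside $B^\mesh_\L(z_0,d/2)$ is at most $\const\cdot M\mesh/d$.

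This gradient-type bound on $H$ is the technical core of the argument. It does not follow from standard harmonic-function estimates, since $H$ is not harmonic: by Proposition~\ref{PropHDef}(iii), only its restriction $H|_\G$ is subharmonic, while $H|_{\G^*}$ is superharmonic. The key input is the interplay between these two components: the uniform comparability of Remark~\ref{RemHGasympG*} (applied after shifting by $M$ so that $H+M\ge 0$) together with the Harnack Lemma~\ref{HarnackIntF^2}. Iterating these estimates on dyadic scales from the ball of radius $d$ down to the rhombus scale $\mesh$, and invoking the discrete Poisson-kernel / harmonic-measure machinery of \cite{chelkak-smirnov-dca}, one obtains the desired edge-by-edge oscillation bound, hence \eqref{SHolBdd}.

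For the Lipschitz estimate \eqref{SHolLip}, invoke Lemma~\ref{SHolIsHol} to see that $F$ is discrete holomorphic on $\O^\mesh_\DS$. Applying \eqref{SHolBdd} on the smaller ball $B^\mesh_\DS(z,d/4)$ for every $z\in B^\mesh_\DS(z_0,d/2)$, we obtain $|F|\le\const\cdot M^{1/2}/d^{1/2}$ uniformly on $B^\mesh_\DS(z_0,d/2)$. Corollary~\ref{LipHol} (the Lipschitz bound for bounded discrete holomorphic functions, imported from \cite{chelkak-smirnov-dca}) then gives
$$
|F(z_1)-F(z_0)| \,\le\, \const\cdot\frac{\mesh}{d}\sup_{B^\mesh_\DS(z_0,d/2)}|F| \,\le\, \const\cdot\frac{M^{1/2}\mesh}{d^{3/2}},
$$
which is \eqref{SHolLip}. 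The principal obstacle is the oscillation bound for $H$ described in the middle paragraph: without harmonicity of $H$, one must carefully combine the sub/superharmonicity on $\G$ and $\G^*$ with the a priori comparability of the two components to extract a quantitative gradient bound by iteration, and it is precisely Remark~\ref{RemHGasympG*} and Proposition~\ref{HarnackIntF^2} that make this possible.
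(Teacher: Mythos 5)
There are two genuine gaps here, one in each half of your argument.

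First, the ``edge oscillation bound'' that you isolate as the technical core --- that $H(u)-H(w)\le \const\cdot M\mesh/d$ for adjacent $u\sim w$ near $z_0$ --- is, via \eqref{HDef}, equivalent to the estimate \eqref{SHolBdd} you are trying to prove: each increment of $H$ across an edge of a rhombus $z$ equals $2\mesh$ times the squared modulus of a projection of $F(z)$, and the maximum over the edges of $z$ is comparable to $\mesh|F(z)|^2$. So your reduction is circular unless the oscillation bound is actually established, and the tools you name do not establish it. Remark~\ref{RemHGasympG*} and Proposition~\ref{HarnackIntF^2} give multiplicative comparability with constants that are bounded but not close to $1$; iterating such estimates over the $\log(d/\mesh)$ dyadic scales produces bounds growing like a power of $d/\mesh$ and cannot manufacture the crucial small factor $\mesh/d$. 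For a genuinely harmonic function the gradient bound comes from differentiating the Poisson kernel, but $H$ is not harmonic, and nothing in your sketch controls its deviation from harmonicity. The paper's proof supplies exactly this missing control: it first bounds $\|\D^\mesh H\|_{1}$ on $B^\mesh_\G(z_0;\frac{3}{4}d)$ by $\const\cdot M$ (splitting $H|_\G$ into a harmonic plus a subharmonic part and using $L^1$ estimates for the Green's function), then converts this into the $L^2$ bound $\|F\|_2^2\le\const\cdot Md$ on $B^\mesh_\DS(z_0;\frac{1}{2}d)$, and only then gets the pointwise bound \eqref{SHolBdd} from the discrete Cauchy formula averaged over concentric circles together with Cauchy--Schwarz. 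Some quantitative bound on $\D^\mesh H$ of this kind is indispensable and is absent from your argument.

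Second, your deduction of \eqref{SHolLip} misreads Corollary~\ref{LipHol}. That corollary does not assert $|F(z_1)-F(z_0)|\le\const\cdot\mesh\,\sup|F|/d$ for bounded discrete holomorphic $F$; it asserts $F(z)=\Pr[A;\ol{u_1(z)-u_2(z)}]+\Pr[B;\ol{w_1(z)-w_2(z)}]+O(\cdot)$, and the two projection directions change by an angle of order $1$ between neighbouring rhombi. A bounded discrete holomorphic function can therefore oscillate by $O(\sup|F|)$ at lattice scale (take $F(z)=\Pr[A;\ol{u_1(z)-u_2(z)}]$ with $A$ a nonzero constant), so no naive Lipschitz bound holds under discrete holomorphicity alone. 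The paper's Step~4 uses the s-holomorphicity across the shared edge, $\Pr[F(z_0)-F(z_1);[i(w_1\!-\!u)]^{-\frac{1}{2}}]=0$, to force $B-A=O(\epsilon)$ with $\epsilon\le M^{1/2}\mesh/d^{3/2}$, whence $F(z_{0})=A+O(\epsilon)=F(z_1)+O(\epsilon)$. This step is where \eqref{SHolLip} actually comes from, and it is exactly the ingredient your appeal to Corollary~\ref{LipHol} skips.
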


\begin{remark}
Estimates (\ref{SHolBdd}) and (\ref{SHolLip}) have exactly the same form as if $H$ would be
harmonic. Due to (\ref{HDef}) and (\ref{DHasympGradF}), a posteriori this also means that the
subharmonic function $H|_\G$ and the superharmonic function $H|_{\G^*}$ should be uniformly
close to each other inside $\O^\mesh$, namely $H\big|_\G-H\big|_{\G^*}=O(\mesh M/d)$, and,
moreover, $|\D^\mesh H|=O(\mesh M/d^3)$.
\end{remark}

\noindent The proof consists of four steps:

\smallskip

\noindent \emph{Step 1.} {Let $B^\mesh_\G(z_0;r)\ss\G$ denote the discrete disc centered at
$z_0$ of radius $r$. Then the discrete $L_1$ norm (as defined below) of the Laplacian of $H$ satisfies
\begin{equation}
\label{x1} \left\|\D^\mesh H\right\|_{1\,;\,B^\mesh_\G(z_0;\frac{3}{4}d)}:=\sum\nolimits_{u\in
B^\mesh_\G(z_0;\frac{3}{4}d)} \left|[\D^\mesh H](u)\right|\weightG{u} \le \const\cdot M
\end{equation}
and the same estimate for $H$ restricted to $\G^*$ holds.}

\begin{proof}[Proof of Step 1.] Represent $H$ on $B^\mesh_\G(z_0;d)$ as a sum of a harmonic function
with the same boundary values and a subharmonic one:
\[
H|_\G=H_{\mathrm{harm}}+H_{\mathrm{sub}},\quad \D^\mesh H_{\mathrm{harm}}=0,~~\D^\mesh
H_{\mathrm{sub}}\ge 0\quad \mathrm{in}~~B^\mesh_\G(z_0;d)
\]
and
\[
H_{\mathrm{harm}}=H,~~H_{\mathrm{sub}}=0\quad \mathrm{on}~~\pa B^\mesh_\G(z_0;d).
\]
Note that the negative function $H_{\mathrm{sub}}$ satisfies
\[
H_{\mathrm{sub}}(\cdot)~=\!\!\sum_{u\in \Int B^\mesh_\G(z_0;d)} G(\cdot\,;u)[\D^\mesh
H_{\mathrm{sub}}](u)\weightG{u}~\le\!\sum_{u\in B^\mesh_\G(z_0;\frac{3}{4}d)}
G(\cdot\,;u)[\D^\mesh H_{\mathrm{sub}}](u)\weightG{u},
\]
since the Green's function $G(\cdot\,;u)$
in $B^\mesh_\G(z_0;d)$ is negative.
By the maximum principle, $|H_{\mathrm{harm}}|\le M$ and so $|H_{\mathrm{sub}}|\le 2M$ in
$B^\mesh_\G(z_0;d)$.
Therefore,
\[
\const\cdot Md^2 \ge \|H_{\mathrm{sub}}\|_{1\,;B^\mesh_\G(z_0;d)}\ge \|\D^\mesh
H_{\mathrm{sub}}\|_{1\,;B^\mesh_\G(z_0;\frac{3}{4}d)}~\cdot\!\! \min_{u\in
B^\mesh_\G(z_0;\frac{3}{4}d)}\|G(\cdot\,;u)\|_{1\,;B^\mesh_\G(z_0;d)}.
\]
Since $\D^\mesh H_{\mathrm{sub}}=\D^\mesh H$, the inequality
(\ref{x1}) follows from the (uniform) estimate
\[
\|G(\cdot\,;u)\|_{1\,;B^\mesh_\G(z_0;d)}\ge \const\cdot d^2\quad \mathrm{for~all}~~u\in
B^\mesh_\G(z_0;\textstyle{\frac{3}{4}}d)
\]
which we prove in Appendix (Lemma \ref{Gestimate}).
\end{proof}

\smallskip

\noindent\emph{Step 2.} {The estimate
\begin{equation}
\label{x2} \left\|F\right\|^2_{2\,;\,B^\mesh_\DS(z_0;\frac{1}{2}d)}:=\sum\nolimits_{z\in
B^\mesh_\DS(z_0;\frac{1}{2}d)} \left|F(z)\right|^2\weightDS{z} \le \const\cdot Md
\end{equation}
holds. }

\begin{proof}[Proof of Step 2.]
Since $H=\Im\int^\mesh(F(z))^2d^\mesh z$ on both $\G$ and $\G^*$, it is sufficient to prove
\[
\left\|\,\dpa [H|_{\G}]\,\right\|_{1\,;\,B^\mesh_\DS(z_0;\frac{1}{2}d)}\le \const\cdot Md
\]
and a similar estimate for $\dpa [H|_{\G^*}]$\,. Represent $H$ on $B^\mesh_\G(z_0;\textfrac{3}{4}d)$ as a sum:
\[
H|_\G=H_{\mathrm{harm}}+H_{\mathrm{sub}},\quad \D^\mesh H_{\mathrm{harm}}=0,~~\D^\mesh
H_{\mathrm{sub}}\ge 0\quad \mathrm{in}~~B^\mesh_\G(z_0;\textfrac{3}{4}d)
\]
and
\[
H_{\mathrm{harm}}=H,~~H_{\mathrm{sub}}=0\quad \mathrm{on}~~\pa
B^\mesh_\G(z_0;\textfrac{3}{4}d).
\]
It follows from the discrete Harnack's Lemma (see Corollary~\ref{CorHarnack}) and
the estimate $|H_{\mathrm{harm}}|\le M$ that
\[
|\dpa H_{\mathrm{harm}}(z)|\le \const \cdot {M}/{d}
\]
for all $z\in B^\mesh_\G(z_0;\textfrac{1}{2}d)$ and hence
\[
\left\|\,\dpa H_{\mathrm{harm}}\,\right\|_{1\,;\,B^\mesh_\DS(z_0;\frac{1}{2}d)}\le \const\cdot
Md.
\]
Furthermore,
\[
[\dpa H_{\mathrm{sub}}](z)~=\!\!\sum_{u\in \Int B^\mesh_\G(z_0;\frac{3}{4}d)} [\dpa
G](z\,;u)[\D^\mesh H](u)\weightG{u},
\]
where $G(\cdot\,;u)\le 0$ denotes the Green's function in $B^\mesh_\G(z_0;\textfrac{3}{4}d)$.
We infer that
\[
\left\|\,\dpa H_{\mathrm{sub}}\,\right\|_{1\,;\,B^\mesh_\DS(z_0;\frac{1}{2}d)} \le
\left\|\D^\mesh H\right\|_{1\,;\,B^\mesh_\G(z_0;\frac{3}{4}d)}\cdot \max_{u\in\Int
B^\mesh_\G(z_0;\frac{3}{4}d)}\left\|\,[\dpa
G](\cdot\,;u)\,\right\|_{1\,;\,B^\mesh_\DS(z_0;\frac{1}{2}d)}.
\]
The first factor is bounded by $\const\cdot M$ (Step 1) and the second, as in the continuous
setup, is bounded by $\const\cdot d$ (see Lemma \ref{Ggradestimate}),
which concludes the proof.
\end{proof}

\smallskip

\noindent\emph{Step 3.} {The uniform estimate
\begin{equation}
\label{x3} |F(z)| \le \const\cdot \frac{M^{1/2}}{d^{1/2}}
\end{equation}
holds for all $z\in B^\mesh_\DS(z_0;\frac{1}{4}d)$. }

\begin{proof}[Proof of Step 3.]
Applying the Cauchy formula (see Lemma \ref{CauchyFormula} (i)), to the discs
\[
\textstyle B^\mesh_\DS(z;5\mesh k) \ss B^\mesh_\DS(z_0;\frac{1}{2}d),\quad
k:~\frac{1}{8}d<5\mesh k<\frac{1}{4}d,~k\in\Z~,
\]
whose boundaries don't intersect each other and summing over all such $k$, we estimate
\[
\frac{d}{\mesh}\cdot |F(z)|\le \frac{\const}{d}\cdot
\frac{\|F\|_{1\,;\,B^\mesh_\DS(z_0;\frac{1}{2}d)}}{\mesh}\,.
\]
Thus, by the Cauchy-Schwarz inequality and \eqref{x2}
\[
|F(z)|\le \frac{\const}{d^2}\cdot \|F\|_{1\,;\,B^\mesh_\DS(z_0;\frac{1}{2}d)} \le
\frac{\const}{d}\cdot \|F\|_{2\,;\,B^\mesh_\DS(z_0;\frac{1}{2}d)}\le \const\cdot
\frac{M^{1/2}}{d^{1/2}}\,. \qedhere
\]
\end{proof}

\smallskip

\noindent\emph{Step 4.} {The estimate
\begin{equation}
\label{x4} |F(z_1)-F(z_0)|\le \const\cdot \frac{M^{1/2}}{d^{3/2}}\cdot\mesh.
\end{equation}
holds for all $z_1\sim z_0$. }

\begin{proof}[Proof of Step 4.]
It follows from the discrete Cauchy formula (see Lemma \ref{CauchyFormula} (ii)) applied to
the disc $B^\mesh_\DS(z_0;\frac{1}{4}d)$ and Step 3 for its boundary that there exist $A,B\in\C$ such that, for
any neighboring $z_1\sim z_0$ (see Fig.~\ref{Fig:Notations}C for notation), we have for both points the identity
\[
F(z_{0,1})=\Pr[A\,;\,\ol{u_{0,1}\!-\!u}]+\Pr[B\,;\,\ol{w_{0,2}\!-\!w_1}]+O(\epsilon)=
A+\Pr[B\!-\!A\,;\,\ol{w_{0,2}\!-\!w_1}]+O(\epsilon),
\]
where
\[
\epsilon= \frac{\max_{z\in\pa B^\mesh_\DS(z_0;\frac{1}{4}d)} |F(z)|\cdot \mesh}{d}\le
\frac{M^{1/2}\cdot \mesh}{d^{3/2}}\,.
\]
The definition of an s-holomorphic function stipulates that
\[
\Pr[F(z_0)-F(z_1)\,;\,[i(w_1\!-\!u)]^{-\frac{1}{2}}]=0
\]
(for all $z_1\sim z_0$), which implies $B\!-\!A=O(\epsilon)$, and hence
$F(z_{i})=A+O(\epsilon)$ for $i=0,1$.
\end{proof}

\subsection{The ``${(\t(z))^{-\frac{1}{2}}}$'' boundary condition and the ``boundary
modification trick''} \label{SectBModTrick}

Throughout the paper, we often deal with s-holomorphic in $\O^\mesh_\DS$ functions $F^\mesh$
satisfying the Riemann boundary condition
\begin{equation}
\label{RH-1/2bc} F(\z)\parallel (\t(z))^{-\frac{1}{2}}
\end{equation}
on, say, the ''white'' boundary arc $L^\mesh_{\G^*}\ss\pa\O^\mesh_\DS$ (see
Fig.~\ref{Fig:BoundaryTrick}A), where \mbox{$\t(z):=w_2(z)\!-\!w_1(z)$} is the ``discrete
tangent vector'' to $\pa\O^\mesh_\DS$ at $\z$. Being s-holomorphic, these functions posses
discrete primitives
\[
H^\mesh:= \Im \int^\mesh (F^\mesh(z))^2 d^\mesh z.
\]
Due to the boundary condition (\ref{RH-1/2bc}), an additive constant can be fixed so that
\[
H^\mesh_{\G^*}=0~~\mathrm{on}~L^\mesh_{\G^*}.
\]
The boundary condition for $H^\mesh_\G$ is more complicated. Fortunately, one can reformulate
it exactly in the same way, using the following {\bf ``boundary modification trick''}:
\begin{quotation}
\noindent {\it For each half-rhombus $u_{\mathrm{int}}w_1w_2$
touching the boundary arc $L^\mesh_{\G^*}$, we draw two new rhombi
 $u_{\mathrm{int}}w_1\wt{u}_1\wt{w}$ and
$u_{\mathrm{int}}\wt{w}\wt{u}_2w_2$ so that the corresponding angles
$\wt{\theta}_1=\wt{\theta}_2$ are equal to $\frac{1}{2}\theta$
(see Fig.~\ref{Fig:BoundaryTrick}A).}
\end{quotation}
In general, these new edges $(u_{\mathrm{int}}\wt{u}_{1,2})$ constructed for neighboring inner
vertices $u_{\mathrm{int}}$, may intersect each other but it is not important for us
(one can resolve the problem of a locally self-overlapping domain by placing it on a Riemann surface).

\begin{figure}
\centering{
\begin{minipage}[b]{0.4\textwidth}
\centering{\includegraphics[width=\textwidth]{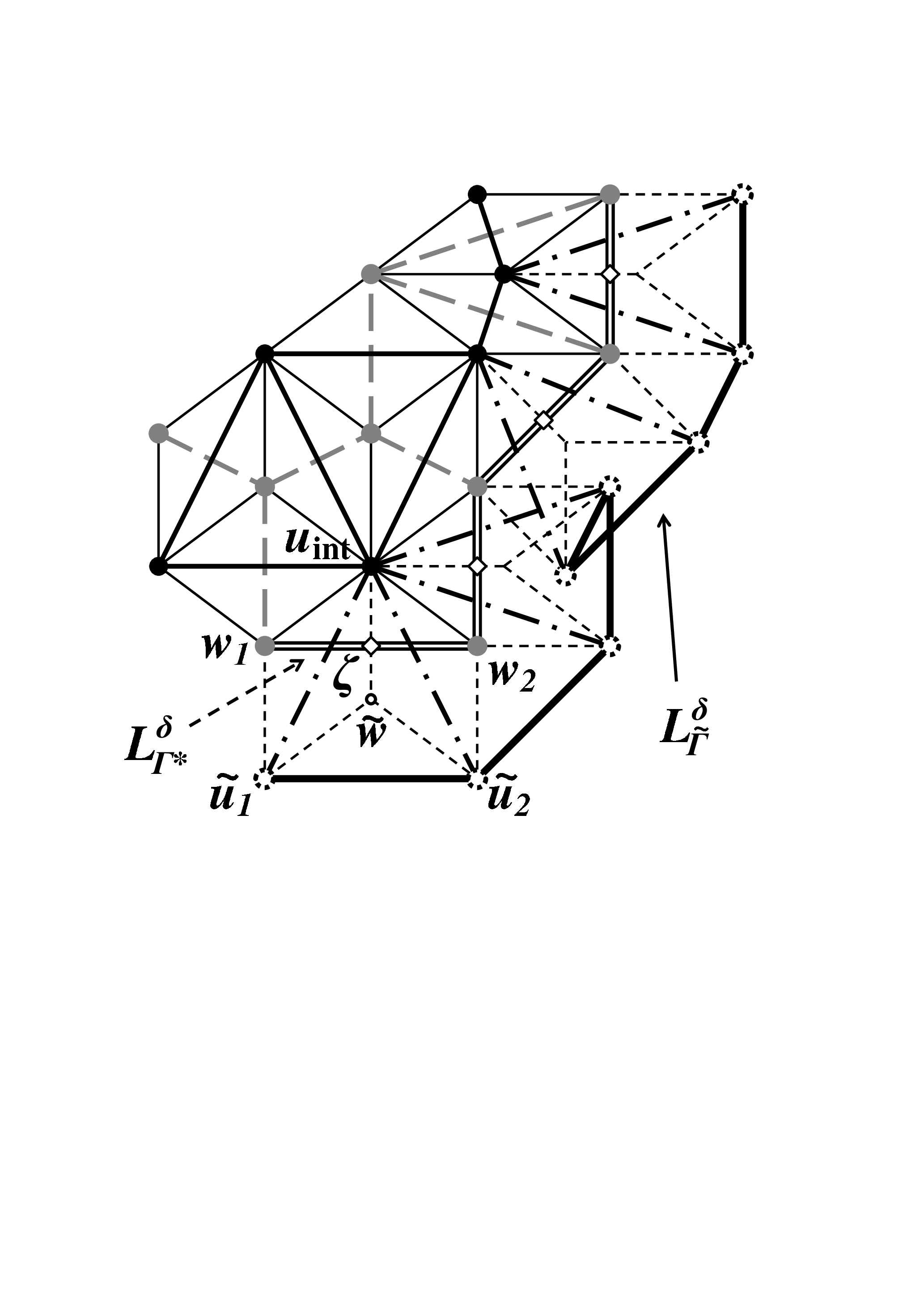}}

\bigskip\bigskip

\centering{\textsc{(A)}

\bigskip
\bigskip
$\phantom{x}$ }
\end{minipage}
\hskip 0.1\textwidth
\begin{minipage}[b]{0.33\textwidth}
\centering{\includegraphics[width=\textwidth]{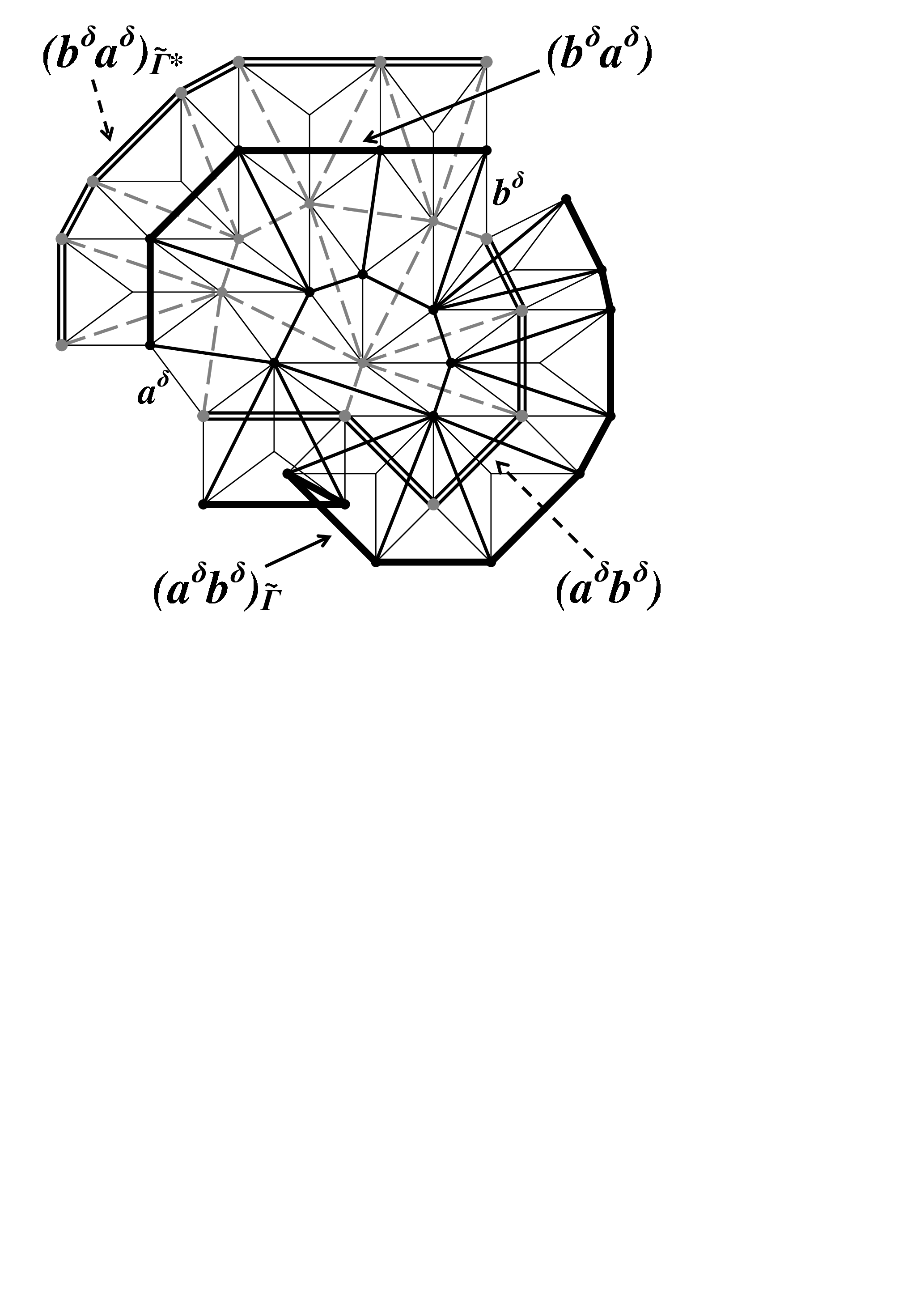}}

\smallskip

\centering{\textsc{(B)}}

\bigskip

\centering{\includegraphics[width=\textwidth]{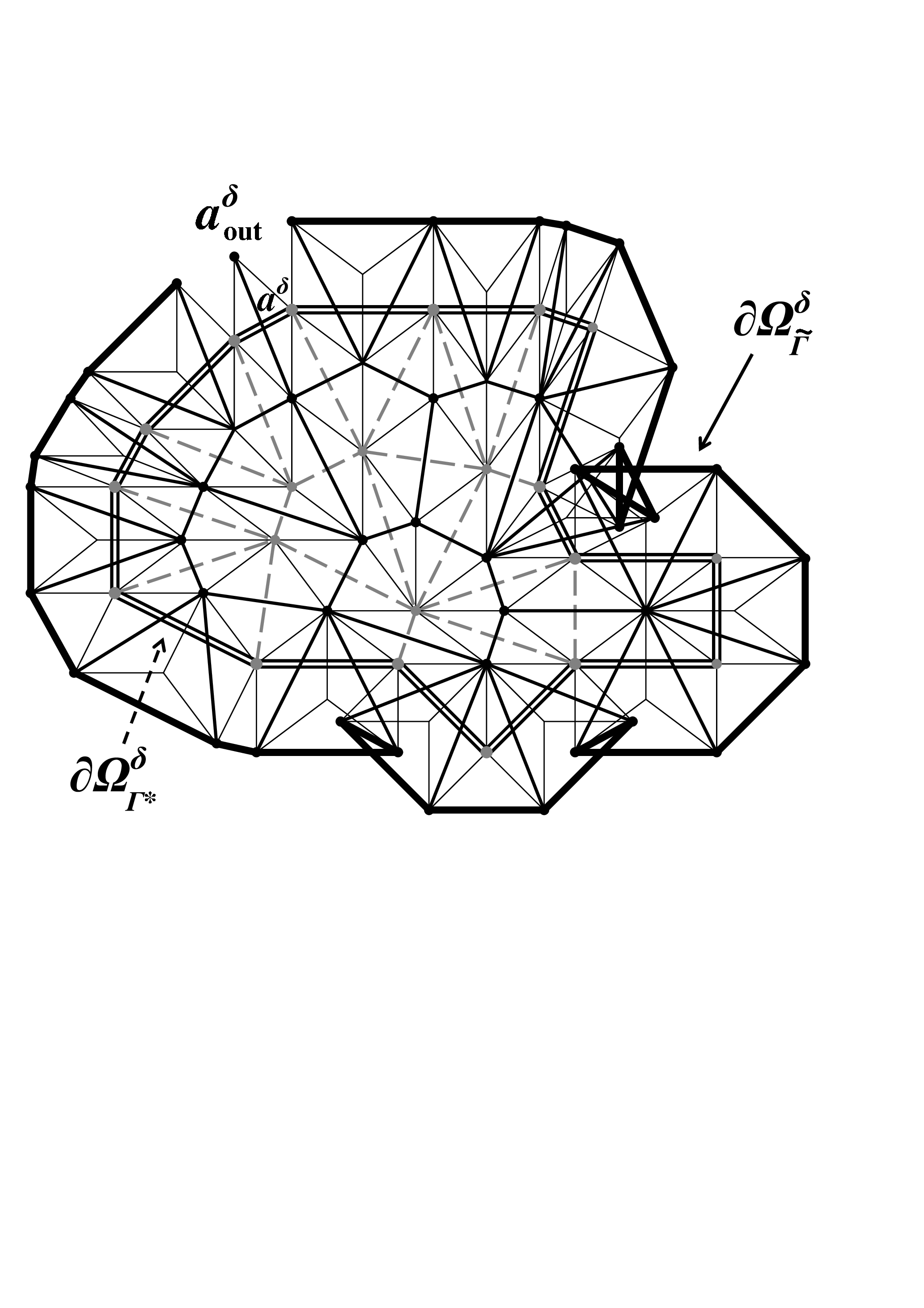}}

\smallskip

\centering{\textsc{(C)}}
\end{minipage}}
\caption{\label{Fig:BoundaryTrick} ``Boundary modification trick'': \textsc{(A)}~local
modification of the boundary; \textsc{(B)}~FK-Ising model: example of a modified domain;
\textsc{(C)}~spin-Ising model: example of a modified domain. }
\end{figure}

\begin{lemma}
Let $u_{\mathrm{int}}w_1w_2$ be the half-rhombus touching $L^\mesh_{\G^*}$ and
$\z=\frac{1}{2}(w_2\!+\!w_1)$. Suppose that the function $F^\mesh$ is s-holomorphic in $\O^\mesh_\DS$
and $F^\mesh(\z)\parallel (w_2\!-\!w_1)^{-\frac{1}{2}}$. Then, if we set
\[
H^\mesh_\G(\wt{u}_2)=H^\mesh_\G(\wt{u}_1)~:=~H^\mesh_{\G^*}(w_2)=H^\mesh_{\G^*}(w_1),
\]
the function $H^\mesh_{\G}$ remains subharmonic at $u_{\mathrm{int}}$.
\end{lemma}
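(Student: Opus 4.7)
The plan is to reduce the subharmonicity claim at $u_{\mathrm{int}}$ in the modified graph to Proposition~\ref{PropHDef}(iii) applied to a suitably extended s-holomorphic function. Concretely, I would extend $F^\mesh$ to the centers $\wt{z}_1,\wt{z}_2$ of the two new rhombi by simply setting $F^\mesh(\wt{z}_1):=F^\mesh(z)=:F^\mesh(\wt{z}_2)$. With this choice every new s-holomorphicity equation reduces to something already known: across the edges $[u_{\mathrm{int}}w_1]$ and $[u_{\mathrm{int}}w_2]$, which remain shared with the original rhombi neighbouring $u_{\mathrm{int}}$, the identity~(\ref{SHolDef}) is inherited from the s-holomorphicity that already held for $F^\mesh(z)$; across the new internal edge $[u_{\mathrm{int}}\wt{w}]$ it is trivial, since $F^\mesh(\wt{z}_1)=F^\mesh(\wt{z}_2)$. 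Hence the extended $F^\mesh$ is s-holomorphic on the full fan of $n+1$ rhombi around $u_{\mathrm{int}}$ in the modified graph.

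Next I would verify that the prescribed value $H^\mesh_\G(\wt{u}_j)=H^\mesh_{\G^*}(w_j)$ is precisely the one produced from this extension by~(\ref{HDef}). Applied to the side $[\wt{u}_1 w_1]$ of the rhombus $\wt{z}_1$, the formula~(\ref{HDef}) gives $H^\mesh_\G(\wt{u}_1)-H^\mesh_{\G^*}(w_1)=2\mesh\cdot|\Pr[F^\mesh(z);\,[i(w_1-\wt{u}_1)]^{-\frac{1}{2}}]|^2$, so the equality $H^\mesh_\G(\wt{u}_1)=H^\mesh_{\G^*}(w_1)$ is equivalent to the vanishing of this projection. The symmetric choice $\wt{\theta}_1=\wt{\theta}_2=\frac{1}{2}\theta$ is decisive: it forces $\wt{w}$ to lie on the bisector of $\angle w_1 u_{\mathrm{int}} w_2$, which coincides with the perpendicular bisector of the segment $[w_1 w_2]$ because $|u_{\mathrm{int}}w_1|=|u_{\mathrm{int}}w_2|=\mesh$. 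The parallelogram identity for the new rhombus then yields $w_1-\wt{u}_1=u_{\mathrm{int}}-\wt{w}$, so $w_1-\wt{u}_1\perp w_2-w_1$. A direct inspection of arguments shows that the direction $[i(w_1-\wt{u}_1)]^{-\frac{1}{2}}$ is therefore orthogonal, modulo~$\pi$, to $(w_2-w_1)^{-\frac{1}{2}}$, along which the boundary condition~(\ref{RH-1/2bc}) constrains $F^\mesh(z)$ to lie. Hence the projection is zero, and the analogous argument at $\wt{u}_2$ completes this step.

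With these two ingredients the conclusion is immediate: the Laplacian $[\D^\mesh H^\mesh_\G](u_{\mathrm{int}})$ in the modified graph equals the non-negative quadratic form $Q^{(n+1)}$ built in the proof of Proposition~\ref{PropHDef}(iii) from the extended s-holomorphic function, and that form is a sum of squares $Q^{(3)}$, hence $\ge 0$. The only subtle point in the whole argument is the perpendicularity check in the middle step; it is precisely the symmetric choice of half-angles $\frac{1}{2}\theta$ that aligns the boundary direction of $F^\mesh(z)$ orthogonally with $w_1-\wt{u}_1$, and any other splitting of the original half-rhombus would spoil this alignment, so that the prescribed values of $H^\mesh_\G$ on the new $\G$-boundary would no longer be produced by an s-holomorphic extension.
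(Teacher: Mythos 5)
Your proof is correct, but it reaches the conclusion by a genuinely different route from the paper's. The paper's argument is a short local computation: from (\ref{HDef}) and the boundary condition it evaluates $H^\mesh_\G(u_{\mathrm{int}})-H^\mesh_{\G^*}(w_{1,2})=2\mesh\cos^2\tfrac{\theta}{2}\,|F^\mesh(\z)|^2$, so that the two new edges of conductance $\tan\tfrac{\theta}{2}$ with the prescribed values at $\wt{u}_{1,2}$ contribute $-2\mesh\sin\theta\,|F^\mesh(\z)|^2$ to the unnormalized Laplacian at $u_{\mathrm{int}}$ --- exactly what the single deleted edge of conductance $\tan\theta$ would have contributed with the standard value $H^\mesh_\G(u)=H^\mesh_\G(u_{\mathrm{int}})+\Im[(F^\mesh(\z))^2(u-u_{\mathrm{int}})]$ at the outer vertex $u$; subharmonicity persists because the Laplacian is literally unchanged. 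You instead extend $F^\mesh$ s-holomorphically to the two new rhombi and rerun Proposition~\ref{PropHDef}(iii) on the enlarged fan of rhombi around $u_{\mathrm{int}}$ (whose half-angles still sum to $\pi$), transferring the work to the verification that the prescribed values $H^\mesh_\G(\wt{u}_j)=H^\mesh_{\G^*}(w_j)$ are precisely those induced by (\ref{HDef}) from the extension. That verification is correct, with one point worth spelling out: orthogonality of $w_1-\wt{u}_1$ and $w_2-w_1$ by itself does not decide whether $[i(w_1-\wt{u}_1)]^{-\frac{1}{2}}$ is orthogonal or parallel to $(w_2-w_1)^{-\frac{1}{2}}$; one needs $i(w_1-\wt{u}_1)=i(u_{\mathrm{int}}-\wt{w})$ to be a \emph{negative} real multiple of $w_2-w_1$, which does hold because $\wt{w}-u_{\mathrm{int}}$ is the outward normal and $w_2-w_1$ the correspondingly oriented tangent (the same orientation convention the paper uses when it writes $i(w_{1,2}-u_{\mathrm{int}})\upuparrows e^{\mp i\theta}(w_2-w_1)$). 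Your version is longer but more conceptual: it exhibits the modified boundary data as the $H$ of an actual s-holomorphic function on the modified fan, so the non-negativity is structural rather than an identity to be checked; the paper's version is shorter and avoids discussing the extension altogether.
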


\begin{proof}
Definition (\ref{HDef}) says that
\[
H^\mesh_\G(u_{\mathrm{int}}) - H^\mesh_{\G^*}(w_{1,2}) =
2\mesh|\Pr[F^\mesh(\z)\,;\,[i(w_{1,2}\!-\!u_\mathrm{int})]^{-\frac{1}{2}}]|^2=
2\mesh\cos^2\textfrac{\theta}{2}\cdot|F^\mesh(\z)|^2
\]
because $i(w_{1,2}\!-\!u_\mathrm{int})\upuparrows e^{\mp i\theta}(w_2\!-\!w_1)$ and
$F^\mesh(\z)\parallel (w_2\!-\!w_1)^{-\frac{1}{2}}$. Therefore
\[
2\tan\textfrac{\theta}{2}\cdot(H^\mesh_\G(\wt{u}_{1,2})-H^\mesh_\G(u_{\mathrm{int}})) =
-2\mesh\sin\theta\cdot |F^\mesh(\z)|^2
\]
and, if $u$ denotes the fourth vertex of the rhombus $u_{\mathrm{int}}w_1uw_2$,
\[
\tan\theta\cdot (H^\mesh_\G(u)-H^\mesh_\G(u_{\mathrm{int}}))= \tan\theta\cdot \Im
[(F^\mesh(\z))^2(u\!-\!u_{\mathrm{int}})]= -2\mesh\sin\theta\cdot|F^\mesh(\z)|^2.
\]
Thus, the standard definition of $H^\mesh_\G$ at $u$ and the new definition at $\wt{u}_{1,2}$
give the same contributions to the (unnormalized) discrete Laplacian at $u_{\mathrm{int}}$
(see (\ref{DMeshDef})).
\end{proof}

\begin{remark}
After this trick, an additive constant in the definition of $H^\mesh$ can be chosen so that
both
\[
H^\mesh_{\G^*}=0~\mathit{on}~L^\mesh_{\G^*}\quad \mathit{and}\quad H^\mesh_\G=0\
\mathit{on}~\wt{L}^\mesh_\G,
\]
where $L^\mesh_{\wt{\G}}$ denotes the set of newly constructed ``black'' vertices near
$L^\mesh_{\G^*}$.
\end{remark}

\section{Uniform convergence for the holomorphic observable in the FK-Ising model}
\setcounter{equation}{0} \label{SectFKobservable}

Recall that the discrete holomorphic fermion
$F^\mesh(z)=F^\mesh_{(\O^\mesh;a^\mesh,b^\mesh)}(z)$ constructed in Section~\ref{SectFKIsing}
satisfies the following discrete boundary value problem:

\begin{quotation} {\it
\noindent {\bf (A) Holomorphicity:} $F^\mesh(z)$ is s-holomorphic in $\O^\mesh_\DS$.

\noindent {\bf (B) Boundary conditions:} $F^\mesh(\z)\parallel (\t(\z))^{-\frac{1}{2}}$ for
$\z\in\pa \O^\mesh_\DS$, where
\begin{equation}
\label{BVPdiscr}
\begin{array}{lll}
\t(\z)=w_2(\z)\!-\!w_1(\z), & \z\in (a^\mesh b^\mesh), & w_{1,2}(\z)\in\G^*, \cr
\t(\z)=u_2(\z)\!-\!u_1(\z), & \z\in (b^\mesh a^\mesh), & u_{1,2}(\z)\in\G,
\end{array}
\end{equation}
is the ``discrete tangent vector'' to $\pa\O^\mesh_\DS$ directed from $a^\mesh$ to $b^\mesh$
on both arcs (see Fig.~\ref{Fig:FKDomain}B).

\noindent {\bf (C) Normalization at $\bm{b^\mesh}$:} $F^\mesh(b^\mesh)= \Re
F^\mesh(b^\mesh_\DS)= (2\mesh)^{-\frac{1}{2}}$.}
\end{quotation}

\begin{remark}
For each discrete domain $(\O^\mesh_\DS;a^\mesh,b^\mesh)$, the discrete boundary value problem
{(A)\&(B)\&(C)} has a unique solution.
\end{remark}
\begin{proof} Existence of the solution is given by the explicit construction of the
holomorphic fermion in the FK-Ising model. Concerning uniqueness, let $F_1^\mesh$ and
$F_2^\mesh$ denote two different solutions. Then the difference $F^\mesh:=F_1^\mesh-F_2^\mesh$
is s-holomorphic in $\O^\mesh_\DS$, thus $H^\mesh:=\Im\int^\mesh (F^\mesh(z))^2d^\mesh z$ is
well-defined (see Section \ref{SectIntF^2}, especially (\ref{HDef})). Due to condition (B),
$H^\mesh$ is constant on both boundary arcs $(a^\mesh b^\mesh)\ss\G^*$ and $(b^\mesh
a^\mesh)\ss\G$. Moreover, in view of the same normalization of $F_{1,2}^\mesh$ near $b^\mesh$,
one can fix an additive constant so that $H^\mesh_{\G^*}=0$ on $(a^\mesh b^\mesh)$ and
$H^\mesh_\G=0$ on $(b^\mesh a^\mesh)$.

The ``boundary modification trick'' described in Sect. \ref{SectBModTrick} provides us the
slight modification of $\O^\mesh_\DS$ (see Fig.~\ref{Fig:BoundaryTrick}B) such that the
Dirichlet boundary conditions $H^\mesh_\G=0$, $H^\mesh_{\G^*}=0$ hold true \emph{everywhere}
on $\pa\O^\mesh_\L$. Using sub-/super-harmonicity of $H^\mesh$ on $\G/\G^*$ and (\ref{HDef}),
we arrive at $0\ge H^\mesh_\G\ge H^\mesh_{\G^*}\ge 0$ in $\O^\mesh$. Thus, $H^\mesh\equiv 0$
and $F_1^\mesh\equiv F^\mesh_2$.
\end{proof}

Let
\begin{equation}
\label{Hmesh=Int} H^\mesh=H^\mesh_{(\O^\mesh;a^\mesh,b^\mesh)}:= \Im\int^\mesh
(F^\mesh_{(\O^\mesh;a^\mesh,b^\mesh)}(z))^2 d^\mesh z.
\end{equation}
It follows from the boundary conditions (B) that $H^\mesh$ is constant on both boundary arcs
$(a^\mesh b^\mesh)\ss\G^*$ and $(b^\mesh a^\mesh)\ss\G$. In view of the chosen normalization
(C), we have
\[
H^\mesh_\G|_{(b^\mesh a^\mesh)} -H^\mesh_{\G^*}|_{(a^\mesh b^\mesh)}=1.
\]
\begin{remark}
Due to the ``boundary modification trick'' (Section \ref{SectBModTrick}), one can fix an
additive constant so that
\begin{equation}
\label{HmeshBV}
\begin{array}{lllll}
H^\mesh_\G=0 &\mathit{on}~(a^\mesh b^\mesh)_{\wt\G}, &\quad& H^\mesh_{\G^*}=0 & \mathit{on}\
(a^\mesh b^\mesh), \cr\vphantom{\big|^|} H^\mesh_\G=1 &\mathit{on}~(b^\mesh a^\mesh), &\quad&
H^\mesh_{\G^*}=1 &\mathit{on}~(b^\mesh a^\mesh)_{\wt{\G}^*},
\end{array}
\end{equation}
where $(a^\mesh b^\mesh)_{\wt{\G}}$ (and, in the same way, $(b^\mesh a^\mesh)_{\wt{\G}^*}$)
denotes the set of newly constructed ``black'' vertices near the ``white'' boundary arc
$(a^\mesh b^\mesh)$ (see Fig.~\ref{Fig:BoundaryTrick}B).
\end{remark}

Let $f^\mesh(z)=f^\mesh_{(\O^\mesh;a^\mesh,b^\mesh)}(z)$ denote the solution of the
corresponding continuous boundary value problem inside the \emph{polygonal} domain $\O^\mesh$:

\begin{quotation}
{\it

\noindent {\bf (a) holomorphicity:} $f^\mesh$ is holomorphic in $\O^\mesh$;

\noindent {\bf (b) boundary conditions:} $f^\mesh(\z)\parallel (\t(\z))^{-\frac{1}{2}}$ for
$\z\in\pa\O^\mesh$, where $\t(\z)$ denotes the tangent vector to $\pa\O^\mesh$ oriented from
$a^\mesh$ to $b^\mesh$ (on both arcs);

\noindent {\bf (c) normalization:} the function
$h^\mesh=h^\mesh_{\O^\mesh,a^\mesh,b^\mesh}:=\Im \int (f^\mesh(\z))^2d\z$ is uniformly bounded
in $\O^\mesh$ and
\[
h^\mesh|_{(a^\mesh b^\mesh)}=0,\qquad h^\mesh|_{(b^\mesh a^\mesh)}=1.
\]
}
\end{quotation}
Note that (a) and (b) guarantee that $h^\mesh$ is harmonic in $\O^\mesh$ and constant on both
boundary arcs $(a^\mesh b^\mesh)$, $(b^\mesh a^\mesh)$. In other words,
\[
f^\mesh=\sqrt{2i\pa h^\mesh},\qquad h^\mesh=\hm {\,\cdot\,}{b^\mesh a^\mesh}{\O^\mesh},
\]
where $\o$ denotes the (continuous) harmonic measure in the (polygonal) domain $\O^\mesh$.
Note that $\pa h^\mesh \ne 0$ in $\O^\mesh$, since $h^\mesh$ is the imaginary part of the
conformal mapping from $\O^\mesh$ onto the infinite strip $(-\infty,\infty)\times(0,1)$
sending $a^\mesh$ and $b^\mesh$ to $\mp\infty$, respectively. Thus, $f^\mesh$ is well-defined
(up to the sign).

\begin{theorem}[\bf convergence of FK-observable]
\label{ThmFKConvergence} The solutions $F^\mesh$ of the discrete Riemann-Hilbert boundary value
problems {(A)\&(B)\&(C)} are uniformly close in the bulk to their continuous counterpart
$f^\mesh$ defined by {(a)\&(b)\&(c)}. Namely, for all $0<r<R$ there exists
$\ve(\mesh)=\ve(\mesh,r,R)$ such that for all discrete domains
$(\O^\mesh_\DS;a^\mesh,b^\mesh)$ and $z^\mesh\in \O^\mesh_\DS$ the following holds true:
\begin{center}
if $B(z^\mesh,r)\ss\O^\mesh\ss B(z^\mesh,R)$, then $|F^\mesh(z^\mesh)-f^\mesh(z^\mesh)|\le
\ve(\mesh)\to 0$ as $\mesh\to 0$ $\vphantom{\big|^|_|}$
\end{center}
(for a proper choice of $f^\mesh$'s sign), uniformly with respect to the shape of $\O^\mesh$
and $\DS^\mesh$.
\end{theorem}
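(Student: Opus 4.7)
The plan is to adapt the strategy that worked on the square lattice in \cite{smirnov-fk1}: extract subsequential limits of $F^\mesh$ and of its primitive $H^\mesh$, identify the limit as the unique solution of the continuous boundary value problem (a)\&(b)\&(c), and then upgrade pointwise convergence to uniform convergence by a compactness argument in the Carath\'eodory topology on pointed simply-connected domains.

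First I would establish a uniform a priori bound on $H^\mesh$. After applying the boundary modification trick of Section~\ref{SectBModTrick} to both arcs, so that (\ref{HmeshBV}) holds and $H^\mesh$ acquires Dirichlet data in $\{0,1\}$ on the whole modified boundary, the subharmonicity of $H^\mesh|_\G$ and superharmonicity of $H^\mesh|_{\G^*}$, together with (\ref{HDef}) (which gives $H^\mesh_\G(u)\ge H^\mesh_{\G^*}(w)$ for neighbors), force $0\le H^\mesh\le 1$ everywhere. Theorem~\ref{FboundH} then yields $|F^\mesh(z)|\le \const/\dist(z;\pa\O^\mesh)^{1/2}$ together with the matching discrete Lipschitz bound, uniformly in $\mesh$ and in the structure of~$\DS^\mesh$.

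Now I would pass to subsequential limits. Suppose, toward contradiction, that the conclusion fails: along some sequence $\mesh_n\to 0$ there exist marked domains $(\O^{\mesh_n};a^{\mesh_n},b^{\mesh_n})$ and points $z^{\mesh_n}$ with $B(z^{\mesh_n},r)\ss\O^{\mesh_n}\ss B(z^{\mesh_n},R)$ along which $|F^{\mesh_n}(z^{\mesh_n})-f^{\mesh_n}(z^{\mesh_n})|$ stays bounded away from $0$. By compactness in the Carath\'eodory topology we may assume $(\O^{\mesh_n};a^{\mesh_n},b^{\mesh_n},z^{\mesh_n})\CaraTo(\O;a,b,z)$. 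The uniform bounds and equicontinuity from Step~1 make $\{F^{\mesh_n}\}$ and $\{H^{\mesh_n}\}$ normal families on compacts of $\O$, so along a further subsequence $F^{\mesh_n}\to f$ and $H^{\mesh_n}\to h$ locally uniformly in $\O$, with $f$ holomorphic and $h=\Im\int f^2\,dz$ harmonic and bounded. The normalization (c) for the limit survives because, in the limit, the gap $H^\mesh_\G|_{(b^\mesh a^\mesh)}-H^\mesh_{\G^*}|_{(a^\mesh b^\mesh)}=1$ is preserved. The continuous Dirichlet boundary condition $h\equiv 0$ on $(ab)$ and $h\equiv 1$ on $(ba)$ at non-degenerate boundary points is obtained from the $\{0,1\}$ Dirichlet data on the modified discrete boundary via standard barrier arguments, for which the discrete harmonic-measure estimates from the Appendix and Proposition~\ref{PropH-H} (uniform comparability of $H^\mesh|_\G$ and $H^\mesh|_{\G^*}$) provide the required a priori estimates. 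Hence $f$ solves (a)\&(b)\&(c), and the continuous boundary-value problem has a unique solution (given by the imaginary part of the conformal map to the strip), which identifies $f$ with the Carath\'eodory limit of the continuous solutions $f^{\mesh_n}$. But then $F^{\mesh_n}(z^{\mesh_n})-f^{\mesh_n}(z^{\mesh_n})\to f(z)-f(z)=0$, contradicting the choice of the violating sequence.

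The main obstacle is the identification of the boundary values of the limit $h$, which is delicate for two reasons: the boundary $\pa\O^\mesh$ is only required to approximate $\pa\O$ with no regularity, so one needs uniform boundary continuity estimates for bounded sub/super-harmonic functions on irregular discrete domains; and near the marked points $a,b$ the observable itself is singular (already $F^\mesh(b^\mesh)=(2\mesh)^{-1/2}$ blows up), so one must confine the analysis to arcs separated from $\{a,b\}$ and transfer the correct Dirichlet data away from these two points. Both difficulties are precisely what the boundary modification trick of Section~\ref{SectBModTrick} is designed to resolve, reducing the Riemann boundary condition to a clean Dirichlet condition for $H^\mesh$ on both $\G$ and $\G^*$ and thereby making the classical discrete potential theory of \cite{chelkak-smirnov-dca} applicable up to the boundary.
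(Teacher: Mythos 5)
Your strategy coincides with the paper's own proof in all essentials: the uniform bound $0\le H^\mesh\le 1$ obtained from the boundary modification trick and the sub-/super-harmonicity of $H^\mesh$ on $\G/\G^*$, the regularity and equicontinuity from Theorem~\ref{FboundH}, the Carath\'eodory compactness and normal-family extraction, and the identification of the subsequential limit with the harmonic measure $h$. Two remarks are in order.

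First, there is a small but genuine logical gap in your contradiction setup. The continuous problem (a)\&(b)\&(c) determines $f^\mesh=\sqrt{2i\pa h^\mesh}$ only up to sign, and the theorem accordingly claims closeness ``for a proper choice of $f^\mesh$'s sign''. The correct negation is therefore that \emph{both} $|F^\mesh(z^\mesh)-f^\mesh(z^\mesh)|\ge\ve_0$ and $|F^\mesh(z^\mesh)+f^\mesh(z^\mesh)|\ge\ve_0$ along the violating sequence; you negate only one sign. Moreover, the limiting identification you obtain is $F^2=2i\pa h=f^2$, which gives $F=\pm f$ (with a single global sign, since $\pa h$ does not vanish in $\O$, so $F/f$ is a continuous $\{\pm1\}$-valued function). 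As written, your final line ``$F^{\mesh_n}(z^{\mesh_n})-f^{\mesh_n}(z^{\mesh_n})\to 0$'' does not follow, because the limit of $F^{\mesh_n}$ could equal $-f$. Assuming failure for both signs, as the paper does, closes this immediately.

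Second, for the boundary values of the limit the paper does not run a barrier argument on the limiting harmonic function. Instead it sandwiches
\[
\dhm\mesh{\,\cdot\,}{(b^\mesh a^\mesh)_{\wt{\G}^*}}{\O^\mesh_{\G^*}}\le H^\mesh_{\G^*}\le
H^\mesh_\G \le \dhm\mesh{\,\cdot\,}{(b^\mesh a^\mesh)_\G}{\O^\mesh_\G}
\]
using sub-/super-harmonicity and (\ref{HDef}), and then invokes the uniform convergence of both discrete harmonic measures to the continuous one (\cite{chelkak-smirnov-dca}, Theorem~3.12). This yields $H^\mesh\rra h$ directly, with no separate treatment of irregular boundary pieces or of the neighborhoods of $a^\mesh,b^\mesh$. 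Your proposed barrier route is workable but essentially re-proves what the cited harmonic-measure convergence already packages; I would recommend quoting that result instead.
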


\begin{remark}
\label{RemTheSameSign} Moreover, the sign of $f^\mesh$ is the same for, at least, all
$\wt{z}^\mesh$ lying in the same connected component of the $r$-interior of $\O^\mesh$.
\end{remark}

\begin{proof}
Assume that neither $f^\mesh$ nor $-f^\mesh$ approximates $F^\mesh$ well, and so for both
signs $|F^\mesh(z^\mesh)\!\pm\!f^\mesh(z^\mesh)|\ge \ve_0\!>\!0$ for some sequence of domains
$\O^\mesh$, $\mesh\!\to\!0$. Applying translations one can without loss of generality assume
$z^\mesh=0$ for all $\mesh$'s. The set of all simply-connected domains $\O: B(0,r)\ss\O\ss
B(0,R)$ is compact in the Carath\'eodory topology (of convergence of conformal maps germs).
Thus, passing to a subsequence, we may assume that
\[
(\O^\mesh;a^\mesh,b^\mesh)~\CaraTo~(\O;a,b)\quad \mathrm{as}~~\mesh\to 0
\]
(with respect to $0\!=\!z^\mesh$). Let $h=h_{(\O;a,b)}:=\hm {\,\cdot\,}{ba}{\O}$. Note that
$h^\mesh\rra h$ as $\mesh\to 0$, uniformly on compact subsets of $\O$, since the harmonic
measure is Carath\'eodory stable. Moreover,
\[
(f^\mesh)^2= {2i\pa h^\mesh} \rra f^2= {2i\pa h}\quad \mathrm{as}~~\mesh\to 0.
\]
We are going to prove that, at the same time,
\[
H^\mesh\rra h\quad \mathrm{and}\quad (F^\mesh)^2\rra f^2\quad \mathrm{as}~~\mesh\to 0,
\]
uniformly on compact subsets of $\O$, which gives a contradiction.

It easily follows from (\ref{HmeshBV}) and the sub-/super-harmonicity of $H^\mesh$ on
$\G/\G^*$ that
\[
0\le H^\mesh \le 1~~\mathrm{everywhere~in}~~\O^\mesh_\L.
\]
In view of Theorem \ref{FboundH}, this (trivial) uniform bound implies the uniform boundedness
and the equicontinuity of functions $F^\mesh$ on compact subsets $K$ of $\O$. Thus, both
$\{H^\mesh\}$ and $\{F^\mesh\}$ are normal families on each compact subset of $\O$. Therefore,
taking a subsequence, we may assume that
\[
F^\mesh\rra F~~\mathrm{and}~~H^\mesh\rra H\quad \mathrm{for~some}~~F:\O\to\C,\
H:\O\to\R,
\]
uniformly on all compact subsets of $\O$. The simple passage to the limit in (\ref{Hmesh=Int})
gives
\[
 H(v_2)-H(v_1)=\Im \int_{[v_1;v_2]} (F(\z))^2d\z,
\]
for each segment $[v_1;v_2]\ss\O$. Thus, $F^2=2i\pa H$. 
Being a limit of discrete subharmonic functions
$H^\mesh_\G$, as well as discrete superharmonic functions $H^\mesh_{\G^*}$, the function $H$
should be harmonic. The sub-/super-harmonicity of $H^\mesh$ on $\G/\G^*$ gives
\[
\dhm\mesh{\,\cdot\,}{(b^\mesh a^\mesh)_{\wt{\G}^*}}{\O^\mesh_{\G^*}}\le H^\mesh_{\G^*}\le
H^\mesh_\G \le \dhm\mesh{\,\cdot\,}{(b^\mesh a^\mesh)_\G}{\O^\mesh_\G}\quad \mathrm{in}~\
\O^\mesh_\L,
\]
where the middle inequality holds for any pair of neighbors $w\in\G^*$, $u\in\G$
due~to~(\ref{HDef}). It is known (see \cite{chelkak-smirnov-dca} Theorem 3.12) that both
discrete harmonic measures $\o^\mesh(\cdot)$ (as on $\G^*$, as on $\G$) are uniformly close in
the bulk to the continuous harmonic measure $\o(\cdot)=h$. Thus, $H^\mesh\rra h$ uniformly on
compact subsets of $\O$, and so $F^2={2i\pa h}$.
\end{proof}

\begin{proof}[Proof of Remark \ref{RemTheSameSign}]
Consider simply-connected domains $(\O^\mesh;a^\mesh,b^\mesh;z^\mesh,\wt{z}^\mesh)$, with
$z^\mesh, \wt{z}^\mesh$ lying in the same connected components of the $r$-interiors
$\O^\mesh_r$. Assume that we have $|F^\mesh(z^\mesh)-f^\mesh(z^\mesh)|\to 0$ but
$|F^\mesh(\wt{z}^\mesh)+f^\mesh(\wt{z}^\mesh)|\to 0$ as $\mesh\to 0$. Applying translations to
$\O^\mesh$ and taking a subsequence, we may assume that
\[
(\O^\mesh;a^\mesh,b^\mesh;\wt{z}^\mesh)\CaraTo (\O;a,b;\wt{z})\quad \mathrm{w.r.t.}~\
0=z^\mesh,
\]
for some $\wt{z}$ connected with $0$ inside the $r$-interior $\O_r$ of $\O$. As it was shown
above,
\[
F^\mesh\rra F\quad \mathrm{and}\quad f^\mesh\rra f\quad \mathrm{uniformly~on}~\O_r,
\]
where either $F\equiv f$ or $F\equiv -f$ (everywhere in $\O$), which gives a contradiction.
\end{proof}

\setcounter{equation}{0}
\section{Uniform convergence for the holomorphic observable in the spin-Ising model}
\label{SectSpinObservable}

For the spin-Ising model, the discrete holomorphic fermion
$F^\mesh(z)=F^\mesh_{(\O^\mesh;a^\mesh,b^\mesh)}(z)$ constructed in Section~\ref{SectSpinIsing}
satisfies the following discrete boundary value problem:

\begin{quotation} {\it
\noindent {\bf (A${}^\circ$) Holomorphicity:}  $F^\mesh(z)$ is s-holomorphic inside $\O^\mesh_\DS$.

\noindent {\bf (B${}^\circ$) Boundary conditions:} $F^\mesh(\z)\parallel (\t(\z))^{-\frac{1}{2}}$ for
all $\z\in\pa \O^\mesh_\DS$ except at $a^\mesh$, where $\t(\z)=w_2(\z)\!-\!w_1(\z)$ is the
``discrete tangent vector'' to $\pa\O^\mesh_\DS$ oriented in the counterclockwise direction
(see Fig.~\ref{Fig:SpinDomain}A).

\noindent {\bf (C${}^\circ$) Normalization at $\bm{b^\mesh}$:}
$F^\mesh(b^\mesh)=\cF^\mesh(b^\mesh)$, where the normalizing constants
$\cF^\mesh(b^\mesh)\parallel (\t(b^\mesh))^{-\frac{1}{2}}$ are defined in
Section~\ref{SectBoundHarnack}. }
\end{quotation}

\begin{remark}
For each discrete domain $(\O^\mesh_\DS;a^\mesh,b^\mesh)$, the discrete boundary value problem
{(A${}^\circ$)\&(B${}^\circ$)\&(C${}^\circ$)} has a unique solution.
\end{remark}
\begin{proof}
Existence is given by the holomorphic fermion in the spin-Ising model. Concerning uniqueness,
let $F^\mesh$ denote some solution. Then $H^\mesh=\int^\mesh (F^\mesh(z))^2 d^\mesh z$ is
constant on $\pa\O^\mesh_{\G^*}$, so either $F^\mesh(a^\mesh)\parallel (\t(a^\mesh))^{-1/2}$
or $F^\mesh(a^\mesh)\parallel (-\t(a^\mesh))^{-1/2}$. In the former case, using the ``boundary
modification trick'' (Section \ref{SectBModTrick}), we arrive at $H^\mesh=0$ on both
$\pa\O^\mesh_{\G^*}$ and $\pa\O^\mesh_{\wt{\G}}$. Then, sub-/super-harmonicity of $H^\mesh$ on
$\G/\G^*$ and (\ref{HDef}) imply that
$0\ge H^\mesh_\G\ge H^\mesh_{\G^*}\ge 0$ in $\O^\mesh$.
Therefore, $H^\mesh\equiv 0$ and $F^\mesh\equiv 0$, which is impossible. Thus,
$F^\mesh(a^\mesh)\parallel (-\t(a^\mesh))^{-1/2}$ (for the holomorphic fermion this follows
from the definition).

Let $F_{1,2}$ be two different solutions. Denote
\[
F^\mesh(z):=
(-\t(a^\mesh))^\frac{1}{2}\cdot[F_1^\mesh(a^\mesh)F_2^\mesh(z)-F_2^\mesh(a^\mesh)F_1^\mesh(z)].
\]
Then, $F^\mesh$ is s-holomorphic and, since
$(-\t(a^\mesh))^\frac{1}{2}F_{1,2}^\mesh(a^\mesh)\in\R$, satisfies the boundary condition (B${}^\circ$)
on $\pa\O^\mesh\setminus\{a^\mesh\}$. Moreover, we also have $F^\mesh(a^\mesh)=0\parallel
(\t(a^\mesh))^{-\frac{1}{2}}$. Arguing as above, we obtain $F^\mesh\equiv 0$.
The identity $F_1^\mesh\equiv F^\mesh_2$ then follows from (C${}^\circ$).
\end{proof}

Let
\[
H^\mesh=H^\mesh_{(\O^\mesh;a^\mesh,b^\mesh)}:=\Im\int^\mesh
(F^\mesh_{(\O^\mesh;a^\mesh,b^\mesh)}(z))^2 d^\mesh z.
\]
\begin{remark}
Using Section \ref{SectBModTrick}, one can fix an additive constant so that
\begin{equation}
\label{HmeshBV-spin}
\begin{array}{c}
H^\mesh_{\G^*}=0~\mathit{everywhere~on}~\pa\O^\mesh_{\G^*}\cr H^\mesh_\G=0\
\mathit{everywhere~on}~\pa\O^\mesh_{\wt{\G}}~\mathit{except}~a^\mesh_{\mathrm{out}},
\end{array}
\end{equation}
where $\pa\O^\mesh_{\wt{\G}}$ denotes the modified boundary (everywhere except $a^\mesh$) and
$a^\mesh_{\mathrm{out}}$ is the original outward ``black'' vertex near $a^\mesh$ (see
Fig.~\ref{Fig:BoundaryTrick}C). Then, $H^\mesh_\G\ge H^\mesh_{\G^*}\ge 0$ everywhere in
$\O^\mesh$.
\end{remark}

\subsection{Boundary Harnack principle and solution to (A${}^\circ$)\&(B${}^\circ$) in the discrete
half-plane} \label{SectBoundHarnack} We start with a version of the Harnack Lemma
(Proposition~\ref{HarnackIntF^2}) which compares the values of $H^\mesh=\Im\int^\mesh
(F^\mesh)^2(z)d^\mesh z$ in the bulk with its normal derivative at the boundary.

Let $R(s,t):=(-s;s)\ts(0;t)\ss\C$ be an open rectangle, $R^\mesh_\DS(s,t)\ss\G$ denote its
discretization, and $L^\mesh(s)$, $U^\mesh(s,t)$ and $V^\mesh(s,t)$ be the lower, upper and
vertical parts of the boundary $\pa R^\mesh_\G$ (see Fig. \ref{Fig:OsupsetR}A).

\begin{proposition}
\label{PropBoundHar} Let $t \ge \mesh$, $F^\mesh$ be an s-holomorphic function in a discrete
rectangle $R^\mesh_\DS(2t,2t)$ satisfying the boundary condition (B${}^\circ$) on the lower
boundary $L^\mesh_\DS(2t)$ and $H^\mesh=\Im\int^\mesh(F^\mesh(z))^2d^\mesh z$ be defined by
(\ref{HDef}) so that $H=0$ on $L^\mesh_{\G^*}$ and $H\ge 0$ everywhere in
$R^\mesh_{\G^*}(2t,2t)$. Let $b^\mesh\in\DS$ be the boundary vertex closest to $0$, and
$c^\mesh\in\G^*$ denote the inner face (dual vertex) containing the point $c=it$. Then,
uniformly in $t$ and $\mesh$,
\[
|F^\mesh(b^\mesh)|^2\asymp \frac{H^\mesh(c^\mesh)}{t},
\]
\end{proposition}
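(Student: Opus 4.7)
The plan is to relate $|F^\mesh(b^\mesh)|^2$ to the boundary value of $H^\mesh$ at the interior $\G$-vertex $u^*$ adjacent to the boundary half-rhombus $b^\mesh$, and then to compare $H^\mesh(u^*)$ with the interior value $H^\mesh(c^\mesh)$ via a discrete boundary Harnack principle.

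\emph{Step 1 (boundary evaluation).} Apply the boundary modification trick of Section \ref{SectBModTrick} along $L^\mesh_{\G^*}$ so that $H^\mesh$ vanishes both on $L^\mesh_{\G^*}$ and on the added black vertices $\wt L^\mesh_{\wt\G}$, while remaining sub/super-harmonic on $\G,\G^*$ and nonnegative in $R^\mesh_\L(2t,2t)$. Definition \eqref{HDef} together with $F^\mesh(b^\mesh)\parallel(\t(b^\mesh))^{-1/2}$ yields, exactly as in the lemma preceding Section \ref{SectBModTrick},
\[
H^\mesh(u^*)\,=\,2\mesh\cos^2\tfrac{\theta(b^\mesh)}{2}\cdot |F^\mesh(b^\mesh)|^2 \,\asymp\, \mesh\cdot|F^\mesh(b^\mesh)|^2,
\]
since the half-angle at $b^\mesh$ is uniformly bounded away from $0$ and $\pi/2$.

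\emph{Step 2 (boundary Harnack).} Set $M:=H^\mesh(c^\mesh)$. Proposition \ref{HarnackIntF^2} gives $H^\mesh\asymp M$ on $B^\mesh_\L(c^\mesh,t/4)$, and chaining a bounded number of such discs yields $H^\mesh\le CM$ at every lattice point at height $\ge t/4$. For the upper bound on $H^\mesh(u^*)$ we use subharmonicity of $H^\mesh|_\G$ with zero values on $\wt L^\mesh_{\wt\G}$:
\[
H^\mesh|_\G(u^*)\,\le\,CM\cdot\o^\mesh\bigl(u^*;\,\pa R^\mesh_\G(2t,2t)\setminus\wt L^\mesh_{\wt\G};\,R^\mesh_\G(2t,2t)\bigr)\,\asymp\,M\cdot\mesh/t,
\]
where the harmonic-measure estimate is a uniform bound on a straight boundary segment in a rectangle (see \cite{chelkak-smirnov-dca} and the Appendix). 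For the matching lower bound, apply optional stopping to the superharmonic function $H^\mesh|_{\G^*}$ in the domain $R^\mesh_{\G^*}(2t,2t)\setminus B^\mesh_{\G^*}(c^\mesh,t/4)$: on the inner ball $H^\mesh|_{\G^*}\ge cM$, on $L^\mesh_{\G^*}$ it vanishes, and the harmonic measure of the inner ball viewed from a $\G^*$-neighbor $w^*$ of $b^\mesh$ is again $\asymp\mesh/t$, giving $H^\mesh|_{\G^*}(w^*)\ge c'M\cdot\mesh/t$. By Remark \ref{RemHGasympG*}, $H^\mesh|_\G(u^*)\asymp H^\mesh|_{\G^*}(w^*)\asymp M\cdot\mesh/t$.

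Combining the two steps gives $\mesh\cdot|F^\mesh(b^\mesh)|^2\asymp H^\mesh(u^*)\asymp \mesh\cdot M/t$, which is the claim. The main obstacle is the boundary Harnack step: $H^\mesh$ is only sub/super-harmonic on the interlocked sublattices $\G$ and $\G^*$ rather than exactly harmonic. The argument closes because these two pieces are a priori comparable (Remark \ref{RemHGasympG*}), so the one-sided maximum/minimum principles that each of them individually satisfies sandwich $H^\mesh$ between two multiples of the same harmonic-measure quantity $\asymp\mesh/t$.
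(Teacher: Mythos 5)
Your overall architecture is the paper's: evaluate $H^\mesh$ at the vertices adjacent to $b^\mesh$ to convert the claim into $H^\mesh\asymp\mesh\,H^\mesh(c^\mesh)/t$ there, get the lower bound from superharmonicity of $H^\mesh|_{\G^*}$ plus a harmonic-measure lower bound of order $\mesh/t$, and the upper bound from subharmonicity of $H^\mesh|_\G$ plus a harmonic-measure upper bound. Step 1 and the lower-bound half of Step 2 are essentially the paper's argument and are fine.

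The upper bound, however, has a genuine gap. You claim that chaining a bounded number of Harnack discs gives $H^\mesh\le CM$ at every lattice point at height $\ge t/4$, and you then feed this into the maximum principle on the \emph{full} rectangle $R^\mesh_\G(2t,2t)$, using $H^\mesh\le CM$ on $\pa R^\mesh_\G(2t,2t)\setminus\wt L^\mesh_{\wt\G}$. This fails for two reasons. First, Proposition~\ref{HarnackIntF^2} requires a full disc of nonnegativity around each centre, so a Harnack chain from $c^\mesh$ degenerates as you approach the top and side boundaries of $R(2t,2t)$: the discs must shrink and their number blows up, so no uniform constant is obtained there. Second, and more seriously, the statement itself is false under the hypotheses of the proposition: the only boundary condition assumed is on the lower side, so nothing prevents $H^\mesh$ from being much larger than $H^\mesh(c^\mesh)$ near the top or the vertical sides (already in the continuum, take $h$ to be the harmonic measure of a tiny arc on the top edge: $h$ is $\asymp 1$ near that arc but $h(it)$ is as small as you like). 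Hence the boundary data you insert into the maximum principle is wrong. The paper circumvents this by working in the strictly smaller rectangle $R^\mesh(t,\tfrac12 t)$, whose top and side boundaries lie in the bulk of $R(2t,2t)$, and proving $H^\mesh\le\const\cdot H^\mesh(c^\mesh)$ \emph{there} by a different mechanism: if $H^\mesh_\G(v)\gg H^\mesh(c^\mesh)$ for some $v$ in the small rectangle, subharmonicity of $H^\mesh|_\G$ produces an increasing path from $v$ which, since $H^\mesh=0$ on the bottom, must run all the way to $U^\mesh(2t,2t)\cup V^\mesh(2t,2t)$; a nearby $\G^*$-path then carries values $\ge\const\cdot H^\mesh_\G(v)$ by Remark~\ref{RemHGasympG*} and has harmonic measure from $c^\mesh$ bounded below, contradicting superharmonicity of $H^\mesh|_{\G^*}$. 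With that bound in hand, the harmonic-measure estimate of Lemma~\ref{LemmaDhmR} in $R^\mesh_\G(t,\tfrac12 t)$ gives $H^\mesh(b^\mesh_\G)\le\const\cdot\mesh\,H^\mesh(c^\mesh)/t$, which is the step your argument is missing.
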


\begin{proof}
Let $t\ge\const\cdot \mesh$ (the opposite case is trivial). It follows from
Remark~\ref{RemHGasympG*} and Proposition~\ref{HarnackIntF^2} that all the values of $H^\mesh$
on $U^\mesh(t,\tfrac{1}{2}t)$ are uniformly comparable with $H(c^\mesh)$. Then, the
superharmonicity of $H^\mesh\big|_{\G^*}$ and simple estimates of the discrete harmonic
measure in $R^\mesh_{\G^*}(t,\frac{1}{2}t)$ (see Lemma~\ref{LemmaDhmR}) give
\[
H^\mesh(b^\mesh_{\G^*})\ge
\dhm\mesh{b^\mesh_{\G^*}}{U^\mesh_{\G^*}}{R^\mesh_{\G^*}(t,\tfrac{1}{2}t)} \cdot
\min\nolimits_{w^\mesh\in U^\mesh_{\G^*}}H^\mesh(w^\mesh)\ge \const\cdot {\mesh}/{t}\cdot
H^\mesh(c^\mesh),
\]
where $b^\mesh_{\G^*}\in\G^*$ denotes the inner dual vertex closest to $b^\mesh$ (see
Fig.~\ref{Fig:OsupsetR}A). Therefore, $|F^\mesh|^2\ge\const\cdot H^\mesh(c^\mesh)/t$ for a
neighbor of $b^\mesh$. Due to the s-holomorphicity of $F^\mesh$, this is sufficient to
conclude that
\[
|F^\mesh(b^\mesh)|^2\ge \const\cdot H^\mesh(c^\mesh)/t.
\]

On the other hand, since $H=0$ on $L^\mesh(2t)$, one has $H^\mesh\le\const\cdot
H^\mesh(c^\mesh)$ \mbox{\emph{everywhere}} in $R^\mesh_\G(t,\frac{1}{2}t)$ (the proof mimics
the proof of Proposition~\ref{HarnackIntF^2}: if $H^\mesh(v)\gg H^\mesh(c^\mesh)$ at some
$v\in R^\mesh_\G(t,\frac{1}{2}t)$, then, since $H^\mesh\equiv 0$ on $L^\mesh(2t)$, the same
holds true along some path running from $v$ to $U^\mesh(2t,2t)\cup V^\mesh(2t,2t)$, which
gives a contradiction). Thus, estimating the discrete harmonic measure in
$R^\mesh_\G(t,{\textstyle \frac{1}{2}t})$ from the inner vertex $b^\mesh_\G\in \G$ closest to
$b$ (see Fig.~\ref{Fig:OsupsetR}A), we arrive at
\[
H^\mesh(b^\mesh_{\G})\le \dhm\mesh{b^\mesh_{\G}}{U^\mesh_{\G}\cup
V^\mesh_{\G}}{R^\mesh_{\G}(t,\tfrac{1}{2}t)} \cdot \max\nolimits_{u^\mesh\in U^\mesh_{\G}\cup
B^\mesh_{\G}}H^\mesh(u^\mesh) \le \const\cdot \mesh/t\cdot H^\mesh(c^\mesh).
\]
Since $H^\mesh(b^\mesh_{\G})\asymp \mesh\cdot |F^\mesh(b^\mesh)|^2$, this means
$|F^\mesh(b^\mesh)|^2\le \const\cdot H^\mesh(c^\mesh)/t$.
\end{proof}

Now we are able to construct a special solution $\cF^\mesh$ to the discrete boundary value
problem (A${}^\circ$)\&(B${}^\circ$) in the discrete half-plane. The value
$\cF^\mesh(b^\mesh)$ will be used later on for the normalization of the spin-observable at the
target point $b^\mesh$.

\begin{theorem}
\label{ThmCFexists} Let $\H^\mesh$ denote some discretization of the upper half-plane (see
Fig.~\ref{Fig:OsupsetR}A). Then, there exist a unique \mbox{s-holomorphic} function
$\cF^\mesh:\H^\mesh_\DS\to\C$ satisfying boundary conditions $\cF^\mesh(\z)\parallel
(\t(\z))^{-\frac{1}{2}}$ for $\z\in\pa\H^\mesh_\DS$, such that
\[
\cF^\mesh(z)=1+O(\mesh^{\frac{1}{2}}\cdot(\Im z)^{-\frac{1}{2}}),
\]
uniformly with respect to $\DS^\mesh$. Moreover, $|\cF^\mesh|\asymp 1$ on the boundary
$\pa\H^\mesh_\DS$.
\end{theorem}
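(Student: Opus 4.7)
My plan is to construct $\cF^\mesh$ via a compact exhaustion of $\H^\mesh$ and extract a limit using the regularity theory of Section \ref{SectSHolFunct}. For each large $N$, I would fix a bounded discrete domain $\O_N^\mesh \ss \H^\mesh$ (say a half-disc of radius $N$) and solve the discrete BVP: $F_N^\mesh$ is s-holomorphic in $\O_N^\mesh$, satisfies $F_N^\mesh(\z) \parallel (\t(\z))^{-\frac{1}{2}}$ on the real-line part of $\pa \O_N^\mesh$, and satisfies an auxiliary Riemann-type condition on the ``far'' semicircle chosen so that (after the boundary modification trick of Section \ref{SectBModTrick}) the primitive $H_N^\mesh = \Im\int(F_N^\mesh)^2 d^\mesh z$ matches $\Im z$ there. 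Existence and uniqueness for each such finite BVP follow verbatim from the argument at the start of Section \ref{SectFKobservable}: the boundary modification reduces matters to Dirichlet data for $H_N^\mesh$ on both $\G$ and $\G^*$, and the sandwich $0 \le H_N^\mesh|_{\G^*}\le H_N^\mesh|_\G \le $ (harmonic majorant) together with the sub-/super-harmonicity forces the difference of two solutions to vanish.

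Second, I would pass to the limit $N\to\iy$. The sub-/super-harmonicity of $H_N^\mesh$ on $\G/\G^*$, combined with the boundary matching $H_N^\mesh \asymp \Im z$ on $\pa \O_N^\mesh$, yields a uniform a priori bound $H_N^\mesh(z) \le \const\cdot\Im z$ on any compact $K\ss\H$, and similarly from below via Remark~\ref{RemHGasympG*}. Theorem~\ref{FboundH} then provides uniform $L^\iy$ and discrete-Lipschitz control of $F_N^\mesh$ on $K$, so Arzel\`a--Ascoli produces a subsequential limit $\cF^\mesh$ which is s-holomorphic in $\H^\mesh_\DS$, satisfies the boundary condition, and whose primitive $H^\mesh$ is comparable to $\Im z$ in the bulk.

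Third, I address the quantitative rate $\cF^\mesh(z)=1+O(\mesh^{\frac{1}{2}}(\Im z)^{-\frac{1}{2}})$ and the boundedness $|\cF^\mesh|\asymp 1$ on $\pa\H^\mesh$. The latter is immediate from Proposition~\ref{PropBoundHar} applied at unit height, since $|\cF^\mesh(b^\mesh)|^2\asymp H^\mesh(c^\mesh)/t\asymp 1$. For the former, set $G^\mesh := \cF^\mesh - 1$; although the constant $1$ does not exactly satisfy the discrete Riemann boundary condition, its mismatch at each boundary point is controlled by the rhombus-angle oscillation and turns out to contribute boundary data of size $O(\mesh)$ to $H_G^\mesh := \Im\int (G^\mesh)^2 d^\mesh z$. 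A maximum principle argument on growing half-discs, using sub-/super-harmonicity of $H_G^\mesh$ on $\G/\G^*$, then gives $|H_G^\mesh(z)|\le \const\cdot\sqrt{\mesh\cdot \Im z}$ in the bulk (the correct scaling being dictated by the $\sqrt{\cdot}$ behaviour of solutions to the half-plane Dirichlet problem with boundary data of bounded total mass $O(\mesh)$). Theorem~\ref{FboundH} applied locally then upgrades this to $|G^\mesh(z)|^2\le\const\cdot H_G^\mesh/\Im z\le \const\cdot \mesh/\Im z$, which is the claimed rate.

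Finally, uniqueness. If $\cF_1^\mesh,\cF_2^\mesh$ both satisfy the theorem, their difference $G^\mesh$ is s-holomorphic, parallel to $(\t(\z))^{-\frac{1}{2}}$ on $\pa\H^\mesh_\DS$, and satisfies $G^\mesh(z)\to 0$ as $\Im z\to\iy$. The boundary modification trick gives Dirichlet zero boundary values for $H_G^\mesh$, Proposition~\ref{HarnackIntF^2} propagates $G^\mesh\to 0$ to $H_G^\mesh\to 0$ at infinity, and the sandwich $0\le H_G^\mesh|_{\G^*}\le H_G^\mesh|_{\G}\le 0$ (by maximum principle on exhausting half-discs, using sub-/super-harmonicity) forces $H_G^\mesh\equiv 0$, hence $G^\mesh\equiv 0$. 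The main obstacle in the whole scheme is pinning down the precise $\mesh^{\frac{1}{2}}$ in the boundary defect of the naive approximation $1$ and establishing the correct $\sqrt{\mesh\cdot\Im z}$ growth for $H_G^\mesh$; everything else is a fairly routine application of the s-holomorphic toolkit from Section~\ref{SectSHolFunct}.
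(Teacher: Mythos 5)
Your overall architecture (finite-domain approximants, compactness via Theorem~\ref{FboundH}, sub-/super-harmonic sandwich for uniqueness) matches the paper's, but two steps have genuine gaps. First, existence of your finite-domain approximants is not established. The paper takes for its approximants the actual spin-Ising fermions in growing rectangles $R(4n,2n)$ with marked point $a^\mesh_n\approx 2ni$; their existence is combinatorial, coming from the partition-function construction of Section~\ref{SectSpinIsing}. The argument at the start of Section~\ref{SectFKobservable} that you invoke gives only \emph{uniqueness} of such boundary value problems, never existence. Moreover, your ``auxiliary Riemann-type condition on the far semicircle chosen so that $H^\mesh_N$ matches $\Im z$'' is not a linear condition on $F^\mesh_N$ (prescribing $H_N=\Im\int (F_N)^2$ is quadratic in $F_N$), so even the well-posedness of your half-disc problem is unclear.

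Second, and more seriously, the route to $\cF^\mesh=1+O(\mesh^{1/2}(\Im z)^{-1/2})$ via $G^\mesh:=\cF^\mesh-1$ does not work. On a general isoradial discretization of $\H$ the constant $1$ does \emph{not} satisfy $F(\z)\parallel(\t(\z))^{-1/2}$ on the zigzag boundary: the defect is $O(1)$ per boundary rhombus, not $O(\mesh)$, and indeed $\cF^\mesh$ itself is only $\asymp 1$ (not $1+o(1)$) on $\pa\H^\mesh_\DS$. Hence $G^\mesh=O(1)$ near the boundary, the increments of $H_G^\mesh$ along a length-$s$ stretch of boundary sum to $O(s)$ rather than $O(\mesh)$, and your bound $|H^\mesh_G|\le\const\sqrt{\mesh\,\Im z}$ is circular: it presupposes the smallness of $G^\mesh$ you are trying to prove. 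You also cannot assume in advance that the limiting constant is $1$. The paper instead works with $H^\mesh=\Im\int^\mesh(\cF^\mesh(z))^2d^\mesh z$, which \emph{does} have clean Dirichlet data after the boundary modification trick; it rescales the lattice by $\ve\to 0$, identifies the limit of $\cH^{\ve\mesh}$ as a nonnegative harmonic function vanishing on $\R$, concludes $\cH^\mesh(v)=\mu\cdot(\Im v+O(\mesh))$ for a lattice-dependent $\mu>0$, recovers $(\cF^\mesh)^2=\mu+O(\mesh^{1/2}(\Im z)^{-1/2})$ from the Lipschitz estimates of Theorem~\ref{FboundH} applied to increments $|v'-v|\asymp\mesh^{1/2}(\Im z)^{1/2}$, and only then renormalizes by $\mu^{-1/2}$. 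A smaller point: in your uniqueness step $H^\mesh_G$ does not tend to $0$ at infinity; integrating $(G^\mesh)^2=O(\mesh(\Im z)^{-1})$ along vertical paths gives $O(\mesh\log(\mesh^{-1}\Im v))$, and one needs this \emph{sublinear growth} combined with harmonic measure estimates in large rectangles (Lemma~\ref{LemmaDhmR}) to force $H^\mesh_G\equiv 0$.
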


\begin{remark}
If $\pa\H^\mesh_\DS$ is a straight line (e.g., for the proper oriented square or
triangular/hexagonal grids), then $\cF^\mesh\equiv 1$ easily solves the problem.
\end{remark}

\begin{proof}
\emph{Uniqueness.} Let $F^\mesh_1$, $F^\mesh_2$ be two different solutions. Clearly,
$F^\mesh:=F_1^\mesh\!-\!F_2^\mesh$ is \mbox{s-holomorphic} and satisfies the same boundary
conditions on $\pa\H^\mesh_\DS$. Thus we can set $H^\mesh:=\Im\int^\mesh(F^\mesh(z))^2d^\mesh
z$, where $H^\mesh=0$ on both $\pa \H^\mesh_{\G^*}$ and $\pa\H^\mesh_{\wt{\G}}$ (see
Section~\ref{SectBModTrick}).

Since $(F^\mesh(z))^2=O(\mesh\cdot(\Im z)^{-1})$, the integration over ``vertical'' paths
gives
\[
H^\mesh(v)=O(\mesh\cdot \log(\mesh^{-1}\Im v))\quad \mathrm{as}\quad \Im v\to\infty,
\]
so $H^\mesh$ grows sublinearly as $\Im v\to\infty$ which is impossible. Indeed, using simple
estimates of the discrete harmonic measure (see Lemma \ref{LemmaDhmR}) in big rectangles
$R(2n,n)$, $n\to\infty$, and sub-/super-harmonicity of $H^\mesh$ on $\G$/$\G^*$ together with
the Dirichlet boundary conditions on the boundary $\pa\H^\mesh_\L$, we conclude that
\[
H^\mesh_\G(u)\le \lim_{n\to \infty} O(\mesh\cdot \log(\mesh^{-1}n))\cdot (\Im
u\!+\!2\mesh)n^{-1}=0\quad\mathrm{for~any}~~u\in\H^\mesh_\G,
\]
and, similarly, $H^\mesh_{\G^*}(w)\ge 0$ for any $w\in\H^\mesh_{\G^*}$. Thus, $H^\mesh\equiv
0$, and so $F^\mesh_1\equiv F^\mesh_2$.

\smallskip

\noindent \emph{Existence.} We construct $\cF^\mesh$ as a (subsequential) limit of holomorphic
fermions in \emph{increasing} discrete rectangles. Let \emph{$\mesh$ be fixed},
$b^\mesh\in\DS$ denote the closest to $0$ boundary vertex, and $R^\mesh_n$ denote
discretizations (see Fig.~\ref{Fig:OsupsetR}A) of the rectangles
\[
R_n=R(4n,2n):=(-4n;4n)\ts (0;2n).
\]
Let $F_n^\mesh:R^\mesh_{n,\DS}\to\C$ be the discrete s-holomorphic fermion solving the
boundary value problem (A${}^\circ$)\&(B${}^\circ$) in $R^\mesh_n$ with $a^\mesh_n$ being the
discrete approximations of the points $2ni$. For the time being, we normalize $F_n^\mesh$ by
the condition
\[
|F^\mesh(b^\mesh_n)|=1.
\]
Having this normalization, it follows from the discrete Harnack principle
(Propositions~\ref{HarnackIntF^2} and \ref{PropBoundHar}) that $H_n^\mesh\asymp n$ everywhere
near the segment $[-2n\!+\!in;2n\!+\!in]$. Moreover, since $H=0$ on the lower boundary, one
also has $H\le \const \cdot n$ everywhere in the smaller rectangle $R^\mesh_\DS(2n,n)$ (the
proof mimics the proof of Proposition~\ref{HarnackIntF^2}). Thus, estimating the discrete
harmonic measure of $U^\mesh(2n,n)_\G\cup V^\mesh_\G(2n,n)$ in $R^\mesh_\G(2n,n)$ from any
\emph{fixed} vertex $v^\mesh\in \H^\mesh_\L$ and using the subharmonicity of $H\big|_\G$, one
obtains
\[
H^\mesh_n(v^\mesh)\le \frac{\Im v^\mesh\!+\!2\mesh}{n}\cdot \const\cdot n\le\const\cdot (\Im
v^\mesh\!+\!2\mesh),
\]
if $n=n(v^\mesh)$ is big enough. Moreover, since $H^\mesh\big|_{\G^*}$ is superharmonic, one
also has the inverse estimate for $v^\mesh$ near the imaginary axis $i\R_+$:
\[
H^\mesh_n(v^\mesh)\ge \const\cdot (\Im v^\mesh\!+\!2\mesh),\quad\mathrm{if~}|\Re
v^\mesh|\le\mesh.
\]
Further, Theorem~\ref{FboundH} applied in $(\Re z-\tfrac{1}{2}\Im z;\Re z+\tfrac{1}{2}\Im
z)\ts (\tfrac{1}{2}\Im z;\tfrac{3}{2}\Im z)$ gives $|F^\mesh_n(z^\mesh)|\le\const$ for any
$z^\mesh\in H^\mesh_\DS$, if $n=n(z^\mesh)$ is big enough.

Note that there are only countably many points $v^\mesh\in\H^\mesh_\L$ and
$z^\mesh\in\H^\mesh_\DS$. Since for any fixed vertex the values $H^\mesh_n(v^\mesh)$ and
$F^\mesh_n(z^\mesh)$ are bounded, we may choose a subsequence $n=n_k\to\infty$ so that
\[
H_n^\mesh(v^\mesh)\to\cH^\mesh(v^\mesh)~~\mathrm{and}~~F_n^\mesh(z^\mesh)\to
\cF^\mesh(z^\mesh)\quad \mathrm{for\
each}~~v^\mesh\in\H^\mesh_\L~~\mathrm{and}~~z^\mesh\in\H^\mesh_\DS,
\]
It's clear that $\cF^\mesh:\H^\mesh_\DS\to\C$ is s-holomorphic,
$\cH^\mesh=\Im\int^\mesh(\cF^\mesh(z))^2d^\mesh z\ge 0$, $\cH^\mesh$ and $\cF^\mesh$ satisfy
the same boundary conditions as $H^\mesh_n$, $F^\mesh_n$, and $\cF^\mesh(b^\mesh)=1$.
Moreover,
\begin{equation}
\label{xHFestim}
\begin{aligned}
\cH^\mesh(v^\mesh)& =O(\Im v^\mesh\!+\!2\mesh),\quad\cF^\mesh(z^\mesh)=O(1)
~~\mathrm{uniformly~in}~\H^\mesh, \cr &\mathrm{and}~~\cH^\mesh(v^\mesh)\asymp (\Im
v^\mesh\!+\!2\mesh)~~\mathrm{for}~~v^\mesh~~\mathrm{near}~~i\R_+.
\end{aligned}
\end{equation}

Now we are going to improve this estimate and show that, uniformly in $H^\mesh_\L$,
$\cH^\mesh(v)=\mu\cdot(\Im v+ O(\mesh))$ for some $\mu>0$. For this purpose, we \emph{re-scale
our lattice and functions} by a small factor $\ve\to 0$. Let
\[
v^{\ve\mesh}:=\ve v^\mesh\,,\quad z^{\ve\mesh}:=\ve z^\mesh\,,\quad \mathrm{and}\quad
\cH^{\ve\mesh}(v^{\ve\mesh}):=\ve\cH^\mesh(v^\mesh)\,,\quad
\cF^{\ve\mesh}(z^{\ve\mesh}):=\cF^\mesh(z^\mesh).
\]
Note that the uniform estimates (\ref{xHFestim}) remains valid for the re-scaled functions.
Therefore, Theorem~\ref{FboundH} guarantees that the functions $\cH^{\ve\mesh}$ and
$\cF^{\ve\mesh}$ are uniformly bounded and equicontinuous on compact subsets of $\H$, so,
taking a subsequence, we may assume
\[
\cH^{\ve\mesh}(v)\rra h(v)\quad \mathrm{uniformly~on~compact~subsets~of}~\H.
\]
Being a limit of discrete subharmonic functions $H^{\ve\mesh}_\G$ as well as discrete
superharmonic functions $H^{\ve\mesh}_{\G^*}$, the function $h$ is harmonic. Moreover, it is
nonnegative and, due to (\ref{xHFestim}), has zero boundary values everywhere on $\R$. Thus,
$h(v)\equiv \mu v$ for some $\mu>0$ (the case $\mu=0$ is excluded by the uniform double-sided
estimate of $\cH^{\ve\mesh}$ near $i\R_+$).

Thus, for any fixed $s\gg t>0$ one has
\[
\cH^{\ve\mesh}\rra \mu t~~~\mathrm{on~~~}[-s\!+\!it;s\!+\!it]~~~\mathrm{as}~~~\ve\to 0.
\]
For the original function $H^\mesh$, this means
\[
\cH^\mesh(v^\mesh)= (\mu+o_{k\to\iy}(1))\cdot \Im v^\mesh~~~
\mathrm{uniformly~~~on~~~}U^\mesh(ks,kt)~~~\mathrm{as}~~~k=\ve^{-1}\to\infty.
\]

Estimating the discrete harmonic measure in (big) rectangles $R^\mesh(ks,kt)$ from a fixed
vertex $v^\mesh$ (see Lemma~\ref{LemmaDhmR}) and using subharmonicity of $H\big|_\G$ and
superharmonicity of $H\big|_{\G^*}$, we obtain
\[
\cH^\mesh(v^\mesh)=\lt[\frac{\Im v^\mesh+O(\mesh)}{kt}+O\lt(\frac{|v^\mesh|\cdot kt}{(ks)^2}
\rt)\rt]\cdot(\mu+o_{k\to\iy}(1))\cdot kt+O\lt(\frac{|v^\mesh|\cdot kt}{(ks)^2}\rt)\cdot
O(kt),
\]
where the $O$-bounds are uniform in $v^\mesh$, $t$ and $s$, if $k=k(v^\mesh)$ is big enough.
Passing to the limit as $k\to\infty$, we arrive at
\[
\cH^\mesh(v^\mesh)=\mu\cdot(\Im v^\mesh+O(\mesh))+O(|v^\mesh|\cdot t^2/s^2),
\]
where the $O$-bound is uniform in $v^\mesh$ and $t,s$. Therefore,
\[
\cH^\mesh(v^\mesh)=\mu\cdot(\Im v^\mesh+O(\mesh))~~~\mathrm{uniformly~~in}~~\H^\mesh.
\]

It follows from (\ref{xHFestim}) and Theorem~\ref{FboundH} that both $\cF^\mesh$ and
$(\cF^\mesh)^2$ are uniformly Lipschitz in each strip $\b\le \Im \z\le 2\b$ with the Lipschitz
constant bounded by $O(\b^{-1})$. Taking some $v\in\H^\mesh_\L$ near $z$ and
$v'\in\H^\mesh_\L$ such that $|v'-v|\asymp \mesh^{1/2}(\Im z)^{1/2}$, we obtain
\[
\cH^\mesh(v')-\cH^\mesh(v)=\Im\int^\mesh_{[v,v']} (F^\mesh(\z))^2 d^\mesh \z =
\Im[(\cF^\mesh(z))^2(v'\!-\!v)] +O\lt(\frac{|v'\!-\!v|^2}{\Im z}\rt),
\]
i.e., $\Im[(\cF^\mesh(z))^2(v'\!-\!v)]=\mu\cdot\Im(v'\!-\!v)+O(\mesh)$ for all $v'$. Thus,
\[
(\cF^\mesh(z))^2=\mu+O({\mesh^{\frac{1}{2}}}\cdot {(\Im
z)^{-\frac{1}{2}}})~~\mathrm{uniformly~in}~\H^\mesh_\DS.
\]
Since $\cF^\mesh$ is Lipschitz with the Lipschitz constant bounded by $O((\Im z)^{-1})$ (see
above), this allows us to conclude that
\[
\pm\cF^\mesh(z)=\mu^{\frac{1}{2}}+O(\mesh^{\frac{1}{2}}\cdot(\Im
z)^{-\frac{1}{2}})~~\mathrm{uniformly~in}~\H^\mesh_\DS
\]
(for some choice of the sign). Thus, the function $\wt{\cF}^\mesh:=\pm\mu^{-1/2}\cF^\mesh$
satisfies the declared asymptotics and boundary conditions. Moreover,
$|\wt{\cF}^\mesh(b^\mesh)|=\mu^{-1/2}\asymp 1$. Since such a function $\wt{\cF}^\mesh$ is
unique, all other values $|\wt{\cF}^\mesh|$ on the boundary $\pa\H^\mesh_\DS$ are $\asymp 1$
too.
\end{proof}

\begin{figure}
\centering{\begin{minipage}[b]{0.4\textwidth}
\centering{\includegraphics[width=\textwidth]{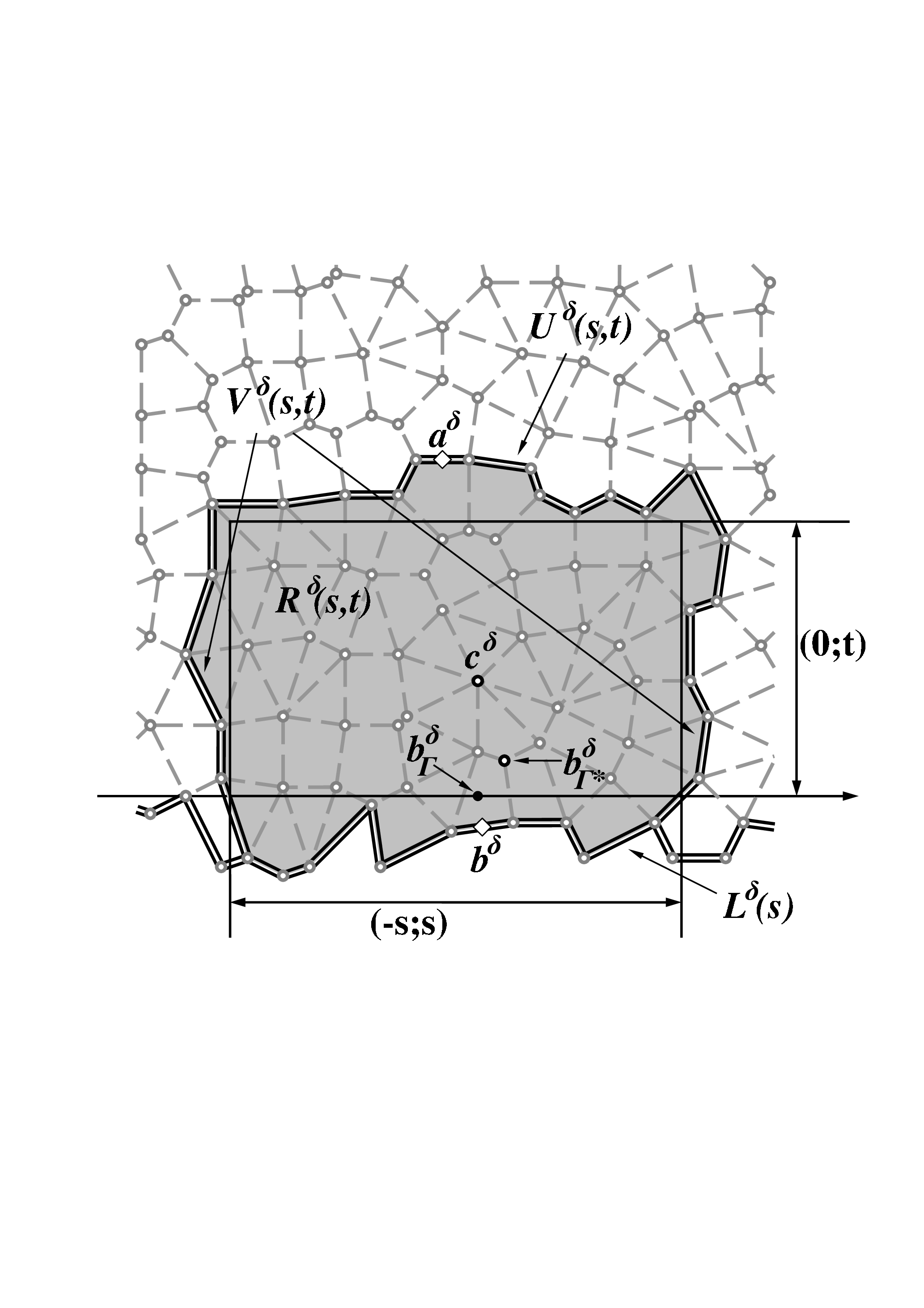}}

\bigskip

\textsc{(A)}
\end{minipage}
\hskip 0.05\textwidth
\begin{minipage}[b]{0.5\textwidth}
\centering{\includegraphics[width=\textwidth]{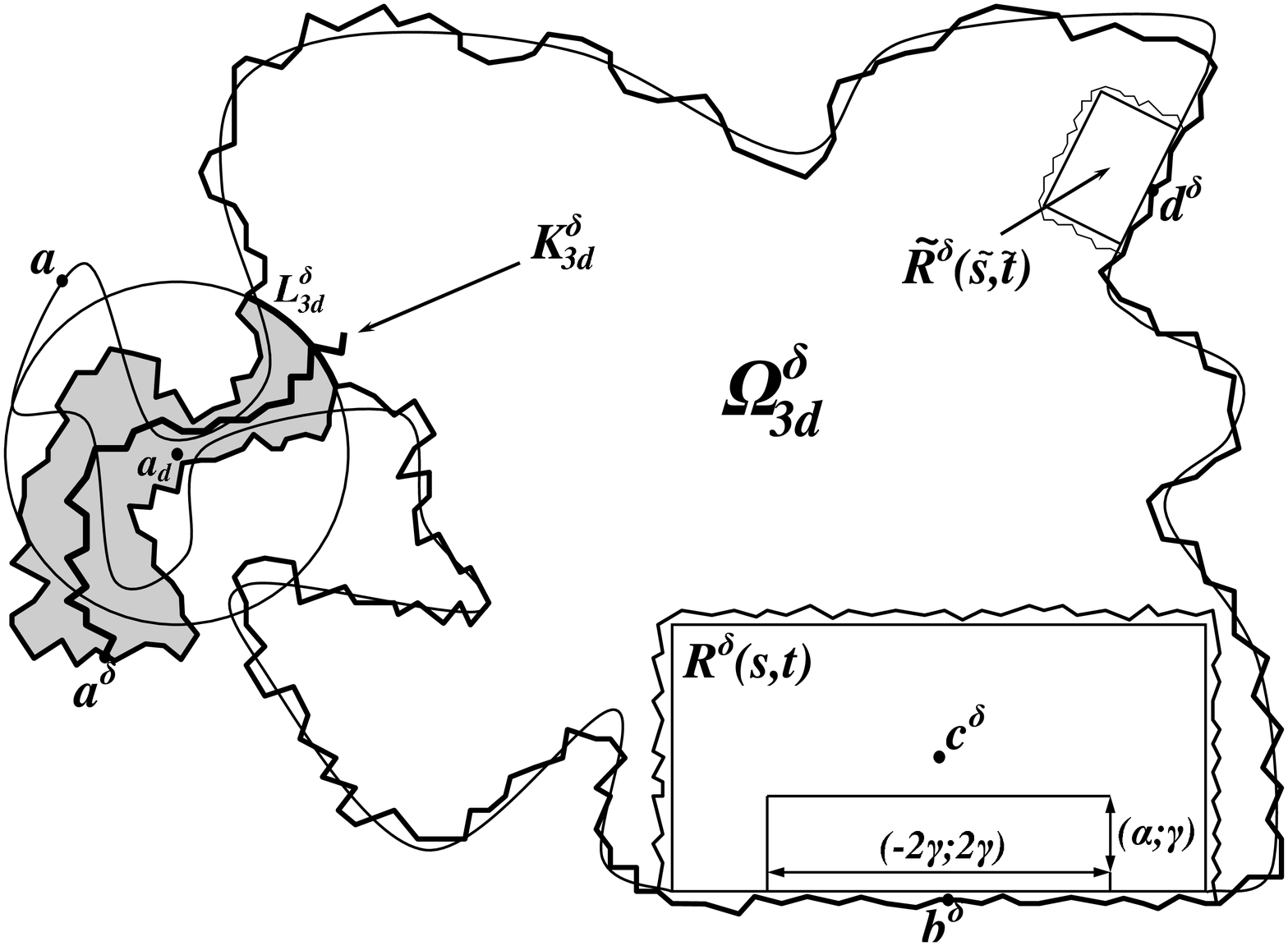}}

\bigskip

\textsc{(B)}
\end{minipage}}
\caption{\label{Fig:OsupsetR} \textsc{(A)} Discretizations of the upper half-plane $\H$ and
the rectangle $R(s,t)=(-s;s)\ts(0;t)$. Boundary points $a^\mesh,b^\mesh\in\DS$ approximate the
points $i\,t$ and $0$, while $c^\mesh\in\G^*$ approximate the center $\tfrac{1}{2}it$. We
denote by $L^\mesh(s)$, $U^\mesh(s,t)$ and $V^\mesh(s,t)$ the lower, upper and vertical parts
of $\pa R^\mesh(s,t)$, respectively. \textsc{(B)}~To~perform the passage to the limit under
the normalization condition at $b^\mesh$, we assume that $\O^\mesh_\DS\supset
R^\mesh_\DS(s,t)$ and $\pa\O^\mesh_\DS\setminus\{a^\mesh\}\supset L^\mesh_\DS(s)$. For $\mesh$
small enough, the discrete harmonic measure from $c^\mesh$ of any path $K^\mesh=K^\mesh_{3d}$
going from $\O^\mesh_{3d}$ to $a^\mesh$ is uniformly bounded from below. We can similarly use
any point $d^\mesh$ lying on the ``straight'' part of the boundary for the (different)
normalization of the observable.}
\end{figure}

\subsection{Main convergence theorem.}

To handle the normalization at $b^\mesh$ of our discrete observable, from now on we assume
that, for some $s,t>0$,
\begin{equation}
\label{OcontainsR}
\begin{array}{l}
\O^\mesh_\DS~\mathit{contains~the~discrete~rectangle}~R^\mesh_\DS(s,t),\cr
\pa\O^\mesh_\DS\setminus\{a^\mesh\}~\mathit{contains~the~lower~side}\ L^\mesh_\DS(s)\
\mathit{of}\ R^\mesh_\DS(s,t), \vphantom{\big|_|^|}\cr
\mathit{and}~b^\mesh~\mathit{is~the~closest\
to}~0~\mathit{vertex~of}~\pa\O^\mesh_\DS~~\mathit{(see~Fig.~\ref{Fig:OsupsetR})}.
\end{array}
\end{equation}
Some assumption of a kind is certainly necessary: one can imagine continuous
domain with such an irregular approach to $b$, that any approximation is forced
to have many ``bottlenecks,'' ruining the estimates.

Let $f^\mesh(z)=f^\mesh_{(\O^\mesh;a^\mesh,b^\mesh)}(z)$ denote the solution of the following
boundary value problem inside the \emph{polygonal} domain $\wt{\O}^\mesh$ (here the tilde
means that {\it we slightly modify the original polygonal domain $\O^\mesh$ near $b^\mesh$,
replacing the polyline $L^\mesh(s)$ by the straight real segment $[-s;s]$}, cf.
\cite{chelkak-smirnov-dca} Theorem 3.20):

\begin{quotation}
{\it \noindent {\bf (a${}^\circ$) holomorphicity:} $f^\mesh$ is holomorphic in $\wt{\O}^\mesh$;

\noindent {\bf (b${}^\circ$) boundary conditions:} $f^\mesh(\z)\parallel (\t(\z))^{-\frac{1}{2}}$ for
$\z\in\pa\wt{\O}^\mesh$, where $\t(\z)$ is the tangent to $\pa\wt{\O}^\mesh$ vector oriented
in the counterclockwise direction, $f^\mesh$ is bounded away from $a^\mesh$;

\noindent {\bf (c${}^\circ$) normalization:} the function
$h^\mesh=h^\mesh_{\O^\mesh,a^\mesh,b^\mesh}:=\Im \int (f^\mesh(\z))^2d\z$ is nonnegative in
$\wt{\O}^\mesh$, bounded away from $a^\mesh$, and
\[
f^\mesh(0)=[{\pa_y}h^\mesh(0)]^{1/2}=1.
\]}
\end{quotation}

As in Section {\ref{SectFKobservable}}, (a${}^\circ$) and (b${}^\circ$) guarantee that
$h^\mesh$ is harmonic in $\O^\mesh$ and constant on $\pa\wt{\O}^\mesh$. Thus,
\[
f^\mesh=\sqrt{2i\pa h^\mesh},\quad \mathrm{where}~~h^\mesh=P_{(\O^\mesh;a^\mesh,0)},
\]
is the Poisson kernel in $\O^\mesh$ having mass at $a^\mesh$ and normalized at $0$. In other
words, $h^\mesh$ is the imaginary part of the conformal mapping (normalized at $0$) from
$\O^\mesh$ onto the upper half-plane $\H$ sending $a^\mesh$ and $0$ to $\infty$ and $0$,
respectively. Note that $\pa h^\mesh \ne 0$ everywhere in $\O^\mesh$, thus $f^\mesh$ is
well-defined in $\O^\mesh$ up to a sign, which is fixed by
$f^\mesh(0)=+1$.

\begin{theorem}[\bf convergence of the spin-observable]
\label{ThmSpinConvergence} The discrete solutions of the Riemann-Hilbert boundary value
problems {(A${}^\circ$)\&(B${}^\circ$)\&(C${}^\circ$)} are uniformly close in the bulk to their continuous counterparts
$f^\mesh$, defined by {(a${}^\circ$)\&(b${}^\circ$)\&(c${}^\circ$)}. Namely, there exists $\ve(\mesh)=\ve(\mesh,r,R,s,t)$
such that for all discrete domains $(\O^\mesh_\DS;a^\mesh,b^\mesh)\ss B(0,R)$ satisfying
(\ref{OcontainsR}) and for all $z^\mesh\in \O^\mesh_\DS$ lying in the same connected component
of the $r$-interior of $\O^\mesh$ as the neighborhood of $b^\mesh$ (see
Fig.~\ref{Fig:OsupsetR}) the following holds true:
\[
|F^\mesh(z^\mesh)-f^\mesh(z^\mesh)|\le \ve(\mesh)\to 0\quad \mathit{as}\quad \mesh\to 0
\]
(uniformly with respect to the shape of $\O^\mesh$ and the structure of $\DS^\mesh$).
\end{theorem}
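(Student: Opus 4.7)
The plan follows the Carathéodory-compactness contradiction strategy of Theorem \ref{ThmFKConvergence}, but must overcome two obstacles absent from the FK case: the integral $H^\mesh := \Im\int^\mesh(F^\mesh(z))^2 d^\mesh z$ is no longer uniformly bounded (its continuous counterpart is a Poisson-type kernel and blows up at $a$), and the normalization (C${}^\circ$) lives at the single point $b^\mesh$ rather than on a whole boundary arc. These are bridged by the boundary Harnack principle (Proposition \ref{PropBoundHar}) and the universal half-plane fermion $\cF^\mesh$ provided by Theorem \ref{ThmCFexists}.

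Suppose the conclusion fails: for some $\ve_0>0$ and a sequence $\mesh\to 0$ one has $|F^\mesh(z^\mesh)-f^\mesh(z^\mesh)|\ge \ve_0$. Translating so that $b^\mesh=0$ and passing to a subsequence, I may assume $(\O^\mesh;a^\mesh,b^\mesh)\CaraTo (\O;a,0)$ and $z^\mesh\to z^\infty\in\O\setminus\{0\}$, with $z^\infty$ in the $r$-interior of $\O$. Fix the additive constant for $H^\mesh$ as in \eqref{HmeshBV-spin}, so that $H^\mesh\ge 0$ with zero Dirichlet data everywhere on the modified boundary except near $a^\mesh$. The normalization (C${}^\circ$) together with $|\cF^\mesh(b^\mesh)|\asymp 1$ from Theorem \ref{ThmCFexists} gives $|F^\mesh(b^\mesh)|\asymp 1$, whence Proposition \ref{PropBoundHar} applied inside $R^\mesh_\DS(s,t)$ yields $H^\mesh(c^\mesh)\asymp t$, supplying a macroscopic lower bound at a fixed interior point.

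For each compact $K\subset\O\setminus\{a\}$, since $H^\mesh|_\G$ is subharmonic and vanishes on the modified boundary away from $a^\mesh$, one dominates it by the discrete harmonic measure from $K$ of a small ball around $a^\mesh$ times the supremum of $H^\mesh$ on that ball; iterated application of Proposition \ref{HarnackIntF^2} from the lower bound at $c^\mesh$ controls the latter supremum, and the uniform convergence of discrete harmonic measure from \cite{chelkak-smirnov-dca} controls the former, yielding $H^\mesh\le C(K)$ on $K$. Theorem \ref{FboundH} then gives uniform bounds and equicontinuity for $F^\mesh$ on compacts of $\O\setminus\{a\}$. Passing to a further subsequence, $H^\mesh\rra H$ and $F^\mesh\rra F$ uniformly on such compacts, with $H$ harmonic (as a joint limit of discrete sub- and super-harmonic functions on $\G$ and $\G^*$), $F^2=2i\pa H$, $H\ge 0$, and $H\equiv 0$ on $\pa\O\setminus\{a\}$ by discrete-to-continuous boundary regularity.

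The main obstacle is to match the normalization: to identify $H$ with the continuous Poisson-type kernel $h^\mesh_{(\O;a,0)}$ (which is uniquely characterized among nonnegative $\O$-harmonic functions with the correct boundary values by $\pa_y h^\mesh(0)=1$), I must show $F(0)=1$. The discrete condition $F^\mesh(b^\mesh)=\cF^\mesh(b^\mesh)$ and the asymptotic $\cF^\mesh(z)=1+O(\mesh^{1/2}(\Im z)^{-1/2})$ from Theorem \ref{ThmCFexists}, together with the flat-strip assumption \eqref{OcontainsR}, allow comparison of $F^\mesh$ with $\cF^\mesh$ inside $R^\mesh_\DS(s,t)$: the difference is s-holomorphic with Riemann boundary data on $L^\mesh_\DS(s)$, so a variant of the uniqueness argument of Section \ref{SectBoundHarnack} (applied to $H^\mesh-\mu\,\Im(\cdot)$ for suitable $\mu$) forces $F^\mesh\to 1$ uniformly on compact subsets of a mesoscopic half-disc around $b^\mesh$. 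Equicontinuity transfers this to $F(0)=1$ in the limit, so $H=h^\mesh$ and $F=f^\mesh$, contradicting $|F^\mesh(z^\mesh)-f^\mesh(z^\mesh)|\ge\ve_0$. Sign consistency across the connected component of the $r$-interior containing $b^\mesh$ follows exactly as in Remark \ref{RemTheSameSign}.
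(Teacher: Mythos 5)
Your overall strategy coincides with the paper's: Carath\'eodory compactness plus contradiction, the boundary Harnack principle and the half-plane fermion $\cF^\mesh$ to handle the pointwise normalization at $b^\mesh$, and the comparison $F^\mesh-\mu\cF^\mesh$ in the rectangle to pin down the multiplicative constant. However, your argument for the \emph{uniform boundedness of $H^\mesh$ away from $a^\mesh$} would fail as stated. You propose to dominate $H^\mesh\big|_\G$ on a compact $K$ by the harmonic measure of a small ball around $a^\mesh$ times the supremum of $H^\mesh$ on that ball, controlling the latter by iterated applications of Proposition~\ref{HarnackIntF^2} from the bound $H^\mesh(c^\mesh)\asymp 1$. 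This cannot work: the supremum of $H^\mesh$ over a neighborhood of $a^\mesh$ is \emph{not} uniformly bounded (the limit is a Poisson kernel with a pole at $a$), and even if one replaces the ball by the crosscut $\pa B(a,d)\cap\O$ at fixed distance from $a$, that crosscut contains points arbitrarily close to $\pa\O$, where any interior Harnack chain from $c^\mesh$ requires a number of steps growing like $\log(1/\dist(\cdot,\pa\O))$ and hence a constant that blows up. Note also that no quantitative decay of the harmonic measure of $B(a,d)$ is available to offset this, since no boundary regularity is assumed away from $b^\mesh$. The paper's proof needs a genuinely different idea here: the maximum $M^\mesh_{3d}$ of the subharmonic function $H^\mesh_\G$ propagates along an \emph{increasing path} of neighbors which, because $H^\mesh_\G$ vanishes on the modified boundary except at $a^\mesh$, must terminate at $a^\mesh$; a nearby dual path then carries $H^\mesh_{\G^*}\ge\const\cdot M^\mesh_{3d}$ by Remark~\ref{RemHGasympG*}, and since this path is a macroscopic connected set its discrete harmonic measure from $c^\mesh$ is bounded below, whence superharmonicity of $H^\mesh_{\G^*}$ gives $M^\mesh_{3d}\le\const\cdot H^\mesh(c^\mesh)$. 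Without this (or an equivalent) device your normal-families step has no uniform bound to start from.

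The final step ($\mu=1$) is sketched in the right spirit but is also under-specified: ``a variant of the uniqueness argument'' does not by itself convert the single-point condition $F^\mesh(b^\mesh)=\cF^\mesh(b^\mesh)$ into $F(0)=1$ for the limit, because $0$ is a boundary point and the convergence $F^\mesh\rra F$ is only interior. The paper's actual mechanism is a two-sided quantitative estimate: with $F^\mesh_0:=F^\mesh-\mu\cF^\mesh$ and $H^\mesh_0$ normalized to vanish on $L^\mesh(s)$, one has $H^\mesh_0(b^\mesh_{\mathrm{int}})\asymp\mesh\,|1-\mu|^2$ from below (via s-holomorphicity at $b^\mesh$), while the harmonic-measure estimate of Lemma~\ref{LemmaDhmR} in the mesoscopic rectangle $R^\mesh(2\g,\g)$, combined with $F^\mesh_0\rra O(\g)$ there, gives $H^\mesh_0(b^\mesh_{\mathrm{int}})\le\mesh\cdot O(\g^{-1}[\a+o_{\mesh\to0}(1)]+\g^2)$; letting $\mesh\to0$, then $\a\to0$, then $\g\to0$ forces $\mu=1$. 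You should make this ordering of limits and the lower bound at $b^\mesh_{\mathrm{int}}$ explicit, since that is precisely where the hypothesis (\ref{OcontainsR}) on the straight boundary segment is used.
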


Moreover, it is easy to conclude from this theorem that the convergence also should hold true
at any boundary point $d^\mesh$ such that $\pa\O^\mesh_\DS$ has a ``straight'' part
near~$d^\mesh$. Namely, as in (\ref{OcontainsR}), let (see Fig.~\ref{Fig:OsupsetR}B)

\begin{quotation}
\noindent \emph{$d^\mesh\in\pa\O^\mesh_\DS$ be a boundary point, the boundary
$\pa\O^\mesh_\DS$ is ``straight'' near $d^\mesh$ and oriented in the (macroscopic) direction
$\t_d:|\t_d|=1$, i.e.,}

\noindent \emph{$\O^\mesh_\DS$ contains the discretization $\wt{R}^\mesh_\DS(\wt{s},\wt{t})$
of the rectangle $d+\t_d\cdot R(\wt{s},\wt{t})$ and $\pa\O^\mesh_\DS\setminus\{a^\mesh\}$
contains the discretization $\wt{L}^\mesh_\DS(\wt{s})$ of the segment
$d+\t_d\cdot[-\wt{s};\wt{s}]$.}
\end{quotation}

\noindent Further, let $\wt{\cF}^\mesh$ denotes the solution of the boundary value problem
(A${}^\circ$)\&(B${}^\circ$) in the discrete half-plane $(d+\t_d\cdot \H)_\DS$ which is
asymptotically equal to $(\t_d)^{-1/2}$ (again, $\wt{\cF}\equiv (\t_d)^{-1/2}$, if one deals
with, e.g., the properly oriented square grid).

\begin{corollary}[\bf convergence of spin-observable on the boundary]
\label{CorBoundarySpinConv} If centers of $R^\mesh_\DS(s,t)$ and
$\wt{R}^\mesh_\DS(\wt{s},\wt{t})$ are connected in the $r$-interior of $\O^\mesh\ss B(0,R)$,
then
\[
|F^\mesh(d^\mesh) - [(\t_d)^{1/2}\wt{\cF}^\mesh(d^\mesh)]\cdot f^\mesh(d)|\le \ve(\mesh)\to
0\quad \mathit{as}\quad \mesh\to 0
\]
(one should replace both $L^\mesh_{\G^*}(s)$ and $\wt{L}^\mesh_{\G^*}(\wt{s})$ by the
corresponding straight segments to define properly the value of the ``continuous'' solution
$f^\mesh(d)$ on the boundary).
\end{corollary}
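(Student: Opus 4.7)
The plan is to reduce the boundary statement to the bulk convergence of Theorem~\ref{ThmSpinConvergence}, using the auxiliary half-plane fermion $\wt{\cF}^\mesh$ as a boundary corrector and the discrete boundary Harnack principle (Proposition~\ref{PropBoundHar}) to pass from bulk to boundary estimates. After translation and rotation I may assume $d=0$ and $\t_d=1$, so the straight segment $[-\wt{s},\wt{s}]$ lies on $\pa\O^\mesh$ and $\wt{\cF}^\mesh$ is the standard half-plane solution of Theorem~\ref{ThmCFexists}. In this setting $f^\mesh$ is real on $[-\wt{s},\wt{s}]$ by the boundary condition (b${}^\circ$), and Schwarz reflection extends it holomorphically across this segment, so $f^\mesh(z)=f^\mesh(d)+O(|z|)$ is Lipschitz near $d=0$.

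The key auxiliary object is the s-holomorphic difference
\[
D^\mesh(z):=F^\mesh(z)-f^\mesh(d)\cdot\wt{\cF}^\mesh(z)
\]
defined on a sub-rectangle $R^\mesh(\rho,\rho)$ with $\mesh\ll\rho\ll\min(r,\wt{s},\wt{t})$. Both $F^\mesh$ and $\wt{\cF}^\mesh$ satisfy the boundary condition $\parallel(\t(\z))^{-\tfrac{1}{2}}$ on the straight bottom, and so does $D^\mesh$; after applying the boundary modification trick of Section~\ref{SectBModTrick}, its primitive $H_D^\mesh:=\Im\int^\mesh(D^\mesh(z))^2d^\mesh z$ vanishes on the modified straight boundary. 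The Corollary reduces to showing $D^\mesh(d^\mesh)\to 0$.

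To estimate $D^\mesh$ in the bulk I combine three ingredients for $z$ at height $\asymp\rho$ above $d$ (which lies in the $r$-interior for a suitable nested sub-rectangle): $F^\mesh(z)=f^\mesh(z)+o_\mesh(1)$ from Theorem~\ref{ThmSpinConvergence}; $f^\mesh(z)=f^\mesh(d)+O(\rho)$ from Lipschitzness; and $\wt{\cF}^\mesh(z)=1+O((\mesh/\rho)^{\tfrac{1}{2}})$ from Theorem~\ref{ThmCFexists}. These yield $|D^\mesh|=o_\mesh(1)+O(\rho+(\mesh/\rho)^{\tfrac{1}{2}})$ on the top of the sub-rectangle. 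Integrating $\Im(D^\mesh)^2$ along short paths from the straight boundary then bounds $H_D^\mesh$ by $O(\rho\cdot\max|D^\mesh|^2)$ on the top, and sub-/super-harmonicity of $H_D^\mesh$ on $\G/\G^*$ together with the comparability of Remark~\ref{RemHGasympG*} propagate this smallness throughout. A boundary-Harnack-type step adapted from Proposition~\ref{PropBoundHar} then transfers $|H_D^\mesh(c^\mesh)|$ at a central point into $|D^\mesh(d^\mesh)|^2\lesssim H_D^\mesh(c^\mesh)/\rho$, and choosing $\rho=\rho(\mesh)\to 0$ slowly (e.g.\ $\rho=\mesh^{\tfrac{1}{3}}$) sends all error terms to zero.

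The main obstacle is that Proposition~\ref{PropBoundHar} is stated under the assumption $H\ge 0$, which need not hold for the difference $H_D^\mesh$. I expect to circumvent this either by adapting the proof of Proposition~\ref{PropBoundHar} to bound $|D^\mesh(d^\mesh)|$ directly from $\max|H_D^\mesh|$ via the two-sided maximum principle (sub-/super-harmonicity on $\G/\G^*$), or by applying Proposition~\ref{PropBoundHar} separately to $F^\mesh$ and $\wt{\cF}^\mesh$ (whose $H$-functions are naturally nonnegative after fixing the additive constant) to get convergence of the moduli $|F^\mesh(d^\mesh)|\to|f^\mesh(d)\cdot\wt{\cF}^\mesh(d^\mesh)|$, and then matching arguments via the common boundary phase $(\t(d^\mesh))^{-\tfrac{1}{2}}$ forced by (B${}^\circ$). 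A secondary technical point is controlling $|D^\mesh|$ along the vertical sides of $R^\mesh(\rho,\rho)$ near the corners, which escape the $r$-interior; the crude regularity bound $|F^\mesh(z)|\lesssim M^{\tfrac{1}{2}}/\dist(z,\pa\O^\mesh)^{\tfrac{1}{2}}$ from Theorem~\ref{FboundH} together with the fact that $H_D^\mesh$ is controlled by the area integral rather than a pointwise bound on that thin region should suffice after the choice of $\rho$.
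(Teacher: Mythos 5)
Your route is genuinely different from the paper's, and the paper's is much shorter. The paper introduces the renormalized observable $\wt F^\mesh:=\frac{\wt{\cF}^\mesh(d^\mesh)}{F^\mesh(d^\mesh)}\cdot F^\mesh$ --- which is precisely the solution of (A${}^\circ$)\&(B${}^\circ$)\&(C${}^\circ$) with the normalization point moved to $d^\mesh$ --- together with its continuous counterpart $\wt f^\mesh:=\frac{(\t_d)^{-1/2}}{f^\mesh(d)}\cdot f^\mesh$. Since these are constant multiples of $F^\mesh$ and $f^\mesh$, the boundary ratio $F^\mesh(d^\mesh)/([(\t_d)^{1/2}\wt{\cF}^\mesh(d^\mesh)]\,f^\mesh(d))$ equals the bulk ratio $\frac{F^\mesh(c^\mesh)}{f^\mesh(c^\mesh)}\cdot\frac{\wt f^\mesh(c^\mesh)}{\wt F^\mesh(c^\mesh)}$ evaluated at the center $c^\mesh$ of $R^\mesh_\DS(s,t)$; two applications of Theorem~\ref{ThmSpinConvergence} (once normalized at $b^\mesh$, once at $d^\mesh$), together with the Harnack-principle facts $|F^\mesh(d^\mesh)|\asymp|f^\mesh(d)|\asymp|f^\mesh(c^\mesh)|\asymp 1$, show this ratio tends to $1$. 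Your plan instead replays at $d^\mesh$ the identification of the normalization constant (the $\mu=1$ step) from the proof of Theorem~\ref{ThmSpinConvergence}, with $f^\mesh(d)$ in the role of $\mu$. That is workable in principle, but it redoes work the theorem has already packaged, and it misses the two-line reduction.

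As written, your key estimate also has a genuine gap. The claim that integrating up from the straight boundary gives $H_D^\mesh=O(\rho\cdot\max|D^\mesh|^2)$ on the top of $R^\mesh(\rho,\rho)$ requires control of $D^\mesh$ on the entire vertical segment from height $0$ to height $\rho$. Theorem~\ref{ThmSpinConvergence} is silent at heights $o(1)$ (its error $\ve(\mesh,r,R,s,t)$ is only known to vanish for \emph{fixed} $r$, which also undercuts the unjustified choice $\rho=\mesh^{1/3}$), and the fallback $|D^\mesh(z)|\lesssim(\Im z)^{-1/2}$ from Theorem~\ref{FboundH} makes $\int|D^\mesh|^2$ logarithmically divergent near the boundary, so it cannot ``suffice'' as you hope. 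What is actually needed is the uniform bound $F^\mesh=O(1)$ all the way up to the straight boundary segment, which the paper derives not from Theorem~\ref{FboundH} but from $H^\mesh(v)=O(\Im v)$ via the harmonic-measure estimates of Lemma~\ref{LemmaDhmR} (this is (\ref{HFuniBound1})). With that in hand one runs the two-scale scheme of the $\mu=1$ step: $|D^\mesh|=O(1)$ below height $\a$ and $|D^\mesh|=o_\mesh(1)+O(\rho)$ on $[\a,\rho]$, giving $H_D^\mesh=O(\a)+O(\rho^3)+o_\mesh(1)$ on $\pa R^\mesh_\G(2\rho,\rho)$, whence $|D^\mesh(d^\mesh)|^2=O(\a/\rho+\rho^2)+o_\mesh(1)$ by Lemma~\ref{LemmaDhmR}; only the one-sided subharmonic maximum principle is used here, so the nonnegativity obstruction you raise for Proposition~\ref{PropBoundHar} does not arise, and letting $\mesh\to0$, then $\a\to0$, then $\rho\to0$ closes the argument without any diagonal choice of $\rho(\mesh)$. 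With these repairs your approach works, but the paper's renormalization trick avoids all of it.
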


\begin{proof}[\bf Proof of Corollary \ref{CorBoundarySpinConv}]
Let $c^\mesh$ be a discrete approximation of the point $c:=\frac{1}{2}it$ and $\wt{c}^\mesh$
be a discrete approximation of the point $\wt{c}:=d+\frac{1}{2}i\t_d\wt{t}$. It follows from
the discrete Harnack principle (Proposition~\ref{PropBoundHar} and
Proposition~\ref{PropHarnack}) that
\[
{|F^\mesh(d^\mesh)|}\asymp {|H^\mesh(\wt{c}^\mesh)|} \asymp {|H^\mesh(c^\mesh)|} \asymp
|F^\mesh(b^\mesh)|\asymp 1,
\]
uniformly in $\O^\mesh$ and $\mesh$, if all the parameters $r,R,s,t,\wt{s},\wt{t}$ are fixed.
Denote by $\wt{F}^\mesh$ the discrete observable $F^\mesh$ renormalized at $d^\mesh$:
\[
\wt{F}^\mesh:= \frac{\wt{\cF}^\mesh(d^\mesh)}{F^\mesh(d^\mesh)}\cdot F^\mesh,
\]
and by $\wt{f}^\mesh$ the corresponding continuous function renormalized at $d$:
\[
\wt{f}^\mesh:=\frac{(\t_d)^{-1/2}}{f^\mesh(d)}\cdot f^\mesh.
\]
Again, $|f^\mesh(d)|\asymp |h^\mesh(\wt{c})|\asymp |h^\mesh(c)|\asymp |f^\mesh(0)|=1$ due to
the Harnack principle. Moreover, since $h^\mesh$ is equal to the imaginary part of the
conformal mapping from $\O^\mesh$ onto the upper half-plane, the Koebe Distortion Theorem
gives
\[
|f^\mesh(c^\mesh)|^2= 2|\pa h^\mesh(c^\mesh)|\asymp |h^\mesh(c^\mesh)|\asymp 1.
\]
One needs to prove that the ratio
\[
\frac{F^\mesh(d^\mesh)}{[(\t_d)^{1/2}\wt{\cF}^\mesh(d^\mesh)]\cdot f^\mesh(d)}=
\frac{F^\mesh}{\wt{F}^\mesh}\cdot \frac{\wt{f}^\mesh}{f^\mesh}=
\frac{F^\mesh(c^\mesh)}{f^\mesh(c^\mesh)}\cdot
\frac{\wt{f}^\mesh(c^\mesh)}{\wt{F}^\mesh(c^\mesh)}
\]
is uniformly close to $1$. This follows from Theorem~\ref{ThmSpinConvergence}, since
$F^\mesh(c^\mesh)$ is uniformly close to $f^\mesh(c^\mesh)$, $\wt{F}^\mesh(c^\mesh)$ is
uniformly close to $\wt{f}^\mesh(c^\mesh)$, and
$|\wt{f}^\mesh(c^\mesh)|\asymp|f^\mesh(c^\mesh)|\asymp 1$.
\end{proof}

\begin{proof}[\bf Proof of Theorem~\ref{ThmSpinConvergence}]
Assume that
\[
|F^\mesh(z^\mesh)-f^\mesh(z^\mesh)|\ge \ve_0>0
\]
for some sequence of domains $\O^\mesh$ with $\mesh\to 0$. Passing to a subsequence, we may
assume that $z^\mesh\to z$. The set of all simply-connected domains $B(z,r)\ss\O\ss B(0,R)$ is
compact in the Carath\'eodory topology, 
so, passing to a subsequence once more, we may assume that
\[
(\O^\mesh;a^\mesh,b^\mesh)~\CaraTo~(\O;a,b)\quad \mathrm{with~respect~to~~} z^\mesh\to
z\in\O~~ \mathrm{as}~~\mesh\to 0.
\]
Note that $\O\supset R(s,t)=(-s;s)\ts(0;t)$, $\pa\O\supset[-s;s]$, and $b^\mesh\to b=0$. Let
$h=h_{(\O;a,b)}$ be the continuous Poisson kernel in $\O$ having mass at $a$ and normalized at
$0$ (i.e., the imaginary part of the properly normalized conformal mapping from $\O$ onto
$\H$). Then,
\[
h^\mesh\rra h~~\mathrm{as}~~\mesh\to 0,
\]
uniformly on compact subsets of $\O$, since this kernel can be easily constructed as a
pullback of the Poisson kernel in the unit disc. Moreover, it gives
\[
f^\mesh= \sqrt{2i\pa h^\mesh}\rra f=\sqrt{2i\pa h}\quad \mathrm{as}~~\mesh\to 0
\]
uniformly on compact subsets of $\O$ (here and below the sign of the square root is chosen
so that $f^\mesh(0)=f(0)=+1$). We are going to prove that, at the same time,
\[
H^\mesh\rra h\quad \mathrm{and}\quad F^\mesh\rra \sqrt{2i\pa h}\quad \mathrm{as}~~\mesh\to
0.
\]

We start with the proof of the {\it uniform boundedness of $H^\mesh$ away from $a^\mesh$}.
Denote by $c:=\tfrac{1}{2}it$ the center of the rectangle $R(s,t)$ and by $c^\mesh\in\G^*$ the
dual vertex closest to~$c$ (see Fig.~\ref{Fig:OsupsetR}B). Let $d\!>\!0$ be small enough and
$\g_d^a\ss B(a_d,\frac{1}{2}d)$ be some crosscut in $\O$ separating $a$ from $c$ in $\O$.
Further, let $L_{3d}^\mesh\ss \O^\mesh\cap \pa B(a_d,3d)$ be an arc separating $a^\mesh$ from
$c^\mesh$ in $\O^\mesh$ (such an arc exists, if $\mesh$ is small enough), and $\O^\mesh_{3d}$
denote the connected components of $\O^\mesh\setminus L^\mesh_{3d}$, containing $c^\mesh$.

The Harnack principle (Propositions~\ref{PropBoundHar} and \ref{HarnackIntF^2}) immediately
give
\[
H^\mesh(c^\mesh)\asymp 1~~~\mathrm{uniformly~~in~~}\mesh,
\]
if $s$ and $t$ are fixed. Let
\[
M^\mesh_{3d}:=\max\{H^\mesh_\G(u^\mesh),~u^\mesh\in(\O^\mesh_{3d})_\G\}.
\]
Because of the subharmonicity of $H\big|_\G$, $ M^\mesh_{3d}= H^\mesh_\G(u_0^\mesh)\le
H^\mesh_\G(u_1^\mesh)\le H^\mesh_\G(u_2^\mesh)\le\dots $ for some path of consecutive
neighbors $K^\mesh_\G=\{u_0^\mesh\sim u_1^\mesh\sim u_2^\mesh\sim\dots\}\ss\G$. Since the
function $H^\mesh_\G$ vanishes everywhere on $\pa\O^\mesh_{\wt{\G}}$ except $a^\mesh$, this
path necessarily ends at $a^\mesh$. Taking on the dual graph a close path
$K^\mesh_{\G^*}=\{w_0^\mesh\sim w_1^\mesh\sim w_2^\mesh\sim\dots\}\ss\G^*$ starting near
$u_0^\mesh$ and ending near $a^\mesh$, we deduce from Remark~\ref{RemHGasympG*} that
\[
H^\mesh_{\G^*}(w_k^\mesh)\ge\const\cdot M^\mesh_{3d}\,.
\]
Then,
\[
H^\mesh_{\G^*}(c^\mesh)\ge \dhm\mesh{c^\mesh}{K^\mesh_{\G^*}}{\O^\mesh_{\G^*}\setminus
K^\mesh_{\G^*}}\cdot \const\cdot M^\mesh_{3d} \ge \const((\O;a),d) \cdot M^\mesh_{3d},
\]
since $H\big|_{\G^*}$ is superharmonic and
\[
\dhm\mesh{c^\mesh}{K^\mesh_{\G^*}}{\O^\mesh_{\G^*}\setminus K^\mesh_{\G^*}} \ge
\textfrac{1}{2}\hm{c^\mesh}{K^\mesh_{\G^*}}{\O^\mesh\setminus K^\mesh_{\G^*}} \ge
\const((\O;a),d)>0
\]
for all sufficiently small $\mesh$'s (see Fig.~\ref{Fig:OsupsetR}B and
\cite{chelkak-smirnov-dca} Lemma 3.14).

Thus, the functions $H^\mesh$ are uniformly bounded away from $a^\mesh$. Due to
Theorem~\ref{FboundH}, we have
\begin{equation}
\label{FuniBound} F^\mesh=O(1)~\mathrm{uniformly~on~compact~subsets~of}~\O.
\end{equation}
Moreover, using uniform estimates of the discrete harmonic measure in rectangles (Lemma
\ref{LemmaDhmR}) exactly in the same way as in the proof of Theorem~\ref{ThmCFexists}, we
arrive at
\begin{equation}
\label{HFuniBound1} H^\mesh(v)=O(\Im v),~~F^\mesh=O(1)~~\mathrm{uniformly~in}\
R^\mesh(\tfrac{1}{2}s,\tfrac{1}{2}t).
\end{equation}

\smallskip

Taking a subsequence, we may assume that
\[
F^\mesh\rra F~~\mathrm{and}~~H^\mesh\rra H\quad \mathrm{for~some}~~F:\O\to\C,\
H:\O\to\R,
\]
uniformly on all compact subsets of $\O$. The simple passage to the limit in (\ref{Hmesh=Int})
gives $H(v_2)-H(v_1)=\Im \int_{[v_1;v_2]} (F(\z))^2d\z,$ for each segment $[v_1;v_2]\ss\O$.
So, $F^2=2i\pa H$, and it is sufficient to show that $H=P_{(\O;a,b)}$. Being a limit of
discrete subharmonic functions $H^\mesh_\G$, as well as discrete superharmonic functions
$H^\mesh_{\G^*}$, the function $H$ is harmonic. The next step is the {\it
identification of the boundary values of $H$}.

Let $u\in\O$ and $d\!>\!0$ be so small that $u\in\O^\mesh_{4d}$. Recall that the functions
$H^\mesh\big|_\G$ are subharmonic, uniformly bounded away from $a^\mesh$, and $H^\mesh_\G=0$
on the (modified) boundary $\pa\O^\mesh_{\wt{\G}}$, except at $a^\mesh$. Thus, the weak
Beurling-type estimate of the discrete harmonic measure (Lemma \ref{WeakBeurling}) easily
gives
\begin{align*}
H(u)=\lim_{\mesh\to 0} H^\mesh_\G(u)&\le \const(\O,d)\cdot \lim_{\mesh\to 0
}\lt[\frac{\dist(u\,;\pa\O^\mesh_{3d}\setminus\pa
B^\mesh_\G(a_d,3d))}{\dist_{\O^\mesh_\G}(u\,;\pa B^\mesh_\G(a_d,3d))}\rt]^\b\cr & \le
\const(\O,d)\cdot (\dist(u\,;\pa\O_{3d}\setminus B(a_d,3d))^\b\quad
\mathrm{for~all}~u\in\O_{5d}
\end{align*}
(since, if $\mesh$ is small enough, $u\in\O^\mesh_{4d}$). Thus, for each $d>0$, $H(u)\to 0$ as
$u\to\pa\O$ inside $\O_{5d}$, i.e., $H=0$ on $\pa\O\setminus\{a\}$. Clearly, $H$ is
nonnegative because $H^\mesh$ are nonnegative. Therefore, $H$ should be proportional to the
Poisson kernel in $\O$ having mass at $a$, i.e.,
\[
H=\mu^2P_{(\O;a,b)}\quad \mathrm{and}\quad F=\mu\sqrt{2i\pa P_{(\O;a,b)}}~~\mathrm{for~some}\
\mu\in \R.
\]
Note that $|\mu|$ is uniformly bounded from $\infty$ and $0$, since $H^\mesh(c^\mesh)\asymp 1$
uniformly in $\mesh$.

\smallskip

To finish the proof, we need to show that $\mu=1$. For each $0\!<\!\a\!\ll\!\g\!\ll\!t$, we
have
\[
F^\mesh(z)\rra \mu\cdot(1\!+\!O(\g))\quad \mathrm{uniformly~for}~z\in[-2\g,2\g]\ts[\a,\g],
\]
as $\mesh\to 0$. Recall that $\O^\mesh_\DS\supset R^\mesh_\DS(s,t)$ and
$\pa\O^\mesh_\DS\supset L^\mesh_\DS(s)$ for all $\mesh$ (see (\ref{OcontainsR}) and
Fig.~\ref{Fig:OsupsetR}B). Set $F_0^\mesh:=F^\mesh-\mu\cF^\mesh$, where the function
$\cF^\mesh$ is defined in Theorem~\ref{ThmCFexists}. Then $F_0^\mesh$ is s-holomorphic in
$R^\mesh_\DS(s,t)$, satisfies the boundary condition (B${}^\circ$) on the lower boundary, and
\[
F^\mesh_0(z)\rra O(\g)\quad  \mathrm{uniformly~for}~z\in[-2\g,2\g]\ts[\a,\g],
\]
since $\cF^\mesh\rra 1$. Moreover, due to (\ref{FuniBound}), we have $F_0^\mesh(z)=O(1)$
everywhere in the rectangle $R^\mesh_{\DS}(\frac{1}{2}s,\frac{1}{2}t)$. Let
$H^\mesh_0:=\int^\mesh (F^\mesh_0(z))^2 d^\mesh z$, where the additive constant is chosen so
that $H^\mesh_0=0$ on the boundary $L^\mesh(s)$. Then
\[
H^\mesh_0=O(\a\!+\!\g^3)+o_{\mesh\to 0}(1)\quad
\mathrm{uniformly~~on~~the~~boundary~~of}~~R^\mesh_\G(2\g,\g).
\]
Since the subharmonic function $H^\mesh_0\big|_\G$ vanishes on $\wt{L}^\mesh_{\G}(s)$, Lemma
\ref{LemmaDhmR} gives
\[
H^\mesh_0(b_{\mathrm{int}}^\mesh)\le O(\mesh\g^{-1})\cdot [O(\a\!+\!\g^3)+o_{\mesh\to
0}(1)]=\mesh\cdot O(\g^{-1}[\a+o_{\mesh\to 0}(1)]\!+\!\g^2),
\]
where $b^\mesh_{\mathrm{int}}\in\G$ denotes the inner vertex near $b^\mesh$. On the other
hand,
\[
H^\mesh_0(b_{\mathrm{int}}^\mesh)\asymp\mesh
|F^\mesh_0(b^\mesh)|^2=\mesh|(1\!-\!\mu)\cF^\mesh(b^\mesh)|^2\asymp \mesh|1\!-\!\mu|^2
\]
which doesn't depend on $\a$ and $\g$. Successively passing to the limit as $\mesh\to 0$,
$\a\to 0$ and $\g\to 0$, we obtain $\mu\!=\!1$. Thus, $F^\mesh\rra \sqrt{2i\pa P_{(\O;a,b)}}$
as $\mesh=\mesh_k\to 0$.
\end{proof}

\section{$4$-point crossing probability for the  FK-Ising model}
\setcounter{equation}{0}

\label{SectFK4points}

Let $\O^\mesh_\DS\!\ss\!\DS$ be a discrete quadrilateral, i.e. simply-connected discrete
domain composed of inner rhombi \mbox{$z\!\in\!\Int\O^\mesh_\DS$} and boundary half-rhombi
$\z\in\pa\O^\mesh_\DS$, with four marked boundary points $a^\mesh,b^\mesh,c^\mesh,d^\mesh$ and
alternating Dobrushin-type boundary conditions (see Fig.~\ref{Fig:FK4points}): $\pa\O^\mesh_\DS$ consists
of two ``white'' arcs $a^\mesh_{\mathrm{w}}b^\mesh_{\mathrm{w}}$,
$c^\mesh_{\mathrm{w}}d^\mesh_{\mathrm{w}}$ and two ``black'' arcs
$b^\mesh_{\mathrm{b}}c^\mesh_{\mathrm{b}}$, $d^\mesh_{\mathrm{b}}a^\mesh_{\mathrm{b}}$.
In the random cluster language it means that the four arcs are wired/free/wired/free,
and in the loop representation this creates two interfaces that end at the four marked points
and can connect in two possible ways.
As in Section~\ref{SectFKIsing}, we assume that
\mbox{$b^\mesh_{\mathrm{b}}-b^\mesh_{\mathrm{w}}=i\mesh$.}

\begin{figure}
\centering{
\begin{minipage}[b]{0.4\textwidth}
\centering{\includegraphics[width=\textwidth]{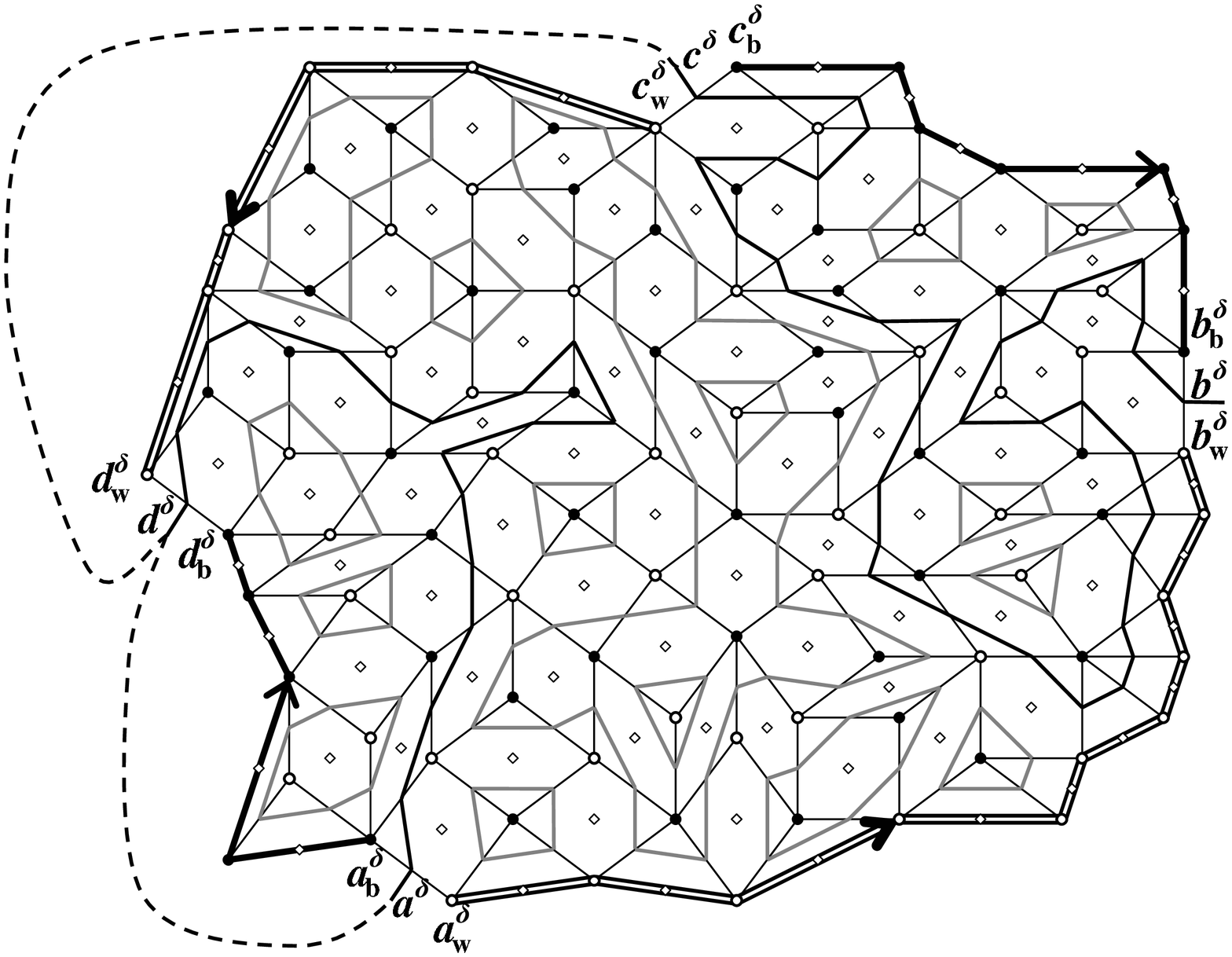}}

\centering{\textsc{(A)}}
\bigskip

\centering{
\begin{minipage}[b]{0.42\textwidth}
\centering{\includegraphics[width=\textwidth]{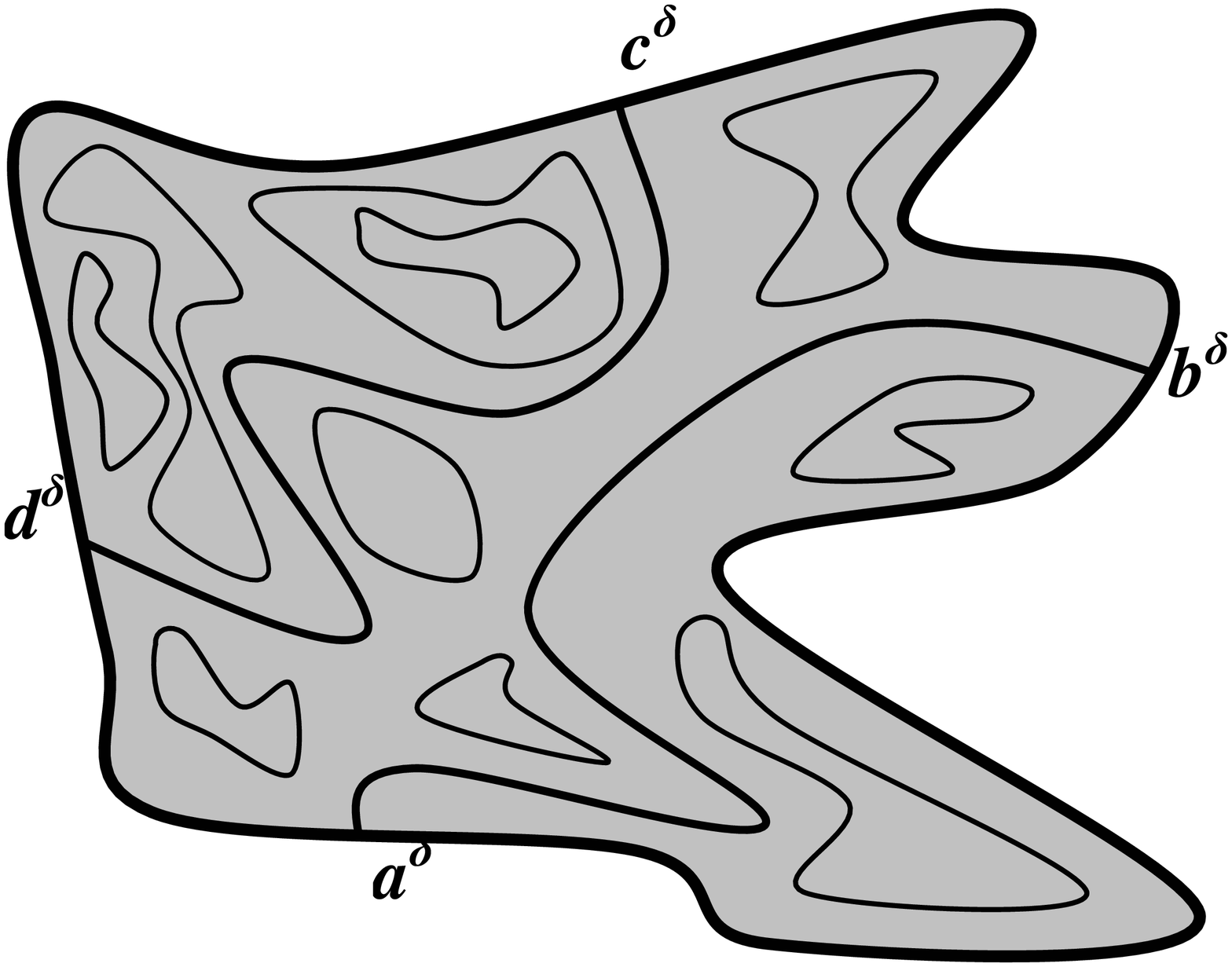}}

\smallskip

$\rP^\mesh$
\end{minipage}
\textsc{vs.}
\begin{minipage}[b]{0.42\textwidth}
\centering{\includegraphics[width=\textwidth]{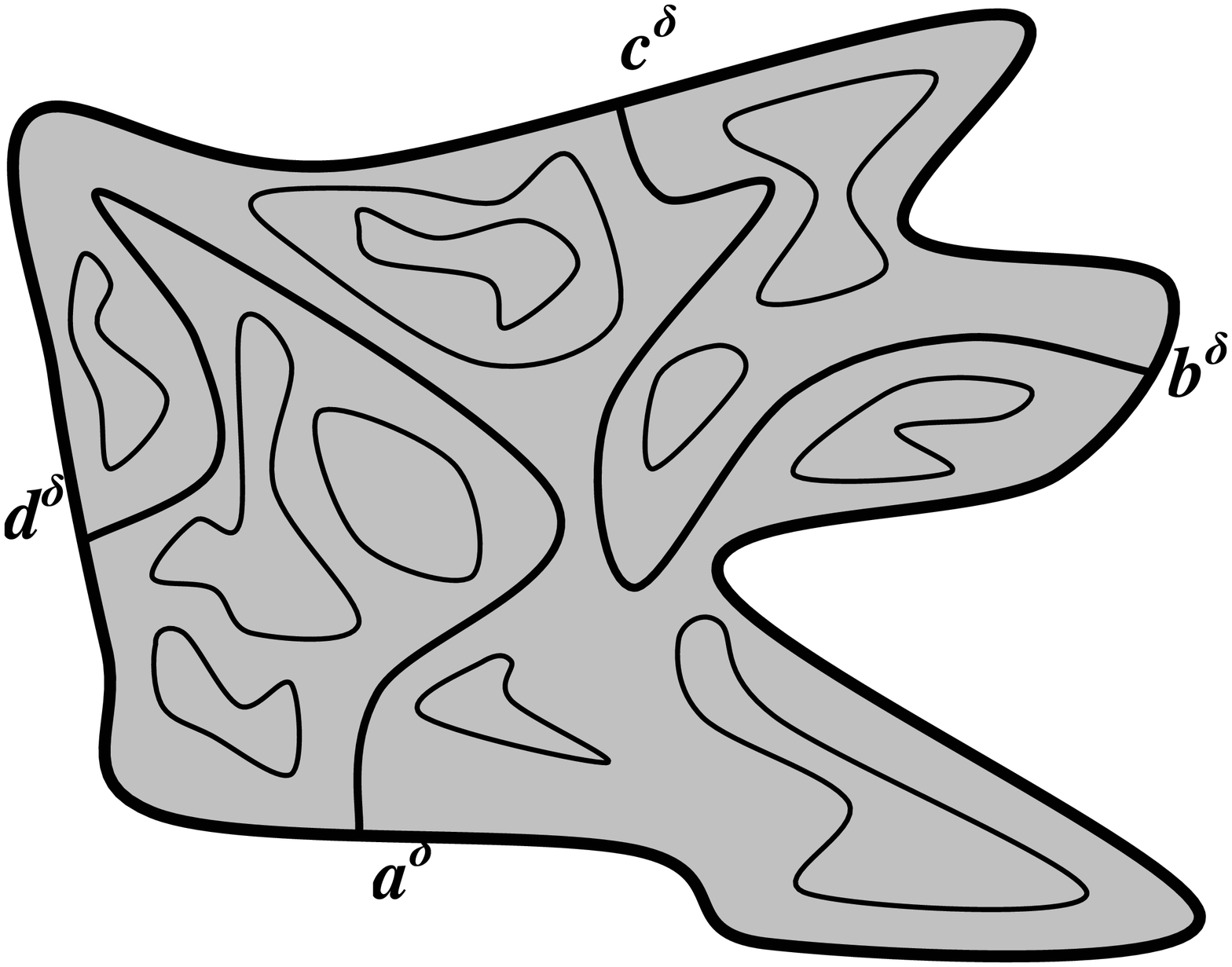}}

\smallskip

$\rQ^\mesh$
\end{minipage}}

\bigskip

\centering{\textsc{(B)}}
\end{minipage}
\hskip 0.03\textwidth
\begin{minipage}[b]{0.55\textwidth}
\centering{
\begin{minipage}[b]{0.42\textwidth}
\centering{\includegraphics[width=\textwidth]{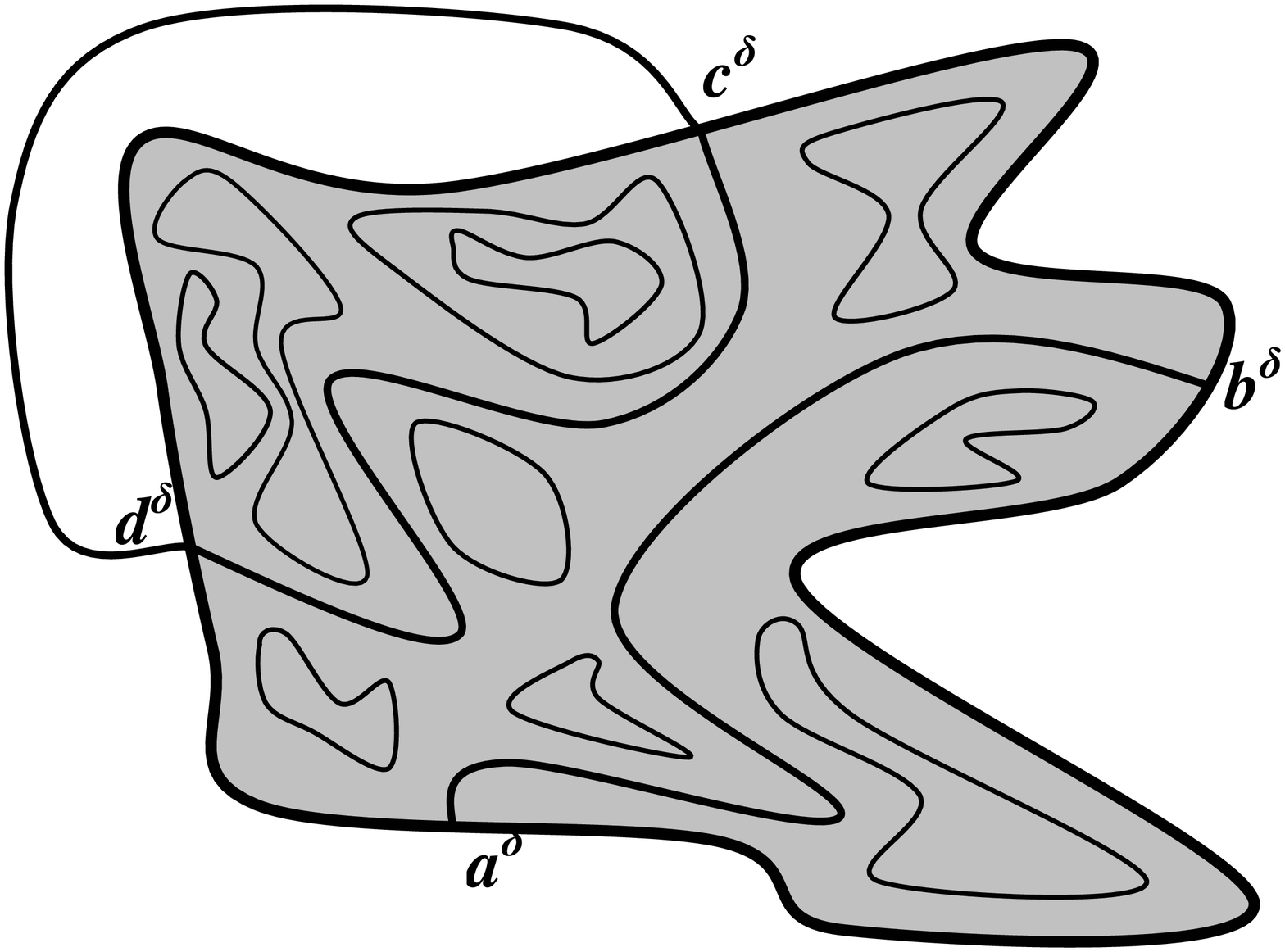}}

\smallskip

$\frac{\sqrt{2}\rP^\mesh}{\sqrt{2}\rP^\mesh+\rQ^\mesh}$
\end{minipage}
\textsc{vs.}
\begin{minipage}[b]{0.42\textwidth}
\centering{\includegraphics[width=\textwidth]{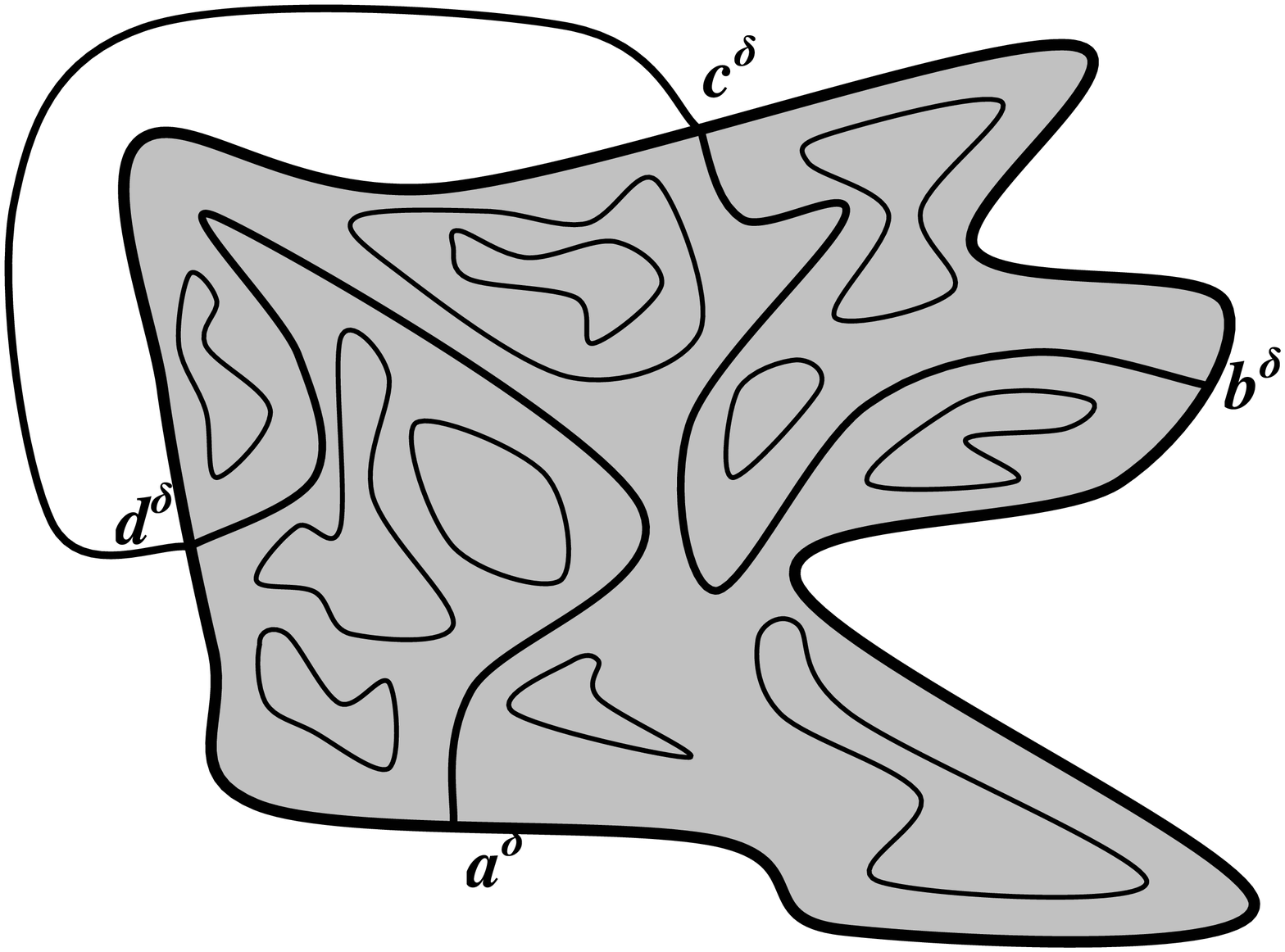}}

\smallskip

$\frac{\rQ^\mesh}{\sqrt{2}\rP^\mesh+\rQ^\mesh}$
\end{minipage}

\bigskip
\centering{\textsc{(C)}}
\bigskip\bigskip

\begin{minipage}[b]{0.42\textwidth}
\centering{\includegraphics[width=\textwidth]{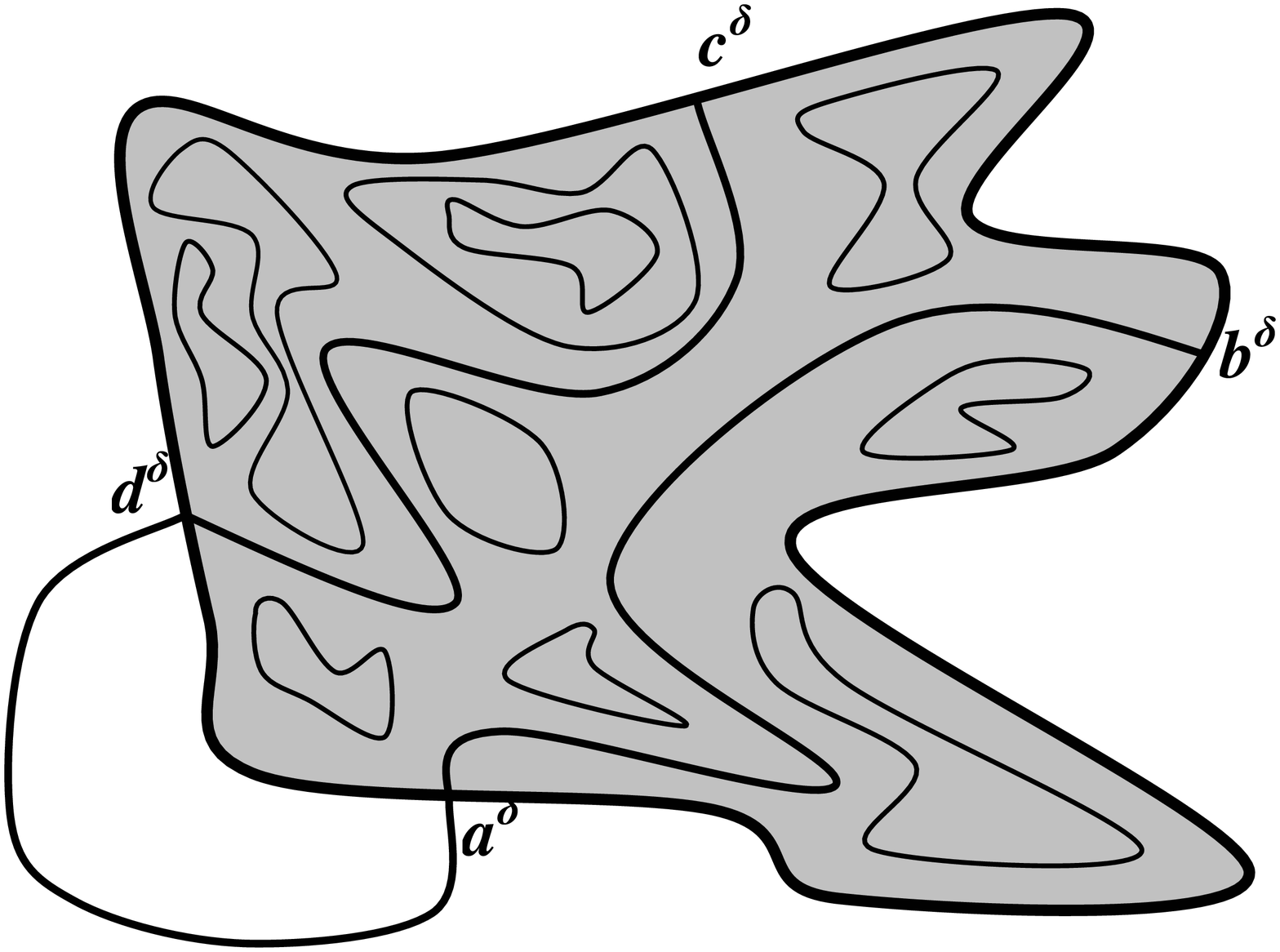}}

\smallskip

$\frac{\rP^\mesh}{\rP^\mesh+\sqrt{2}\rQ^\mesh}$
\end{minipage}
\textsc{vs.}
\begin{minipage}[b]{0.42\textwidth}
\centering{\includegraphics[width=\textwidth]{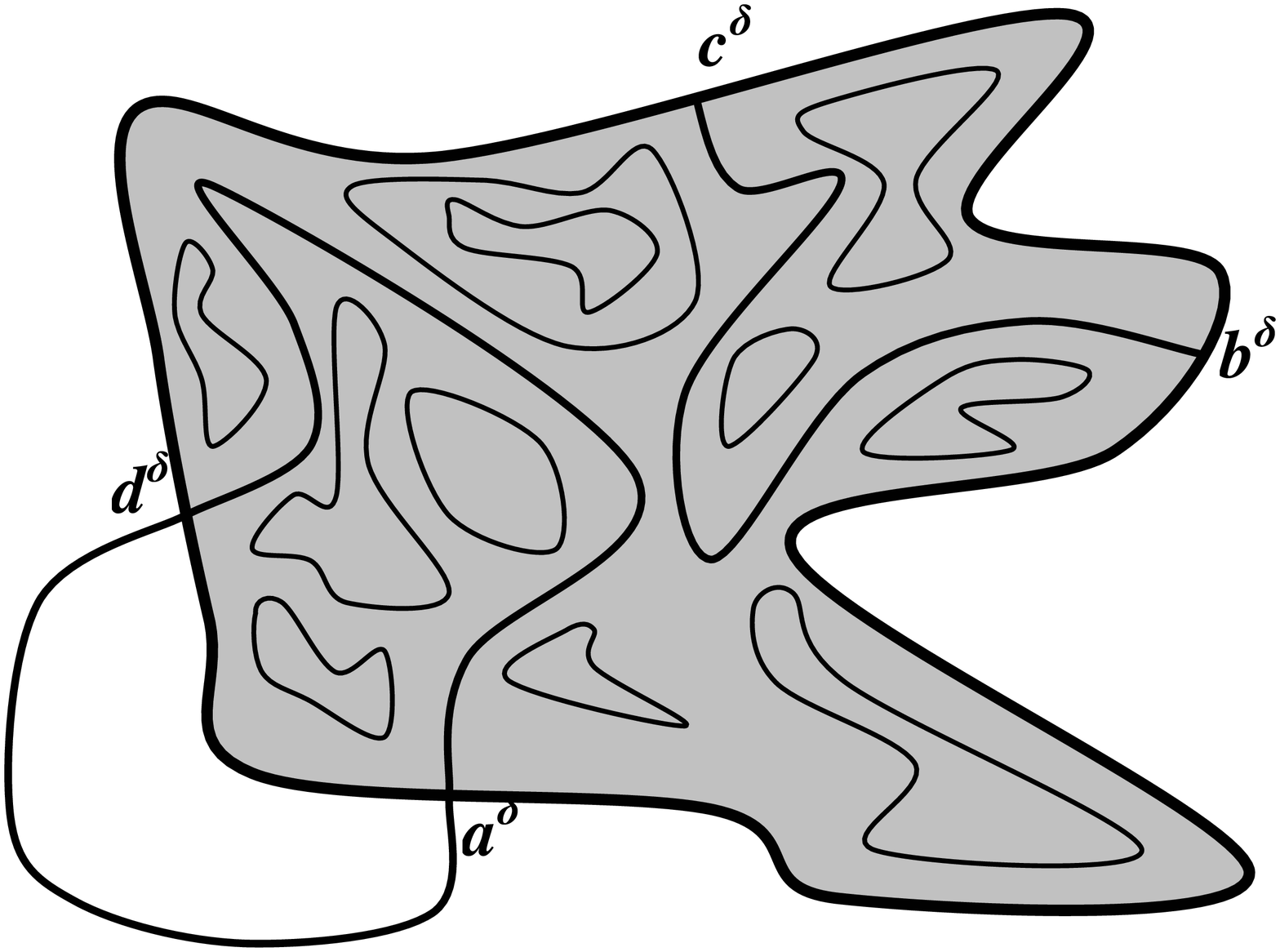}}

\smallskip

$\frac{\sqrt{2}\rQ^\mesh}{\rP^\mesh+\sqrt{2}\rQ^\mesh}$
\end{minipage}}

\bigskip
\centering{\textsc{(D)}}

\end{minipage}}

\caption{\label{Fig:FK4points} \textsc{(A)} Discrete quadrilateral $\O^\mesh_\DS$ with four
marked boundary points and Dobrushin-type boundary conditions. If we draw the additional
external edge $[c^\mesh d^\mesh]$ (or $[a^\mesh d^\mesh]$), there is one interface $\g^\mesh$,
going from $a^\mesh$ (or $d^\mesh$, respectively) to $b^\mesh$. Since $\g^\mesh$ has black
vertices to the left and white vertices to the right, its winding on all boundary arcs is
defined uniquely due to topological reasons. \textsc{(B)} Besides loops, there are two
interfaces: either $a^\mesh\lra b^\mesh$ and $c^\mesh\lra d^\mesh$, or $a^\mesh\lra d^\mesh$
and $b^\mesh\lra c^\mesh$. We denote the probabilities of these events by $\rP^\mesh$ and
$\rQ^\mesh$, respectively. \textsc{(C)}~The external edge $[c^\mesh d^\mesh]$ changes the
probabilities: there is one additional loop (additional factor $\sqrt{2}$), if $b^\mesh$ is
connected directly with $a^\mesh$. \textsc{(D)}~The external edge $[a^\mesh d^\mesh]$ changes
the probabilities differently.}
\end{figure}

Due to Dobrushin-type boundary conditions, each configuration (besides loops) contains
\emph{two interfaces}, either connecting \mbox{$a^\mesh$ to $b^\mesh$} and \mbox{$c^\mesh$ to
$d^\mesh$}, or vice versa. Let
\[
\rP^\mesh=\rP^\mesh(\O^\mesh_\DS;a^\mesh, b^\mesh, c^\mesh, d^\mesh):=\P(a^\mesh\lra
b^\mesh;c^\mesh\lra d^\mesh)
\]
denote the probability of the first event, and $\rQ^\mesh=1\!-\!\rP^\mesh$.

\begin{theorem}
\label{ThmFK4points} For all $r,R,t>0$ there exists $\ve(\mesh)=\ve(\mesh,r,R,t)$ such that
\begin{quotation}
if $B(0,r)\ss\O^\mesh\ss B(0,R)$ and either both $\hm{0}{\O^\mesh}{a^\mesh b^\mesh},
\hm{0}{\O^\mesh}{c^\mesh d^\mesh}$ or both $\hm{0}{\O^\mesh}{b^\mesh c^\mesh},
\hm{0}{\O^\mesh}{d^\mesh a^\mesh}$ are $\ge t$ (i.e., quadrilateral $\O^\mesh$ has no
neighboring small arcs), then
\[
|\rP^\mesh(\O^\mesh_\DS;a^\mesh, b^\mesh, c^\mesh, d^\mesh) -
p(\O^\mesh;a^\mesh,b^\mesh,c^\mesh,d^\mesh)|\le\ve(\mesh)\to 0~\mathit{as}~\mesh\to 0
\]
\end{quotation}
(uniformly with respect to the shape of $\O^\mesh$ and $\DS^\mesh$), where $p$ depends only on
the conformal modulus of the quadrilateral $(\O^\mesh;a^\mesh,b^\mesh,c^\mesh,d^\mesh)$. In
particular, for $u\in [0,1]$,
\[ 
p(\H;0,1\!-\!u,1,\infty)= 
\frac{\sqrt{1-\sqrt{1\!-\!u}}}{\sqrt{1\!-\!\sqrt{u}}+\sqrt{1\!-\!\sqrt{1\!-\!u}}}\,.
\]
\end{theorem}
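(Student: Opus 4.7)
The plan is to introduce an s-holomorphic fermion $F^\mesh$ adapted to the wired/free/wired/free quadrilateral so that its integral $h^\mesh=\Im\int^\mesh(F^\mesh(z))^2d^\mesh z$, after the boundary modification of Section~\ref{SectBModTrick}, is a piecewise-constant Dirichlet function on $\pa\O^\mesh$ taking the four values $0,1,\vk^\mesh,\vk^\mesh$ on the four boundary arcs (in cyclic order starting at $(a^\mesh b^\mesh)$), with $\vk^\mesh\in(0,1)$ in an explicit monotone bijection with $\rP^\mesh$. Convergence of $h^\mesh$ to the imaginary part of the conformal uniformisation of $(\O;a,b,c,d)$ onto the slit strip $[\R\ts(0,1)]\sm(-\iy+i\vk,i\vk]$ (sending $a,b,c$ to the three infinities and $d$ to the tip $i\vk$) will yield conformal invariance of the limit $p$, and an explicit Schwarz--Christoffel computation will furnish the half-plane formula~(\ref{CrossP}).

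The fermion is built from the 2-point FK fermion of Section~\ref{SectFKIsing} applied to the two domains obtained by closing the quadrilateral with an external edge $[c^\mesh d^\mesh]$ or $[a^\mesh d^\mesh]$ (Fig.~\ref{Fig:FK4points}C,D). Each of the two observables is s-holomorphic in the interior by Proposition~\ref{PropFKfermionDef} and satisfies $F\parallel(\t)^{-1/2}$ on three of the four arcs, with a different condition on the arc across which the external edge was added. A suitable linear combination of the two observables, chosen so that its value on both external edges vanishes, yields a single $F^\mesh$ satisfying $F^\mesh(\z)\parallel(\t(\z))^{-1/2}$ on all four arcs of $\pa\O^\mesh_\DS$, with the normalisation $\Re F^\mesh(b^\mesh_\DS)=(2\mesh)^{-1/2}$ at $b^\mesh$.

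Applying the boundary-modification trick of Section~\ref{SectBModTrick} on all four arcs makes $h^\mesh$ sub-/super-harmonic on $\G/\G^*$ and constant on each arc. Fixing the additive constant so that $h^\mesh=0$ on $(a^\mesh b^\mesh)$, integration across $b^\mesh$ forces $h^\mesh=1$ on $(b^\mesh c^\mesh)$; the two remaining arcs share a common value $\vk^\mesh$ by the very choice of linear combination, which is designed to make $h^\mesh$ continuous across $d^\mesh$. Tracking the weights $\sqrt{2}^{\,\mathrm{\#(loops)}}\prod\sin\tfrac12\theta$ from (\ref{Z-FK}) under the two closing operations (the factor $\sqrt 2$ accounts for the extra loop created when the external edge joins two boundary arcs of the same type, as in Fig.~\ref{Fig:FK4points}C,D) expresses $\vk^\mesh$ as an explicit, strictly monotone rational function $\phi(\rP^\mesh)$.

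The convergence part then follows the scheme of Theorem~\ref{ThmFKConvergence}. The bound $0\le h^\mesh\le 1$ is immediate from sub-/super-harmonicity and the Dirichlet data, and Theorem~\ref{FboundH} upgrades it to uniform boundedness and equicontinuity of $F^\mesh$ on compact subsets of $\O$. Carath\'eodory compactness on the space of quadrilaterals $(\O^\mesh;a^\mesh,b^\mesh,c^\mesh,d^\mesh)$, together with the weak Beurling estimate (Lemma~\ref{WeakBeurling}) to identify boundary values, shows that any subsequential limit $h$ of $h^\mesh$ is a bounded harmonic function on $\O$ with piecewise-constant Dirichlet data $0,1,\vk,\vk$, hence coincides with the imaginary part of the conformal uniformisation to the slit strip with tip $i\vk$. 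Since $\vk$ is determined by the conformal modulus of $(\O;a,b,c,d)$ alone, the full sequence converges and $\rP^\mesh\to p(\O;a,b,c,d)$ with $p$ conformally invariant.

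For the half-plane formula one writes the Schwarz--Christoffel integral for the map $\H\to[\R\ts(0,1)]\sm(-\iy+i\vk,i\vk]$ sending $0,1\!-\!u,1,\iy$ to $-\iy_{\mathrm{low}},+\iy,-\iy_{\mathrm{up}},i\vk$; the integrand reduces to an elementary product of fractional powers, integrates in closed form in nested square roots, and gives $\vk=\vk(u)$ explicitly. Substituting into $\phi^{-1}$ and simplifying yields (\ref{CrossP}); the symmetry $u\lra 1\!-\!u$ exchanges the two crossing events and provides an immediate sanity check at $u=\tfrac12$, where $p=\tfrac12$.

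The main obstacle I foresee is the precise construction of the linear combination producing $F^\mesh$, together with the exact derivation of $\vk^\mesh=\phi(\rP^\mesh)$: the book-keeping of the factors $\sqrt{2}^{\,\mathrm{\#(loops)}}$ under the two external-edge operations is delicate because of the two possible interface pairings, and one must verify that the combination genuinely forces continuity of $h^\mesh$ at $d^\mesh$ without spoiling the boundary condition on the other three arcs. Once this algebraic relation is in place, the remainder is a direct application of the regularity and compactness machinery of Sections~\ref{SectSHolFunct} and~\ref{SectFKobservable}.
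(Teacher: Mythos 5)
Your construction of $F^\mesh$ as a convex combination of the two observables $F^\mesh_{[cd]}$, $F^\mesh_{[ad]}$, the resulting Dirichlet data $0,1,\vk^\mesh,\vk^\mesh$ with $\vk^\mesh$ an explicit monotone function of $\rP^\mesh$, and the compactness scheme all match the paper's argument. But there is a genuine gap at the identification step. The boundary data you extract for the subsequential limit $h$ — the values $0$ on $(ab)$, $1$ on $(bc)$ and $\vk$ on $(cd)\cup(da)$ — involve only the three points $a,b,c$ and the number $\vk$: the point $d$ is invisible to this Dirichlet problem, since the two arcs adjacent to $d$ carry the same constant. Consequently, for \emph{every} $\vk\in(0,1)$ there is a bounded harmonic function with exactly this data (namely $\Im\Phi_\vk$ for the slit strip with slit at height $\vk$), and nothing you have said forces $d$ to land at the tip $i\vk$ rather than somewhere on the lower or upper bank of the slit. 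Your sentence ``since $\vk$ is determined by the conformal modulus of $(\O;a,b,c,d)$ alone'' is therefore unjustified as written: it is precisely the assertion that $d\mapsto i\vk$, and that assertion does not follow from the Dirichlet data plus Beurling estimates.

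The paper closes this gap by exploiting the Riemann boundary condition $F^\mesh(\z)\parallel(\t(\z))^{-1/2}$ beyond its consequence for the Dirichlet values: it implies the one-sided normal-derivative inequality $H^\mesh_\G(u_{\mathrm{int}})\ge H^\mesh_\G(u)$ along $(c^\mesh d^\mesh)$, and a discrete Green's-formula contour argument (with the harmonic measure $V^\mesh$ of a small boundary piece as test function, and weak Beurling estimates to control the corner contributions near the endpoints of the contour) shows that $H<\vk$ cannot hold near any part of $(cd)$, and symmetrically $H>\vk$ cannot hold near any part of $(da)$. This is what pins $d$ to the tip and makes $\vk$ a function of the conformal modulus of the quadrilateral; the degenerate configurations ($\vk\in\{0,1\}$, coinciding marked points) must also be excluded by the same mechanism. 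You need to supply this step — a qualitative boundary condition surviving the scaling limit — or some substitute for it; without it the argument only determines $p$ up to an undetermined function of the modulus. The remaining ingredients of your proposal (the bookkeeping of the $\sqrt2$ loop factors, and the half-plane evaluation, which the paper does by writing the explicit harmonic combination of $\arg$'s rather than a Schwarz--Christoffel integral) are sound.
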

\begin{remark}
\label{RemFK4pointsDisc} This formula is a special case of a hypergeometric formula
for crossings in a general FK model. In the Ising case it becomes algebraic and
furthermore can be rewritten in several ways.
It has an especially simple form for the crossing
probabilities
\[
p(\phi):=p(\dD;-e^{i\phi},e^{-i\phi},e^{i\phi},-e^{-i\phi})\quad\mathit{and}\quad
p(\textfrac{\pi}{2}-\phi)=
1-p(\phi)
\]
in the unit disc $\dD$ (clearly, the cross-ratio $u$ is equal to $\sin^2\phi$). Namely,
\[
\frac{p(\phi)} {p(\frac{\pi}{2}\!-\!\phi)} =
\frac{\sin\frac{\phi}{2}}{\sin(\frac{\pi}{4}-\frac{\phi}{2})}\quad \mathit{for}~\
\phi\in[0,\textfrac{\pi}{2}].
\]
Curiously, this macroscopic formula formally coincides with the relative weights
corresponding to two different possibilities of crossings inside microscopic rhombi (see
Fig.~\ref{Fig:FKDomain}A) in the critical FK-Ising model on isoradial graphs.
\end{remark}

\begin{proof}
We start with adding to our picture the ``external'' edge connecting $c^\mesh$ and $d^\mesh$
(see Fig.~\ref{Fig:FK4points}). Then, exactly as in Sect. \ref{SectFKIsing},
(\ref{FKfermionXiDef}) and (\ref{FKfermionDef}) allow us to define the s-holomorphic in
$\O^\mesh_\DS$ function $F^\mesh_{[cd]}:\O^\mesh_\DS\to\C$ such that
\begin{equation}
\label{4pointsBC} F^\mesh_{[cd]}(\z)\parallel (\t(\z))^{-\frac{1}{2}}
\end{equation}
on $\pa\O^\mesh_\DS$, where
\begin{equation}
\label{4pointsBCtau}
\begin{array}{lll}
\t(\z)=w_2(\z)\!-\!w_1(\z), & \z\in (a^\mesh b^\mesh)\cup(c^\mesh d^\mesh), &
w_{1,2}(\z)\in\G^*, \cr \t(\z)=u_2(\z)\!-\!u_1(\z), & \z\in (b^\mesh c^\mesh)\cup (d^\mesh
a^\mesh), & u_{1,2}(\z)\in\G,
\end{array}
\end{equation}
is the ``discrete tangent vector'' to $\pa\O^\mesh_\DS$ oriented from $a^\mesh/c^\mesh$ to
$b^\mesh/d^\mesh$ (see Fig.~\ref{Fig:FK4points}).

Note that $F^\mesh_{[cd]}(b^\mesh)= (2\mesh)^{-\frac{1}{2}}$ and $F^\mesh_{[cd]}(a^\mesh)=
(2\mesh)^{-\frac{1}{2}}\cdot e^{-\frac{i}{2}\wind(b^\mesh\rsa a^\mesh)}$, but
\[
F^\mesh_{[cd]}(d^\mesh) =
(2\mesh)^{-\frac{1}{2}}\cdot\frac{\rQ^\mesh}{\sqrt{2}\cdot\rP^\mesh\!+\!\rQ^\mesh} \cdot
e^{-\frac{i}{2}\wind(b^\mesh\rsa (c^\mesh \rsa d^\mesh))}
\]
(and similarly for $F^\mesh_{[cd]}(c^\mesh)$, since the interface passes through $d^\mesh$ if
and only if $b^\mesh$ is connected with $c^\mesh$, see Fig.~\ref{Fig:FK4points}).

Similarly, we can add an external edge $[a^\mesh d^\mesh]$ and construct another s-holomorphic
in $\O^\mesh_\DS$ function $F^\mesh_{[ad]}$ satisfying the same boundary conditions
(\ref{4pointsBC}). Arguing in the same way, we deduce that $F^\mesh_{[ad]}(b^\mesh)=
(2\mesh)^{-\frac{1}{2}}$, $F^\mesh_{[ad]}(c^\mesh)=(2\mesh)^{-\frac{1}{2}}\cdot
e^{-\frac{i}{2}\wind(b^\mesh\rsa c^\mesh)}$, and
\[
F^\mesh_{[ad]}(d^\mesh)=
(2\mesh)^{-\frac{1}{2}}\cdot\frac{\rP^\mesh}{\rP^\mesh\!+\!\sqrt{2}\cdot\rQ^\mesh} \cdot
e^{-\frac{i}{2}\wind(b^\mesh\rsa (a^\mesh \rsa d^\mesh))}
\]
(and similarly for $F^\mesh_{[ad]}(a^\mesh)$). Note that
\begin{equation}
\label{xWind} e^{-\frac{i}{2}\wind(b^\mesh\rsa (a^\mesh \rsa d^\mesh))} =
-e^{-\frac{i}{2}\wind(b^\mesh\rsa (c^\mesh\rsa d^\mesh))}\,.
\end{equation}

Let
\[
F^\mesh:=\frac{\rP^\mesh(\sqrt{2}\rP^\mesh\!+\!\rQ^\mesh)\cdot F^\mesh_{[cd]} +
\rQ^\mesh(\rP^\mesh\!+\!\sqrt{2} \rQ^\mesh)\cdot F^\mesh_{[ad]}}
{\rP^\mesh(\sqrt{2}\rP^\mesh\!+\!\rQ^\mesh) + \rQ^\mesh(\rP^\mesh\!+\!\sqrt{2} \rQ^\mesh)}\,.
\]
Then, $F^\mesh$ also satisfies boundary conditions (\ref{4pointsBC}), (\ref{4pointsBCtau})
and, in view of (\ref{xWind}),
\begin{equation}
\label{F4Pbdac=}
\begin{array}{lll}
F^\mesh(a^\mesh)= (2\mesh)^{-\frac{1}{2}}\cdot\rA^\mesh\cdot e^{-\frac{i}{2}\wind(b^\mesh\rsa
a^\mesh)}\,, &~~& F^\mesh(b^\mesh)= (2\mesh)^{-\frac{1}{2}}\,, \vphantom{\big|_|}\cr
F^\mesh(c^\mesh)= (2\mesh)^{-\frac{1}{2}}\cdot\rC^\mesh\cdot e^{-\frac{i}{2}\wind(b^\mesh\rsa
c^\mesh)}\,, && F^\mesh(d^\mesh)=0\,, \vphantom{\big|^|}
\end{array}
\end{equation}
where
\[
\rA^\mesh=\frac{\rP^\mesh(\sqrt{2}\rP^\mesh\!+\!\rQ^\mesh) + \rQ^\mesh\rP^\mesh}
{\rP^\mesh(\sqrt{2}\rP^\mesh\!+\!\rQ^\mesh) + \rQ^\mesh(\rP^\mesh\!+\!\sqrt{2}
\rQ^\mesh)}\,,\quad \rC^\mesh=\frac{\rP^\mesh\rQ^\mesh + \rQ^\mesh(\rP^\mesh\!+\!\sqrt{2}
\rQ^\mesh)} {\rP^\mesh(\sqrt{2}\rP^\mesh\!+\!\rQ^\mesh) + \rQ^\mesh(\rP^\mesh\!+\!\sqrt{2}
\rQ^\mesh)}\,.
\]

Since $F^\mesh$ is s-holomorphic and satisfies the boundary conditions (\ref{4pointsBC}),
(\ref{4pointsBCtau}), we can define $H^\mesh:=\int^\mesh(F^\mesh(z))^2d^\mesh z$ and use the
``boundary modification trick'' (see Section~\ref{SectBModTrick}). Then, (\ref{F4Pbdac=})
implies
\begin{equation}
\begin{array}{lllll}
H^\mesh_\G = 0 & \mathrm{on}~~(a^\mesh b^\mesh)_{\wt\G}, && H^\mesh_{\G^*} = 0 &
\mathrm{on}~~(a^\mesh b^\mesh), \cr H^\mesh_\G = 1 & \mathrm{on}~~(b^\mesh c^\mesh), &&
H^\mesh_{\G^*} = 1 & \mathrm{on}~~(b^\mesh c^\mesh)_{\wt\G^*}, \cr H^\mesh_\G = \vk^\mesh &
\mathrm{on}~~(c^\mesh d^\mesh)\cup(d^\mesh a^\mesh)_{\wt\G}, &\phantom{}& H^\mesh_{\G^*} =
\vk^\mesh & \mathrm{on}~~(c^\mesh d^\mesh)_{\wt\G^*}\cup (d^\mesh a^\mesh), \cr
\end{array}
\end{equation}
where
\begin{equation}
\label{Vk=P} \vk^\mesh=(\rA^\mesh)^2=1-(\rC^\mesh)^2=
\lt[\frac{(t^\mesh)^2\!+\!\sqrt{2}t^\mesh}{(t^\mesh)^2\!+\!\sqrt{2}t^\mesh\!+\!1}\rt]^2,\qquad
t^\mesh=\frac{\rP^\mesh}{\rQ^\mesh}=\frac{\rP^\mesh}{1\!-\!\rP^\mesh}\,.
\end{equation}

\smallskip

Suppose that $|\rP^\mesh(\O^\mesh_\DS)-p(\O^\mesh)|\ge\ve_0>0$ for some domains
$\O^\mesh_\DS=(\O^\mesh_\DS;a^\mesh,b^\mesh,c^\mesh,d^\mesh)$ with $\mesh\to 0$. Passing to a
subsequence (exactly as in Section~\ref{SectFKobservable}), we may assume that
\[
(\O^\mesh;a^\mesh,b^\mesh,c^\mesh,d^\mesh)\CaraTo (\O;a,b,c,d),\quad \vk^\mesh\to\vk\in [0,1],
\]
\[
H^\mesh\rra H,\qquad \mathrm{and}\qquad F^\mesh\rra F=\sqrt{2i\pa H},
\]
uniformly on compact subsets, for some harmonic function $H:\O\to\R$. It follows from our
assumptions that $B(0,r)\ss\O\ss B(0,R)$ and either $a\ne b$, $c\ne d$ or $b\ne c$, $d\ne a$.

We begin with the main case, when the limiting quadrilateral $(\O;a,b,c,d)$ is non-degenerate
and $0<\varkappa<1$. As in Section \ref{SectFKobservable}, we see that
\begin{equation}
\label{FK4Hbc} H = 0~\mathrm{on}~(ab),\quad H=1~\mathrm{on}~(bc) ~~\mathrm{and}~\
H=\vk~\mathrm{on}~(cd)\cup(da).
\end{equation}
Consider the conformal mapping $\Phi$ from $\O$ onto the slit strip $[\R\ts(0;1)]\setminus
(i\vk-\infty;i\vk]$ such that $a$ is mapped to ``lower'' $-\infty$, $b$ to $+\infty$ and $c$
to ``upper'' $-\infty$ (note that such a mapping is uniquely defined). Then, the imaginary
part of $\Phi$ is harmonic and has the same boundary values as $H$, so we conclude that
$H=\Im\Phi$. We prove that
\begin{quotation}
\begin{center}{\it $d$ is mapped exactly to the tip $i\vk$}.\end{center}
\end{quotation}
Then, $\vk$ can be uniquely determined from the conformal modulus of $(\O;a,b,c,d)$.

\begin{figure}
\centering{\begin{minipage}[b]{0.45\textwidth}
\centering{\includegraphics[width=\textwidth]{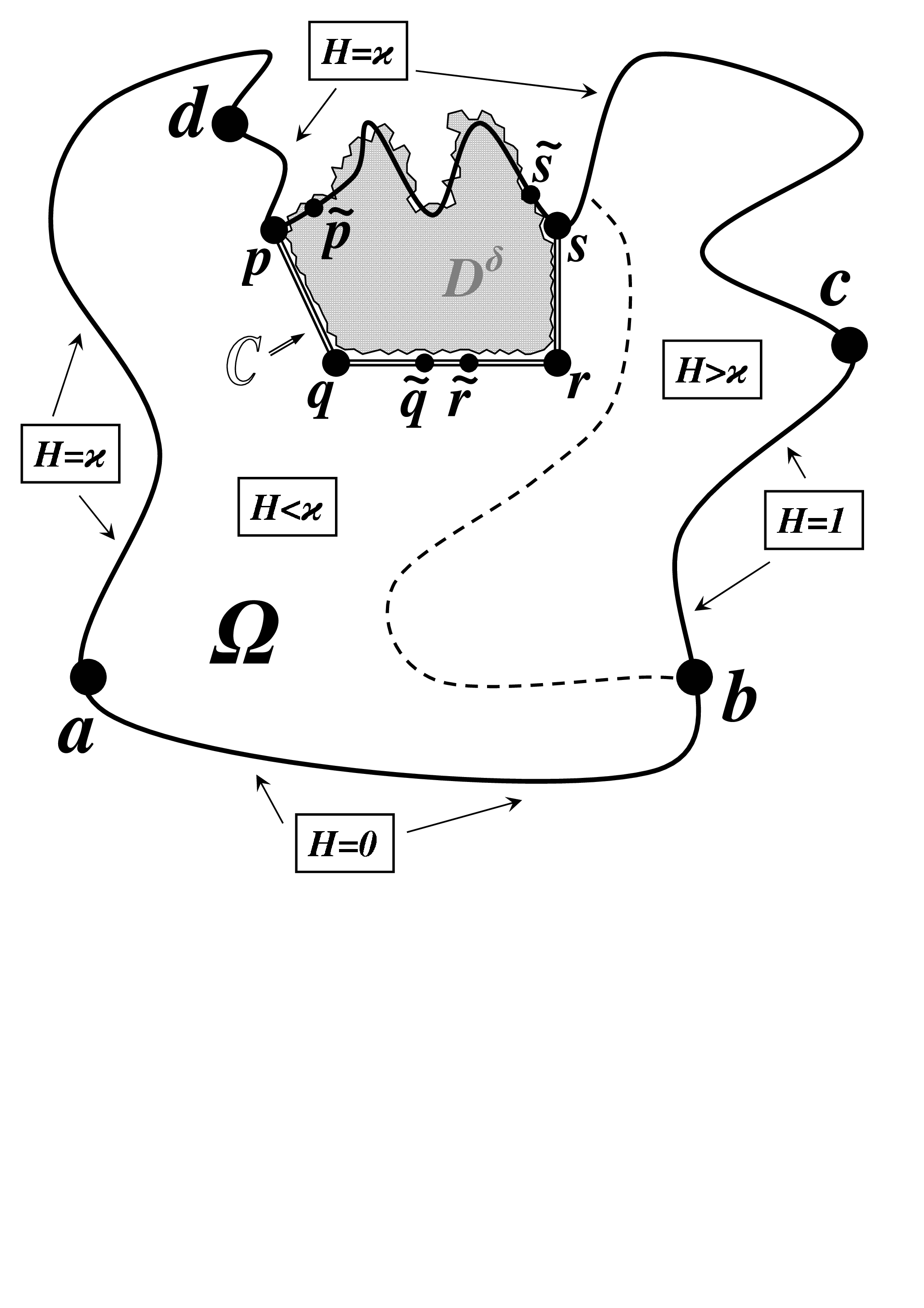}}

\bigskip

\textsc{(A)}
\end{minipage}
\hskip 0.09\textwidth
\begin{minipage}[b]{0.45\textwidth}
\centering{\includegraphics[width=\textwidth]{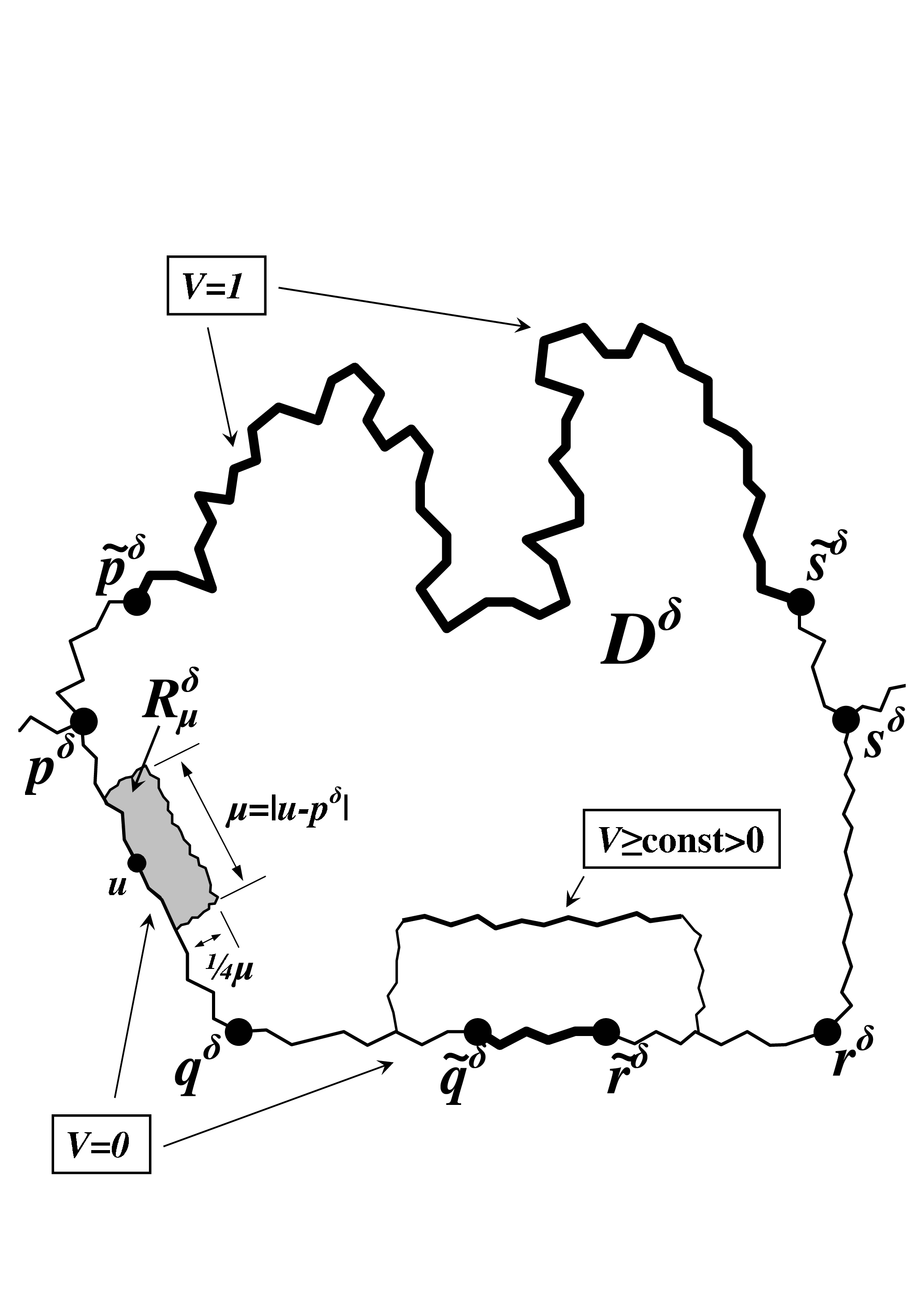}}

\bigskip

\textsc{(B)}
\end{minipage}}
\caption{\label{Fig:FK4lim} \textsc{(A)} If $d$ is mapped on the lower bank of the cut, then
\mbox{$H<\vk$} somewhere near $(cd)$. The contour $C=[p;q]\cup[q;r]\cup [r;s]$ is chosen so
that $H<\vk$ on $C$, $\dist(q;\pa\O)=|q\!-\!p|$ and $\dist(r;\pa\O)=|r\!-\!s|$.
\textsc{(B)}~Since $V^\mesh=0$ on $C^\mesh$, we have $\pa_n^\mesh V^\mesh\le 0$ everywhere on
$C^\mesh$. Moreover, $\pa^\mesh_n V^\mesh\le\const <0$ on $(\wt{q}^\mesh\wt{r}^\mesh)$ and
$\pa^\mesh_n V^\mesh =O([\dist(u;\pa\O^\mesh)]^{\b-1})$ near $p^\mesh$,~$s^\mesh$ (here
$\pa_n^\mesh$ denotes the discrete derivative in the \emph{outer} normal direction).}
\end{figure}

Suppose that above is not the case, and  $d$ is mapped, say, on the lower bank of the cut. It
means that $H<\vk$ near some (close to $d$) part of the boundary arc $(cd)$. Then, there
exists a (small) contour $C=[p;q]\cup[q;r]\cup[r;s]\ss\O$ such that $H<\vk$ everywhere on $C$,
\[
s,p\in(cd)\ss\pa\O,\quad \dist(q;\pa\O)=|q\!-\!p|~~\mathrm{and}~~\dist(r;\pa\O)=|r\!-\!s|
\]
(see Fig.~\ref{Fig:FK4lim}). Denote by $D\ss\O$ the part of $\O$ lying inside~$C$. For
technical purposes, we also fix some intermediate points $\wt{s},\wt{p}\in (sp)\ss(cd)$ and
$\wt{q},\wt{r}\in [q;r]$ (see Fig.~\ref{Fig:FK4lim}). For sufficiently small $\mesh$, we can
find discrete approximations $s^\mesh,\wt{s}^\mesh,\wt{p}^\mesh,p^\mesh\in (c^\mesh
d^\mesh)\ss~\pa\O^\mesh_{\wt{\G}}$ and
$q^\mesh,\wt{q}^\mesh,\wt{r}^\mesh,r^\mesh\in\O^\mesh_\G$ to these points such that the
contour $[p^\mesh;q^\mesh]\cup [q^\mesh;r^\mesh]\cup[r^\mesh;s^\mesh]$ approximates $C$.
Denote by $D^\mesh\ss\O^\mesh$ the part of $\O^\mesh$ lying inside $C$ and by
$D^\mesh_\G\ss\O^\mesh_\G$ the set of all ``black'' vertices lying in $D^\mesh$ and their
neighbors.

Let $\wt H^\mesh_\G:=H^\mesh_\G-\vk^\mesh$ and
$V^\mesh:=\dhm\mesh{\,\cdot\,}{(\wt{s}^\mesh\wt{p}^\mesh)}{D^\mesh}$. Since $H^\mesh_\G$ is
subharmonic and $V^\mesh\ge 0$ is harmonic, the discrete Green's formula gives
\[
\sum_{u\in\pa D^\mesh} [(\wt{H}^\mesh_\G(u)\!-\!\wt{H}^\mesh_\G(u_{\mathrm{int}}))V^\mesh(u) -
(V^\mesh(u)\!-\!V^\mesh(u_{\mathrm{int}}))\wt{H}^\mesh_\G(u)] \tan\theta_{uu_{\mathrm{int}}}
\ge 0.
\]
Note that $\wt{H}^\mesh_\G\equiv 0$ on $(s^\mesh p^\mesh)_{\wt{\G}}$ and $V^\mesh(u)\equiv 0$
on $C^\mesh:=\pa D^\mesh_\G \setminus (s^\mesh p^\mesh)_{\wt{\G}}$. Thus,
\begin{equation}
\label{4Px0} \sum_{u\in C^\mesh} V^\mesh(u_{\mathrm{int}})\wt{H}^\mesh_\G(u)
\tan\theta_{uu_{\mathrm{int}}} \ge\!\!\!\! \sum_{u\in (s^\mesh p^\mesh)}
(\wt{H}^\mesh_\G(u_{\mathrm{int}})\!-\!\wt{H}^\mesh_\G(u))V^\mesh(u)
\tan\theta_{uu_{\mathrm{int}}} \ge 0,
\end{equation}
since $H^\mesh_\G(u_{\mathrm{int}})\!\ge\!H^\mesh_\G(u)$ everywhere on $(s^\mesh p^\mesh)$ due
to the boundary conditions (\ref{4pointsBC}),~(\ref{4pointsBCtau}). On the other hand, on most
of $C^\mesh$, we have $\wt{H}^\mesh_\G<0$ (since $H\!-\!\vk=\lim_{\mesh\to0}\wt{H}^\mesh_\G
<0$ on $C$ by assumption), and $V^\mesh(u_{\mathrm{int}})\!\ge\!0$ everywhere on $C^\mesh$,
which gives a contradiction. Unfortunately, we cannot immediately claim that
$\wt{H}^\mesh_\G<0$ near the boundary, so one needs to prove that the neighborhoods of
$p^\mesh$ and $s^\mesh$ cannot produce an error sufficient to compensate this difference of
signs.

More accurately, it follows from the uniform convergence $V^\mesh\rra
\hm{\,\cdot\,}{(\wt{s}\wt{p})}{D}>0$ on compacts inside $D$ and Lemma~\ref{LemmaDhmR} that
$V^\mesh(u_{\mathrm{int}})\ge \const(D)\cdot \mesh$ everywhere on
$(\wt{q}^\mesh\wt{r}^\mesh)\ss C^\mesh$, so, for small enough $\mesh$,
\begin{equation}
\label{4Px1} \sum_{u\in (\wt{q}^\mesh\wt{r}^\mesh)}
V^\mesh(u_{\mathrm{int}})\wt{H}^\mesh_\G(u) \tan\theta_{uu_{\mathrm{int}}}\le -\const(D,H)<0.
\end{equation}
Thus, it is sufficient to prove that the neighborhoods of $p^\mesh$ and $s^\mesh$ cannot
compensate this negative amount which is independent of $\mesh$. Let $u\in (p^\mesh
q^\mesh)\ss C^\mesh$ and $\m\!=\!\dist(u;p^\mesh)\!=\!\dist(u;\pa\O^\mesh)$ be small. Denote
by $R^\mesh_\m$ the discretization of the $\m\ts\textfrac{1}{4}\m$ rectangle near $u$ (see
Fig.~\ref{Fig:FK4lim}). Due to Lemma \ref{LemmaDhmR}, we have
\[
\dhm\mesh{u_{\mathrm{int}}}{\pa R^\mesh_\m\setminus (p^\mesh q^\mesh)}{R^\mesh_\m} = O(\mesh
\m^{-1}).
\]
Furthermore, for each $v\in \pa R^\mesh_\m\setminus (p^\mesh q^\mesh)$, Lemma
\ref{WeakBeurling} gives
\[
\dhm\mesh{v}{(\wt{s}^\mesh\wt{p}^\mesh)}{D^\mesh} = O(\m^\b)\quad \mathrm{uniformly~on~}\
\pa R^\mesh_\m.
\]
Hence,
\[
V^\mesh(u_{\mathrm{int}})=\dhm\mesh{u_{\mathrm{int}}}{(\wt{s}^\mesh \wt{p}^\mesh)} {D^\mesh}
\le \const(D)\cdot \mesh \m^{-(1-\b)}.
\]
Recalling that $H^\mesh=O(1)$ by definition
and summing, for any $p^\mesh_\m\in (p^\mesh
q^\mesh)\ss C^\mesh$ sufficiently close to $p^\mesh$ we obtain
\begin{align}
\label{4Px2} \sum_{u\in (p^\mesh p^\mesh_\m)} V^\mesh(u_{\mathrm{int}})\wt{H}^\mesh_\G(u)
\tan\theta_{uu_{\mathrm{int}}}&\le \const(D)\cdot\!\!\!\sum_{u\in (p^\mesh p^\mesh_\m)}
\mesh\cdot (\dist(u;p^\mesh))^{-(1-\b)}\cr&\le \const(D) \cdot (\dist(p^\mesh_\m;p^\mesh))^\b
\end{align}
(uniformly with respect to $\mesh$). The same estimate holds near $s^\mesh$. Taking into
account $\wt{H}^\mesh_\G(u)<0$ which holds true (if $\mesh$ is small enough) for all $u\in
C^\mesh$ lying $\m$-away from $p^\mesh$, $s^\mesh$, we deduce from (\ref{4Px0}), (\ref{4Px1})
and (\ref{4Px2}) that $0<\const(D,H)\le \const(D)\cdot \m^\b$ for any $\m>0$ (and sufficiently
small $\mesh\le\mesh(\m)$), arriving at a contradiction.

\smallskip

All ``degenerate'' cases can be dealt with in the same way:
\begin{itemize}
\item if the quadrilateral $(\O;a,b,c,d)$ is non-degenerate, then
\begin{itemize}
\item if $\vk=1$, then $H<\vk$ near some part of $(cd)$, which is impossible;
\item if $d$ is mapped onto the upper bank of the slit or $\vk=0$, then $H>\vk$ near some part
of $(ad)$, which leads to a contradiction via the same arguments as above;
\end{itemize}
\item if $b=c$ (and so $c\ne d$), then $\vk=0$ since otherwise $H<\vk$ everywhere near $(cd)$ due to
boundary conditions (\ref{FK4Hbc});
\item if $d=a$ (and so $c\ne d$, $a\ne b$), then again $\vk=0$ since otherwise
(\ref{FK4Hbc}) implies $H<\vk$ near some part of $(cd)=(ca)$ close to $a$;
\item finally, $a=b$ or $c=d$ lead to $\vk=1$ (otherwise $H>\vk$ near some part of
$(da)$).
\end{itemize}
Thus, $d$ is mapped to the tip and so $\vk=\vk(\O;a,b,c,d)$ is uniquely determined by the
conformal modulus of the quadrilateral ($\vk$ is either $0$ or $1$ in degenerate cases).
Recall that $\vk^\mesh=\xi(P^\mesh)$, where the bijection $\xi:[0,1]\to [0,1]$ is given by
(\ref{Vk=P}). Let
\[
p(\O;a,b,c,d):=\xi^{-1}(\vk(\O;a,b,c,d)).
\]
Then (since $\vk(\cdot)$ is Carath\'eodory stable) both
$\rP^\mesh(\O^\mesh_\DS)=\xi^{-1}(\vk^\mesh)$ and $p(\O^\mesh)$ tend to $p(\O)=\xi^{-1}(\vk)$
as $\mesh=\mesh_k\to 0$, which contradicts to
$|\rP^\mesh(\O^\mesh_\DS)-p(\O^\mesh)|\ge\ve_0>0$.

\smallskip

Finally, the simple calculation for the half-plane $(\H;0,1-u,1,\infty)$ gives
\[
H(z)\equiv u+\tfrac{1}{\pi}(-\arg [z-(1\!-\!u)] + u\arg z + (1\!-\!u)\arg[z-1]),\quad z\in\H.
\]
Hence, $\vk(\H;0,1-u,1,\infty)=u$ and $p=\xi^{-1}(u)$ which coincides with (\ref{CrossP}).
\end{proof}

\begin{remark}
In fact, above we have shown that the ``$(\t(z))^{-\frac{1}{2}}$'' boundary condition
(\ref{RH-1/2bc}) reformulated in the form $\pa_n^\mesh H^\mesh\le 0$ remains valid in the
limit as $\mesh\to 0$. Namely,
\begin{quotation}
let a sequence of discrete domains $\O^\mesh$ converge to some limiting $\O$ in the
Carath\'eodory topology, while s-holomorphic functions $F^\mesh$ defined on $\O^\mesh$ satisfy
(\ref{RH-1/2bc}) on arcs $(s^\mesh p^\mesh)$ converging to some boundary arc
\mbox{$(sp)\ss\pa\O$}. Let their integrals $H^\mesh=\Im\int^\mesh (F^\mesh(z))^2d^\mesh z$ are
defined so that they are uniformly bounded near $(s^\mesh p^\mesh)$ and their (constant)
values $\vk^\mesh$ on $(s^\mesh p^\mesh)$ tends to some $\vk$ as $\mesh\to 0$. Then, if
$H^\mesh$ converge to some (harmonic) function $H$ inside $\O$, one has $\pa_n H^\mesh\le 0$
on $(sp)$ in the following sense: there is no point $\zeta\in(sp)$ such that $H<\vk$ in a
neighborhood of $\zeta$.
\end{quotation}
The proof mimics the corresponding part of the proof of Theorem~\ref{ThmFK4points}. Since the
boundary conditions (\ref{RH-1/2bc}) are typical for holomorphic observables in the critical
Ising model, this statement eventually can be applied to all such observables.
\end{remark}

\renewcommand{\thesection}{A}
\section{Appendix}
\setcounter{equation}{0}
\renewcommand{\theequation}{A.\arabic{equation}}

\subsection{Estimates of the discrete harmonic measure}
\label{SectDHMestim} Here we formulate uniform estimates for the discrete harmonic measure on
isoradial graphs which were used above.

\begin{lemma}[\bf exit probabilities in the disc]
\label{HmInDisc} Let $u_0\!\in\!\G$, $r\ge \mesh$ and $a\in\pa B^\mesh_\G(u_0,r)$. Then,
\[
\dhm{\mesh}{u_0}{\{a\}}{B^\mesh_\G(u_0,r)}\asymp {\mesh}/{r}\,.
\]
\end{lemma}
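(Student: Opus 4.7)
The plan is to identify $\omega := \dhm{\mesh}{u_0}{\{a\}}{B^\mesh_\G(u_0,r)}$ with a Green's function value via a discrete Green's identity, and then reduce the lemma to a near-boundary estimate on the Dirichlet Green's function of the disc.

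Let $G(\cdot) := G_{B^\mesh_\G(u_0,r)}(u_0,\cdot)\le 0$ denote the Green's function with pole at $u_0$ (so $\D^\mesh G = \delta_{u_0}/\weightG{u_0}$ on the interior and $G=0$ on the boundary), and set $H(v) := \dhm{\mesh}{v}{\{a\}}{B^\mesh_\G(u_0,r)}$. Pairing $H$ (discrete harmonic inside, equal to $\1_{\{a\}}$ on the boundary) with $G$ through the discrete Green's second identity on $\Int B^\mesh_\G(u_0,r)$ causes the interior terms to collapse to $H(u_0)$ and leaves only the boundary contribution at $a$:
\[
\omega \;=\; H(u_0) \;=\; -\!\!\sum_{v \sim a,\,v \in \Int B^\mesh_\G(u_0,r)}\!\!\tan\theta_{va}\,G(v).
\]
Because the rhombi angles lie in $[\eta,\pi-\eta]$, both $\tan\theta_{va} \asymp 1$ and the degree of $a$ are bounded, so the lemma reduces to the two-sided bound $|G(v)| \asymp \mesh/r$ for any inner vertex $v$ adjacent to $\pa B^\mesh_\G(u_0,r)$.

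To verify that bound I would decompose $G(\cdot) = G_0(u_0,\cdot) - h(\cdot)$, where $G_0(u_0,\cdot)$ is the full-plane discrete Green's function on $\G$ with pole at $u_0$, and $h$ is its harmonic extension inside from $\pa B^\mesh_\G(u_0,r)$. The sharp asymptotic
\[
G_0(u_0,z) \;=\; -\tfrac{1}{2\pi}\log|z-u_0|\;+\;c_0\;+\;O(\mesh/|z-u_0|),
\]
known on isoradial graphs (see \cite{kenyon-operators,chelkak-smirnov-dca}), shows that the boundary values of $h$ agree with the constant $-\tfrac{1}{2\pi}\log r + c_0$ up to an $O(\mesh/r)$ oscillation. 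By the maximum principle this oscillation propagates into the bulk, so $h(v) = -\tfrac{1}{2\pi}\log r + c_0 + O(\mesh/r)$ for every interior vertex $v$. Since $|v-u_0| = r - O(\mesh)$ for an inner vertex adjacent to the boundary, $G_0(u_0,v)$ equals the same constant up to $O(\mesh/r)$ and the subtraction gives
\[
G(v) \;=\; -\tfrac{1}{2\pi}\log\frac{|v-u_0|}{r} + O(\mesh/r) \;\asymp\; -\mesh/r,
\]
i.e.\ $|G(v)| \asymp \mesh/r$, as required.

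The main obstacle is the matching of leading-order terms in this Green's function estimate. The upper bound $|G(v)| \le C \mesh/r$ only needs the uniform $O(\mesh/r)$ control on $h$'s boundary oscillation, which is a direct maximum-principle argument; the lower bound, however, requires that the leading $-\tfrac{1}{2\pi}\log(|v-u_0|/r)$ contribution genuinely dominate the $O(\mesh/r)$ error term, which is precisely the sharpness encoded in the discrete-complex-analysis toolbox of \cite{chelkak-smirnov-dca}. A self-contained alternative is a radial barrier comparison in the annulus $B^\mesh_\G(u_0,r) \setminus B^\mesh_\G(u_0,r/2)$, where $G$ is harmonic, vanishes on the outer boundary, and (by Harnack) is $\asymp 1$ on the inner one; the desired $\asymp \mesh/r$ at the outer boundary then follows from standard Harnack-type estimates of harmonic measure in an annulus.
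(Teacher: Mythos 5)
Your overall strategy is the one the paper intends: Lemma~\ref{HmInDisc} is not proved in the text but attributed to \cite{buckling-2008} and \cite{chelkak-smirnov-dca}, with the one\-/line remark that the argument rests on the free Green's function asymptotics \eqref{Gasympt}; your reduction of the exit probability to the inward normal derivative $-\sum_{v\sim a}\tan\theta_{va}\,G(v)$ of the Dirichlet Green's function, followed by the decomposition of $G$ into the free Green's function minus its harmonic extension, is exactly that argument. There are, however, two concrete gaps in your lower bound. First, you quote the full-plane asymptotics with error $O(\mesh/|z-u_0|)$, whereas the correct (and necessary) statement \eqref{Gasympt} has error $O(\mesh^2/|z-u_0|^2)$. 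At distance $\asymp r$ your error term is $O(\mesh/r)$, i.e.\ of the same order as the leading term $-\tfrac{1}{2\pi}\log(|v-u_0|/r)$ you are trying to isolate, so the subtraction yields no lower bound at all; with the sharp $O(\mesh^2/r^2)$ error the leading term does dominate once $r\gg\mesh$ (the regime $r\asymp\mesh$ being trivial). Second, and independently, the pointwise claim $|G(v)|\asymp\mesh/r$ for \emph{every} inner vertex $v$ adjacent to the boundary does not follow from $|v-u_0|=r-O(\mesh)$: that gives only the upper bound. For the lower bound you need $r-|v-u_0|\ge\const\cdot\mesh$, and nothing prevents a near-boundary vertex from lying within $o(\mesh)$ of the circle $|v-u_0|=r$, in which case $\log(r/|v-u_0|)=o(\mesh/r)$ and the main term is swamped even by the sharp error.

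The standard repair is one step of uniform ellipticity: since the rhombus half-angles lie in $[\eta/2,\pi/2-\eta/2]$, every vertex has a bounded number of neighbours, each reached with transition probability bounded below, and at least one neighbour lies a definite fraction of $\mesh$ closer to $u_0$; hence after $O(1)$ steps the walk started at $v$ is at distance $\ge\const\cdot\mesh$ inside the circle with probability bounded below, and from there your barrier computation (with the sharp error term) gives the required $\gtrsim\mesh/r$. Note that your fallback via harmonic measure in the annulus has exactly the same soft spot in its lower\-/bound half: the assertion that the harmonic measure of the inner circle seen from a vertex adjacent to the outer circle is $\gtrsim\mesh/r$ is Lemma~\ref{HmInDisc} in thin disguise, and it needs the same one-step argument together with \eqref{Gasympt} used as a radial barrier. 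With these two repairs your proof closes and coincides with the one in the cited references.
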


\begin{proof}
See \cite{buckling-2008} (or \cite{chelkak-smirnov-dca} Proposition A.1). The proof is based on the asymptotics
(\ref{Gasympt}) of the free Green's function.
\end{proof}

\begin{lemma}[\bf weak Beurling-type estimate]
\label{WeakBeurling} There exists an absolute constant $\b>0$ such that for any simply
connected discrete domain $\O^\mesh_\G$, point $u\in\Int\O^\mesh_\G$ and  some part of the
boundary $E\subset\pa\O^\mesh_\G$ we have
\[
\dhm{\mesh}{u}{E}{\O^\mesh_\G}\le \const\cdot
\lt[\frac{\dist(u;\pa\O^\mesh_\G)}{\dist_{\O^\mesh_\G}(u;E)}\rt]^\b.
\]
Here $\dist_{\O^\mesh_\G}$ denotes the distance inside $\O^\mesh_\G$.
\end{lemma}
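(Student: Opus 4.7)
Proof plan. The statement is a discrete Beurling-type estimate, and I would attack it by the classical multi-scale (dyadic annulus) argument, combined with the combinatorial estimates for the random walk on isoradial graphs already quoted in the appendix.

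Set $d=\dist(u;\pa\O^\mesh_\G)$ and $D=\dist_{\O^\mesh_\G}(u;E)$. If $D\le 4d$ the bound is trivial by choosing $\const$ large, so assume $D\ge 4d$. Let $X_t$ denote the random walk started at $u$ with transition probabilities proportional to $\tan\theta_s$ (the walk generating $\o^\mesh$), let $\tau$ be its exit time from $\O^\mesh_\G$, and let $\tau_k:=\inf\{t:|X_t-u|\ge r_k\}$ with $r_k:=2^k d$. Since any path inside $\O^\mesh_\G$ from $u$ to $E$ has length $\ge D$, on the event $\{X_\tau\in E\}$ we must have $\tau_N<\tau$, where $N:=\lfloor\log_2(D/d)\rfloor$. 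The whole proof reduces to the one-scale estimate
\[
\P\bigl(\tau_{k+1}<\tau\,\big|\,\mathcal{F}_{\tau_k},\,\tau_k<\tau\bigr)\,\le\,1-\ve
\]
for a constant $\ve=\ve(\eta)>0$ independent of $k$, $\mesh$ and the graph. Iterating gives $\dhm{\mesh}{u}{E}{\O^\mesh_\G}\le (1-\ve)^N\le \const\cdot(d/D)^\b$ with $\b:=-\log_2(1-\ve)>0$, which is the claim.

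To prove the one-scale estimate, fix $k\ge 1$ and let $v=X_{\tau_k}$, so that $|v-u|\in[r_k,r_k+O(\mesh)]$. The key geometric input is that, because $\O^\mesh_\G$ is simply connected and $\dist(u;\pa\O^\mesh_\G)=d\le r_k$, there is a discrete boundary arc $\g^\mesh\subset \pa\O^\mesh_\G\cap\overline{B(u,r_k)}$ that separates $u$ from $\pa B(u,r_{k+1})$ inside $\O^\mesh_\G$ (the topological analog of a crosscut: the connected component of $\pa\O^\mesh_\G$ through a nearest boundary point $p_0\in\overline{B(u,d)}$ cannot be compactly contained inside $B(u,r_{k+1})$, as that would produce a bounded complementary hole contradicting simple connectedness). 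From $v$, one then estimates from below the probability of hitting $\g^\mesh$ before leaving $B(v,2r_k)$: this is an exit problem in a disc of radius $\asymp r_k$ centered at $v$, for which Lemma~\ref{HmInDisc} and standard random walk comparison estimates on isoradial graphs (Proposition~2.11 of \cite{chelkak-smirnov-dca}) give a uniformly positive lower bound, since $\g^\mesh$ is a macroscopic connected subset of the boundary of that disc touching its interior.

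The main obstacle is making the topological separation argument genuinely uniform in the isoradial structure. In the continuous world, simple connectedness and a nearest boundary point produce a separating crosscut instantly; on a general isoradial graph with possibly very anisotropic rhombi, one must verify that (i) the combinatorial notion of simple connectedness used in Section~2 indeed forces the separating arc $\g^\mesh$ to exist (this is essentially a discrete Jordan-type theorem on $\G$), and (ii) the discrete harmonic measure of $\g^\mesh$ from $v$ in $\O^\mesh_\G\cap B(v,2r_k)$ is bounded below independently of the (possibly rough) microscopic shape of $\g^\mesh$. Here one leans on the uniform angle bound $\theta_s\in[\eta,\pi-\eta]$, which keeps vertex degrees, edge weights, and hence random walk heat kernel estimates uniformly equivalent to those on the square lattice, and on the fact that the separating arc $\g^\mesh$ has diameter $\asymp r_k$ so that it cannot be ``hidden'' behind a bottleneck of scale $\ll r_k$. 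Once $\ve$ is produced, the proof is complete as outlined.
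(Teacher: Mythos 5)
Your multi-scale skeleton is the right one --- it is essentially the scheme behind the proof of Proposition~2.11 in \cite{chelkak-smirnov-dca}, which is all the paper itself offers here (the lemma is proved by citation, with the one-line hint that the key input is a uniform bound on the probability that the walk crosses an annulus without making a full turn inside). But the crux of the whole lemma is exactly the one-scale estimate, and that is the step you have not actually established. You need: for \emph{every} position $v$ of the walk near $\pa B(u,r_k)$, the probability of hitting the separating boundary arc $\g^\mesh$ before reaching $\pa B(u,r_{k+1})$ is at least $\ve>0$, uniformly in $\mesh$, in the isoradial structure, and in the completely arbitrary microscopic geometry of $\g^\mesh$. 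Lemma~\ref{HmInDisc} does not give this: it controls exit probabilities through single points of the boundary circle of a disc (each of order $\mesh/r$), whereas $\g^\mesh$ is a connected set sitting \emph{inside} the annulus whose shape you do not control. Your appeal to Proposition~2.11 of \cite{chelkak-smirnov-dca} is circular, since that proposition \emph{is} the weak Beurling estimate being proved. A connected set of diameter $\asymp r_k$ does carry uniformly positive harmonic measure from a point at comparable distance, but that is itself a Beurling-type capacity statement which must be proved for the isoradial walk --- asserting it from the angle bound $\theta_s\in[\eta,\pi-\eta]$ is precisely the missing idea.

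The clean way to close the gap --- and the way the cited proof does it --- is the winding argument, which removes the domain and the arc from the estimate altogether: the \emph{free} random walk on $\G$, started anywhere near $\pa B(u,r_k)$, makes a full turn around $u$ inside the annulus $B(u,r_{k+1})\setminus B(u,r_k)$ before crossing it with probability at least $\ve(\eta)>0$ (a statement about the unrestricted walk, provable from the uniform Green's function and exit estimates of Appendix~A.1). A trajectory that winds once around the annulus must intersect every connected set joining the two boundary circles, in particular your arc $\g^\mesh\ss\pa\O^\mesh_\G$; hence at each scale the walk is killed with probability at least $\ve$, and your iteration finishes the proof. Two smaller repairs: (i) the topological claim is better phrased via connectedness of the boundary of a bounded simply connected domain (if $\pa\O^\mesh_\G\ss B(u,r_{k+1})$ then so is $\O^\mesh_\G$, contradicting $\dist_{\O^\mesh_\G}(u;E)>2r_{k+1}$), rather than via ``complementary holes''; (ii) the reduction ``any path from $u$ to $E$ has length $\ge D$, hence $\tau_N<\tau$ on $\{X_\tau\in E\}$'' needs $\dist_{\O^\mesh_\G}$ to be the infimum of \emph{diameters} of connecting paths (or an equivalent notion), not of their lengths: a walk trajectory of enormous length can stay inside a small ball.
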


\begin{proof}
See \cite{chelkak-smirnov-dca} Proposition 2.11. The proof is based on the uniform bound of
the probability that the random walk on $\G$ crosses the annulus without making the full turn
inside.
\end{proof}
Finally, let $R^\mesh_\G(s,t)\ss\G$ denote the discretization of the open rectangle
\[R(s,t)=(-s;s)\ts(0;t)\ss\C,\quad s,t>0;\] $b^\mesh\in\pa\H^\mesh_\G$ be the boundary vertex
closest to $0$; and $L^\mesh_\G(s)$, $U^\mesh_\G(s,t)$, $V^\mesh_\G(s,t)$ be the lower, upper
and vertical parts of the boundary $\pa R^\mesh_\G(s,t)$, respectively.

\begin{lemma}[\bf exit probabilities in the rectangle]
\label{LemmaDhmR} Let $s\ge 2t$ and $t\ge 2\mesh$. Then, for any $v^\mesh=x\!+\!iy\in
R^\mesh_\G(s,t)$, one has
\[
\frac{y\!+\!2\mesh}{t\!+\!2\mesh}\ge \dhm\mesh{v^\mesh}{U^\mesh_\G (s,t)}{R^\mesh_\G(s,t)}\ge
\frac{y}{t\!+\!2\mesh}\,-\,\frac{x^2+(y\!+\!2\mesh)(t\!+\!2\mesh\!-\!y)}{s^2}
\]
and
\[
\dhm\mesh{v^\mesh}{V^\mesh_\G (s,t)}{R^\mesh_\G(s,t)}\le
\frac{x^2+(y\!+\!2\mesh)(t\!+\!2\mesh\!-\!y)}{s^2}\,.
\]
\end{lemma}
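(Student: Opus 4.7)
Here is my plan: construct two explicit discrete harmonic barriers on the rectangle $R^\mesh_\G(s,t)$ and apply the discrete maximum principle. The only non-trivial analytical input is that on any isoradial graph, the linear coordinates $v \mapsto x,\,y$ and the quadratic $v \mapsto \Re(v^2) = x^2 - y^2$ are all discrete harmonic in the sense of \eqref{DMeshDef}. The harmonicity of $x^2-y^2$ is standard from the discrete complex analysis of \cite{chelkak-smirnov-dca}: the identity function is discrete holomorphic on $\DS$ (a direct check against the definition of $\dopa$ using that the rhombi around $u$ close up), so its primitive $\frac{1}{2}v^2$ on $\L$ is discrete harmonic by $\D^\mesh = 4\dopa\dpa$. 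Granted these facts, any affine-quadratic combination $\alpha(x^2-y^2) + \beta x + \gamma y + \delta$ is discrete harmonic on both $\G$ and $\G^*$. The uniform angle bound $\theta_s \in [\eta,\pi-\eta]$ also pins down the location of boundary vertices of $R^\mesh_\G(s,t)$: $y \in [-2\mesh, 0)$ on $L^\mesh_\G$, $y \in (t, t+2\mesh]$ on $U^\mesh_\G$, and $y \in [-2\mesh, t+2\mesh]$ with $|x| \in (s, s+2\mesh]$ on $V^\mesh_\G$.

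For the upper bound on the harmonic measure of $U^\mesh_\G$, I would use the linear barrier $\phi(v) = (y+2\mesh)/(t+2\mesh)$. It is discrete harmonic, $\phi \ge 0$ on $\pa R^\mesh_\G$, and $\phi \ge 1$ on $U^\mesh_\G$ (which is precisely the role of the $2\mesh$ shift). Applying the discrete maximum principle to $\phi - \dhm{\mesh}{\cdot}{U^\mesh_\G(s,t)}{R^\mesh_\G(s,t)}$ immediately yields the first claimed inequality.

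For the bound on $\dhm{\mesh}{v^\mesh}{V^\mesh_\G(s,t)}{R^\mesh_\G(s,t)}$ and for the subsequent lower bound, I would use the discrete harmonic quadratic
\[
\psi(v) \;=\; \frac{x^2 + (y+2\mesh)(t+2\mesh-y)}{s^2} \;=\; \frac{(x^2-y^2) + t\,y + 2\mesh(t+2\mesh)}{s^2}.
\]
On $V^\mesh_\G$ one has $x^2 > s^2$, and since $y \in [-2\mesh, t+2\mesh]$ the factor $(y+2\mesh)(t+2\mesh-y) \ge 0$, so $\psi \ge 1$. On $L^\mesh_\G \cup U^\mesh_\G$ the same range of $y$ gives $\psi \ge 0$. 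Thus $\psi$ majorises $\chi_{V^\mesh_\G(s,t)}$ on $\pa R^\mesh_\G$, and the maximum principle delivers the second inequality of the lemma.

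For the lower bound on $\dhm{\mesh}{v^\mesh}{U^\mesh_\G(s,t)}{R^\mesh_\G(s,t)}$, I would test against the discrete harmonic function $\underline{\omega}(v) := y/(t+2\mesh) - \psi(v)$. A one-line boundary check gives $\underline{\omega} \le 1$ on $U^\mesh_\G$ (using $y \le t+2\mesh$ and $\psi \ge 0$), $\underline{\omega} \le 0$ on $L^\mesh_\G$ (from $y \le 0$), and $\underline{\omega} \le 0$ on $V^\mesh_\G$ (from $\psi \ge 1 \ge y/(t+2\mesh)$); hence $\underline{\omega} \le \chi_{U^\mesh_\G(s,t)}$ on $\pa R^\mesh_\G$ and the maximum principle finishes the proof. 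The only delicate point in the whole argument is invoking the discrete harmonicity of $x^2-y^2$ on a general isoradial graph; everything else is picking the right polynomial barriers and verifying boundary inequalities, with the explicit shifts $\pm 2\mesh$ in the statement precisely absorbing the $O(\mesh)$ irregularity of $\pa R^\mesh_\G$ forced by the isoradial embedding.
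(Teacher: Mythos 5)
Your proof is correct and is essentially the argument the paper intends: the paper's own ``proof'' is a pointer to the companion paper plus the remark that the claim follows from the maximum principle for discrete harmonic functions, and the explicit algebraic form of the stated bounds forces exactly your three barriers $\phi$, $\psi$ and $\underline{\omega}$. Your justification of the discrete harmonicity of $x^2-y^2$ (the identity is discrete holomorphic on $\DS$, and primitives of discrete holomorphic functions are discrete harmonic since $\D^\mesh=4\dopa\dpa$) is the standard one and fits the framework of Section 3.1, so there is no gap.
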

\begin{proof}
See \cite{chelkak-smirnov-dca} Lemma 3.17. The claim easily follows from the maximum principle
for discrete harmonic functions.
\end{proof}

\subsection{Lipschitzness of discrete harmonic and discrete holomorphic functions}

\begin{proposition}[\bf discrete Harnack Lemma]
\label{PropHarnack} Let $u_0\in\G$ and $H:B^\mesh_\G(u_0,R)\to\R$ be a nonnegative discrete
harmonic function. Then,

\noindent (i)\phantom{i} for any $u_1,u_2\in B^\mesh_\G(u_0,r)\ss\Int B^\mesh_\G(u_0,R)$,
\[
\exp\lt[-\const\cdot\frac{r}{R-r}\rt] \le \frac{H(u_2)}{H(u_1)}\le
\exp\lt[\const\cdot\frac{r}{R-r}\rt];
\]
(ii) for any $u_1\sim u_0$,
\[
|H(u_1)-H(u_0)|\le \const\cdot{\mesh H(u_0)}/{R}\,.
\]
\end{proposition}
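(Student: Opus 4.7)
My strategy is to establish a single-step multiplicative Harnack for neighboring vertices and then iterate along a nearest-neighbor path. The starting point is the Poisson representation: for any interior vertex $v$ and any $\rho$ with $B^\mesh_\G(v,\rho)\ss B^\mesh_\G(u_0,R)$,
\[
H(v)=\sum_{a\in\pa B^\mesh_\G(v,\rho)}\dhm{\mesh}{v}{\{a\}}{B^\mesh_\G(v,\rho)}\cdot H(a),
\]
which combined with Lemma~\ref{HmInDisc} says that every term of the average has weight $\asymp \mesh/\rho$; in particular, the weights are all \emph{comparable}. I would then prove the single-step estimate
\[
|H(v')-H(v)|\le \const\cdot\frac{\mesh}{\rho}\cdot H(v)\qquad\text{whenever $v\sim v'$ and $B^\mesh_\G(v,\rho)\ss B^\mesh_\G(u_0,R)$,}
\]
by comparing the two Poisson kernels on the balls $B^\mesh_\G(v,\rho)$ and $B^\mesh_\G(v',\rho)$. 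These balls share all but $O(1)$ boundary vertices in an $O(\mesh)$ neighborhood of the two ``tangent'' directions, and on the common part their harmonic measures are both $\asymp\mesh/\rho$; the assumption that rhombus angles lie in $[\eta,\pi{-}\eta]$ ensures the estimate is uniform in the isoradial structure. Equivalently, one can couple the two random walks started at $v$ and $v'$ so that they merge after a geometric number of steps (transition probabilities being $\asymp 1$), and the probability of disagreement at exit from $B^\mesh_\G(v,\rho)$ is then $O(\mesh/\rho)$.

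\textbf{Iteration.} Given $u_1,u_2\in B^\mesh_\G(u_0,r)$, join them by a nearest-neighbor path $u_1=v_0\sim v_1\sim\dots\sim v_N=u_2$ inside $B^\mesh_\G(u_0,r)$ of length $N\le \const\cdot|u_1-u_2|/\mesh\le \const\cdot r/\mesh$ (existence and the linear length bound follow from the uniform angle assumption on rhombi). At each $v_j$ one has $B^\mesh_\G(v_j,R-r)\ss B^\mesh_\G(u_0,R)$, so applying the single-step estimate with $\rho=R-r$ at each step and multiplying gives
\[
\frac{H(u_2)}{H(u_1)}\le \lt(1+\const\cdot\frac{\mesh}{R-r}\rt)^{\!N}\le \exp\!\lt(\const\cdot\frac{r}{R-r}\rt),
\]
and the reverse inequality follows by symmetry.

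\textbf{Plan for (ii).} This is exactly the single-step estimate applied at $v=u_0$ with $\rho=R$ (the ball $B^\mesh_\G(u_0,R)$ is the whole domain, so the Poisson representation is available directly). Thus (ii) is obtained along the way.

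\textbf{Main obstacle.} The technical crux is proving the single-step comparison with a constant that depends neither on $\mesh$ nor on the isoradial graph. On a regular lattice this is immediate from translation invariance, but on a general isoradial graph one must show that the two nearly-concentric Poisson kernels agree up to an $O(\mesh/\rho)$ error; this is where the uniform angle hypothesis $\theta\in[\eta,\pi{-}\eta]$ becomes essential, both to ensure that the random walk is non-degenerate (so the coupling succeeds geometrically) and to make the constant in Lemma~\ref{HmInDisc} universal.
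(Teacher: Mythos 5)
Your overall architecture --- a single-step estimate $|H(v')-H(v)|\le\const\cdot(\mesh/\rho)\,H(v)$ for neighbours, iterated along a path of length $O(r/\mesh)$ to produce the exponential form of (i), with (ii) being the single-step estimate itself --- matches the proof the paper defers to (\cite{buckling-2008}, \cite{chelkak-smirnov-dca} Prop.~2.7). The gap is that the single-step estimate, which you correctly identify as the crux, is not actually established, and both mechanisms you propose for it fail as stated. Lemma~\ref{HmInDisc} gives only the two-sided bound $\dhm{\mesh}{v}{\{a\}}{B^\mesh_\G(v,\rho)}\asymp\mesh/\rho$; knowing that the two Poisson kernels are \emph{comparable} yields only $H(v')\le\const\cdot H(v)$ per step, and iterating that over $N\asymp r/\mesh$ steps gives $\const^{\,r/\mesh}$, which diverges as $\mesh\to0$. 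What the iteration actually needs is that the two kernels agree up to a \emph{relative} error $O(\mesh/\rho)$ --- a discrete gradient estimate for harmonic measure in the pole variable --- and that is second-order information not contained in Lemma~\ref{HmInDisc}. (Moreover, the boundaries of $B^\mesh_\G(v,\rho)$ and $B^\mesh_\G(v',\rho)$ differ in $\asymp\rho/\mesh$ vertices, not $O(1)$: shifting the centre by $\mesh$ displaces an entire annular layer of the discrete boundary.)

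The coupling route is also problematic on a general isoradial graph: the one-step distributions from $v$ and from a neighbour $v'$ are supported on the neighbour sets of $v$ and $v'$, which are typically disjoint (already on the square lattice two adjacent sites share no common neighbour), so the walks cannot coalesce ``after a geometric number of steps'' in any naive sense; and the independent coupling of two planar random walks started $O(\mesh)$ apart fails to meet before exiting $B(v,\rho)$ with probability of order $1/\log(\rho/\mesh)$, vastly larger than the $O(\mesh/\rho)$ you need, while mirror-type couplings require symmetries that a general isoradial graph lacks. The missing ingredient --- and the reason the paper says the proof ``is based on the asymptotics (\ref{Gasympt})'' --- is the error term $O(\mesh^2/|v-u|^2)$ in the free Green's function: representing the exit probabilities of $B^\mesh_\G(v,\rho)$ through $G_{B^\mesh_\G}=G_\G-G^*_{B^\mesh_\G}$ and differencing in the pole, the logarithmic main terms cancel to first order against the harmonic corrector, and the sharp asymptotics control the remainder, giving $\dhm{\mesh}{v'}{\{a\}}{B^\mesh_\G(v,\rho)}=(1+O(\mesh/\rho))\cdot\dhm{\mesh}{v}{\{a\}}{B^\mesh_\G(v,\rho)}$ uniformly in the isoradial structure. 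Once that gradient-level bound is in hand, your iteration is correct and yields both (i) and (ii).
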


\begin{proof}
See \cite{buckling-2008} (or \cite{chelkak-smirnov-dca} Proposition 2.7). The proof is based
on the asymptotics (\ref{Gasympt}) of the free Green's function.
\end{proof}

\begin{corollary}[\bf Lipschitzness of harmonic functions]
\label{CorHarnack} Let $H$ be discrete harmonic in $B^\mesh_\G(u_0,R)$ and $u_1,u_2\in
B^\mesh_\G(u_0,r)\ss\Int B^\mesh_\G(u_0,R)$. Then
\[
|H(u_2)-H(u_1)|\le \const\cdot\frac{M|u_2\!-\!u_1|}{R-r},\quad \mathit{where}\quad
M=\max_{B^\mesh_\G(u_0,R)}|H(u)|.
\]
\end{corollary}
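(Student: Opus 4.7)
The plan is to reduce to the nonnegative case and then telescope part (ii) of the discrete Harnack Lemma (Proposition~\ref{PropHarnack}) along a discrete path from $u_1$ to $u_2$. First I would replace $H$ by $\wt H := H+M$, which is discrete harmonic on $B^\mesh_\G(u_0,R)$, nonnegative, and bounded above by $2M$, so it suffices to control $|\wt H(u_2)-\wt H(u_1)|$. The asserted bound is trivial (with room to spare) when $|u_2-u_1|\ge \tfrac{1}{4}(R-r)$, so I may assume $|u_2-u_1|<\tfrac{1}{4}(R-r)$, which keeps all relevant computations inside $B^\mesh_\G(u_0, r+\tfrac{1}{4}(R-r))$.

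Next I would join $u_1$ and $u_2$ by a discrete path $u_1=v_0\sim v_1\sim\dots\sim v_N=u_2$ in $\G$, staying inside $B^\mesh_\G(u_0, r+\tfrac{1}{4}(R-r))$, whose length satisfies $N\le \const\cdot |u_2-u_1|/\mesh$. Such a path exists because the uniform angle bound $\eta$ forces every edge of $\G$ to have Euclidean length comparable to $\mesh$, and at each vertex $v_j$ there is always a neighbor in a bounded angular sector of the direction $[v_j;u_2]$; choosing it yields Euclidean progress of order $\mesh$ per step, so the number of steps is proportional to $|u_2-u_1|/\mesh$.

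At each vertex $v_j$ on this path, the Harnack-sized ball $B^\mesh_\G(v_j,\tfrac{1}{2}(R-r))$ is contained in $B^\mesh_\G(u_0,R)$, and $\wt H$ is nonnegative and discrete harmonic on it. Applying part~(ii) of Proposition~\ref{PropHarnack} to $\wt H$ in this ball, together with $\wt H(v_j)\le 2M$, gives
\[
|\wt H(v_{j+1})-\wt H(v_j)|\le \const\cdot\frac{\mesh\,\wt H(v_j)}{R-r}\le \const\cdot\frac{\mesh M}{R-r}.
\]
Summing this telescoping bound along the path yields
\[
|H(u_2)-H(u_1)|=|\wt H(u_2)-\wt H(u_1)|\le N\cdot \const\cdot\frac{\mesh M}{R-r}\le \const\cdot \frac{M|u_2-u_1|}{R-r},
\]
as claimed.

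The only non-routine ingredient is the path construction with the correct number of edges inside the prescribed disc; once the uniform rhombi-angle assumption is used to convert the Euclidean distance $|u_2-u_1|$ into a graph-theoretic path length of order $|u_2-u_1|/\mesh$, everything else is a direct telescoping of the already-established Harnack estimate, so this is the only step I would expect to require any thought.
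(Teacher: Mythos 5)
Your proof is correct and is exactly the intended derivation: the paper states this as an immediate corollary of Proposition~\ref{PropHarnack}, and the standard route is precisely your reduction to the nonnegative function $H+M$ followed by telescoping part~(ii) along a discrete path of combinatorial length $O(|u_2-u_1|/\mesh)$, the existence of which is guaranteed by the uniform bound on the rhombi angles. No gaps.
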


In order to formulate the similar result for discrete holomorphic functions we need some
preliminary definitions. Let $F$ be defined on some part of $\DS$. Taking the real and
imaginary parts of $\dopa F$ (see (\ref{DopaDef})), it is easy to see that $F$ is holomorphic
if and only if both functions
\[
[\cB F](z):=\Pr\left[F(z) \,;\,
\ol{u_1(z)\!-\!u_2(z)}\,\right],\quad [\cW F](z):=\Pr\left[F(z) \,;\,
\ol{w_1(z)\!-\!w_2(z)}\,\right]
\]
are holomorphic, where $u_{1,2}(z)\in\G$ and $w_{1,2}(z)\in\G^*$ are the black and  white
neighbors of $z\in\DS$, respectively (note that $F=\cB F +\cW F$).

\begin{figure}
\centering{\begin{minipage}[b]{0.4\textwidth}
\centering{\includegraphics[width=\textwidth]{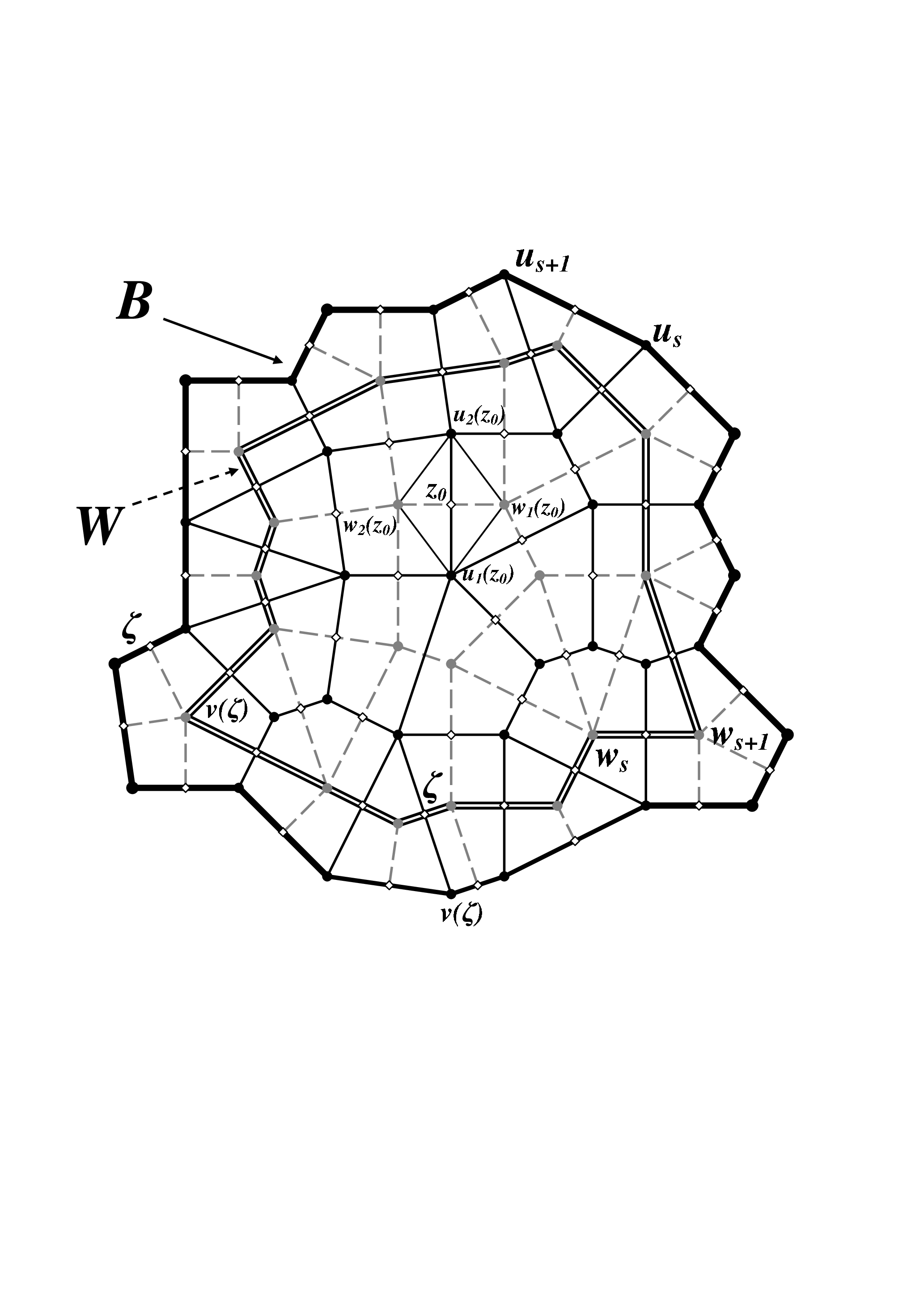}}
\end{minipage}
\hskip 0.1\textwidth
\begin{minipage}[b]{0.44\textwidth}
Let $\O^\d_\G$ be a bounded simply connected discrete domain. We denote by $B=u_0u_1..u_n$,
$u_s\in\G$, its closed polyline boundary enumerated in the counterclockwise order, and by
$W=w_0w_1..w_m$, $w_s\in\G^*$, the closed polyline path passing through the centers of all
faces touching $B$ enumerated in the counterclockwise order. In order to write down the
discrete Cauchy formula (see Lemma~\ref{CauchyFormula}), one needs to ``integrate'' over both
$B$ and $W$.
\end{minipage}}
\caption{\label{Fig:Cauchy} Discrete Cauchy formula, notations (see Lemma
\ref{CauchyFormula}).}
\end{figure}

Let $\O^\d_\G$ be a bounded simply connected discrete domain. For a function $G$ defined on
both ``boundary contours'' $B$,$W$ (see Fig.~\ref{Fig:Cauchy}), we set
\[
\oint^\mesh_{B\cup W} G(\z)d^\mesh \z := \sum_{s=0}^{n-1}
G\left({\textstyle\frac{1}{2}}(u_{s+1}\!+\!u_s)\right)(u_{s+1}\!-\!u_s)
 + \sum_{s=0}^{m-1} G\left({\textstyle\frac{1}{2}}(w_{s+1}\!+\!w_s)\right)(w_{s+1}\!-\!w_s).
\]

\begin{lemma}[\bf Cauchy formula]
\label{CauchyFormula} (i) There exists a function (discrete Cauchy kernel)
$K(\,\cdot\,;\,\cdot\,):\L\ts\DS\to\C$, $K(v,z)=O(|v\!-\!z|^{-1})$, such that for any discrete
holomorphic function $F:\O^\mesh_\DS\to\C$ and $z_0\in\O^\d_\DS\setminus(B\cup W)$ the
following holds true:
\[
F(z_0)=\frac{1}{4i}\oint^\mesh_{B\cup W} K(v(\z);z_0)F(\z)d^\mesh \z,
\]
where $\z\sim v(\z)\in W$, if $\z\in B$, and $\z\sim v(\z)\in B$, if $\z\in W$ (see
Fig.~\ref{Fig:Cauchy}B).

\smallskip

\noindent (ii) Moreover, if $F=\cB F$, then
\[
F (z_0) = \Pr\lt[\frac{1}{2\pi i}\oint^\mesh_{B\cup W}\frac{F(\z)d^\mesh \z}{\z\!-\!z_0}~;\
\ol{u_1(z_0)\!-\!u_2(z_0)}\,\rt] + O\lt(\frac{M\mesh L} {d^2}\rt),
\]
where $d=\dist(z_0,W)$, $M=\max_{z\in B\cup W}|F(z)|$ and $L$ is the length of $B\cup W$. The
similar formula (with ${w_1(z_0)\!-\!w_2(z_0)}$ instead of ${u_1(z_0)\!-\!u_2(z_0)}$) holds
true, if $F=\cW F$.
\end{lemma}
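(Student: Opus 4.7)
The plan is to construct an explicit discrete Cauchy kernel $K(v;z_0)$ on isoradial graphs and then derive the formula from a discrete integration-by-parts (Green's) identity, mimicking the continuous proof $F(z_0)=\frac{1}{2\pi i}\oint \frac{F(\z)}{\z-z_0}\,d\z$. The kernel, its holomorphicity and its far-field behavior are developed in \cite{chelkak-smirnov-dca}, so I would quote that construction rather than redo it.

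First I would exhibit, for each fixed $z_0\in\DS$, a function $K(\cdot\,;z_0):\L\to\C$ that is discrete holomorphic on $\L$ away from the four vertices of the rhombus centered at $z_0$, satisfies $K(v;z_0)=O(|v-z_0|^{-1})$, and carries the correct localized ``residue'' singularity so that $[\dopa K](\cdot\,;z_0)$ acts, against any discrete holomorphic test function, like $4i$ times the evaluation functional at $z_0$. One reasonable construction is via $\dpa$ of the full-plane discrete Green's function $G^\mesh(\cdot\,;z_0)$ (equivalently, via Kenyon's discrete exponentials), for which the asymptotics
\[
K(v;z_0)=\frac{2}{v-z_0}+O\!\left(\frac{\mesh}{|v-z_0|^{2}}\right)
\]
are already known. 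The bounded-angle hypothesis on the rhombi is what makes these asymptotics uniform.

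For (i), with $K$ in hand, I would apply the discrete Green's formula on $\O^\mesh_\DS$. The sum $\sum_{v\in\Int\O^\mesh_\L} F(\text{neighbors of }v)\cdot[\dopa K](v;z_0)\weightG{v}$ is zero in the interior (since $K$ is holomorphic there) except for a single rhombus, where it produces exactly $4i\,F(z_0)$. On the other hand, summation by parts, using the definition (\ref{DopaDef}) of $\dopa$ and the holomorphicity $\dopa F\equiv 0$, rewrites the same sum as the boundary quantity $\oint^\mesh_{B\cup W} K(v(\z);z_0)\,F(\z)\,d^\mesh\z$. Rearranging gives the formula.

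For (ii), I would substitute the far-field expansion of $K$ into the formula from (i). Since $F=\cB F$, each value $F(\z)$ is parallel to $\overline{u_1(\z)-u_2(\z)}$, so the leading term $\frac{2}{v(\z)-z_0}$ in $K$ assembles (after the obvious identification $v(\z)-z_0=\z-z_0+O(\mesh)$ and up to a factor absorbed by the projection onto the fixed direction $\overline{u_1(z_0)-u_2(z_0)}$) into precisely $\Pr\bigl[\frac{1}{2\pi i}\oint^\mesh \frac{F(\z)d^\mesh\z}{\z-z_0};\,\overline{u_1(z_0)-u_2(z_0)}\bigr]$. The remainder $O(\mesh|v-z_0|^{-2})$ in the kernel, times the sup $M$ of $|F|$ on $B\cup W$, summed over a contour of length $L$ lying at distance at least $d$ from $z_0$, gives the advertised error $O(M\mesh L/d^{2})$.

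The main obstacle is not the Green's-formula bookkeeping but the precise local analysis of $K$: extracting the clean ``residue'' constant $4i$ from the four projections around the rhombus of $z_0$, and controlling the next-order asymptotics of $K$ uniformly in the rhombi angles. Once these two items (isolated in \cite{chelkak-smirnov-dca}) are granted, both (i) and (ii) reduce to routine discrete calculus.
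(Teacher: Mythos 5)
Your proposal follows essentially the same route as the paper, which itself only cites \cite{chelkak-smirnov-dca} (Proposition 2.22 and Corollary 2.23) and describes the method in one line: discrete integration by parts against a discrete Cauchy kernel whose existence and asymptotics $K(v;z)=O(|v\!-\!z|^{-1})$, with leading term proportional to $(v\!-\!z)^{-1}$ plus an $O(\mesh|v\!-\!z|^{-2})$ correction, come from Kenyon's work \cite{kenyon-operators}. Your identification of the two genuinely nontrivial inputs --- the localized ``residue'' normalization at the rhombus of $z_0$ and the uniform (in the rhombi angles) far-field expansion --- matches where the actual work is done in the companion paper, so the sketch is correct and in the intended spirit.
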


\begin{proof}
See \cite{chelkak-smirnov-dca} Proposition 2.22 and Corollary 2.23. The proof is based on the
discrete integration by parts and asymptotics of the discrete Cauchy kernel
$K(\,\cdot\,;\,\cdot\,)$ proved by R.Kenyon in \cite{kenyon-operators}.
\end{proof}

\begin{corollary}[\bf Lipschitzness of holomorphic functions]
\label{LipHol} Let $F:B^\mesh_\DS(z_0,R)\to\C$ be discrete holomorphic. Then there exist
$A,B\in\C$ such that
\[
F(z)=\Pr[\,A\,;\,\ol{u_1(z)-u_2(z)}]+\Pr[\,B\,;\,\ol{w_1(z)-w_2(z)}]+O(Mr/(R\!-\!r)),
\]
where $M=\max_{z\in B^\mesh_\DS(z_0,R)}|F(z)|$, for any $z$ such that $|z\!-\!z_0|\le r<R$.
\end{corollary}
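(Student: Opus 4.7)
The plan is to reduce the claim to the discrete Cauchy formula of Lemma~\ref{CauchyFormula}(ii) via the splitting $F=\cB F+\cW F$ into two discrete holomorphic components noted just above the statement. Lemma~\ref{CauchyFormula}(ii) applied to each summand already yields a projection-type representation in terms of a Cauchy integral depending on the evaluation point; what remains is to show that replacing that Cauchy integral by its value at $z_0$ costs at most $O(Mr/(R\!-\!r))$.

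Concretely, I would fix an intermediate radius $R':=\tfrac{1}{2}(R+r)$ and apply Lemma~\ref{CauchyFormula}(ii) on the inner boundary polyline $B\cup W$ of $B^\mesh_\DS(z_0,R')$ (see Fig.~\ref{Fig:Cauchy}). For any $z\in B^\mesh_\DS(z_0,r)$ this gives
\[
[\cB F](z)=\Pr\bigl[\,\a(z)\,;\,\ol{u_1(z)-u_2(z)}\,\bigr]+O\!\Bigl(\frac{M\mesh}{R\!-\!r}\Bigr),\qquad \a(z):=\frac{1}{2\pi i}\oint^\mesh_{B\cup W}\frac{[\cB F](\z)\,d^\mesh\z}{\z-z}\,,
\]
and an analogous formula for $\cW F$ with a Cauchy integral $\b(z)$ in the role of $\a(z)$; the $O(M\mesh/(R\!-\!r))$ remainder comes from the $M\mesh L/d^2$ term of Lemma~\ref{CauchyFormula}(ii), since $L\asymp R'\le R$ and $d\ge R'-r=(R-r)/2$. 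The $z$-dependence of $\a$ is then controlled by the elementary identity
\[
\frac{1}{\z-z}-\frac{1}{\z-z_0}=\frac{z-z_0}{(\z-z)(\z-z_0)}\,,
\]
which together with $|\z-z|\ge R'-r$, $|\z-z_0|\asymp R'$ and the $O(R')$ contour length yields $|\a(z)-\a(z_0)|\le\const\cdot Mr/(R\!-\!r)$, and similarly $|\b(z)-\b(z_0)|\le\const\cdot Mr/(R\!-\!r)$.

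Since the projection $X\mapsto\Pr[X;v]$ is a contraction in the first argument, setting $A:=\a(z_0)$, $B:=\b(z_0)$ and summing the two representations produces the required identity for $F=\cB F+\cW F$ with total error $O(M(r+\mesh)/(R\!-\!r))$. In the natural regime $r\gtrsim\mesh$ this is exactly the advertised $O(Mr/(R\!-\!r))$, whereas in the opposite regime $r<\mesh$ the discrete ball $B^\mesh_\DS(z_0,r)$ essentially reduces to $\{z_0\}$ and $A,B$ can be chosen so as to reproduce $F(z_0)$ exactly. No genuine obstacle is hidden here: beyond Lemma~\ref{CauchyFormula} itself---which is the only nontrivial input, resting on R.~Kenyon's $O(|v-z|^{-1})$ asymptotics for the discrete Cauchy kernel on isoradial graphs---the argument is the verbatim discrete counterpart of the classical proof that holomorphic functions on $B(z_0,R)$ are Lipschitz with constant $\sim M/(R\!-\!r)$.
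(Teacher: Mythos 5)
Your argument is correct and follows essentially the same route as the paper, whose entire proof consists of specifying $A=\frac{1}{2\pi i}\oint^\mesh_{B\cup W}\frac{[\cB F](\z)\,d^\mesh\z}{\z-z_0}$ and $B=\frac{1}{2\pi i}\oint^\mesh_{B\cup W}\frac{[\cW F](\z)\,d^\mesh\z}{\z-z_0}$ and leaving the rest to the reader: you use the same decomposition $F=\cB F+\cW F$, the same Lemma~\ref{CauchyFormula}(ii), and the same choice of $A,B$, merely making explicit the standard kernel-difference estimate that bounds $|\a(z)-\a(z_0)|$ by $\const\cdot Mr/(R-r)$. The only cosmetic imprecision is labelling the $M\mesh L/d^2$ remainder as $O(M\mesh/(R-r))$ rather than $O(M\mesh R/(R-r)^2)$, but this is harmless in the non-degenerate regime $R-r\gtrsim\mesh$ that the paper implicitly assumes throughout.
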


\begin{proof}
Namely,
\[
A=\frac{1}{2\pi i}\oint^\mesh_{B\cup W}\frac{[\cB F](\z)d^\mesh \z}{\z\!-\!z_0}\quad
\mathrm{and} \quad B=\frac{1}{2\pi i}\oint^\mesh_{B\cup W}\frac{[\cW F](\z)d^\mesh
\z}{\z\!-\!z_0}.\qedhere
\]
\end{proof}

\subsection{Estimates of the discrete Green's function}

Here we prove two technical lemmas which were used in Section~\ref{SectRegSHol}. Recall that the
Green's function $G_{\O^\mesh_\G}(\cdot;u):\O^\mesh_\G\to\R$, $u\in\O^\mesh_\G\ss\G$, is the
(unique) discrete harmonic in $\O^\mesh_\G\setminus\{u\}$ function such that
$G_{\O^\mesh_\G}=0$ on the boundary $\pa\O^\mesh_\G$ and $\weightG{u}\cdot[\D^\mesh
G_{\O^\mesh_\G}](u)=1$. Clearly,
\[
G_{\Omega^\mesh_\G}^{~}= G_\G-G_{\Omega^\mesh_\G}^*,
\]
where $G_\G$ is the free Green's function and $G_{\Omega^\mesh_\G}^*$ is the unique discrete
harmonic in $\O^\mesh_\G$ function that coincides with $G_\G$ on the boundary
$\pa\O^\mesh_\G$. It is known that $G_\G$ satisfies (\cite{kenyon-operators}, see also \cite{chelkak-smirnov-dca}
Theorem 2.5) the asymptotics
\begin{equation}
\label{Gasympt} G_\G(v;u)= \frac{1}{2\pi}\log |v\!-\!u| +
O\lt(\frac{\d^2}{|v\!-\!u|^2}\rt),\quad v\ne u.
\end{equation}

\begin{lemma}
\label{Gestimate} Let $B^\mesh_\G=B^\mesh_\G(z_0,r)\ss\G$, $r\ge \const\cdot\mesh$, be the
discrete disc, $u\in B^\mesh_\G$ be such that $|u\!-\!z_0|\le\frac{3}{4}r$ and
$G=G_{B^\mesh_\G}(\cdot;u):B^\mesh_\G\to \R$ be the corresponding discrete Green's function.
Then
\[
\|G\|_{1,B^\mesh_\G}={\sum_{v\in B^\mesh_\G}}\weightG{v}|G(v)|\ge \const \cdot r^2.
\]
\end{lemma}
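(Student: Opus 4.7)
The plan is to obtain the lower bound by a discrete Green's identity argument with the judicious choice of test function $\f(v):=|v-u|^2$, exploiting the fact that $\f$ has a strictly positive Laplacian and vanishes only at the pole.

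First I would establish the exact identity $[\D^\mesh\f](v_0)=4$ at every interior vertex $v_0\in\G$. Writing the neighbours $v_s-v_0=2\mesh\cos\theta_s\cdot e^{i\a_s}$ and the associated dual points $w_s-v_0=\mesh e^{i(\a_s\pm\theta_s)}$, the closing-up of the rhombi around $v_0$ gives $\sum(w_{s+1}-w_s)=2i\mesh\sum\sin\theta_s\,e^{i\a_s}=0$. Plugging into
\[
\weightG{v_0}\cdot [\D^\mesh|\cdot|^2](v_0)=\sum_s\tan\theta_s\bigl[|v_s-v_0|^2+2\Re((v_s-v_0)\ol{v_0})\bigr]
\]
the second (linear) term vanishes by the identity above (it is $2\Re[\ol{v_0}\cdot\weightG{v_0}\D^\mesh v(v_0)]=0$), while the first yields $2\mesh^2\sum\sin 2\theta_s=4\weightG{v_0}$. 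Hence $\D^\mesh(|v|^2)\equiv 4$, and by linearity the same holds for $\f(v)=|v-u|^2$.

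Next I would apply the discrete Green's identity on $B^\mesh_\G=B^\mesh_\G(z_0,r)$. Summing $\weightG{v_0}(\f\D^\mesh G-G\D^\mesh\f)(v_0)$ over interior $v_0$, the interior-interior edge contributions cancel antisymmetrically. Using $G=0$ on $\pa B^\mesh_\G$ and $\weightG{u}\D^\mesh G(u)=1$, the identity reduces to
\[
\f(u)-\sum_{v_0\in\Int B^\mesh_\G}G(v_0)\,[\D^\mesh\f](v_0)\weightG{v_0}=-\sum_{\substack{v_0\in\Int B^\mesh_\G\\ v\in\pa B^\mesh_\G,\,v\sim v_0}}\!\!\!\tan\theta\cdot G(v_0)\f(v).
\]
With $\f(u)=0$, $\D^\mesh\f=4$ at interior vertices, and $G\le 0$, this becomes
\[
4\|G\|_{1,B^\mesh_\G}=\sum_{\text{bdry pairs}}\tan\theta\cdot(-G(v_0))\,\f(v).
\]
Taking instead $\f\equiv 1$ in the same identity gives the total ``flux'' relation $\sum_{\text{bdry pairs}}\tan\theta\,(-G(v_0))=1$.

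Finally I would bound $\f(v)=|v-u|^2$ from below on $\pa B^\mesh_\G$. Since $|v-z_0|\ge r-\const\cdot\mesh$ on the boundary and $|u-z_0|\le\tfrac{3}{4}r$, the triangle inequality gives $|v-u|\ge \tfrac{1}{4}r-\const\cdot\mesh\ge\tfrac{1}{8}r$ whenever $r\ge\const\cdot\mesh$ with the constant chosen sufficiently large; this is ensured by the hypothesis. Consequently $\f(v)\ge r^2/64$ on the boundary, and combining with the flux relation yields
\[
4\|G\|_{1,B^\mesh_\G}\ge \frac{r^2}{64}\cdot \sum_{\text{bdry pairs}}\tan\theta\,(-G(v_0))=\frac{r^2}{64},
\]
so $\|G\|_{1,B^\mesh_\G}\ge\const\cdot r^2$ as required. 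The only genuinely nontrivial ingredient is the exact identity $\D^\mesh(|v|^2)=4$, which is special to the isoradial weights; everything else is bookkeeping with discrete Green's formula and the triangle inequality, and the uniform bound $\eta$ on rhombus angles enters only implicitly through the admissible range of $r/\mesh$.
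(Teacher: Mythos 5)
Your proof is correct, but it takes a genuinely different route from the paper's. The paper's argument is pointwise: it invokes Kenyon's asymptotics (A.1) of the free Green's function $G_\G(v;u)=\frac{1}{2\pi}\log|v\!-\!u|+O(\mesh^2/|v\!-\!u|^2)$ together with the maximum principle to sandwich the harmonic correction $G^*_{B^\mesh_\G}$ between $(2\pi)^{-1}\log(\frac14 r)+O(\mesh^2/r^2)$ and $(2\pi)^{-1}\log(\frac74 r)+O(\mesh^2/r^2)$, concluding that $G\le -(2\pi)^{-1}\log 2+O(\mesh^2/r^2)\le-\const$ on the whole sub-disc $B^\mesh_\G(u,\frac18 r)$, whose total weight is $\asymp r^2$; summing gives the bound. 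Your argument is integral rather than pointwise: the exact identity $\D^\mesh|v|^2\equiv 4$ (which is correct on isoradial graphs — the linear term vanishes because $\sum_s\tan\theta_s(u_s\!-\!u)=-i\sum_s(w_{s+1}\!-\!w_s)=0$, and the quadratic term gives $\sum_s\tan\theta_s\cdot 4\mesh^2\cos^2\theta_s=4\weightG{u}$), combined with Green's second identity, the flux normalization $\sum_{\mathrm{bdry}}\tan\theta\,(-G(v_0))=1$, the negativity of $G$, and the lower bound $|v\!-\!u|\ge\frac18 r$ on $\pa B^\mesh_\G$, yields $4\|G\|_1\ge r^2/64$ directly. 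What your approach buys is independence from the Green's function asymptotics — a nontrivial input the paper imports from \cite{kenyon-operators} — at the cost of the (easy but lattice-specific) verification that $|\cdot|^2$ has constant discrete Laplacian $4$ with respect to these weights; it also produces an explicit constant. Both proofs use the maximum principle implicitly (yours through $G\le 0$, the paper's explicitly to propagate the boundary estimate on $G^*$ inside), and both need $r\ge\const\cdot\mesh$ for the geometric separation of $u$ from $\pa B^\mesh_\G$. Your argument is sound as written.
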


\begin{proof}
It immediately follows from (\ref{Gasympt}) that
\begin{equation}
\label{xEstim} (2\pi)^{-1}\log(\tfrac{1}{4}r) + O({\mesh^2}/{r^2}) \le
G^*_{B^\mesh_\G}(\cdot;u) \le (2\pi)^{-1}\log(\tfrac{7}{4}r) + O(\mesh^2/r^2)
\end{equation}
on the boundary $\pa B^\mesh_\G$, and so inside $B^\mesh_\G$. For $v\in B^\mesh_\G$ such that
$|v\!-\!u|\le\frac{1}{8}r$, this gives
\[
G(v) = G_\G(v;u) - G^*_{B^\mesh_\G}(v;u) \le -(2\pi)^{-1}\log 2+ O({\mesh^2}/{r^2})\le
-\const.
\]
Thus, $\|G\|_{1,B^\mesh_\G}\ge \|G\|_{1,B^\mesh_\G(u,\frac{1}{8}r)}\ge \const \cdot r^2$.
\end{proof}

\begin{lemma}
\label{Ggradestimate} Let $B^\mesh_\G=B^\mesh_\G(z_0,r)\ss\G$, $r\ge \const\cdot\mesh$, be the
discrete disc, $u\in B^\mesh_\G$ and $G=G_{B^\mesh_\G}(\cdot;u):B^\mesh_\G\to \R$ be the
corresponding discrete Green's function. Then
\[
\|\dpa G\|_{1,B^\mesh_\DS(z_0,\frac{2}{3}r)}={\sum_{z\in
B^\mesh_\DS(z_0,\frac{2}{3}r)}}\weightDS{z}|[\dpa G](z)|\le \const \cdot r.
\]
\end{lemma}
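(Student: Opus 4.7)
The plan is to mimic the continuous Green's function identity
\[
G(z;u) = \tfrac{1}{2\pi}\bigl[\log|z-u| - \log|z-u^*| + \log(|u-z_0|/r)\bigr]
\]
on $B(z_0,r)$, where $u^* := z_0 + r^2(u-z_0)/|u-z_0|^2$ is the image of $u$; since $u^*$ lies outside the disc, $|\nabla G(z;u)| \le C/|z-u| + C/|z-u^*|$ with each piece integrating to $O(r)$ on $B(z_0,2r/3)$. Discretely, I will decompose $G = G_\G(\cdot;u) - G^*$, where $G^*$ is the discrete harmonic function on $B^\mesh_\G$ with boundary values $G_\G(\cdot;u)|_{\pa B^\mesh_\G}$, and bound each piece separately.

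\textbf{Free part.} Evaluating (\ref{Gasympt}) at the four corners of each rhombus yields $|[\dpa G_\G](z;u)| \le C/|z-u|$ for $|z-u| \ge C\mesh$, while trivially $|[\dpa G_\G](z;u)| \le C/\mesh$ for the $O(1)$ rhombi $z$ adjacent to $u$. Summation gives
\[
\sum_{z\in B^\mesh_\DS(z_0,2r/3)}\weightDS{z}\,|[\dpa G_\G](z;u)| \ \le\ C\mesh + C\!\!\sum_{C\mesh\le|z-u|\le 2r}\!\!\frac{\mesh^2}{|z-u|}\ \le\ Cr,
\]
by comparison with $\int_{|w|\le 2r}|w|^{-1}\,dA = O(r)$.

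\textbf{Harmonic correction.} The key claim I will prove is the uniform-in-$u$ oscillation bound $\operatorname{osc}_{B^\mesh_\G(z_0,5r/6)} G^* \le C$. Once established, Corollary~\ref{CorHarnack} applied to $G^*-c$ (with $c$ a midpoint of the oscillation interval) on the ball $B^\mesh_\G(z,r/6)\subset B^\mesh_\G(z_0,r)$ around each $z\in B^\mesh_\DS(z_0,2r/3)$ yields $|[\dpa G^*](z)|\le C/r$; summing against $\weightDS{z}$ gives $O(r)$. To prove the oscillation bound, I will introduce the continuous image point $u^*$ (set $u^*:=z_0+r$ if $u=z_0$) and the auxiliary function $\widetilde G(v) := G_\G(v;u^*)$, which is genuinely discrete harmonic on $B^\mesh_\G$ since $u^*\notin B^\mesh_\G$. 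By (\ref{Gasympt}), $\widetilde G(v) = (2\pi)^{-1}\log|v-u^*| + O(\mesh^2/|v-u^*|^2)$, and since $|v-u^*|\asymp r$ uniformly on $B^\mesh_\G(z_0,5r/6)$, this gives $\operatorname{osc}\widetilde G = O(1)$ there. The classical identity $|v-u^*| = (r/|u-z_0|)|v-u|$, valid on the continuous circle $|v-z_0|=r$, combined with the $O(\mesh)$ discrepancy between $\pa B^\mesh_\G(z_0,r)$ and this circle and the error terms in (\ref{Gasympt}) for both $G_\G(\cdot;u)$ and $G_\G(\cdot;u^*)$, yields
\[
G^*(v) - \widetilde G(v) = -\tfrac{1}{2\pi}\log(|u-z_0|/r) + O(1) \qquad (v\in\pa B^\mesh_\G(z_0,r)),
\]
uniformly in $u$. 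Since $G^*-\widetilde G$ is discrete harmonic in $B^\mesh_\G$, the maximum principle propagates this $O(1)$ boundary oscillation to the interior, and combined with $\operatorname{osc}\widetilde G = O(1)$ this yields $\operatorname{osc} G^* \le C$, completing the proof.

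\textbf{Main obstacle.} The main technical difficulty will lie in the boundary cancellation when $u$ is within $O(\mesh)$ of $\pa B^\mesh_\G(z_0,r)$: individual terms $\log|v-u|$ and $\log|v-u^*|$ can then each be as large as $|\log\mesh|$, and the error $O(\mesh^2/|v-u|^2)$ in (\ref{Gasympt}) is itself only $O(1)$ (not $o(1)$) pointwise. I will have to verify carefully that the image-point identity survives both the $O(\mesh/r)$ discretization of the circle $|v-z_0|=r$ and the compounded $O(1)$ corrections, leaving only an $O(1)$ (rather than $|\log\mesh|$) residual in the boundary oscillation of $G^*-\widetilde G$.
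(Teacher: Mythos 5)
Your treatment of the free part is essentially the paper's: a pointwise bound $|[\dpa G_\G](z;u)|\le\const/|z\!-\!u|$ away from $u$ (the paper gets it from (\ref{Gasympt}) plus Corollary~\ref{CorHarnack} on $B^\mesh_\DS(z,\frac12|z\!-\!u|)$ rather than by differencing the asymptotics directly, but both work), followed by the same $\sum \mesh^2/|z\!-\!u|$ summation. Where you diverge is the harmonic correction, and this is where there is a genuine gap. The paper never proves a uniform-in-$u$ oscillation bound for $G^*$ and never introduces an image point. It splits into two cases: for $|u\!-\!z_0|\le\frac34 r$ the two-sided bound (\ref{xEstim}) already pins $G^*$ between $(2\pi)^{-1}\log(\frac14 r)+O(\mesh^2/r^2)$ and $(2\pi)^{-1}\log(\frac74 r)+O(\mesh^2/r^2)$ on the boundary, hence everywhere, so $\operatorname{osc}G^*=O(1)$ with no reflection argument; for $|u\!-\!z_0|>\frac34 r$ it does not decompose $G$ at all, but observes that on the smaller concentric disc $B^\mesh_\G(z_0,\frac{17}{24}r)$ (which contains $B^\mesh_\DS(z_0,\frac23 r)$ and excludes $u$) one has $-\const\le G\le 0$ directly, and then applies Corollary~\ref{CorHarnack} to the harmonic function $G$ itself. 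This sidesteps entirely the near-boundary cancellation you identify as your main obstacle.

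The obstacle you flag is not merely technical in your formulation — as written the argument breaks there. First, $\widetilde G(v)=G_\G(v;u^*)$ is not defined: $u^*$ is not a vertex of $\G$, and the free Green's function is a lattice object; replacing it by $\frac{1}{2\pi}\log|v-u^*|$ destroys discrete harmonicity, while snapping $u^*$ to a nearby vertex may put it inside $B^\mesh_\G$ when $u$ is within $O(\mesh)$ of the circle. Second, and more seriously, the claimed boundary estimate $G^*(v)-\widetilde G(v)=-\frac{1}{2\pi}\log(|u\!-\!z_0|/r)+O(1)$ can fail: nothing prevents a boundary vertex $v$ from lying within $o(\mesh)$ of the exact reflected point $u^*$ (they can even coincide), in which case $\log|v\!-\!u^*|$ is unboundedly negative while $\log|v\!-\!u|\ge\log(\const\cdot\mesh)$, so $G^*-\widetilde G$ is unboundedly large at that boundary vertex and the maximum principle yields nothing. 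This is repairable — replace $u^*$ by a vertex $\hat u\notin B^\mesh_\G$ with $|\hat u-u^*|=O(\mesh)$, note that in the only dangerous regime $|u\!-\!z_0|>\frac34 r$ the factor $\log(r/|u\!-\!z_0|)$ is $O(1)$ anyway, and check separately the $O(1)$ boundary vertices within $O(\mesh)$ of $u$ and $\hat u$, where both logarithms are $\log(\asymp\mesh)+O(1)$ and hence cancel — but this case analysis is precisely the verification you defer, and it is rendered unnecessary by the paper's case split. (Also, your prescription $u^*:=z_0+r$ for $u=z_0$ places the image point on the circle rather than outside it; for $u$ near the center no image point is needed at all.)
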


\begin{proof}
Let $|u-z_0|\le \frac{3}{4}r$. It easily follows from (\ref{Gasympt}) and
Corollary~\ref{CorHarnack} applied in the disc $B^\mesh_\DS(z,\frac{1}{2}|z\!-\!u|)$ that
$|[\pa^\mesh G_\G](z;u)|\le \const\cdot |z\!-\!u|^{-1}$. Therefore,
\[
\|\dpa G_{\G}\|_{1,B^\mesh_\DS(z_0,\frac{2}{3}r)} \le \|\dpa G_{\G}\|_{1,B^\mesh_\DS(u,2r)}\le
\const\cdot r.
\]
Furthermore, double-sided bound (\ref{xEstim}) and Corollary~\ref{CorHarnack} imply
\[
[\dpa G^*_{B^\mesh_\G}](z) = \dpa [G^*_{B^\mesh_\G}-(2\pi)^{-1}\log r](z) = O(\const\cdot
r^{-1}),\quad |z\!-\!z_0|\le\tfrac{2}{3}r.
\]
Thus,
\[
\|\dpa G^*_{B^\mesh_\G}\|_{1,B^\mesh_\DS(z_0,\frac{2}{3}r)} \le \const\cdot r.
\]

Otherwise, let $|u-z_0|>\frac{3}{4}r$. We have
\[
G^*_{B^\mesh_\G}(\cdot;u) \le (2\pi)^{-1}\log(2r) + O(1)
\]
on $\pa B^\mesh_\G$, and so on the boundary of the smaller disc
$B^\mesh_\G(z_0,\frac{17}{24}r)$ which still contains $B^\mesh_\DS(z_0,\frac{2}{3}r)$. At the
same time,
\[
G_\G(\cdot;u)\ge (2\pi)^{-1}\log \tfrac{1}{24}r + O(1)~~\mathrm{on}~~\pa
B^\mesh_\G(z_0,\tfrac{17}{24}r).
\]
Thus,
\[
0\ge G=G_\G(\cdot;u)-G^*_{B^\mesh_\G}(\cdot;u)\ge -\const
\]
on the boundary, and so inside $B^\mesh_\G(z_0,\frac{17}{24}r)$. Due to
Corollary~\ref{CorHarnack}, this gives
\[
[\dpa G](z) = O(\const\cdot r^{-1}),~~\mathrm{for}~|z\!-\!z_0|\le\tfrac{2}{3}r.
\]
Hence, $\|\dpa G\|_{1,B^\mesh_\DS(z_0,\frac{2}{3}r)} \le \const\cdot r$.
\end{proof}

\end{document}